\documentclass[11pt]{article}
\usepackage {a4wide}
\usepackage{amssymb,amsfonts,amsthm}
\usepackage{graphicx}
\usepackage{subfigure}
\usepackage{amssymb}
\usepackage{amsthm}
\usepackage{amsmath}
\usepackage{bm}
\usepackage[mathscr]{eucal}
\usepackage{cancel}
\usepackage{multirow}
\usepackage{wasysym}
\usepackage{color}
\usepackage{hyperref}
\theoremstyle{plain}
\newtheorem{theorem}{Theorem}

\newtheorem{lemma}{Lemma}
\newtheorem{remark}{Remark}
\newtheorem{proposition}{Proposition}

\def\hsp5{\hspace{5mm}}

\newcommand{\ts}{\mathsf{t}}
\newcommand{\xs}{\mathsf{x}}
\title{\sc Dynamics of quadratic $f(R)$ cosmology with a perfect fluid: Jordan and Einstein frames}
\begin{document}
\author{
\sc Artur Alho$^{1}$\thanks{Electronic address: {\tt artur.alho@tecnico.ulisboa.pt}},\,
    Margarida Lima$^{1,2}$\thanks{Electronic address: {\tt margarida.a.lima@tecnico.ulisboa.pt}}\, and
    Filipe C. Mena$^{1,3}$\thanks{Electronic address: {\tt filipecmena@tecnico.ulisboa.pt}} \\
$^{1}${\small\em Center for Mathematical Analysis, Geometry and Dynamical Systems,}\\
{\small\em Instituto Superior T\'ecnico, Universidade de Lisboa,}
{\small\em Av. Rovisco Pais, 1049-001 Lisboa, Portugal.}\\
$^{2}${\small\em OKEANOS - Instituto de Investigação em Ciências do Mar
	Universidade dos Açores}\\
	{\small\em Rua Prof. Doutor Frederico Machado, 4, 9901-862 Horta, Faial, A\c{c}ores, Portugal}\\
$^{3}${\small\em Centro de Matemática, Universidade do Minho, 4710-057 Braga, Portugal}\\
}
\date{}
\maketitle
\begin{abstract}
We investigate the global dynamics of the field equations of (pure) quadratic theories of gravity which generalise Einstein's theory
in spatially flat homogeneous and isotropic cosmological models with a perfect fluid. We introduce global and regular 3-dimensional dynamical systems' formulations, on both the Jordan frame and the conformally related Einstein frame.
The analysis in the Jordan frame explores the monotonicity properties of the interior flow which, together with the characterisation of the orbit structure on the 2-dimensional invariant boundaries and the desingularisation of non-hyperbolic fixed points, provides a global description of the flow and its limit sets. In the Einstein frame, the analysis uses the skew-product structure of the Einstein state space and the characterisation of the orbit structure on the 2-dimensional invariant boundaries. Furthermore, by obtaining asymptotic expansions we identify the solutions that are global conformally mapped from the Jordan frame to the Einstein frame and those that are not.
%
\end{abstract}
\newpage
\tableofcontents
\newpage
\section{Introduction}
Geometric theories of gravity alternative to Einstein's theory have been increasingly more popular to investigate physical models in astrophysics and cosmology. 
In fact, generalised theories may account for observational data  that challenges Einstein's theory such as the accelerated expansion of the universe and the rotational curves of galaxies \cite{Ferreira, yunes-siemens}. 

Amongst the existing modified theories, $f(R)$ theory of gravity is one of the simplest generalisations of General Relativity since it just incorporates higher order curvature terms in the Einstein-Hilbert action. 
This theory has been the subject of many applications which provide observational signatures distinct from General Relativity, for example in the spectra of galaxy clustering, in the cosmic microwave background and gravitational lensing phenomena (see~\cite{LivRev10, Sotiriou} for reviews). 

In $f(R)$ theory however, the evolution equations form a 4th order system of partial differential equations and a reduction is often needed to analyse its solutions. In this respect there has been recent important work showing the well-posedness of the initial value problem in $f(R)$ theories~\cite{LeFlochMa17a}. Nevertheless due to their mathematical complexity, even in the simplest cases, the global dynamics has hardly been rigorously analysed yet, which has resulted in several misconceptions and raised debates concerning their physical viability~\cite{AGPT07,APT07a,APT07b,CDCT05,CLCD08,GLD08,CTD09,AGD12}. It is therefore our purpose here to obtain a global description of the dynamics of pure quadratic $f(R)$ gravity in a cosmological setting.

In most cosmological models, the evolution equations reduce to ordinary differential equations since the models are taken to be spatially homogeneous.  This is then a particularly interesting  arena allowing the interplay between different disciplines of geometry, dynamics and physics. It is also a naturally suited setting to use and develop techniques of non-linear dynamical systems and, in fact, dynamical systems' approaches have been highly successful to describe globally the spacetime dynamics (see e.g. \cite{waiell97, col03,Hugwain02,Hug10,ugg13,alhetal15,alhetal22,alhetal20,Alho2025}).  

In General Relativity, the global dynamics of spatially homogeneous and isotropic models containing a perfect fluid with a linear equation of state is by now well understood \cite{waiell97, col03}. At the core of those studies is the introduction of bounded (dimensionless) scale-invariant variables based on the Hubble function. However, a direct generalisation of this procedure to $f(R)$ gravity is not possible, since in general the Hubble function vanishes during the evolution.   

A general framework to reformulate the set of evolution equations of $f(R)$ gravity to a regular dynamical system on a compact state-space was introduced in~\cite{ACU16}. A detailed analysis of the vacuum model $f(R)=R+\alpha R^2$, $\alpha>0$, was then performed and in particular this allowed to identify the solutions in the Jordan frame that are past asymptotically incomplete in the Einstein frame. Here, we are instead interested in the model $f(R)=\alpha R^2$ and the addition of a perfect fluid with a linear equation of state. Quadratic models with $f(R)=\alpha R^2$ in vacuum have been often used to study the physics of black holes \cite{Kehagias}, cosmological models \cite{CDCT05, Kounnas, Barker, Leon} and phenomenological approaches to relativistic quantum field theory (see  \cite{Hell, Garcia} and references therein).

Comparing the present problem with the one in \cite{ACU16}, besides the functional form of $f(R)$ being different, the inclusion of a perfect fluid results in a higher dimensional dynamical system, as well as leading to 
interactions between the fluid and the scalar field representation in the Einstein frame. This leads to new mathematical difficulties such as the appearance of non-hyperbolic fixed points located on the intersection of the state-space invariant boundaries, and whose desingularisation has to be done using successive  blow-ups.

Finally, it is often assumed that the global dynamics of $f(R)$ in the Jordan and Einstein frames are qualitatively equivalent \cite{LivRev10}, however this is not the case as noted e.g. in \cite{ACU16} for vacuum, after several previous discussions in different physical contexts  (see e.g. \cite{Faraoni, Gionti, Saa, Postma}). Here, we shall also investigate which solutions are globally contained in the two frames and show that the Einstein state-space is a trapping region of the Jordan state-space, and therefore there are solutions in the Einstein frame that can be past conformally extended in the Jordan frame.

The plan of the paper is as follows: In Section \ref{field-eqs} we review the derivation of the field equations of $f(R)$ gravity specialised to the present models. In Section~\ref{subsec:dynsysjord} we analyse the global dynamics in the Jordan frame and present the main result of the section in Theorem~\ref{GlobalTheoJordan}, 
which relies on monotonicity properties of the interior state-space flow and the orbit structure on its invariant boundaries. A crucial step in the proof consists in understanding the flow in a neighbourhood of two non-hyperbolic fixed points located on the intersection of the invariant boundaries and whose blow-ups are presented in Section~\ref{sec-blow-up}.
In Section~\ref{subsec:dynsysein} we analyse the global dynamics in the Einstein frame (Theorem \ref{theorem-a}) and, in Section~\ref{sec:Map}, we identify the solutions in the Jordan frame with those in the Einstein frame giving a complete characterisation of their $\alpha$ and $\omega$-limit sets.

We use units such that $8\pi G=c=1$, where $G$ is Newton's gravitational constant and $c$ the speed of light.

\section{The field equations of $f(R)$ gravity}
\label{field-eqs}
In this section we briefly review the derivation of the field equations of $f(R)$ gravity starting from action principles and specializing to spatially flat and isotropic cosmological models. We split the section in two parts: Firstly, we obtain the equations in the Jordan frame and then we consider its conformally related Einstein frame. The latter frame is widely used in the physics literature since it corresponds to a transformation of the $f(R)$ theories into General Relativity by including a scalar field minimally coupled to gravity. 
\subsection{Field equations in the Jordan frame}
On a 4-dimensional spacetime $(M,g)$ with Lorentzian metric $g$,
the action for $f(R)$ gravity theories can be written as
\begin{equation}\label{fRaction}
\mathcal{S}_{\mathrm{J}} = \int\left\{\frac{f(R)}{2}+\mathcal{L}_{\mathrm{pf}} \right\} dV_g 
\end{equation}
where $f(R)$ is some function of the Ricci scalar curvature $R$ and  $\mathcal{L}_{\mathrm{pf}}$ the matter Lagrangian density, which we assume to be of a perfect fluid. Einstein's theory of General Relativity is obtained by setting $f(R)=R$. 

Varying the action with respect to the metric, yields the equations of motion (see e.g.~\cite{LivRev10}):
\begin{subequations}
	\begin{align}
	F(R) R_{\mu\nu}-(\nabla_\mu \nabla_\nu -g_{\mu\nu}\nabla_\alpha \nabla^\alpha)F(R) - \frac{1}{2}g_{\mu\nu}f(R)&=T_{\mu\nu}^{(\mathrm{pf})}, \label{motion_eqs}
	\\
	\nabla_{\mu}T^{(\mathrm{pf})\mu }_{~\nu}&=0, \label{cons_eqs}
	\end{align}
\end{subequations}
where the rank 2 tensors $R_{\mu\nu}$ and $T_{\mu\nu}^{(\mathrm{pf})}$ denote the Ricci curvature and stress-energy, respectively, the greek indices $\mu,\nu,...=0,1,2,3$ and
\begin{equation}
F(R):=\frac{d f(R)}{d R}.
\end{equation}
As the matter content, we consider a perfect fluid with energy density $\rho_\mathrm{pf}>0$,  pressure $p_\mathrm{pf}$ and stress-energy tensor
\begin{equation}
\label{stress}
T^{(\mathrm{pf})}_{\mu\nu}= (\rho_\mathrm{pf}+p_\mathrm{pf})u_{\mu} u_{\nu}+p_\mathrm{pf}g_{\mu\nu},
\end{equation}
where $u^\mu$ is the fluid 4-velocity field with $u^{\mu}u_{\mu}=-1$.
As in most cosmological models, we assume a spatially flat homogeneous and isotropic metric in co-moving coordinates
\begin{equation}\label{metric}
g=-dt^2 + a^2(t)\delta_{ij}dx^{i}dx^{j},
\end{equation}
where $a(t)$ is a positive function called scale factor and $u^{\mu}=(1,0,0,0)$. We recall that the Hubble function is
\begin{equation}\label{DefH}
H:=\frac{\dot{a}}{a}
\end{equation}
where an overdot denotes the derivative with respect to time $t$ and the Ricci scalar for the metric~\eqref{metric} is known to be
\begin{equation}\label{DefR}
R=\dot{H}+2H^2.
\end{equation}
In turn the deceleration parameter $q$, which will be used later, is defined by the relation $\dot{H} = -(1 + q)H^2$, giving
\begin{equation}
q := 1- \frac{R}{6H^2}.
\end{equation}
Substituting \eqref{stress} and \eqref{metric} in \eqref{motion_eqs} gives
\begin{subequations}
	\begin{align}
	6H\big( F_{,R}\dot{R} + F(R) H\big) + f(R) -F(R)R  &= 2\rho_\mathrm{pf} \label{00_eq} \\
	F(R)\bigg( 2\frac{\ddot{a}}{a}+H^2 \bigg)+ F_{,RR}\dot{R}^2 + F_{,R}\ddot{R} - 2HF_{,R}\dot{R} + \frac{1}{2}\big( f(R)-F(R) R \big)&=- p_\mathrm{pf} ,	\label{ii_eq}
	\end{align}
\end{subequations}
where
\begin{equation*}
F_{,R}=\frac{d F(R)}{d R},\qquad  F_{,RR}=\frac{d^2 F(R)}{d R^2}.
\end{equation*}
Using relations~\eqref{DefH} and \eqref{DefR}, equation~\eqref{ii_eq} can be written as 
\begin{equation}
F_{,R}\ddot{R}=-3HF_{,R}\dot{R}- F_{,RR}\dot{R}^2-\frac{1}{3}\big( 2f(R)-F(R)R \big)  -\bigg(p_\mathrm{pf}-\frac{\rho_\mathrm{pf}}{3}\bigg)  .
\label{11_eq}
\end{equation}
In turn, the conservation of the stress-energy tensor, equation~\eqref{cons_eqs}, yields
\begin{equation}\label{Euler}
\frac{d\rho_\mathrm{pf}}{dt}=-3H (\rho_\mathrm{pf}+p_\mathrm{pf}).
\end{equation}
Given an equation of state relating the fluid pressure $p_\mathrm{pf}$ and the energy density $\rho_\mathrm{pf}$, then equations~\eqref{DefH}, \eqref{DefR},~\eqref{11_eq} and~\eqref{Euler} form a close system of ODEs for the state-vector $(a,H,R,\rho_\mathrm{pf})$, subject to the constraint~\eqref{00_eq}.

\subsection{Field equations in the Einstein frame}
The so-called Einstein frame formulation of $f(R)$ models is based on a conformal transformation of the Jordan frame metric $g$ to the Einstein frame metric $\tilde{g}$:
\begin{equation}
\label{fr}
\tilde{g}_{\mu\nu} = F g_{\mu\nu}, \qquad F = \frac{df}{dR}>0.
\end{equation}
Thus $F=0$ constitutes the boundary between the Einstein frame state-space and the Jordan frame state-space. Since in general $F=0$ is not an invariant subset in the Jordan frame, there are solutions that pass through the $F=0$ surface in the Jordan state-space. This means that there are solutions in the Einstein representation that can be extended in the Jordan frame. 
Under the conformal transformation~\eqref{fr} together with the identification
\begin{equation}\label{EFscalarfield}
\phi = \sqrt{\frac{3}{2}}\ln F, \qquad V(\phi) = \frac{RF - f}{2F^2},
\end{equation}
the action~\eqref{fRaction} of $f(R)$ theories is transformed to an action in the Einstein-Hilbert form as in General Relativity with a minimally coupled (to gravity) scalar field, and coupled to the perfect fluid Lagrangian density (see also \cite{LivRev10})
\begin{equation}\label{EHaction}
\mathcal{S}_E = \int \left\{\frac{\tilde{R}}{2}-\frac{\tilde{g}^{\mu\nu}}{2}\left(\tilde{\nabla}_{\mu}\phi\right)
\left(\tilde{\nabla}_{\nu}\phi\right)- V(\phi)+ F^{-2}(\phi)\tilde{\mathcal{L}}_{\mathrm{pf}} \right\}  dV_{\tilde g},
\end{equation}
which leads to the Einstein equations
\begin{equation}
\tilde{R}_{\mu\nu}-\frac{1}{2}\tilde{g}_{\mu\nu}\tilde{R}=\tilde{T}^{(\phi)}_{\mu\nu}+\tilde{T}^{(\mathrm{pf})}_{\mu\nu}, \label{Einstein-eqs-EF}
\end{equation}
where the perfect fluid energy-momentum tensor in this frame is given by
\begin{equation}\label{SETFluid}
\tilde{T}^{(\mathrm{pf})\mu}{}_\nu=\text{diag}\left(-\tilde{\rho}_\mathrm{pf},\tilde{p}_\mathrm{pf},\tilde{p}_\mathrm{pf},\tilde{p}_\mathrm{pf}\right)=\text{diag}\left(-\frac{\rho_\mathrm{pf}}{F^2},\frac{p_\mathrm{pf}}{F^2},\frac{p_\mathrm{pf}}{F^2},\frac{p_\mathrm{pf}}{F^2}\right),
\end{equation}
and $\tilde{T}^{(\phi)}_{\mu\nu}$ denotes the energy-momentum tensor of the scalar field, given by 
\begin{equation}\label{Scalar-Field-EM}
	\tilde{T}^{(\phi)}_{\mu\nu}=\tilde{\nabla}_\mu\phi \tilde{\nabla}_\nu\phi-\tilde{g}_{\mu\nu}\left( \frac{1}{2} \tilde{g}^{\alpha\beta}\tilde{\nabla}_\alpha\phi \tilde{\nabla}_\beta\phi+V(\phi)  \right). 
\end{equation}
In fact by varying the Einstein-frame action (\ref{EHaction}) with respect to $\tilde{g}_{\mu\nu}$, yields the scalar field energy-momentum tensor (\ref{Scalar-Field-EM}), whereas the perfect fluid contribution acquires the conformal factor $F^{-2}(\phi)$, leading to (\ref{SETFluid}). 

From the Bianchi identities and the Einstein equations (\ref{Einstein-eqs-EF}), it follows that the total energy-momentum tensor is conserved:
\begin{equation}\label{eq:cons_total}
\tilde{\nabla}_\mu \!\left( \tilde{T}^{(\phi)\mu}{}_{\nu} + \tilde{T}^{(\mathrm{pf})\mu}{}_{\nu}\right)=0.
\end{equation}
However, due to the non-minimal coupling between the scalar field and the matter, the individual components are not separately conserved. Taking into account the conformal dependence $F^{-2}(\phi)$ in the matter Lagrangian, one finds that
\begin{equation}
\tilde{\nabla}_\mu \tilde{T}^{(\phi)\mu}{}_{\nu}=\frac{1}{2}\beta(\phi)\tilde{g}^{\mu\sigma}\tilde{T}^{(\mathrm{pf})}_{\mu\sigma}\tilde{\nabla}_\nu \phi, \qquad \tilde{\nabla}_\mu \tilde{T}^{(\mathrm{pf})\mu}{}_\nu=-\frac{1}{2}\beta(\phi)\tilde{g}^{\mu\sigma}\tilde{T}^{(\mathrm{pf})}_{\mu\sigma}\tilde{\nabla}_\nu \phi,
\end{equation}
which describes the exchange of energy between the scalar field and the perfect fluid.

Furthermore, the variation of the action (\ref{EHaction}) with respect to the field $\phi$, considering the definition of the energy-momentum tensor, yields the modified non-linear wave-equation
\begin{equation}
\tilde{\Box}\phi+\lambda(\phi)V(\phi)-\frac{1}{2}\beta(\phi)\tilde{T}^{(\mathrm{pf})}=0,
\end{equation}
where $\tilde{\Box}\phi=\tilde \nabla_\mu\tilde \nabla^\mu \phi$  and $\tilde{T}^{(\mathrm{pf})}=\tilde{g}^{\mu\sigma}\tilde{T}^{(\mathrm{pf})}_{\mu\sigma}$ is the trace of the perfect fluid energy-momentum tensor with
\begin{equation}
	\lambda(\phi)  =-\frac{d\ln{(V)}}{d\phi}=-\frac{V_{\phi}}{V} \qquad\text{and}\qquad  \beta(\phi)  =\frac{d\ln{(F)}}{d\phi}= \frac{F_{\phi}}{F}.
\end{equation}
It turns out that the matter coupling is constant and given by
\begin{equation}
	\beta=\sqrt{\frac{2}{3}}
\end{equation}
and in general
\begin{equation}
	\lambda(\phi)=\left(\frac{RF-2f}{RF-f}\right)\frac{F_{,R}}{F}\frac{dR}{d\phi}. 
\end{equation}
If one restricts to spatially flat homogeneous and isotropic metrics $\tilde g = -d\tilde{t}^2+\tilde{a}^2(\tilde{t})\delta_{ij}d\tilde{x}^{i}d\tilde{x}^{j}$, the conformal transformation~(\ref{fr}) implies $d\tilde{t} = \sqrt{F}dt, \tilde{a} = \sqrt{F}a, \tilde{x}^{i}=x^{i}.$
It follows that
\begin{equation}\label{EinH}
\tilde{H} = \frac{1}{\tilde{a}}\frac{d\tilde{a}}{d\tilde{t}} = \frac{1}{\sqrt{F}}\left(H + \frac{F_{,R}\dot{R}}{2F}\right),
\end{equation}
where an overdot still signifies derivation with respect to the Jordan proper time $t$ and 
\begin{equation}
\tilde{q} = -1+\frac{3}{\left(2H + \frac{F_{,R}}{F}\dot{R}\right)^2} \left(\left(\frac{F_{,R}}{F}\dot{R}\right)^2+\frac{2(\rho_\mathrm{pf}+p_\mathrm{pf})}{3F}\right).    
\end{equation}
The above transformations yield the following evolution equations in the Einstein frame
\begin{subequations}\label{EFsystenm}
	\begin{align}
	\frac{d\tilde{a}}{d\tilde{t}} &= \tilde{H}\tilde{a}, \label{Edota}\\
	\frac{d\tilde{H}}{d\tilde{t}}  & = - \frac12\left(\left(\frac{d\phi}{d\tilde{t}}\right)^2 + \tilde{p}_\mathrm{pf} + \tilde{\rho}_\mathrm{pf}\right), \label{Ray1} \\
	\frac{d^2\phi}{d\tilde{t}^2}&=-3\tilde{H}\frac{d\phi}{d\tilde{t}}+\lambda(\phi)V(\phi)-\frac{3}{2}\beta(\phi) \left(\tilde{p}_\mathrm{pf}-\frac{\tilde{\rho}_\mathrm{pf}}{3}\right), \label{waveEq} \\
	\frac{d\tilde{\rho}_\mathrm{pf}}{d\tilde{t}} & = -3\tilde{H}\left(\tilde{p}_\mathrm{pf}+\tilde{\rho}_\mathrm{pf}\right) + \frac{1}{2}\beta(\phi)(3\tilde{p}_\mathrm{pf}-\tilde{\rho}_\mathrm{pf})\frac{d\phi}{d\tilde{t}}, \label{Edotrho}
	\end{align}
\end{subequations}
with the constraint
\begin{equation} \label{Gauss1} 
3\tilde{H}^2 =
\left[\frac{1}{2}\left(\frac{d\phi}{d\tilde{t}}\right)^2 + V(\phi)+\tilde{\rho}_\mathrm{pf}\right].
\end{equation}
Given a function $f(R)$, the conformal transformation~\eqref{EFscalarfield} uniquely determines the form of the potential $V(\phi)$ and therefore of the function $\lambda(\phi)$. Recall that for $f(R)$ gravity theories $\beta=\sqrt{2/3}$. Once an equation of state $\tilde{p}_\mathrm{pf}=\tilde{p}_\mathrm{pf}(\tilde{\rho}_{\mathrm{pf}})$ is given, then equations~\eqref{EFsystenm} form a closed system of evolution equations for $(\tilde{a},\tilde{H},\phi,\tilde{\rho}_\mathrm{pf})$, subject to the constraint~\eqref{Gauss1}.
\subsubsection{Monomial $f(R)$ gravity}
The choice
\begin{equation}
f(R)= \alpha^{p-1}R^{p},\quad \alpha>0,\quad p\in\mathbb{Z}\setminus\{0,1\} 
\end{equation}
implies that
\begin{equation}\label{Monomial}
\phi = \sqrt{\frac{3}{2}}\ln{\left(p (\alpha R)^{p-1}\right)}\qquad,\qquad V(\phi)=\frac{(p-1)\alpha^{p-2}}{2p^2}\left(pe^{-\sqrt{\frac{2}{3}}\phi}\right)^{\frac{p-2}{p-1}}
\end{equation}
The potential is positive and constant for $p=2$ or exponentially decreasing for $p>2$. This yields
\begin{equation}\label{lambdaRp}
\lambda=\sqrt{\frac{2}{3}}\left(\frac{p-2}{p-1}\right)\geq 0\quad \text{for}\quad p\geq 2.
\end{equation}
For $p<1$ the model leads to a negative potential, which is a well-known feature of $f(R)$ models in the Einstein frame. 
Note that General Relativity is recovered in the limit $p\rightarrow 1$. 

For monomials $f(R)$ gravity models $F(R) = p\alpha^{p-1} R^{p-1}$ and for $p$ even $F(R)<0$ whenever $R<0$. Moreover $F(R)=0$ at $R=0$, and in general $\dot{F}\Big|_{R=0}\neq0$, so that $F=0$ is not an invariant subset on the constrained state-space $(H,\dot{R},R,\rho_\mathrm{pf})$, meaning that there are solutions in the Jordan frame that are not globally mapped to the Einstein frame.

The goal in this paper is to give a global description of the solution space of the simplest quadratic $f(R)$ model, with $p=2$, and to describe the dynamical properties of all solutions, in both the Jordan and Einstein frames. In particular, we aim to identify the solutions that are conformally incomplete in the Einstein frame.
%
\section{Dynamics in the Jordan frame}\label{subsec:dynsysjord}
In this work we consider
\begin{equation}
f(R) = \alpha R^{2}, \qquad   \alpha>0.
\end{equation}
The matter content is assumed to be a perfect fluid with a linear equation of state
\begin{equation}
\label{pf}
p_\mathrm{pf} = (\gamma_\mathrm{pf} - 1)\rho_\mathrm{pf}, \qquad \frac{2}{3} <\gamma_\mathrm{pf} <2.
\end{equation}
For instance $\gamma_\mathrm{pf}=1$ and $\gamma_\mathrm{pf}=4/3$ correspond to dust matter and radiation, respectively. Considering \eqref{pf}, then equations~\eqref{DefH}, \eqref{DefR},~\eqref{11_eq} and~\eqref{Euler} reduce to
\begin{subequations}\label{Jordanorigeq}
	\begin{align}
	\dot{a} &= Ha, \label{aeq}\\
	\dot{H} &= -2H^2 + \frac{R}{6}, \label{dotH}\\
	\ddot{R}&=-3H\dot{R}-\frac{1}{2\alpha}\left(\gamma_\mathrm{pf}-\frac{4}{3}\right)\rho_\mathrm{pf}, \label{ddotR}\\
	\dot{\rho}_\mathrm{pf} &= -3\gamma_\mathrm{pf} H\rho_\mathrm{pf},\label{dotrho}
	\end{align}
	where $\dot{R}$ is regarded as an independent variable in order to obtain a first order system of evolution equations. The system must satisfy the constraint~\eqref{00_eq} which takes the form
	\begin{equation}
	-12 H  \big(\dot{R}+RH \big) + R^{2}+\frac{2}{\alpha}\rho_\mathrm{pf}=0.
	\label{constJor}
	\end{equation}
\end{subequations}
It follows that $\rho_\mathrm{pf} \propto a^{-3\gamma_\mathrm{pf}}$ and one can
therefore reduce the above system by solving either for $a$ in terms of $\rho_\mathrm{pf}$ or vice-versa. We choose to consider $\rho_\mathrm{pf}$ as variable. The equations~\eqref{dotH}, \eqref{ddotR} and~\eqref{dotrho}
then determine a closed system of first order equations for $(H,\dot{R},R,\rho_\mathrm{pf})$, which due to the
constraint~\eqref{constJor} yield a dynamical system describing a flow on a 3-dimensional state-space for each value of the parameter $\gamma_\mathrm{pf}$, while the vacuum state-space $\rho_\mathrm{pf}=0$ forms a 2-dimensional invariant boundary. 

Besides the trivial Minkowski solution 
\begin{equation}
\mathrm{M}:\qquad \rho_\mathrm{pf}=\dot{R}=R=H=0,
\end{equation}
the system~\eqref{Jordanorigeq} also admits the vacuum self-similar solution
\begin{equation}\label{SSsol}
\mathrm{R}:\qquad \rho_\mathrm{pf}=0,\qquad\dot{R}=0, \qquad R=0,\qquad H=\frac{1}{2t}
\end{equation}
for which $a(t)\propto \sqrt{t}$, $t\in(0,+\infty)$ and $q=1$, resembling the flat Friedmann-Lema\^itre solution with a radiation fluid of General Relativity. 
Vacuum quadratic $f(R)$ cosmologies also have explicit de-Sitter like solutions, 
\begin{equation}\label{dSsol}
\mathrm{L_{dS}}:\qquad\rho_\mathrm{pf}=0,\qquad\dot{R}=0,\qquad R=R_*,\qquad H=\sqrt{\frac{R_*}{12}},
\end{equation}
with $R_*$ a positive constant. In this case, the scale factor is simply $a(t)\propto e^{\sqrt{\frac{R_*}{12}}t}$, $t\in(-\infty,+\infty)$ and the deceleration parameter $q=-1$.
\subsection{Regular dynamical system on a compact state-space}
To obtain a regular system of equations on a compact state-space in the Jordan frame we follow closely the formulation in~\cite{ACU16}. 
As shown in~\cite{ACU16}, the vacuum state-space is a 2-dimensional double cone with the Minkowski solution as its apex, while the matter state-space consists of
the cone interior. This can be seen by making the following global (invertible) transformation
$(H,\dot{R},R) \rightarrow (\ts,\xs,R)$, 
\begin{subequations}
\begin{align}
H&=\sqrt{\frac{\alpha}{12}}(\ts-\xs), \label{var1.1} \\
\dot{R}+RH &=\frac{1}{\sqrt{12\alpha}}(\ts+\xs). \label{var2.1}	
\end{align}
\end{subequations}
which brings the constraint~\eqref{constJor} to the form
\begin{subequations}\label{tsxseq}
\begin{equation}\label{txconstr}
\ts^2 = \xs^2 + R^{2} + \frac{2}{\alpha}\rho_\mathrm{pf},
\end{equation}
while the evolution equations are given by
\begin{align}
\dot{\ts} &=\frac{1}{\sqrt{12\alpha}}\left[R-\alpha(\ts-\xs)^2-3\left(\gamma_\mathrm{pf}-\frac{2}{3}\right)\rho_\mathrm{pf}\right], \\
\dot{\xs} &=\frac{1}{\sqrt{12\alpha}}\left[-R+\alpha(\ts-\xs)^2-3\left(\gamma_\mathrm{pf}-\frac{2}{3}\right)\rho_\mathrm{pf}\right], \\
\dot{R} &= \frac{1}{\sqrt{12\alpha}}\left[\ts+\xs-\alpha(\ts-\xs)R\right],\label{dotR Eq} \\
\dot{\rho}_\mathrm{pf} &= -\frac{1}{\sqrt{12\alpha}}3\gamma_\mathrm{pf}\alpha(\ts-\xs)\rho_\mathrm{pf}.
\end{align}
\end{subequations}
The constraint equation~\eqref{txconstr} shows that the state-space consists of two disconnected invariant subsets ($\ts>0$ and $\ts<0$) and the Minkowski fixed point $\mathrm{M}$  $(\ts=\xs=R=\rho_\mathrm{pf}=0)$ of the dynamical
system~\eqref{tsxseq}. Furthermore, the system~\eqref{Jordanorigeq} is invariant under the transformation $(t,H) \rightarrow -(t,H)$ and, as a consequence, the system~\eqref{tsxseq} is invariant under the transformation $(t,\ts,\xs) \rightarrow -(t,\ts,\xs)$. Then it suffices to investigate the dynamics on the invariant upper part of the cone in order to obtain a complete picture of the dynamics. 

Since the conformal factor $F=2\alpha R$ is positive for $R>0$, only this half of the upper cone state-space is conformal to the Einstein frame. Moreover, from~\eqref{dotR Eq}, it follows that
\begin{equation}
\dot{F}\Big|_{F=0}=\sqrt{\frac{\alpha}{3}}(\ts+\xs)
\end{equation}
and hence $F=0$ (or $R=0$) is not an invariant subset except at $\ts=-\xs$. This orbit ending at $\mathrm{M}$ corresponds to the self-similar solution~\eqref{SSsol}, where the evolution of $\xs$ and $\ts$ describes the evolution of $H$. The de-Sitter solutions \eqref{dSsol} are now associated with the line of fixed points given by (see Figure~\ref{fig:ConeJF}):
\begin{equation}
\rho_\mathrm{pf}=0,\qquad R=R_*,\qquad \ts=\frac{1}{2}\sqrt{\frac{R_*}{\alpha}}(1+\alpha R_*),\qquad \xs=-\frac{1}{2}\sqrt{\frac{R_*}{\alpha}}(1-\alpha R_*)
\end{equation}
and $R_*$ a positive constant.
\begin{figure}
	\centering
		\includegraphics[trim={2.2cm 0.2cm 1.5cm 1cm},clip,width=0.3\textwidth]{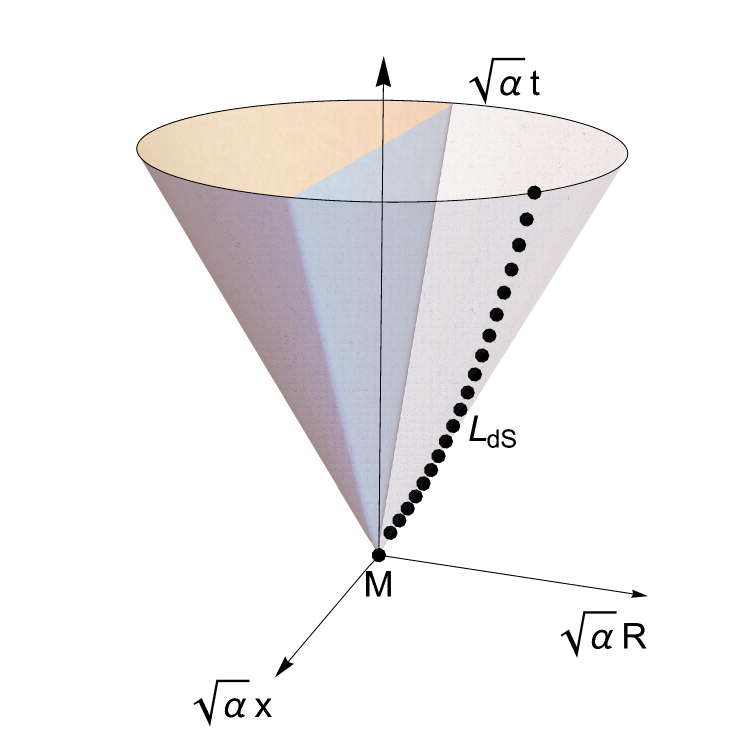}
	\caption{The future cone state-space of the Jordan frame. The shaded region with $R<0$ corresponds to a conformal factor $F<0$.}
	\label{fig:ConeJF}
\end{figure}

Now, removing the Minkowski fixed point from the state-space analysis allows us to introduce the following dimensionless bounded variables for the remaining non-trivial upper cone state-space:
\begin{equation}
(X,S,T,\Omega_\mathrm{pf}) = \left(\frac{\xs}{\ts},-\frac{R}{\ts},\frac{1}{1 +\alpha \ts},\frac{2\rho_\mathrm{pf}}{\alpha \ts^2}\right).
\end{equation}
Note that this amounts to a projection where all circles on the cone with constant $\ts$ now become the unit circle given by $X^{2} + S^2 =1$, where the different circles are parameterised by the value of $T$.

By introducing a new time variable $\bar{t}$ defined by
\begin{equation}
\label{tbar}
\frac{dt}{d\bar{t}} = \sqrt{12\alpha}T,
\end{equation}
and using the constraint
\begin{equation}
\Omega_\mathrm{pf} = 1- X^2 - S^2\,,
\end{equation}
to globally solve for $\Omega_\mathrm{pf}$, we obtain the following unconstrained regular dynamical system:
\begin{subequations}\label{dynsysTXS}
	\begin{align}
		X^{\prime} &= TS(1+X)+(1-T)(1-X)\left(1-X^2-\frac{3}{2}\left(\gamma_\mathrm{pf}-\frac{2}{3}\right)\left(1- X^2 - S^2\right)\right), \\
		S^{\prime} &= -(1+X-S^2)T-S(1-T)\left[(1-X)X-\frac{3}{2}\left(\gamma_\mathrm{pf}-\frac{2}{3}\right)\left(1- X^2 - S^2\right)\right], \\
		T^{\prime} & = T(1-T)\left[ST+(1-T)\left((1-X)^2+\frac{3}{2}\left(\gamma_\mathrm{pf}-\frac{2}{3}\right)(1-X^2-S^2)\right)\right],
	\end{align}
\end{subequations}
where the prime denotes derivatives with respect to $\bar{t}$. The above regular dynamical system is global (in the sense that the state-space covers
all of the $(H,\dot{R},R,\rho_\mathrm{pf})$ state-space, with the exception of the Minkowski fixed point) and forms our \emph{master equations} for dealing with the present models in the Jordan frame.

It is of interest to complement the above system with the auxiliary equation
\begin{equation}
\Omega^{\prime}_\mathrm{pf}=2 \Omega_\mathrm{pf} \left[ TS - \frac{3}{2}(1-T) \left( \left( \gamma_\mathrm{pf}-\frac{2}{3}+\frac{2}{3}X \right)\left( 1-X \right) - \ \left( \gamma_\mathrm{pf} -\frac{2}{3} \right)\Omega_\mathrm{pf}   \right) \right].
\end{equation}
The Jordan state-space $\mathbf{S}_\mathrm{J}$ consists of an open and bounded cylinder with unit radius: 
\begin{equation}
\mathbf{S}_\mathrm{J} = \{(X,S,T)\in\mathbb{R}^3: \Omega_\mathrm{pf}=1-X^2-S^2 > 0,\quad 0<T<1\}.
\end{equation}
The interior of the cylinder is an invariant subset that describes the matter solutions, $\Omega_\mathrm{pf}>0$, while the unit radius invariant boundary $\mathbf{S}_{\mathrm{Jvac}}$ describes the vacuum solutions $\Omega_\mathrm{pf}=0$ and gives a constrained system of equations. Moreover, we can extend the state-space to include the invariant boundaries $\{T=0\}$ and $\{T=1\}$ and obtain an extended compact state-space $\bar{\bf S}_\mathrm{J}=\mathbf{S}_\mathrm{J}\cup\partial\mathbf{S}_\mathrm{J}$, where $\partial\mathbf{S}_\mathrm{J}=\{\Omega_{\mathrm{pf}}=0\}\cup\{T=0\}\cup\{T=1\}$, see Figure~\ref{fig:CylJF}. This is crucial since all possible asymptotic states reside on these boundaries, as shown by the following Proposition:
\begin{proposition}\label{MonotoneF}
	The $\alpha$-limit sets for all interior orbits reside at the invariant boundary $\{T=0\}$, and the $\omega$-limit sets at the invariant boundaries $\{T=1\}$ and/or $\{\Omega_\mathrm{pf}=0\}$.
\end{proposition}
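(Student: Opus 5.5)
The natural route is the monotonicity principle: find functions that are strictly monotone along interior orbits, and conclude that limit sets must lie where monotonicity fails or on the boundary. The first candidate is $T$ itself. From the $T'$ equation, $T'=T(1-T)\,G$ with
\[
G=ST+(1-T)\Bigl((1-X)^2+\tfrac32\bigl(\gamma_\mathrm{pf}-\tfrac23\bigr)\Omega_\mathrm{pf}\Bigr).
\]
The term $ST$ has no sign, so $T$ is not monotone everywhere in $\mathbf{S}_\mathrm{J}$. I would therefore return to the original variables. $\ts$ is positive on the upper cone and $T=1/(1+\alpha\ts)$, so $T$ decreasing means $\ts$ increasing. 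The equation for $\dot{\ts}$ contains $R$ with indefinite sign, but the $\rho_\mathrm{pf}$ equation gives
\[
\dot\rho_\mathrm{pf}=-3\gamma_\mathrm{pf}\sqrt{\alpha/12}\,(\ts-\xs)\rho_\mathrm{pf},
\]
and on the interior $|\xs|<\ts$, so $\ts-\xs>0$. Hence $\rho_\mathrm{pf}$ is strictly decreasing along interior orbits; this is just $H>0$ on the upper cone. Equivalently, $\rho_\mathrm{pf}=\tfrac{\alpha}{2}\ts^2\Omega_\mathrm{pf}$ with $\ts=(1-T)/(\alpha T)$, so I would use the function
\[
M=\frac{(1-T)^2}{T^2}\,\Omega_\mathrm{pf}>0 \quad\text{on } \mathbf{S}_\mathrm{J}.
\]
Using the auxiliary $\Omega_\mathrm{pf}'$ equation, its derivative simplifies to
\[
M'=-3\gamma_\mathrm{pf}(1-T)(1-X)\,M,
\]
which is strictly negative on the interior since $X<1$ and $T<1$. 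I would verify this by direct computation; it should reduce because $\rho_\mathrm{pf}$ is conserved up to the factor $H$.

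Next I would apply the monotonicity principle. $M$ is continuous on $\bar{\mathbf{S}}_\mathrm{J}$ minus $\{T=0\}$, positive and strictly decreasing on the invariant set $\mathbf{S}_\mathrm{J}$. Its limiting values are: $M\to0$ exactly when $\Omega_\mathrm{pf}\to0$ or $T\to1$; $M\to\infty$ as $T\to0$ with $\Omega_\mathrm{pf}>0$. The problematic points are $T=0$ with $\Omega_\mathrm{pf}=0$, where $M$ is indeterminate. There I would invoke the principle in the form given by Wainwright--Ellis: for a $C^1$ function $Z$ strictly monotone on an invariant set $U$, the $\omega$-limit set of any orbit in $U$ is contained in $\{s\in\bar U\setminus U:\lim_{y\to s}Z(y)\neq \sup Z\}$. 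The $\alpha$-limit set is contained in the corresponding set with $\inf Z$. Since $\sup M=\infty$ and $\inf M=0$, $\omega$-limits lie in $\{T=1\}\cup\{\Omega_\mathrm{pf}=0\}$, where the limit is $0$, and not at $\{T=0,\Omega_\mathrm{pf}>0\}$. Similarly, $\alpha$-limits lie in the part of the boundary where the limit of $M$ is not $0$, which is $\{T=0\}$ (including its edge with $\Omega_\mathrm{pf}=0$, where the limit does not exist or is nonzero).

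The main obstacle is the edge $\{T=0\}\cap\{\Omega_\mathrm{pf}=0\}$ and, symmetrically, the edge $\{T=1\}\cap\{\Omega_\mathrm{pf}=0\}$. The monotonicity principle alone cannot decide whether $\alpha$-limits may sit on the vacuum boundary away from $T=0$. To exclude them I would use the monotonicity of $\rho_\mathrm{pf}$ directly. Along an interior orbit going backwards, $\rho_\mathrm{pf}$ increases, so it stays bounded below by a positive constant $\rho_0$. On $\{\Omega_\mathrm{pf}=0,\ 0<T\le1\}$ we have $\ts\le(1-T)/(\alpha T)$ bounded, so the neighbourhood of any such point with $T\ge\delta$ has $\rho_\mathrm{pf}=\tfrac{\alpha}{2}\ts^2\Omega_\mathrm{pf}<\rho_0$. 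This neighbourhood therefore cannot be visited in the past, and the $\alpha$-limit lies in $\{T=0\}$.

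For $\omega$-limits, $\rho_\mathrm{pf}$ decreases to some $\rho_\infty\ge0$. If a limit point had $T>0$, $\Omega_\mathrm{pf}>0$ and $T<1$, then $M'\ne0$ there contradicts $M$ converging, since $M$ is monotone and bounded below. The exception is $X=1$, which forces $\Omega_\mathrm{pf}=0$ anyway. A limit point with $T=0$ and $\Omega_\mathrm{pf}>0$ would give $M\to\infty$, which is impossible for a decreasing function. This yields the claim.
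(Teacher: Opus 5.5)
Your proposal is correct and is essentially the paper's proof: the same function $M=(1-T)^2\Omega_\mathrm{pf}/T^2$ (which is just $2\alpha\rho_\mathrm{pf}$), the same identity $M'=-3\gamma_\mathrm{pf}(1-T)(1-X)M$, and an appeal to the monotonicity principle with $\inf M=0$ and $\sup M=\infty$. Your extra treatment of the edges is harmless but redundant. The monotonicity principle already excludes $\alpha$-limit points on $\{\Omega_\mathrm{pf}=0,\,T>0\}$, because $M\to0=\inf M$ there, as you yourself noted in the preceding paragraph; so your claim that the principle alone cannot decide this is mistaken.
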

\begin{proof}
	Let $M: \mathbf{S}_\mathrm{J} \rightarrow (0,+\infty)$ be the $C^1$ function
	\begin{equation}
	M(X,S,T) = \frac{(1-T)^2\Omega_\mathrm{pf}}{T^2}=\frac{(1-T)^2}{T^2}(1-X^2-S^2)
	\end{equation}
	which satisfies 
	\begin{equation}
	M^\prime = -3\gamma_\mathrm{pf}(1-T)(1-X)M 
	\end{equation}
	and therefore, for $\gamma_{\mathrm{pf}}>0$, $M$ is strictly monotonically decreasing in the interior of the state-space. Since $M$ attains its minimum at the boundaries $\{T=1\}$ and $\{\Omega_\mathrm{pf}=0\}$ and tends to $+\infty$ as $T\rightarrow0$, the result follows by the \emph{monotonicity principle}.
\end{proof}
For completeness, we note that the original variables relate to our bounded variables as:
\begin{subequations}
	\begin{align}
	R &= -\frac{1}{\alpha}\frac{(1-T)S}{T},\quad 
	\dot{R} = \frac{1}{\alpha\sqrt{12\alpha}}\frac{(1-T)}{T}\left(1 + X + \frac{(1-T)(1-X)S}{T}\right), 
	\\
	H &= \frac{1}{\sqrt{12\alpha}}\frac{(1-T)(1-X)}{T}
	,\quad \rho_\mathrm{pf} =\frac{1}{2\alpha}\left(\frac{1-T}{T}\right)^2 \Omega_\mathrm{pf}
	\end{align}
	\label{Usual-Jordan}
\end{subequations}
and
\begin{subequations}
	\begin{align}
	X &=- \frac{H-\alpha(\dot{R}+HR)}{H+\alpha(\dot{R}+HR)} 
	,\quad S =-\sqrt{\frac{\alpha}{3}}\frac{ R}{\left(H+\alpha(\dot{R}+HR)\right)}
	\\
	T &= \frac{1}{1+\sqrt{3\alpha}\left(H+\alpha(\dot{R}+HR)\right)}
	,\quad \Omega_\mathrm{pf} = \frac{2\rho_\mathrm{pf}}{3\left(H+\alpha(\dot{R}+HR)\right)^2}.
	\end{align}
	\label{Jordan-Usual}
\end{subequations}
It is also worth noticing that the invariant boundary $\{T=0\}$ is associated with the limit $H\rightarrow +\infty$ and that $H$ is not only zero on the invariant boundary $\{T=1\}$ but also when $X=1$, which is not an invariant subset of the dynamical 
systems \eqref{dynsysTXS}. Therefore solution trajectories passing through $X=1$ will have $H=0$, 
and subsequently $H>0$ again, since $ X\leq1$.
%
\subsection{Fixed points}
The existence of a monotone function prevents the existence of interior fixed points, recurrent or periodic orbits, and implies that all attractor sets must reside on the invariant boundaries. 
The fixed points of system~\eqref{dynsysTXS} on the extended state-space can be found in Table~\ref{FP}. 
\begin{table}[ht!]
	\begin{center}
		\resizebox{\textwidth}{!}{
			\begin{tabular}{|c|c|c|c|c|c|c|c|}
				\hline            & & & & & & &\\ [-2ex]
				\begin{tabular}{c} {\bf Fixed}\\{\bf points}\end{tabular} & $X$ & $S$ & $T$ & $\Omega_\mathrm{pf}$ &{\bf Eigenvalues} & {\bf Eigenvectors} & {\bf Restrictions} \\ [1ex]
				\hline\hline      & & & & & & & \\[-2ex]
				$\ \mathrm{R}_0$    & $-1$ & $0$ & $0$ & $0$ & \begin{tabular}{c} $8(1-\frac{3}{4}\gamma_\mathrm{pf}),$\\  $2,$\\ $4$\\ \end{tabular} & \begin{tabular}{c} $e_1,$\\ $e_2,$\\ $e_3$ \end{tabular} & \\[1ex] \hline
				$\mathrm{N}_0$    & $1$ & $0$ & $0$ & $0$ & $0$, $0$, $0$ & &\\[1ex]\hline
				$\mathrm{N}_1$    & $-1$ & $0$ & $1$ & $0$ & $0$, $0$, $0$ & &\\[1ex]\hline
				$\mathrm{L}_\mathrm{R}$    & $X_*$ & $0$ & $0$ & $1-X^2_*$ & \begin{tabular}{c} $0,$\\ $1-X_*,$\\ $2(1-X_*)$ \end{tabular} & \begin{tabular}{c} $e_1,$\\ $e_2,$\\ $(1+X_*)e_2-(1-X_*)e_3$ \end{tabular}  &  \begin{tabular}{c}  $\gamma_\mathrm{pf}=\frac{4}{3},$\\ $X_*\in(-1,1)$\end{tabular}  \\ [1ex]\hline
				$\mathrm{L}_\mathrm{dS}$    & $X_0$ & $-\sqrt{1-X^2_0}$ & $\frac{(1-X_0)^2}{(1-X_0)^2+\sqrt{1-X^2_0}}$ & $0$ & $0$, $\lambda_+$, $\lambda_-$ &  & $X_0\in(-1,1)$ \\[1ex]
				\hline
		\end{tabular} }
	\end{center}\vspace{-0.5cm}
	\caption{Fixed  points of the dynamical system~\eqref{dynsysTXS} in the state-space $\bar{\bf S}_\mathrm{J}$. We defined $\lambda_{\pm}=(-b\pm\sqrt{b^2-4ac})/2a$, with $a=\left(S_0-(X_0-1)^2\right)^3$, $b=-3(\gamma_\mathrm{pf}+1)(X_0-1)^2 k$, $k=2(X_0-1)^2(X_0+1)-S_0\left( 2(1-X_0)+{X_0}^2(3-X_0) \right)$, $S_0=-\sqrt{1-{X_0}^2}$ and c such that $4ac=\gamma_\mathrm{pf}\left( 6(X_0-1)^2 k(X_0) \right)^2$. Since $-1< X_0< 1$, then $a<0$, $b<0$ and $4ac>0$.}
	\label{FP}
\end{table}

The physical interpretation of the fixed points can be inferred from the value of $q$ which, in terms of our state-space variables, is given by
\begin{equation}
q=1+2\frac{TS}{(1-T)(1-X)^2}.
\end{equation}

For $\gamma_\mathrm{pf}\neq 4/3$, in the extended state-space, there exists the isolated fixed point $\mathrm{R}_0$ at the intersection of invariant vacuum subset $\{\Omega_\mathrm{pf}=0\}$ with the $\{T=0\}$ invariant boundary,
and corresponds to the vacuum self-similar solution with $q=1$ given in~\eqref{SSsol}. 

If $\gamma_\mathrm{pf}<4/3$ the isolated fixed point $\mathrm{R}_0$ is a hyperbolic source and the $\alpha$-limit point of a two-parameter family of orbits into $\mathbf{S}_\mathrm{J}$. On the other hand if $\gamma_\mathrm{pf}>4/3$, then $\mathrm{R}_0$ is a hyperbolic saddle with unstable manifold given by the vacuum invariant set.

When $\gamma_\mathrm{pf}=4/3$ there exists an extra line of fixed points $\mathrm{L}_\mathrm{R}$ on the $\{T=0\}$ invariant boundary parametrised by $X_*\in(-1,1)$. The extension of the line to the vacuum invariant set with $X_*=\pm1$ are the fixed points $\mathrm{N}_0$ and $\mathrm{R}_0$, respectively. 
The linearisation around $\mathrm{L}_\mathrm{R}$ yields one zero eigenvalue and two positive eigenvalues. Moreover since 
the eigenvector associated with the zero eigenvalue points along the line, then $\mathrm{L}_\mathrm{R}$ is normally hyperbolic. Therefore each fixed point on the line is the $\alpha$-limit point of a 1-parameter family of interior orbits, and the whole line is the $\alpha$-limit set of a 2-parameter family of interior orbits. It also follows that, when $\gamma_\mathrm{pf}=4/3$, no interior orbit originates from $\mathrm{R}_0$, although $\mathrm{R}_0$ is the $\alpha$-limit point of a 1-parameter family of orbits in the vacuum boundary state-space $\mathbf{S}_{\mathrm{J}\mathrm{vac}}$, which will be analysed in detail in Section~\ref{sec-vacuum-boundary}. 

In addition to the hyperbolic fixed point $\mathrm{R}_0$ for $\gamma_\mathrm{pf}\neq 4/3$, and the normally hyperbolic set $\mathrm{L}_\mathrm{R}\cup\mathrm{R}_0$ when $\gamma_\mathrm{pf}=4/3$, there exists another line of fixed points $\mathrm{L}_\mathrm{dS}$ which corresponds to the de-Sitter solutions~\eqref{dSsol} with $q=-1$. The linearised matrix of the system around $\mathrm{L}_\mathrm{dS}$ has one zero eigenvalue and two negative eigenvalues. The eigenvector associated to the zero eigenvalue points along the lines and hence $\mathrm{L}_\mathrm{dS}$ is normally hyperbolic, meaning that each point on the line is the $\omega$-limit point of a 1-parameter family of orbits in $\mathbf{S}_\mathrm{J}$, and the whole line is the $\omega$-limit set of a 2-parameter family of interior orbits.

The extension of $\mathrm{L}_{\mathrm{dS}}$ to the invariant boundaries  $\{T=0\}$ and $\{T=1\}$ are the two fixed points $\mathrm{N}_0$ and $\mathrm{N}_1$, respectively. Both $\mathrm{N}_0$ and $\mathrm{N}_1$ have all eigenvalues zero and $q$ becomes undetermined. The blow-up of these two non-hyperbolic fixed points is performed in Section~\ref{sec-blow-up}.
%
\subsection{Global dynamics and the conformal region to the Einstein frame}
In this section we state our main results concerning the dynamics in the Jordan frame and situate the conformal region to the Einstein frame in our global Jordan state-space formulation. We start with the simple result:
\begin{lemma}
	The	conformal region to the Einstein frame $\{F>0\}\cap \mathbf{S}_\mathrm{J}$ is future-invariant.
\end{lemma}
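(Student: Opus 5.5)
The plan is to work directly with the master equations \eqref{dynsysTXS} on $\mathbf{S}_\mathrm{J}$. By \eqref{Usual-Jordan} we have $R=-\frac{1}{\alpha}\frac{(1-T)S}{T}$, and $0<T<1$ in $\mathbf{S}_\mathrm{J}$. Since $F=2\alpha R$, it follows that $\{F>0\}\cap\mathbf{S}_\mathrm{J}=\{S<0\}\cap\mathbf{S}_\mathrm{J}$. Because $\mathbf{S}_\mathrm{J}$ itself is invariant, it suffices to show that no orbit starting in $\{S<0\}$ can reach $\{S\geq 0\}$ in forward time. The relative boundary of this region inside $\mathbf{S}_\mathrm{J}$ is the surface $\{S=0\}\cap\mathbf{S}_\mathrm{J}$, so the whole argument reduces to determining the sign of $S'$ there.

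First, I will evaluate the $S'$ equation at $S=0$. The bracket term is multiplied by $S$ and therefore drops out, which leaves
\begin{equation*}
S'\big|_{S=0}=-(1+X)\,T .
\end{equation*}
In $\mathbf{S}_\mathrm{J}$ we have $\Omega_\mathrm{pf}=1-X^2-S^2>0$, so at $S=0$ this gives $|X|<1$, and in particular $1+X>0$. Together with $T>0$, this yields $S'<0$ at every point of $\{S=0\}\cap\mathbf{S}_\mathrm{J}$. As a cross-check, the same conclusion follows in the original variables: $\dot F|_{F=0}=\sqrt{\alpha/3}\,(\ts+\xs)$, and on the upper cone the constraint \eqref{txconstr} gives $\ts^2=\xs^2+\frac{2}{\alpha}\rho_\mathrm{pf}>\xs^2$ with $\ts>0$, so $\ts+\xs>0$.

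Second, I will run the standard first-crossing argument. Suppose an orbit has $S(\bar t_0)<0$ and $S(\bar t_1)\ge 0$ for some $\bar t_1>\bar t_0$. Let $\bar t_*$ be the first time after $\bar t_0$ at which $S=0$. Then $S<0$ on $[\bar t_0,\bar t_*)$, which forces $S'(\bar t_*)\ge 0$. This contradicts the strict inequality obtained in the first step, and since the orbit stays in $\mathbf{S}_\mathrm{J}$ the computation applies at $\bar t_*$. Hence $\{S<0\}\cap\mathbf{S}_\mathrm{J}$ is future-invariant.

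There is no real obstacle here. The only point needing care is the degenerate case $X=-1$, where $S'$ would vanish at $S=0$; this is exactly the self-similar orbit $\ts=-\xs$ on the vacuum boundary. That case is excluded because $\mathbf{S}_\mathrm{J}$ is the open cylinder with $\Omega_\mathrm{pf}>0$, so the strict inequality $1+X>0$ must be invoked explicitly. As a remark, the same computation shows that orbits cross $\{S=0\}$ transversally from $S>0$ to $S<0$, consistent with the later claim that the Einstein region is a trapping region.
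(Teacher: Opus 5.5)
Your proof is correct and is essentially the paper's argument. The paper evaluates $F'|_{S=0}=2(1-T)(1+X)>0$, which is exactly $-2\frac{(1-T)}{T}$ times your $S'|_{S=0}=-(1+X)T$. Your explicit first-crossing argument and your derivation of $1+X>0$ from $\Omega_\mathrm{pf}>0$ make precise what the paper leaves implicit.
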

\begin{proof}
	The conformal factor $F$ is given in terms of the global state-space variables by
	\begin{equation}
	F(X,S,T)=-2\frac{(1-T)}{T}S,
	\end{equation}
    and obeys
    \begin{equation}
    F^{\prime}=2\frac{(1-T)}{T}\left[(1+X)T+(1-X)S(1-T)\right].
    \end{equation}
	In the interior of the state-space, $F=0$ is given by the surface $S=0$ from which follows that 
	\begin{equation}
	F^\prime\Big|_{S=0}= 2(1-T)(1+X) >0
	\end{equation}
	and so $F=0$ is a future invariant surface for the flow in $\bf{ S}_\mathrm{J}$ in the direction of growing $F$.
\end{proof}	

Hence the open set of interior solutions in the Jordan frame crossing the surface $S=0$ are \emph{past conformally incomplete} in the Einstein frame. We now state our main results describing the Jordan frame dynamics: 
\begin{theorem}\label{GlobalTheoJordan}
	[Global interior dynamics in the Jordan frame] Consider the dynamical system~\eqref{dynsysTXS}: 
	\begin{itemize}
		\item[(i)] If $2/3<\gamma_\mathrm{pf}< 4/3$, then all interior orbits converge to $\mathrm{R}_0$ as $\bar{t}\rightarrow-\infty$, and to $\mathrm{L}_{\mathrm{dS}}$ (a 1-parameter set to each fixed point on $\mathrm{L}_{\mathrm{dS}}$) as $\bar{t}\rightarrow+\infty$.
		\item[(ii)] If $\gamma_\mathrm{pf}=4/3$, then all interior orbits converge to $\mathrm{L}_\mathrm{R}$ (a 1-parameter set from each point on $\mathrm{L}_\mathrm{R}$) as $\bar{t}\rightarrow-\infty$, and to $\mathrm{L}_{\mathrm{dS}}$ (a 1-parameter set to each fixed point on $\mathrm{L}_{\mathrm{dS}}$) as $\bar{t}\rightarrow+\infty$.
		\item[(iii)] If $4/3<\gamma_\mathrm{pf}< 2$, then all interior orbits converge to $\mathrm{N}_0$ as $\bar{t}\rightarrow-\infty$, and to $\mathrm{L}_{\mathrm{dS}}$ (a 1-parameter set to each fixed point on $\mathrm{L}_{\mathrm{dS}}$) as $\bar{t}\rightarrow+\infty$.
	\end{itemize}
\end{theorem}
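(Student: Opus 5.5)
The plan is to combine Proposition~\ref{MonotoneF} with a description of the flow on each of the three invariant boundaries $\{T=0\}$, $\{T=1\}$, $\{\Omega_\mathrm{pf}=0\}$, and with the local analysis of the non-hyperbolic points $\mathrm{N}_0$, $\mathrm{N}_1$ from their blow-ups (Section~\ref{sec-blow-up}).

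\textbf{Step 1: flow on the boundaries.}
By Proposition~\ref{MonotoneF}, every interior $\alpha$-limit set is a nonempty, compact, connected, invariant subset of $\{T=0\}$. Every $\omega$-limit set lies in $\{T=1\}\cup\{\Omega_\mathrm{pf}=0\}$.

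On $\{T=0\}$ the system reduces to a planar flow on the closed unit disc:
\[
X'=(1-X)\bigl(1-X^2-\tfrac32(\gamma_\mathrm{pf}-\tfrac23)(1-X^2-S^2)\bigr),\qquad S'=-S\bigl[(1-X)X-\tfrac32(\gamma_\mathrm{pf}-\tfrac23)(1-X^2-S^2)\bigr].
\]
I would exhibit a monotone function there, for instance a power combination such as $(1-X)^a\Omega_\mathrm{pf}^b$ adapted to $\gamma_\mathrm{pf}$, or simply $X$ on suitable regions. This rules out periodic orbits and heteroclinic cycles apart from boundary ones.

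On $\{T=1\}$ the equations become $X'=S(1+X)$ and $S'=-(1+X-S^2)$. This gives a disc flow whose only fixed point is $\mathrm{N}_1$, and orbits pass from $S>0$ to $S<0$. On the vacuum cylinder I would use the paper's analysis of $\mathbf{S}_{\mathrm{Jvac}}$ (Section~\ref{sec-vacuum-boundary}): there, orbits go from $\mathrm{R}_0$ or $\mathrm{N}_0$ to $\mathrm{L}_\mathrm{dS}$ or $\mathrm{N}_1$.

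\textbf{Step 2: the $\alpha$-limits.}
For $\gamma_\mathrm{pf}<4/3$, the disc $\{T=0\}$ has fixed points $\mathrm{R}_0$ and $\mathrm{N}_0$. The monotone function shows that the invariant sets are these points and the connecting boundary orbits. Since $\mathrm{R}_0$ is a hyperbolic source in 3D, any $\alpha$-limit set containing a point of the boundary circle must, by the Butler--McGehee lemma, contain $\mathrm{N}_0$ and associated orbits. The blow-up of $\mathrm{N}_0$ must show that no interior orbit has $\alpha$-limit at $\mathrm{N}_0$ or on a heteroclinic structure through it. Hence the $\alpha$-limit is $\mathrm{R}_0$.

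For $\gamma_\mathrm{pf}=4/3$, the normal hyperbolicity of $\mathrm{L}_\mathrm{R}$ (two positive eigenvalues) gives the analogous result. The sign of $X'$ on $\{T=0\}$ then shows that everything in the disc interior is fixed.

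For $\gamma_\mathrm{pf}>4/3$, $\mathrm{R}_0$ is a saddle whose unstable manifold lies in the vacuum set, so no interior orbit emanates from it. The disc flow has $\mathrm{N}_0$ as its global source, and the blow-up must show that $\mathrm{N}_0$ has a 2-parameter family of interior orbits emanating from it.

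\textbf{Step 3: the $\omega$-limits.}
The $\omega$-limit set lies in $\{T=1\}\cup\{\Omega_\mathrm{pf}=0\}$. On $\{T=1\}$ there are no invariant sets other than $\mathrm{N}_1$. On the vacuum boundary the invariant sets are the fixed points and the connecting orbits. Since $\mathrm{L}_\mathrm{dS}$ is normally hyperbolic and attracting, any $\omega$-limit containing a point of $\mathrm{L}_\mathrm{dS}$ equals that point. It then remains to exclude $\omega$-limits at $\mathrm{N}_1$ or $\mathrm{N}_0$, or on heteroclinic chains through them. Here I would use the blow-up of $\mathrm{N}_1$ to show it attracts no 3D interior orbits, together with $T'>0$ near $T=0$ to exclude $\mathrm{N}_0$.

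\textbf{Main obstacle.}
The hardest part is the blow-up analysis of $\mathrm{N}_0$ and $\mathrm{N}_1$, where all eigenvalues vanish. It needs successive weighted (quasi-homogeneous) blow-ups, whose weights I would read off the Newton polygon of the vector field at each point. From the resulting hyperbolic points on the blown-up spheres, one must determine which sectors feed the interior. Only then can Butler--McGehee arguments exclude heteroclinic-chain limit sets.
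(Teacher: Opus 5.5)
Your architecture is the paper's: Proposition~\ref{MonotoneF} confines limit sets to the faces, each face is treated as a planar flow, and the blow-ups of $\mathrm{N}_0,\mathrm{N}_1$ plus local stability single out $\mathrm{R}_0$, $\mathrm{L}_\mathrm{R}$, $\mathrm{P}$ and $\mathrm{L}_\mathrm{dS}$. Your guessed monotone function on $\{T=0\}$ does exist: there $\bigl(\ln(\Omega_\mathrm{pf}/(1-X)^2)\bigr)'=(4-3\gamma_\mathrm{pf})(1-X)$. This is a clean substitute for the Poincar\'e--Bendixson argument of Lemma~\ref{T0SolStr}, and a first integral when $\gamma_\mathrm{pf}=4/3$.

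\textbf{$\alpha$-limits.} Several of your statements about the faces are false, and they sit exactly where the proof has to work.
\begin{itemize}
\item[(a)] For $\gamma_\mathrm{pf}=4/3$, only $S=0$ is fixed on $\{T=0\}$. There $X'=(1-X)S^2$ and $S'=-S(S^2-1+X)$, so the other orbits run $\mathrm{L}_\mathrm{R}\to\mathrm{N}_0$.
\item[(b)] For $\gamma_\mathrm{pf}>4/3$, $\mathrm{N}_0$ is not a global source of the disc. Every orbit with $S\neq0$ is homoclinic to $\mathrm{N}_0$, and $\{S=0\}$ together with a vacuum arc forms a cycle $\mathrm{N}_0\to\mathrm{R}_0\to\mathrm{N}_0$.
\end{itemize}
These loops and cycles are candidate $\alpha$-limit sets. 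Your monotone function is singular at $\mathrm{N}_0$, so it does not exclude them. Knowing that a 2-parameter family leaves $\mathrm{N}_0$ says nothing about the remaining orbits. What closes the argument is that, after the blow-up, the loops become orbits $\mathrm{P}\to\mathrm{S}_\pm$ and the cycle becomes the chain $\mathrm{P}\to\mathrm{R}_0\to\mathrm{S}_\pm$. Hence every non-trivial orbit of the blown-up $\{T=0\}$ face leads back to $\mathrm{P}$, either directly or via Butler--McGehee at $\mathrm{R}_0$ or $\mathrm{T}_\pm$. An $\alpha$-limit set containing such an orbit would therefore contain the hyperbolic source $\mathrm{P}$ and collapse to $\{\mathrm{P}\}$. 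Among the remaining equilibria, only $\mathrm{P}$ has an unstable manifold meeting the interior.

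\textbf{$\omega$-limits.} The same problem appears here.
\begin{itemize}
\item[(c)] On $\{T=1\}$ it is false that $\mathrm{N}_1$ is the only invariant set. The disc consists of homoclinic loops of $\mathrm{N}_1$ (Lemma~\ref{T1SolStr}), including the vacuum circle, since $\theta'=-(1+\cos\theta)$ there. These loops, and chains such as $\{\mathrm{R}_0\to\mathrm{N}_1\}$ followed by a loop, are exactly the non-trivial candidates. The fact that no interior orbit converges to $\mathrm{N}_1$ does not rule them out.

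The blow-up does. The loops become orbits $\mathrm{W}_+\to\mathrm{W}_-$. The only unstable direction of the saddle $\mathrm{W}_-$ is the vacuum orbit $\mathrm{W}_-\to\mathrm{dS}_1$, and $\mathrm{R}_1$ feeds only $\mathrm{dS}_1$ or $\mathrm{W}_+$. Butler--McGehee then forces such an $\omega$-limit set to contain $\mathrm{dS}_1$. This is the endpoint of the normally attracting line $\mathrm{L}_\mathrm{dS}$ (eigenvalues $-6,-6\gamma_\mathrm{pf},0$), so the set collapses to a point.
\item[(d)] The claim ``$T'>0$ near $T=0$'' is false near $\mathrm{N}_0$. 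We have $T'=T(1-T)\bigl[ST+(1-T)\bigl((1-X)^2+\tfrac32(\gamma_\mathrm{pf}-\tfrac23)\Omega_\mathrm{pf}\bigr)\bigr]$, and the term $ST$ dominates for $S<0$. Indeed $\mathrm{L}_\mathrm{dS}$, on which $T'=0$, accumulates at $\mathrm{N}_0$.

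Excluding $\mathrm{N}_0$ and the chains through it again needs its blow-up. The physical side near $\mathrm{N}_0$ drains into $\mathrm{dS}_0$, the other endpoint of the normally attracting $\mathrm{L}_\mathrm{dS}$. The remaining chains either end there ($\mathrm{S}_+\to\mathrm{Z}_+\to\mathrm{Z}_-\to\mathrm{dS}_0$, $\mathrm{S}_-\to\mathrm{dS}_0$) or lead, via Butler--McGehee at $\mathrm{R}_0$, to one of the repelling sets $\mathrm{R}_0$, $\mathrm{L}_\mathrm{R}$, $\mathrm{P}$.
\end{itemize}
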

\begin{proof}
	From Proposition~\ref{MonotoneF}, \emph{all} interior orbits originate from the invariant boundary $\{T=0\}$ and must end at the invariant boundaries $\{T=1\}$ and/or $\{\Omega_{\mathrm{pf}}=0\}$. The blow-up of $\mathrm{N}_0$ given in Proposition~\ref{BupN0}, together with the analysis of the flow on the 2-dimensional invariant boundary $\{T=0\}$ given in Lemma~\ref{T0SolStr}, then shows that all possible $\alpha$-limit sets are fixed points on $\{T=0\}$ and, thus, they must be the hyperbolic fixed point $\mathrm{R}_0$, the normally hyperbolic line of fixed points $\mathrm{L}_\mathrm{R}$ or the hyperbolic fixed point $\mathrm{P}$ on the blow-up of $\mathrm{N}_0$, depending on whether $\gamma_{\mathrm{pf}}<4/3$, $\gamma_{\mathrm{pf}}=4/3$ or $\gamma_{\mathrm{pf}}>4/3$, respectively, as follows from their local stability analysis. Conversely, the blow-up of the non-hyperbolic fixed point $\mathrm{N}_1$ given in Proposition~\ref{BupN1}, together with the description of the flow on the invariant boundaries $\{T=1\}$ and $\{\Omega_\mathrm{pf}=0\}$, given by Lemma~\ref{T1SolStr} and~\ref{Om0SolStr}, respectively, entails that the only possible $\omega$-limit set is the normally hyperbolic line of de-Sitter fixed points $\mathrm{L}_\mathrm{dS}$, where each point on the line attracts a 1-parameter family of interior orbits.
\end{proof}
As a corollary, and since \emph{all} interior orbits have as $\omega$-limit set the line of de-Sitter fixed points with $S<0$, in particular the orbits lying in the region $S>0$ ($F<0$), i.e. the shaded region in Figure~\ref{fig:CylJF}, must eventually cross the surface $S=0$.
\begin{figure}[!htb]
	\centering
		\subfigure[ $2/3<\gamma_\mathrm{pf}<4/3.$]{\label{fig:cilinder1}
	       \includegraphics[trim={3cm 4.2cm 5cm 5cm},clip,width=0.32\textwidth]{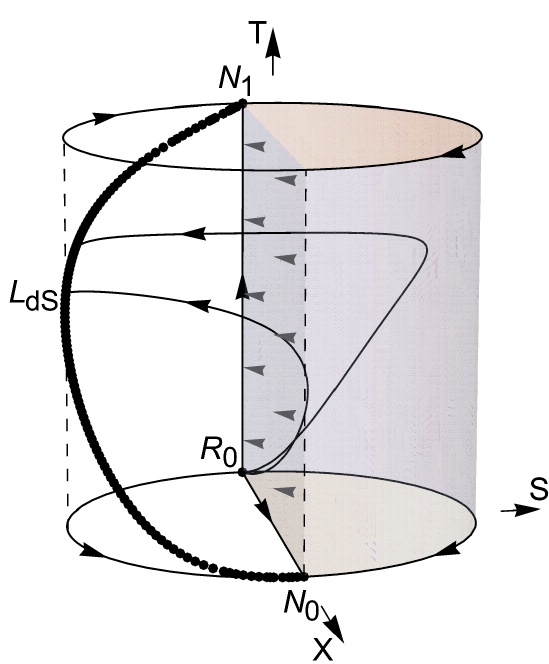}}
		\subfigure[ $\gamma_\mathrm{pf}=4/3$.]{\label{fig:cilinder2}
			\includegraphics[trim={3.2cm 4.2cm 5.2cm 5.2cm},clip,width=0.32\textwidth]{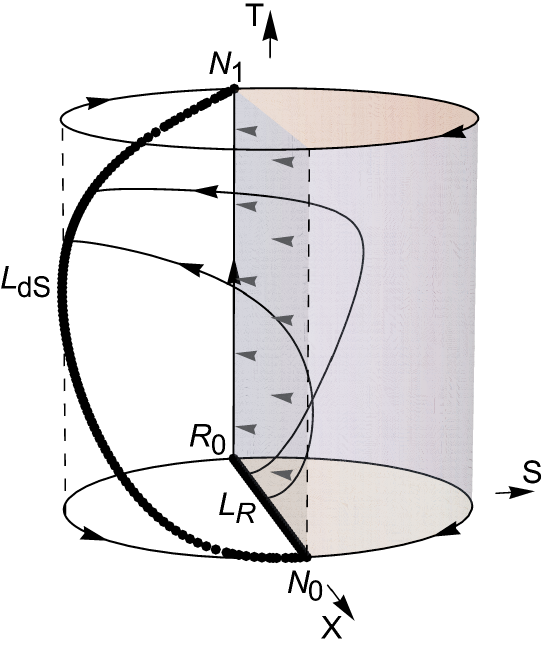}}
		\subfigure[ $4/3<\gamma_\mathrm{pf}<2$.]{\label{fig:cilinder3}
			\includegraphics[trim={3.3cm 4.6cm 5.6cm 5.6cm},clip,width=0.32\textwidth]{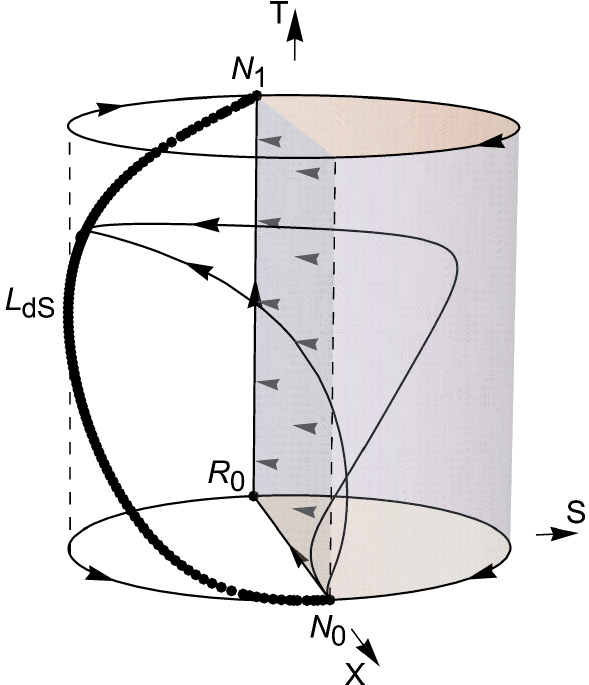}}	
	\caption{The global cylinder state-space in the Jordan frame. The shaded region corresponds to a conformal factor $F<0$.}
	\label{fig:CylJF}
\end{figure}
\subsection{Invariant boundary $\{T=0\}$}
On the $\{T=0\}$ invariant boundary the induced flow is given by
\begin{subequations}
	\begin{align}
	X^{\prime} &= (1-X)\left(1-X^2-\frac{3}{2}\left(\gamma_\mathrm{pf}-\frac{2}{3}\right)\left(1- X^2 - S^2\right)\right), \\
	S^{\prime} &= -S\left[(1-X)X-\frac{3}{2}\left(\gamma_\mathrm{pf}-\frac{2}{3}\right)\left(1- X^2 - S^2\right)\right],   
	\end{align}
\end{subequations}
and the state-space is the unit disk with boundary given by the vacuum invariant set $X^2+S^2=1$. Since the flow is invariant under the transformation $S\rightarrow -S$, the state-space is symmetric with respect to the $X$-axis and $S=0$ is an invariant subset. The intersection of $S=0$ with the invariant boundary $\{\Omega_\mathrm{pf}=0\}$ yields the two fixed points on $\{T=0\}$:  $\mathrm{R}_0$ and $\mathrm{N}_0$  which are the only fixed points of the above system if $\gamma_\mathrm{pf}\neq 4/3$. The fixed point $\mathrm{N}_0$ is non-hyperbolic while, on $\{T=0\}$, $\mathrm{R}_0$ is a source if $2/3<\gamma_\mathrm{pf}<4/3$ and a saddle if $4/3<\gamma_\mathrm{pf}<2$. If $\gamma_\mathrm{pf}=4/3$, then $\mathrm{R}_0$ has a zero eigenvalue due to the appearance of the normally hyperbolic line of fixed points $\mathrm{L}_\mathrm{R}$ whose extension to the vacuum invariant boundary $\{\Omega_\mathrm{pf}=0\}$ are the fixed points $\mathrm{R}_0$ and $\mathrm{N}_0$. 

The next lemma gives a characterisation of the flow in a neighbourhood of the non-hyperbolic fixed point $\mathrm{N}_0$ on the $\{T=0\}$ invariant boundary:
\begin{lemma}[Blow-up of $\mathrm{N}_0$ on $\{T=0\}$]\label{BUP_N0_T0}
	On the $\{T=0\}$ invariant boundary, the flow in a neighbourhood of $\mathrm{N}_0$ is as depicted in Figure~\ref{fig:BUPN0}, i.e. when $2/3<\gamma_\mathrm{pf}<4/3$, $\mathrm{N}_0$ is an isolated fixed point with a parabolic sector; When $4/3<\gamma_\mathrm{pf}<2$, $\mathrm{N}_0$ is an isolated fixed point with two elliptic sectors; When $\gamma_\mathrm{pf}=4/3$, $\mathrm{N}_0$ is the end point of the normally hyperbolic line of fixed points $\mathrm{L}_\mathrm{R}$, where the orbits originating from the line end at $\mathrm{N}_0$.
\end{lemma}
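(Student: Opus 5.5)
The plan is to desingularise $\mathrm{N}_0$, which sits at $(X,S)=(1,0)$ on $\{T=0\}$, by a quasi-homogeneous blow-up. To keep formulas short I set $u:=1-X\geq 0$ and $\tilde\gamma:=\tfrac32(\gamma_\mathrm{pf}-\tfrac23)\in(0,2)$. Note that $1-\tilde\gamma=\tfrac32(\tfrac43-\gamma_\mathrm{pf})$ changes sign exactly at $\gamma_\mathrm{pf}=4/3$.

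\textbf{Step 1: local form near $\mathrm{N}_0$.} Since $\Omega_\mathrm{pf}=u(2-u)-S^2$, the $\{T=0\}$ system becomes
\begin{align*}
u' &= -u\left[(1-\tilde\gamma)u(2-u)+\tilde\gamma S^2\right],\\
S' &= -S\left[(1-2\tilde\gamma)u+\tilde\gamma S^2+O(u^2)\right].
\end{align*}
The linear part vanishes, as in Table~\ref{FP}. The leading terms are quasi-homogeneous with weights $(2,1)$ for $(u,S)$. The physical disk is $\Omega_\mathrm{pf}\geq0$, i.e.\ $u\gtrsim S^2/2$ near $\mathrm{N}_0$, so only the positive-$u$ directional chart is needed.

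\textbf{Step 2: blow-up chart.} I use $u=r^2$, $S=ry$ with $r\geq0$; in the physical region $|y|\lesssim\sqrt2$. I then rescale time by $r^{-2}$, which only reparametrises orbits for $r>0$. To leading order this gives
\[
r'=-\tfrac{r}{2}\left(2(1-\tilde\gamma)+\tilde\gamma y^2\right)+O(r^3),\qquad y'=\tilde\gamma\, y\left(1-\tfrac{y^2}{2}\right)+O(r^2).
\]
The exceptional divisor $\{r=0\}$ is invariant and carries three fixed points.
\begin{itemize}
\item[$\bullet$] $y=\pm\sqrt2$ are where the blown-up vacuum boundary meets the divisor. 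Their eigenvalues are $-2\tilde\gamma$ and $-1$, so they are hyperbolic sinks for every $\gamma_\mathrm{pf}$.
\item[$\bullet$] $y=0$, on the invariant axis $S=0$, has eigenvalues $\tilde\gamma>0$ tangent to the divisor and $-(1-\tilde\gamma)$ transverse to it.
\end{itemize}
I will check that the higher-order terms do not change these signs. I will also confirm that the vacuum circle $X^2+S^2=1$ is carried to invariant curves through $y=\pm\sqrt2$.

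\textbf{Step 3: blow down, case by case.}
\begin{itemize}
\item[$\bullet$] If $\gamma_\mathrm{pf}<4/3$, the point $y=0$ is a saddle. Its stable manifold is the segment $S=0$, $X<1$, which enters $\mathrm{N}_0$. Every other nearby orbit slides along the divisor to $y=\pm\sqrt2$ and so enters $\mathrm{N}_0$ tangentially to the vacuum circle. Blowing down, $\mathrm{N}_0$ is a sink within a parabolic sector.
\item[$\bullet$] If $\gamma_\mathrm{pf}>4/3$, the point $y=0$ is a source. Orbits leave $\mathrm{N}_0$ along $S=0$ and return to it tangentially to the vacuum boundary. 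By the $S\to-S$ symmetry this gives two homoclinic families, i.e.\ two elliptic sectors.
\item[$\bullet$] If $\gamma_\mathrm{pf}=4/3$, then $\tilde\gamma=1$ and the transverse eigenvalue at $y=0$ vanishes. Here $u'=-uS^2$ and $S'=-S(S^2-u+u^2)$, so $u'=0$ on $S=0$. The axis is therefore the line $\mathrm{L}_\mathrm{R}$, i.e.\ $\{r\geq0,\ y=0\}$ in the chart, which is now a line of fixed points. It is normally repelling in $y$ with eigenvalue $1$, and the orbits leaving it go to $y=\pm\sqrt2$, i.e.\ to $\mathrm{N}_0$. To see this cleanly I will divide the vector field by the common factor instead of invoking a centre manifold.
\end{itemize}

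\textbf{Main obstacle.} The delicate step is rigour at the degenerate parameter $\gamma_\mathrm{pf}=4/3$ and controlling the $O(r^2)$ corrections near $y=0$. When $\tilde\gamma$ is close to $1$ the transverse eigenvalue is small, and I must make sure no additional fixed points on the divisor appear from the higher-order terms. I would handle this by working with the exact (not truncated) blown-up equations, which are polynomial in $(r,y)$, and checking the fixed points on $\{r=0\}$ directly.
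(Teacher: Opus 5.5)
Your argument is correct, and it reaches the lemma by a different, more economical route than the paper's.

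\textbf{The paper's route.} The paper does not treat $\{T=0\}$ on its own. It performs the full three-dimensional $(1,1,2)$-quasi-homogeneous spherical blow-up of $\mathrm{N}_0$ and then restricts to $\{T=0\}$ (Remark~\ref{BupN0T0}). On that boundary the first blow-up is homogeneous in $(1-X,S)$. That weighting does not match the quadratic tangency of the vacuum circle at $\mathrm{N}_0$. The tangency therefore reappears as the non-hyperbolic points $\mathrm{Q}_\pm$. A second blow-up is then needed to separate $\mathrm{T}_\pm$ (where $S^2\ll 1-X$) from $\mathrm{S}_\pm$ (where $S^2=2(1-X)$), which produces the chain $\mathrm{P}\rightarrow\mathrm{T}_\pm\rightarrow\mathrm{S}_\pm$.

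\textbf{Your route.} Your $(2,1)$ weighting is adapted to the tangency from the start. Your point $y=0$ merges $\mathrm{P}$, $\mathrm{T}_\pm$ and the orbits $\mathrm{P}\rightarrow\mathrm{T}_\pm$ into one point, which is hyperbolic for $\gamma_\mathrm{pf}\neq 4/3$. Your points $y=\pm\sqrt2$ are $\mathrm{S}_\pm$. The whole physical part of the divisor lies in a single chart, so one blow-up suffices.

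\textbf{Trade-off.} The paper's route gets the lemma for free from the 3D analysis it needs anyway for Proposition~\ref{BupN0}, namely the interior orbits leaving $\mathrm{P}$ when $\gamma_\mathrm{pf}>4/3$. Your route is self-contained and shorter, but says nothing about interior orbits.

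\textbf{A small slip.} At $\gamma_\mathrm{pf}=4/3$ the equation is $S'=-S(S^2-u)$. There is no $u^2$ term, since its coefficient is $1-\tilde\gamma=0$. This does not affect your conclusions.

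\textbf{Your ``main obstacle'' is not an obstacle.} After dividing by $r^2$, the exact blown-up system is
\[ r'=-\tfrac{r}{2}\big[(1-\tilde\gamma)(2-r^2)+\tilde\gamma y^2\big],\qquad y'=y\big[\tilde\gamma(1-\tfrac{y^2}{2})+\tfrac{1-\tilde\gamma}{2}r^2\big], \]
with $\Omega_\mathrm{pf}=r^2(2-r^2-y^2)$. From this:
\begin{itemize}
\item On $\{r=0\}$ the only fixed points are $y=0,\pm\sqrt2$, so no extra fixed points can appear.
\item In the physical region $0<y<\sqrt{2-r^2}$ one has $1-\tfrac{y^2}{2}>\tfrac{r^2}{2}$. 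Hence the bracket in $y'$ exceeds $r^2/2$, and $y'>0$ for every $\gamma_\mathrm{pf}$.
\item This monotonicity makes all three blow-down arguments immediate, including the source-to-sink families that form the two elliptic sectors.
\item At $\gamma_\mathrm{pf}=4/3$, the quantity $r^2/(2-y^2)$ is a first integral; equivalently, $\Omega_\mathrm{pf}/(1-X)^2$ is conserved on $\{T=0\}$. This gives the orbits from $\mathrm{L}_\mathrm{R}$ to $\mathrm{N}_0$ explicitly, with no need to divide out the common factor.
\end{itemize}
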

\begin{proof}
	The proof follows from the blow-up of $\mathrm{N}_0$ done in Section~\ref{BUP_N0} and given by Proposition~\ref{BupN0} when restricted to the invariant boundary $\{T=0\}$ (see Remark~\ref{BupN0T0}).
\end{proof}
\begin{figure}[!htb]
	\centering
		\includegraphics[width=0.965\textwidth, trim=1.1cm 7.1cm 17.5cm 0.7cm, clip]{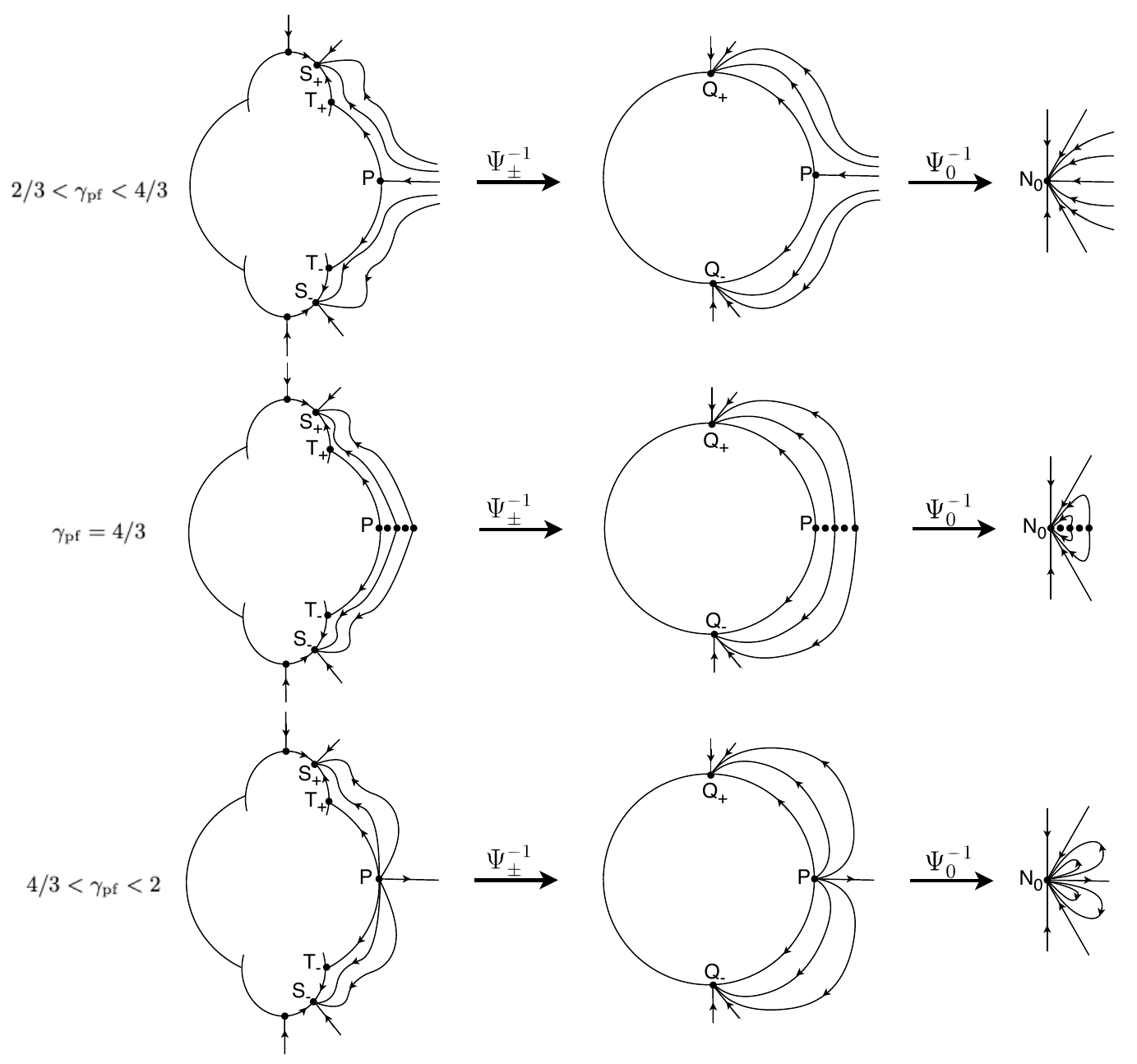}
	\caption{Successive blow-ups of $\mathrm{N}_0$ on $\{T=0\}$.}\label{fig:BUPN0}
\end{figure}
\begin{lemma}[Orbit structure of $\{T=0\}$]\label{T0SolStr}
	The orbit structure on the $\{T=0\}$ invariant boundary is as depicted in Figure \ref{fig:T0}.
\end{lemma}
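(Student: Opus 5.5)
The plan is to treat the $\{T=0\}$ boundary as a planar system on the closed unit disk $\bar D=\{X^2+S^2\le 1\}$, which is symmetric under $S\to -S$. I will combine the local information already obtained (the fixed points $\mathrm{R}_0$, $\mathrm{N}_0$, the line $\mathrm{L}_\mathrm{R}$, and Lemma~\ref{BUP_N0_T0}) with a global argument that rules out periodic orbits and other exotic limit sets. Poincar\'e--Bendixson then forces every interior orbit of the disk to connect fixed points.

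\textbf{Step 1 (boundary and axis).} First I describe the flow on the invariant circle $X^2+S^2=1$. There $X'=(1-X)(1-X^2)\ge 0$, vanishing only at $X=\pm1$. Hence the two semicircles $S>0$ and $S<0$ are heteroclinic orbits from $\mathrm{R}_0$ to $\mathrm{N}_0$. On the invariant axis $S=0$ we have $X'=\tfrac32(1-X)(1-X^2)(\tfrac43-\gamma_\mathrm{pf})$. So the segment is an orbit from $\mathrm{R}_0$ to $\mathrm{N}_0$ if $\gamma_\mathrm{pf}<4/3$, an orbit from $\mathrm{N}_0$ to $\mathrm{R}_0$ if $\gamma_\mathrm{pf}>4/3$, and it is the line $\mathrm{L}_\mathrm{R}$ when $\gamma_\mathrm{pf}=4/3$. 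By the symmetry it then suffices to study the open upper half-disk $D_+=\{S>0,\ X^2+S^2<1\}$, which is invariant and contains no fixed points (these lie only on $S=0$ or on the circle).

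\textbf{Step 2 (no periodic orbits; limit sets).} Since $D_+$ contains no fixed points, it contains no periodic orbit either: a periodic orbit in a planar region must enclose a fixed point, and $D_+$ is simply connected with no equilibria. Alternatively one can exhibit a Dulac function such as $1/\big(S(1-X)\big)$, or a monotone quantity built from $\Omega_\mathrm{pf}=1-X^2-S^2$ and $X$. By Poincar\'e--Bendixson, every $\alpha$- and $\omega$-limit set of an orbit in $D_+$ therefore consists of fixed points on $\partial D_+$ together with connecting orbits, that is, of $\mathrm{R}_0$, $\mathrm{N}_0$ and the boundary heteroclinic orbits (or of points of $\mathrm{L}_\mathrm{R}$ when $\gamma_\mathrm{pf}=4/3$). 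Heteroclinic cycles are excluded because the boundary of $D_+$ consists of two orbits which are both oriented from $\mathrm{R}_0$ to $\mathrm{N}_0$ when $\gamma_\mathrm{pf}<4/3$, so they do not form a cycle. When $\gamma_\mathrm{pf}>4/3$ they do form a cycle $\mathrm{R}_0\to\mathrm{N}_0\to\mathrm{R}_0$, and this cycle must be excluded with the help of a monotone function; this is the main obstacle. My first try is $X$ together with the sign of $X'$ in the interior, or the quantity $(1-X)^{-k}\Omega_\mathrm{pf}$ for a suitable $k$, aiming for a function that is strictly monotone on $D_+$ and forbids accumulation on the boundary cycle.

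\textbf{Step 3 (assemble case by case).} The three cases then follow from local information.
\begin{itemize}
\item For $\gamma_\mathrm{pf}<4/3$, $\mathrm{R}_0$ is a hyperbolic source within $\{T=0\}$. By Lemma~\ref{BUP_N0_T0}, $\mathrm{N}_0$ has a single parabolic (attracting) sector. Hence all interior orbits go from $\mathrm{R}_0$ to $\mathrm{N}_0$.
\item For $\gamma_\mathrm{pf}=4/3$, normal hyperbolicity of $\mathrm{L}_\mathrm{R}$ with a positive transverse eigenvalue means each point of the line emits one orbit into $D_+$. By Lemma~\ref{BUP_N0_T0} these orbits end at $\mathrm{N}_0$.
\item For $\gamma_\mathrm{pf}>4/3$, $\mathrm{R}_0$ is a saddle whose unstable manifold lies in the vacuum circle. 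No interior orbit can therefore have $\alpha$-limit $\mathrm{R}_0$, and none can have $\omega$-limit $\mathrm{R}_0$ except the axis. Consequently both limit sets of each orbit in $D_+$ are $\mathrm{N}_0$, which realises the homoclinic orbits filling the two elliptic sectors of Lemma~\ref{BUP_N0_T0}. This is consistent with Figure~\ref{fig:T0} and avoids the need for a global monotone function in this case.
\end{itemize}
In the last case, the boundary-cycle issue from Step 2 reduces to checking that orbits near the circle cannot spiral along it. This follows because the saddle $\mathrm{R}_0$ has no interior unstable direction, so an orbit returning near $\mathrm{R}_0$ is deflected toward the axis and hence into $\mathrm{N}_0$.
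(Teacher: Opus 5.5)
Your skeleton matches the paper's proof: the circle $\Omega_\mathrm{pf}=0$ and the axis $S=0$ as boundary orbits, no fixed points in the half-disks, Poincar\'e--Bendixson, and the local pictures at $\mathrm{R}_0$, $\mathrm{L}_\mathrm{R}$ and $\mathrm{N}_0$ (Lemma~\ref{BUP_N0_T0}). You also rightly notice that for $4/3<\gamma_\mathrm{pf}<2$ the boundary of $D_+$ is a heteroclinic cycle $\mathrm{R}_0\to\mathrm{N}_0\to\mathrm{R}_0$, which must be excluded as a limit set; the paper handles this only implicitly.

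However, your resolution of the cycle is wrong. On $\{T=0\}$ the saddle $\mathrm{R}_0$ has stable direction $e_1$ (eigenvalue $8(1-\tfrac{3}{4}\gamma_\mathrm{pf})<0$), which is the axis. Its unstable direction is $e_2$ (eigenvalue $2$), tangent to the circle. So an orbit of $D_+$ passing near $\mathrm{R}_0$ comes in along the axis and is deflected toward the \emph{circle}, not toward the axis. Moreover, by your own Step~1, orbits near the axis move toward $\mathrm{R}_0$, not into $\mathrm{N}_0$. A hyperbolic sector at $\mathrm{R}_0$ is exactly what an orbit spiralling onto the cycle would use, so nothing at $\mathrm{R}_0$ excludes it. As it stands, ``no interior orbit has $\alpha$- or $\omega$-limit $\mathrm{R}_0$, hence both limit sets are $\mathrm{N}_0$'' is a non sequitur, since the limit set could a priori be the whole cycle.

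The obstruction sits at $\mathrm{N}_0$. By Lemma~\ref{BUP_N0_T0}, the sector of $\mathrm{N}_0$ inside $D_+$, between the incoming semicircle and the outgoing axis, is elliptic. In the blow-up, the semicircle enters through $\mathrm{S}_+$, which within $\{T=0\}$ is a hyperbolic sink (eigenvalues $-1$ and $-3(\gamma_\mathrm{pf}-\tfrac{2}{3})$), while the axis leaves from the source $\mathrm{P}$. Hence an orbit shadowing the semicircle into a neighbourhood of $\mathrm{N}_0$ is captured and converges to $\mathrm{N}_0$, rather than being handed on to the axis; dually in backward time. This rules out the cycle, and it is the fact you need in place of the $\mathrm{R}_0$ argument.

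Alternatively, your guess $(1-X)^{-k}\Omega_\mathrm{pf}$ works with $k=2$. On $\{T=0\}$ one has
\[
\left(\frac{\Omega_\mathrm{pf}}{(1-X)^2}\right)^{\prime}=4\left(1-\tfrac{3}{4}\gamma_\mathrm{pf}\right)(1-X)\,\frac{\Omega_\mathrm{pf}}{(1-X)^2}.
\]
A monotone function is constant on limit sets, away from its discontinuity at $\mathrm{N}_0$. This one equals $(1+X)/(1-X)$ along the axis, so it is not constant there, and no $\alpha$- or $\omega$-limit set of an orbit in $D_+$ can contain the axis orbit. For $\gamma_\mathrm{pf}=4/3$ it is a first integral whose level curves in $D_+$ are exactly the orbits from $\mathrm{L}_\mathrm{R}$ to $\mathrm{N}_0$. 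That also spares you applying Poincar\'e--Bendixson with a continuum of equilibria.

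Separately, the Dulac candidate $1/(S(1-X))$ fails for $\gamma_\mathrm{pf}\neq 4/3$ because its divergence changes sign in $D_+$. You do not need it, since $D_+$ contains no fixed points.
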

\begin{proof}
	The 1-dimensional invariant boundary $\Omega_\mathrm{pf}=1-X^2-S^2=0$ consists of two heteroclinic orbits $\mathrm{R}_0\rightarrow\mathrm{N}_0$ as follows from $X^\prime|_{\Omega_\mathrm{pf}=0}=(1-X)(1-X^2)>0$. Moreover $S=0$ is also invariant for the flow and $X^\prime|_{S=0}=2(1-\frac{3}{4}\gamma_\mathrm{pf})(1-X)(1-X^2)$, which shows that this subset is a heteroclinic orbit $\mathrm{R}_0\rightarrow \mathrm{N}_0$ if $\gamma_\mathrm{pf}<4/3$ and $\mathrm{N}_0\rightarrow \mathrm{R}_0$ if $\gamma_\mathrm{pf}>4/3$, while the case $\gamma_\mathrm{pf}=4/3$ results in the line of fixed points $\mathrm{L}_\mathrm{R}$. Since on each invariant subset $S>0$ and $S<0$ there are no interior fixed points, it then follows by the Poincar\'e-Bendixson theorem, together with the hyperbolicity of $\mathrm{R}_0$, the normal hyperbolicity of $\mathrm{L}_\mathrm{R}$ and Lemma~\ref{BUP_N0_T0}, that each of these invariant subsets consist of:  heteroclinic orbits $\mathrm{R}_0\rightarrow \mathrm{N}_0$ if $\gamma_\mathrm{pf}<4/3$; heteroclinic orbits  originating from the normally hyperbolic line $\mathrm{L}_\mathrm{R}$ (a single orbit from each fixed point on the line) and ending at $\mathrm{N}_0$ if $\gamma_\mathrm{pf}=4/3$; homoclinic orbits of $\mathrm{N}_0$ if $\gamma_{\mathrm{pf}}>4/3$.
\end{proof}
\begin{figure}[!htb]
	\centering
		\subfigure[$2/3<\gamma_\mathrm{pf}<4/3$.]{\label{fig:DustU1}
			\includegraphics[trim={1.4cm 2cm 0.1cm 0.2cm},clip,width=0.31\textwidth]{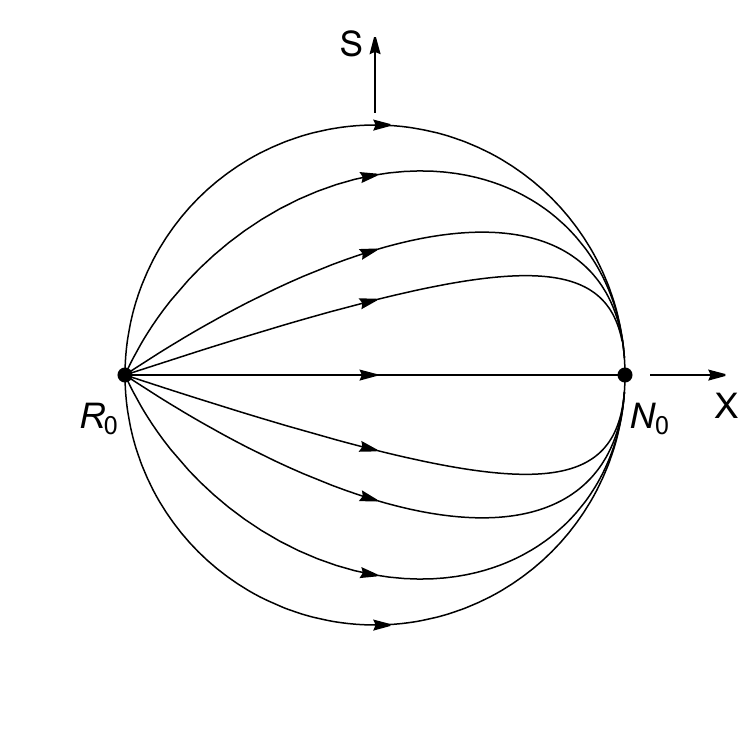}}  
		\subfigure[$\gamma_\mathrm{pf}=\frac{4}{3}$.]{\label{fig:RadiationU1}
			\includegraphics[trim={1.4cm 2cm 0.1cm 0.2cm},clip,width=0.31\textwidth]{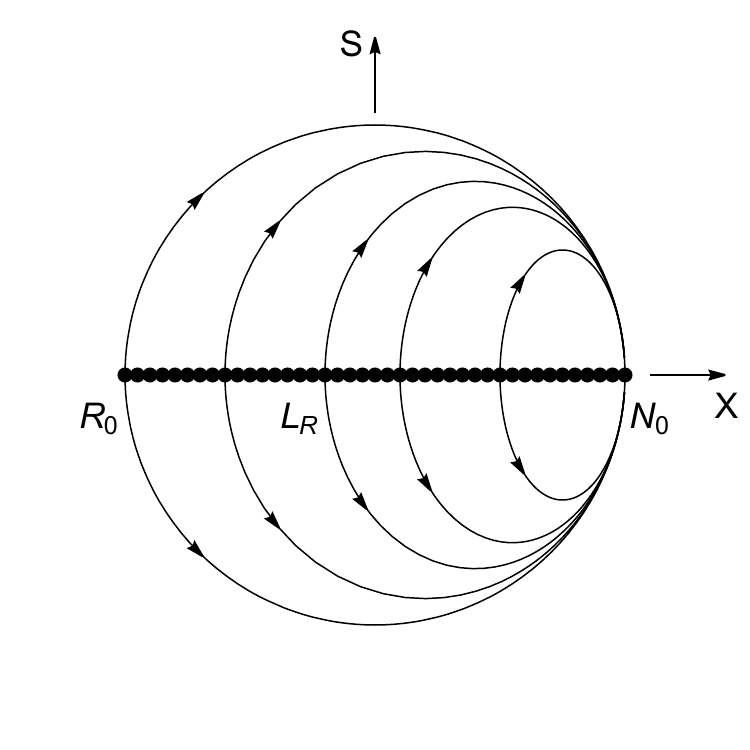}} 
		\subfigure[$4/3<\gamma_\mathrm{pf}<2$.]{\label{fig:32U1}
			\includegraphics[trim={1.4cm 2cm 0.1cm 0.2cm},clip,width=0.31\textwidth]{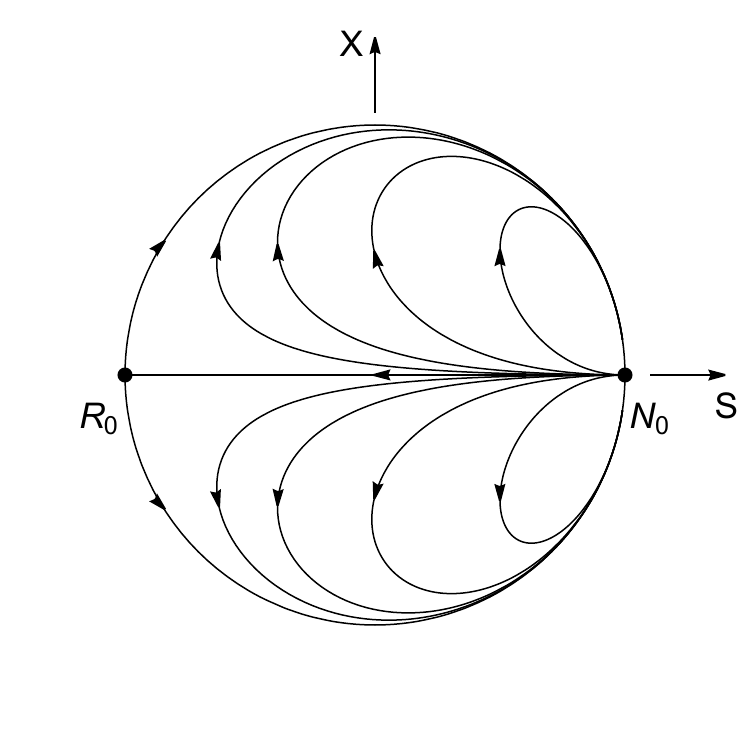}}  
	\caption{The $\{T=0\}$ invariant boundary subset of $\bar{\bf S}_\mathrm{J}$.}\label{fig:T0}
\end{figure}
%
\subsection{Invariant boundary $\{T=1\}$}
The flow induced on the invariant boundary $\{T=1\}$ is simply
\begin{subequations}
	\begin{align}
	X^{\prime} &= S(1+X), \\
	S^{\prime} &= -(1+X-S^2)
	\end{align}
\end{subequations}
and the state-space is the unit disk with boundary given by the vacuum invariant set $X^2+S^2=1$. The above system has a single fixed point $\mathrm{N}_1$ located at the vacuum invariant boundary. The linearisation around $\mathrm{N}_1$ yields two zero eigenvalues, so that $\mathrm{N}_1$ is non-hyperbolic and a characterisation of the flow in a neighbourhood of $\mathrm{N}_1$ is given by the next lemma:
\begin{lemma}[Blow-up of $\mathrm{N}_1$ on $\{T=1\}$]\label{BUP_N1_T1}
	On the invariant boundary $\{T=1\}$ the flow in a neighbourhood of $\mathrm{N}_1$ is as depicted in Figure~\ref{fig:BUPN1T1}, i.e. $\mathrm{N}_1$ has one elliptic sector.
\end{lemma}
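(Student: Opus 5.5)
The plan is a single quasi-homogeneous (weighted) blow-up of $\mathrm{N}_1$. First I shift $\mathrm{N}_1$ to the origin by setting $u:=1+X$. The flow on $\{T=1\}$ then becomes
\begin{equation*}
u^{\prime}=Su,\qquad S^{\prime}=-(u-S^2),
\end{equation*}
and the state-space (the unit disk) becomes $(u-1)^2+S^2\leq 1$, i.e. $S^2\leq 2u-u^2$. Near $\mathrm{N}_1$ this forces $u\gtrsim S^2/2$. This identifies the natural weights: $u$ has weight $2$ and $S$ has weight $1$. These are also the weights for which both components of the vector field are quasi-homogeneous of the same degree, so a single weighted blow-up should suffice. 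This contrasts with the successive blow-ups needed for $\mathrm{N}_0$.

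I will set $u=r^2\bar u$, $S=r\bar S$ with $\bar u+\bar S^2$ normalised, and work in the directional chart $\bar u=1$, i.e. $u=r^2$, $S=rs$. A short computation gives $r^{\prime}=\tfrac12 r^2 s$ and $s^{\prime}=-r\left(1-\tfrac12 s^2\right)$. After rescaling time by $1/r$ I obtain the regular system
\begin{equation*}
r^{\prime}=\tfrac12 rs,\qquad s^{\prime}=-\left(1-\tfrac12 s^2\right).
\end{equation*}
The constraint becomes $s^2\leq 2-r^2$. Hence the physical region in the chart is $|s|\leq\sqrt2$, with the vacuum boundary $\Omega_\mathrm{pf}=0$ corresponding to $s^2=2-r^2$. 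The charts $\bar S=\pm1$ only see directions with $\bar u\to0$, i.e. $s\to\pm\infty$, which lie outside the state-space. So this one chart covers the whole relevant neighbourhood; I will remark on this rather than compute the other charts.

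On the exceptional divisor $\{r=0\}$ the flow is $s^{\prime}=-(1-s^2/2)<0$ for $|s|<\sqrt2$, with fixed points $s=\pm\sqrt2$, both lying on the vacuum boundary. Linearising gives the following:
\begin{itemize}
\item At $s=\sqrt2$ the eigenvalues are $\sqrt2/2$ (in $r$) and $\sqrt2$ (in $s$), so this point is a hyperbolic source.
\item At $s=-\sqrt2$ the eigenvalues are $-\sqrt2/2$ and $-\sqrt2$, so this point is a hyperbolic sink.
\end{itemize}
Thus every orbit of the blown-up flow in a neighbourhood of the divisor leaves from the source at $s=\sqrt2$ and, shadowing the divisor, which is a heteroclinic orbit from $s=\sqrt2$ to $s=-\sqrt2$, enters the sink at $s=-\sqrt2$. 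I will also check that the vacuum circle is invariant and contains the separatrices through the two points, one orbit leaving from the source and one entering the sink.

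Blowing down, each interior orbit near $\mathrm{N}_1$ leaves $\mathrm{N}_1$ tangent to the parabola $S=\sqrt{2u}$, which is tangent to the vacuum boundary. It returns to $\mathrm{N}_1$ tangent to $S=-\sqrt{2u}$. This family of homoclinic-type orbits filling a neighbourhood of $\mathrm{N}_1$, bounded by the two vacuum separatrices, is exactly one elliptic sector, as claimed.

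The main obstacle is only the bookkeeping: confirming that the weights $(2,1)$ give a nondegenerate divisor flow, and that the excluded charts carry no physical orbits. The hyperbolicity of both points on the divisor then makes the rest routine via Hartman--Grobman.
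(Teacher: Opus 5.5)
Your proposal is correct and is essentially the paper's argument. The paper proves this lemma by restricting its full weighted blow-up $\Psi_1(x,y,z,u)=(u^2x,uy,uz)$ of $\mathrm{N}_1$ to $\{T=1\}$; in the chart $\kappa^{(1)}_1$ with $z_1=0$ its equations reduce exactly to your system $r^{\prime}=\tfrac12 rs$, $s^{\prime}=-(1-\tfrac12 s^2)$, with $u_1=r$ and $y_1=s$. Your source and sink at $s=\pm\sqrt2$ are the paper's $\mathrm{W}_\pm$, with the same eigenvalues, and your divisor heteroclinic is the equator orbit $\mathrm{W}_+\rightarrow\mathrm{W}_-$ of Remark~\ref{BupN1T1}.
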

\begin{proof}
The proof follows from the full blow-up of $\mathrm{N}_1$ given by Proposition~\ref{BupN1} in Section~\ref{sec:BUP_N1}, when restricted to the invariant boundary $\{T=1\}$ (see Remark~\ref{BupN1T1}).
\end{proof}
\begin{figure}[!htb]
	\centering		
		\includegraphics[width=0.6\textwidth, trim=2.1cm 13.2cm 16.7cm 1.1cm, clip]{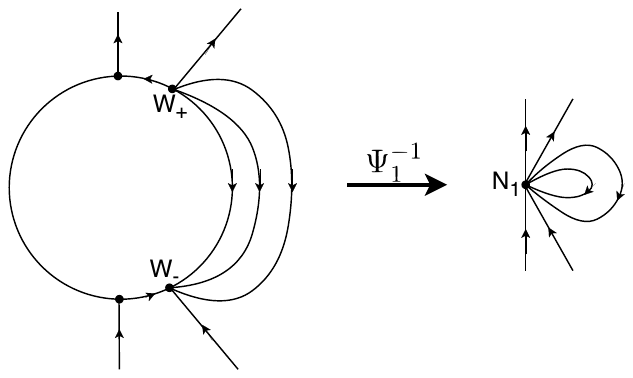}
	\caption{Blow-up of $\mathrm{N}_1$ on $\{T=1\}$.}\label{fig:BUPN1T1}
\end{figure}
\begin{lemma}[Orbit structure on $\{T=1\}$]\label{T1SolStr}
	The invariant boundary $\{T=1\}$ consists of homoclinic orbits of $\mathrm{N}_1$ as depicted in Figure~\ref{fig:T1}.
\end{lemma}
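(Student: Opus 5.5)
The plan is to analyse the planar system $X'=S(1+X)$, $S'=-(1+X-S^2)$ on the closed unit disk $X^2+S^2\le 1$, using its single fixed point, its invariant boundary circle and a first integral.

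First, the boundary circle $\{\Omega_\mathrm{pf}=0\}$. On it we can parametrise $X=\cos\theta$, $S=\sin\theta$. A direct computation gives $X'S - S'X$ restricted to the circle: $S^2(1+X)+X(1+X-S^2) = (1+X)(S^2+X)-XS^2$. More simply, $X'=S(1+X)$ is positive for $S>0$ and negative for $S<0$ away from $X=-1$. Hence the upper semicircle flows from $\mathrm{N}_1$ towards $X=1$, then onto the lower semicircle and back to $\mathrm{N}_1$. Tracking $S'$ at $(1,0)$, where $S'=-2<0$, confirms that the circle is a single homoclinic orbit of $\mathrm{N}_1$ traversed clockwise.

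Next, the interior. The only fixed point of the closed disk is $\mathrm{N}_1=(-1,0)$, so the open disk contains no fixed points. By index theory there can be no periodic orbit inside either, since a closed orbit must enclose a fixed point. To make this rigorous and to obtain the full picture, I would look for a first integral. Dividing, $dS/dX = -(1+X-S^2)/(S(1+X))$. Setting $u=S^2$ and $y=1+X$ gives the linear equation $y\,du/dy - u = -2y$, whose solution is $u = C y - 2y\ln y$. So
\[
E = \frac{S^2}{1+X} + 2\ln(1+X)
\]
is conserved on $\{T=1\}$; one checks directly that $E'=0$. Its level sets $S^2=(1+X)(E-2\ln(1+X))$ are symmetric in $S$. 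Each level curve with $(X,S)$ in the open disk is a closed-looking arc that starts and ends at $X=-1$, i.e. at $\mathrm{N}_1$, since $S^2\to 0$ as $X\to-1$. It meets $S=0$ exactly once, at $\ln(1+X)=E/2$ with $X\in(-1,1)$. Thus every interior orbit is contained in a level set that is a single arc through $S=0$ with both ends at $\mathrm{N}_1$. Since there are no fixed points on the arc, the orbit is the whole arc: it flows from $\mathrm{N}_1$ (through $S>0$, where $X$ increases), crosses $S=0$ with $S'<0$, and returns to $\mathrm{N}_1$. Hence every interior orbit is homoclinic to $\mathrm{N}_1$.

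The main obstacle is the behaviour at $\mathrm{N}_1$. We must check that orbits actually converge to $\mathrm{N}_1$ in both time directions, rather than merely approaching the point $X=-1$ along the level set in infinite time while somewhere else. Near $X=-1$ along a level curve, $S^2\approx -2y\ln y\to 0$, so the orbit does tend to $(-1,0)$. Because this is the unique fixed point and the arc contains no other fixed point, the $\alpha$- and $\omega$-limits are both $\mathrm{N}_1$. We should also verify that all level curves lying inside the disk stay inside it, which follows from the invariance of the boundary circle. Consistency with the single elliptic sector of Lemma~\ref{BUP_N1_T1} then gives the picture in Figure~\ref{fig:T1}.
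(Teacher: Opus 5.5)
\textbf{Gap: your function $E$ is not a first integral.} The analysis of the boundary circle and the index argument excluding interior periodic orbits are both fine. The problem is the conserved quantity everything else rests on: $E=S^2/(1+X)+2\ln(1+X)$ is not constant along orbits. Write $y=1+X$, so that on $\{T=1\}$ we have $y'=Sy$ and $S'=S^2-y$. Then
\[
E'=\frac{2S(S^2-y)}{y}-\frac{S^3}{y}+2S=\frac{S^3}{1+X},
\]
which is nonzero whenever $S\neq0$. The slip is in your reduction to a linear equation. With $u=S^2$ one has $du/dy=2S\,dS/dy=-2+2u/y$, so the correct equation is $y\,du/dy-2u=-2y$, not $y\,du/dy-u=-2y$. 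Its general solution is $u=2y+Cy^2$, not $Cy-2y\ln y$. Your level curves are therefore not invariant, and the key sentence ``every interior orbit is contained in a level set that is a single arc'' has no support as written.

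\textbf{How to repair it.} The route works once the integral is corrected. The function
\[
I=\frac{S^2-2(1+X)}{(1+X)^2}
\]
satisfies $I'=0$. The open disk is exactly $\{X>-1,\ I<-1\}$, and the boundary circle is the level set $I=-1$. The level sets $I=-k$ with $k>1$ are the nested ellipses $S^2+k(1+X)^2=2(1+X)$. All of them pass through $\mathrm{N}_1$, where they are tangent to the line $X=-1$. Each punctured ellipse is a connected curve free of fixed points, and it meets $S=0$ exactly once, at $X=-1+2/k$. With this in place, the rest of your argument goes through unchanged, and every interior orbit is homoclinic to $\mathrm{N}_1$.

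\textbf{Comparison with the paper.} The paper argues differently: it combines Poincar\'e--Bendixson with the blow-up of $\mathrm{N}_1$ (Lemma~\ref{BUP_N1_T1}). The blow-up is what excludes interior orbits accumulating on a homoclinic loop, such as the boundary circle. Your index argument alone does not exclude this, so you genuinely need either the corrected integral or some substitute. The corrected integral is more explicit and self-contained, since it recovers the single elliptic sector at $\mathrm{N}_1$ without any blow-up.

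\textbf{A simpler substitute.} Even without an integral, one can argue as follows. On $S=0$ inside the disk, $S'=-(1+X)<0$, so each interior orbit crosses $S=0$ at most once, downward. On each half $S>0$ and $S<0$, $X$ is strictly monotone. Invariance of the limit sets then forces both the $\alpha$-limit and the $\omega$-limit of every interior orbit to be $\mathrm{N}_1$.
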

\begin{proof}
	Since $\mathrm{N}_1$ is located on the invariant boundary $\{\Omega_{\mathrm{pf}}=0\}$ is the only fixed point of the system, the proof follows by Lemma~\ref{BUP_N1_T1} and the use of the Poincar\'e-Bendixson theorem.
\end{proof}
\begin{figure}[ht!]
	\centering
		\includegraphics[trim={1.5cm 2.1cm 0.1cm 0.5cm},clip,width=0.35\textwidth]{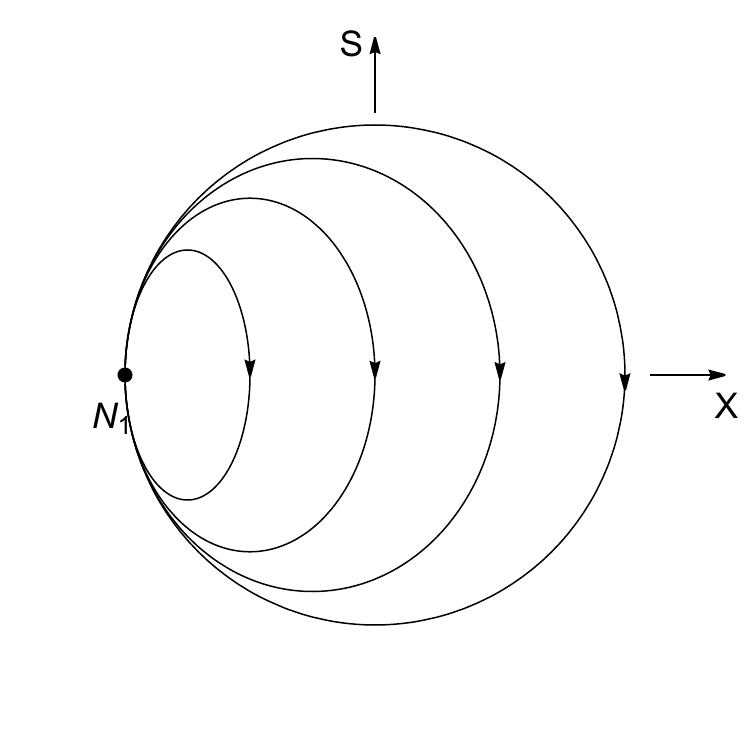} 
	\caption{The $\{T=1\}$ invariant boundary of $\bar{\mathbf{S}}_\mathrm{J}$.}\label{fig:T1}
\end{figure}
%
\subsection{Vacuum invariant boundary $\{\Omega_\mathrm{pf}=0\}$}
\label{sec-vacuum-boundary}
The induced flow on the vacuum invariant boundary, defined by $X^2+S^2=1$, is a constraint system of equations. The constraint equation can be globally solved by introducing an angular variable $\theta$ as
\begin{equation}
X = \cos{\theta},\qquad S = \sin{\theta},
\end{equation}
which results in the following regular unconstrained system of equations for the state vector $(\theta,T)$:
\begin{subequations}\label{Jordan2DDynSys}
	\begin{align}
	T^{\prime} &= T(1 - T)\left[T\sin{\theta} + (1-T)(1 - \cos{\theta})^2\right], 
	\\
	\theta^{\prime} &= -\left[T(1 + \cos{\theta}) + (1-T)(1 - \cos{\theta})\sin{\theta}\right].
	\end{align}
\end{subequations}
The above system of equations admits the set of interior fixed points $\mathrm{L}_\mathrm{dS}$ characterised by the values $(\theta_0,T_0)$ given by the solutions to the cubic equation 
\begin{equation}
T_0\left(1+\tan^2{\left(\frac{\theta_0}{2}\right)}\right)+2(1-T_0)\tan^3{\left(\frac{\theta_0}{2}\right)}=0,
\label{line}
\end{equation}
with $\theta_0\in(-\pi\pm 2n\pi,\pm 2n\pi)$, $n=0,1,2,...$, and that passes through $(\theta_0,T_0)=(-\pi/2\pm 2n\pi,1/2)$. The line $\mathrm{L}_\mathrm{dS}$ is normally hyperbolic being a $\omega$-limit set for interior orbits, while its extension to the invariant boundaries $\{T=0\}$ and $\{T=1\}$ result in the non-hyperbolic fixed points $\mathrm{N}_0$ with $\theta=-\pi\pm 2n\pi$ and $\mathrm{N}_1$ with $\theta=\pm 2n\pi$, respectively. Finally, the system has a single isolated hyperbolic fixed point $\mathrm{R}_0$ located at $\{T=0\}$, with $\theta=-\pi\pm 2n\pi$, which is a source.
\begin{lemma}[Blow-up of $\mathrm{N}_0$ on $\{\Omega_\mathrm{pf}=0\}$]\label{BUP_N0_Om0}
	On the invariant boundary $\{\Omega_\mathrm{pf}=0\}$, the flow in a neighbourhood of $\mathrm{N}_0$ is as depicted in Figure~\ref{fig:BUPN0Omega0}, i.e. no interior orbit in $\bar{\mathbf{S}}_{\mathrm{Jvac}}$ converges to $\mathrm{N}_0$.
\end{lemma}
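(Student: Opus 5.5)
The plan is to work in the unconstrained vacuum system~\eqref{Jordan2DDynSys} in the angular coordinates $(\theta,T)$ and localise at $\mathrm{N}_0$. Since $\mathrm{N}_0$ has $X=1$, $S=0$, $T=0$, I place it at $(\theta,T)=(0,0)$, up to the $2\pi$-periodicity. Taylor expanding gives
\begin{equation*}
\theta^{\prime}=-2T-\tfrac{1}{2}\theta^{3}+\dots,\qquad T^{\prime}=T\left(T\theta+\tfrac{1}{4}\theta^{4}+\dots\right).
\end{equation*}
This makes the degeneracy explicit: the linear part is nilpotent, and with quasi-homogeneous weights $(1,3)$ for $(\theta,T)$ the $T$-equation is of strictly higher order than the $\theta$-equation. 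A single weighted blow-up $\theta=r\bar\theta$, $T=r^{3}\bar T$ therefore produces a non-hyperbolic point on the exceptional circle, which is the point from which $\mathrm{L}_\mathrm{dS}$ emanates. For the figure I would carry out this successive blow-up, i.e.\ the restriction of Proposition~\ref{BupN0} to $\{\Omega_\mathrm{pf}=0\}$. For the statement itself (no interior vacuum orbit has $\mathrm{N}_0$ as $\omega$- or $\alpha$-limit) I would instead use a direct sign argument, which avoids the higher blow-ups.

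The key observation is that the nullclines of $\theta^{\prime}$ and $T^{\prime}$ in a small neighbourhood $U$ of $\mathrm{N}_0$ with $T>0$ coincide and are exactly the curve $\mathrm{L}_\mathrm{dS}$. Indeed, on $T>0$ both vanish precisely at the fixed points. The curve is given by~\eqref{line}, which near $\mathrm{N}_0$ reads $T\simeq-\theta^{3}/4$ with $\theta<0$.

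I would then verify the signs from the exact expressions on $U\cap\{T>0\}$:
\begin{itemize}
\item[(a)] For $\theta\ge0$, both terms of $T^{\prime}$ are nonnegative and $T>0$, so $T^{\prime}>0$ except on the measure-zero set $\theta=0$, where $T^{\prime}=0$ but $\theta^{\prime}=-2T<0$.
\item[(b)] For $\theta<0$ below $\mathrm{L}_\mathrm{dS}$, we have $T^{\prime}>0$ (and $\theta^{\prime}>0$).
\item[(c)] For $\theta<0$ above $\mathrm{L}_\mathrm{dS}$, we have $\theta^{\prime}<0$.
\end{itemize}
Checking these signs is routine: one factors the bracket in $T^{\prime}$ and the expression for $\theta^{\prime}$ using $u=\tan(\theta/2)$, so that each sign is governed by $T(1+u^2)+2(1-T)u^3$.

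Now suppose an interior vacuum orbit has $\mathrm{N}_0$ as $\omega$-limit set. Since $\mathrm{N}_0$ is then the full $\omega$-limit, the orbit eventually stays in $U$ and $(\theta,T)\to(0,0)$. The region (c) is forward invariant inside $U$, because its boundary is the fixed-point curve and $\theta=0$ is crossed into it. In region (c), $|\theta|$ increases strictly, so $\theta\not\to0$. Outside (c), $T$ is nondecreasing and positive, so $T\not\to0$ unless the orbit enters (c), which returns us to the previous case. Either way we contradict convergence to $\mathrm{N}_0$. The same region analysis in reverse time, together with the observation that $T$ decreases backwards only in (c), where $|\theta|$ decreases backwards towards the line of fixed points, rules out $\mathrm{N}_0$ as an $\alpha$-limit of interior orbits. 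Indeed, backward orbits in (c) accumulate on $\mathrm{L}_\mathrm{dS}$ or exit $U$. The only orbits reaching $\mathrm{N}_0$ are those on $\{T=0\}$, namely the heteroclinics $\mathrm{R}_0\to\mathrm{N}_0$ of Lemma~\ref{T0SolStr}, and these are not interior to $\bar{\mathbf{S}}_{\mathrm{Jvac}}$.

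The main obstacle I expect is making the region argument rigorous uniformly in $U$. One has to show that the exact zero sets of $\theta^{\prime}$ and of the bracket in $T^{\prime}$ both coincide with the single branch of $\mathrm{L}_\mathrm{dS}$ near $\mathrm{N}_0$, with no spurious branches, and that the signs are as claimed on each side. If the $u$-factorisation proves awkward, I would fall back on the weighted blow-up chart $T=r^{3}\bar T$. There the signs become those of explicit polynomials in $\bar T$ at $r=0$, and the remaining non-hyperbolic point on the exceptional divisor is handled by one further directional blow-up, which shows that it is the endpoint of the normally hyperbolic line $\mathrm{L}_\mathrm{dS}$ with a centre manifold consisting of fixed points.
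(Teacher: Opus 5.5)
Your route differs from the paper's. The paper obtains this lemma by restricting the full three-dimensional successive blow-up of $\mathrm{N}_0$ (charts $\kappa^{(0)}_i$, then the further blow-ups of $\mathrm{Q}_\pm$) to $\{\Omega_\mathrm{pf}=0\}$. By Remark~\ref{BupN0Om0}, the exceptional set there is the chain $\mathrm{S}_+\rightarrow\mathrm{Z}_+\rightarrow\mathrm{Z}_-\rightarrow\mathrm{dS}_0$, $\mathrm{S}_-\rightarrow\mathrm{dS}_0$, and this gives the local portrait of Figure~\ref{fig:BUPN0Omega0}. You instead work directly in $(\theta,T)$, and your factorisation is correct. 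With $u=\tan(\theta/2)$ and $G=T(1+u^2)+2(1-T)u^3$,
\[
\theta'=-\frac{2G}{(1+u^2)^2},\qquad T'=\frac{2u\,T(1-T)\,G}{(1+u^2)^2}.
\]
So your sign table (a)--(c) is right. The $T'$-nullcline also contains $\{\theta=0\}$, not only $\mathrm{L}_\mathrm{dS}$, but you handle that in (a). Your forward-time argument is sound.

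The reverse-time paragraph, however, has the monotonicity backwards.
\begin{itemize}
\item[(1)] In (c) you have $T'<0$, so $T$ \emph{increases} in backward time.
\item[(2)] $T$ decreases backwards precisely in (a)$\cup$(b). That is where an orbit with $T\to0$ as $\bar t\to-\infty$ would have to lie, and you never treat these regions.
\item[(3)] Backward orbits in (c) cannot accumulate on $\mathrm{L}_\mathrm{dS}$, which is normally attracting. They leave (c) through $\{\theta=0\}$ or through $\partial U$.
\end{itemize}
The repair uses only your table. In (a) one has $\theta'<0$, and (a) is backward invariant in $U$, so $\theta$ increases backwards and stays bounded away from $0$. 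In (b) one has $\theta'>0$, so $|\theta|$ increases backwards. In (c), $T$ increases backwards, so the orbit must leave (c) backwards through $\{\theta=0\}$ into (a). Each case contradicts $(\theta,T)\to(0,0)$.

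Your own factorisation also gives $T'=-u\,T(1-T)\,\theta'$, hence
\[
I=\frac{T}{(1-T)(1+\cos\theta)}=\frac{1}{\alpha(\ts+\xs)}
\]
is a first integral on the vacuum boundary. This agrees with the fact that $\dot{\ts}+\dot{\xs}\propto\rho_\mathrm{pf}$. Near $\mathrm{N}_0$, $I$ is continuous, vanishes only on $\{T=0\}$, and is a positive constant along every interior vacuum orbit. So no interior vacuum orbit converges to $\mathrm{N}_0$ in either time direction, with no case analysis at all.

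Compared with the paper: this, or your corrected sign argument, proves the ``i.e.'' clause much more cheaply. It does not give the blown-up picture of the figure. It also does not extend to $\Omega_\mathrm{pf}>0$, where $\ts+\xs$ is strictly decreasing. That is what the paper's blow-up is really for, since it is needed anyway for Proposition~\ref{BupN0} and Theorem~\ref{GlobalTheoJordan}.
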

\begin{proof}
	The proof follows from the full blow-up of $\mathrm{N}_0$ performed in Section~\ref{BUP_N0} (see Figure~\ref{fig:BUP3D_N0}), when restricted to the invariant boundary $\{\Omega_{\mathrm{pf}}=0\}$, see Remark~\ref{BupN0T0}.
\end{proof}
\begin{figure}[ht!]
	\centering
		\includegraphics[width=1\textwidth, trim=1.5cm 15.5cm 5.5cm 1.2cm, clip]{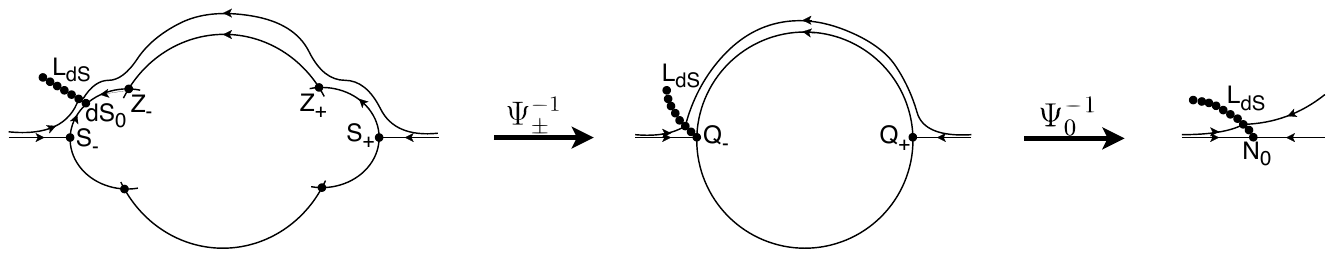}
	\caption{Successive blow-ups of $\mathrm{N}_0$ on $\bar{\mathbf{S}}_\mathrm{Jvac}$.}
	\label{fig:BUPN0Omega0}
\end{figure}
\begin{lemma}[Blow-up of $\mathrm{N}_1$ on $\{\Omega_\mathrm{pf}=0\}$]\label{BUP_N1_Om0}
	On the invariant boundary $\{\Omega_\mathrm{pf}=0\}$, the flow in a neighbourhood of $\mathrm{N}_1$ is as depicted in Figure~\ref{fig:BUPN1Omega0}, i.e. a unique interior orbit in $\bar{\mathbf{S}}_{\mathrm{Jvac}}$ converges to $\mathrm{N}_1$. 
\end{lemma}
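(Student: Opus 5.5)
The plan is to analyse the 2-dimensional system \eqref{Jordan2DDynSys} directly near $\mathrm{N}_1$, located at $(\theta,T)=(0,1)$. I will not use the full 3-dimensional blow-up. Set $v=1-T$ and keep $\theta$ small. Expanding to leading order gives
\begin{equation*}
\theta' = -\big[(1-v)(2-\tfrac{\theta^2}{2}) + v\,\tfrac{\theta^3}{2}+\dots\big] \approx -2+\mathcal{O}(v,\theta^2),\qquad v' = -T(1-T)\big[T\sin\theta+(1-T)(1-\cos\theta)^2\big]\approx -v\,\theta+\dots .
\end{equation*}
The key observation is that $\theta'\approx-2\neq0$ near $\mathrm{N}_1$. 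So although $\mathrm{N}_1$ looks degenerate in the 3-dimensional setting (all eigenvalues zero), on the vacuum cylinder its degeneracy comes only from the boundary $T=1$ being invariant.

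First I will check where $\theta'=0$ near $\mathrm{N}_1$. This happens only on $\mathrm{L}_\mathrm{dS}$, using \eqref{line}: for small $1-T$ the branch has $\tan^3(\theta_0/2)\approx -T_0/(2(1-T_0))$, so $\theta_0\to -\pi^+$ rather than $0$. I need to be careful here. The branch of $\mathrm{L}_\mathrm{dS}$ ending at $\mathrm{N}_1$ must be the one with $\theta_0\to0^-$ as $T_0\to1$, so I will recheck the labelling. If $\theta$ is treated on the universal cover, the nearby fixed points form the curve where $T(1+\cos\theta)=-(1-T)(1-\cos\theta)\sin\theta$, and this is solvable near $\theta=0$ only if $T\to0$. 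The correct identification is therefore that $\mathrm{N}_1$ sits at $\theta=\pi$ mod $2\pi$, i.e. $X=-1$, in agreement with Table~\ref{FP}. So I will expand at $\theta=\pi+\varphi$ instead. There $1+\cos\theta\approx\varphi^2/2$, $1-\cos\theta\approx2$ and $\sin\theta\approx-\varphi$, which gives
\begin{equation*}
\varphi' \approx -\tfrac{1}{2}T\varphi^2 + 2v\varphi,\qquad v' \approx v\big[T\varphi - 4v\big],
\end{equation*}
with $T\approx1$. This is genuinely quadratic: the linear part vanishes, as expected.

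Next I will perform a quasi-homogeneous (here homogeneous, weights $(1,1)$) directional blow-up, $\varphi=r\cos\psi$, $v=r\sin\psi$ with $v\ge0$, followed by division by $r$. On the circle $r=0$ the angular equation reduces to $\psi'=-\frac{1}{2}\cos\psi\,[\cos\psi\sin\psi\cdot(\dots)]$. Concretely, from $\varphi'=-\tfrac12\varphi^2+2v\varphi$ and $v'=v\varphi-4v^2$, I will compute the fixed points on the half-circle $\psi\in[0,\pi]$. These are $\psi=0$, $\psi=\pi$, and the roots of $\varphi v'-v\varphi'=0$, i.e. of $\varphi v-4v^2+\tfrac12\varphi^2 v-2v^2\varphi$ after homogenisation. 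To leading order this is $v(\tfrac32\varphi-6v)=0$ in suitably scaled form, which gives one interior direction $\tan\psi=\tfrac14$. Its radial eigenvalue is then checked to be negative, i.e. attracting. The resulting fixed point on the blow-up circle is a hyperbolic saddle whose stable manifold is transverse to $r=0$. Its single branch into $v>0$ is the unique interior vacuum orbit entering $\mathrm{N}_1$.

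The fixed points at $\psi=0$ and $\psi=\pi$ lie on $\{T=1\}$. Their eigenvalues will show that within $v>0$ they are a source or a saddle with unstable direction pointing into the interior. This matches Lemma~\ref{BUP_N1_T1} and gives the elliptic-sector picture, and it rules out any further interior orbits converging to $\mathrm{N}_1$. Finally, I will blow down, which yields Figure~\ref{fig:BUPN1Omega0}.

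The main obstacle is the bookkeeping at the start: correctly locating $\mathrm{N}_1$ in $\theta$ and getting the quadratic terms right, including the $T$-dependence and the $v^2$ term from $(1-T)^2(1-\cos\theta)^2$. If any interior direction on the blow-up circle turns out to be non-hyperbolic, I would do a second directional blow-up in the $v$-chart ($\varphi=v\,u$) to resolve it.
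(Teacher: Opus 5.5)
Your choice to blow up $\mathrm{N}_1$ directly in the 2-dimensional vacuum system is a legitimate, lighter alternative to restricting the paper's 3-dimensional quasi-homogeneous blow-up. On $\{\Omega_\mathrm{pf}=0\}$ one has $1+X\approx S^2/2$, so the weights $(2,1,1)$ reduce to homogeneous weights in $(S,\bar T)\approx(-\varphi,v)$. Your relocation of $\mathrm{N}_1$ to $\theta=\pi$ (i.e. $X=-1$) is correct, and so is your quadratic truncation $\varphi'=-\tfrac12\varphi^2+2v\varphi$, $v'=v\varphi-4v^2$.

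The genuine gap is in the next step. The homogeneous cubic is
\begin{equation*}
\varphi v'-v\varphi'=\tfrac{3}{2}\,v\,\varphi\,(\varphi-4v),
\end{equation*}
so besides $\psi=0,\pi$ there are \emph{two} interior directions, $\psi=\pi/2$ and $\psi_*=\arctan\tfrac14$. You dropped the factor $\varphi$ and lost $\psi=\pi/2$, which is exactly where the unique converging orbit lives. After division by $r$ the circle flow is $\psi'=\tfrac32\sin\psi\cos\psi(\cos\psi-4\sin\psi)$.
\begin{itemize}
\item At $\psi=\pi/2$ (the paper's $\mathrm{R}_1$) the angular eigenvalue is $6$ and the radial eigenvalue is $-4$. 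This is the hyperbolic saddle whose one-dimensional stable manifold enters $r>0$. That manifold is the invariant line $\varphi=0$, along which $\theta'=0$ and $T'=2T(1-T)^2>0$, i.e. the straight orbit $\mathrm{R}_0\rightarrow\mathrm{N}_1$.
\item The direction $\varphi=4v$ is the tangent of $\mathrm{L}_\mathrm{dS}$ at $\mathrm{N}_1$: both $\varphi'$ and $v'$ vanish there to quadratic order. So $\psi_*$ is the paper's $\mathrm{dS}_1$, with radial eigenvalue $0$ (not negative) and a negative angular eigenvalue.
\item The ``branch into $v>0$'' at $\psi_*$ is therefore the curve of equilibria $\mathrm{L}_\mathrm{dS}$ itself, and no orbit with $v>0$ reaches $\mathrm{N}_1$ in that direction.
\end{itemize}
The eigenvalue check you announce would fail. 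As written, your argument names the wrong object as the converging orbit and never examines the direction that actually carries it.

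With the correct four points, the flow on the blow-up circle is $\mathrm{W}_-\rightarrow\mathrm{dS}_1\leftarrow\mathrm{R}_1\rightarrow\mathrm{W}_+$, where $\psi=0$ corresponds to $\mathrm{W}_-$ and $\psi=\pi$ to $\mathrm{W}_+$.
\begin{itemize}
\item The point $\psi=0$ (radial $-\tfrac12$, angular $+\tfrac32$) is a saddle whose stable manifold lies in $\{v=0\}$.
\item The point $\psi=\pi$ (radial $+\tfrac12$, angular $-\tfrac32$) is a saddle whose stable manifold lies on the circle.
\item $\mathrm{dS}_1$ should be handled by normal hyperbolicity of $\mathrm{L}_\mathrm{dS}$, not by the further blow-up you suggest. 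Its centre manifold consists of equilibria, so orbits with $r>0$ near it converge to points of $\mathrm{L}_\mathrm{dS}$, not to $\mathrm{dS}_1$.
\end{itemize}
Only $\mathrm{R}_1$ then contributes, giving the single orbit $\theta=\pi$. This is precisely Remark~\ref{BupN1Om0} and the paper's conclusion.
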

\begin{proof}
	The proof follows from the full blow-up of $\mathrm{N}_1$ given by Proposition~\ref{BupN1} in Section~\ref{sec:BUP_N1} (see Figure~\ref{fig:BUP3D_N1}), when restricted to the invariant boundary $\{\Omega_{\mathrm{pf}}=0\}$, see Remark~\ref{BupN1Om0}.
\end{proof}
	\begin{figure}[ht!]
	\centering
		\includegraphics[width=0.7\textwidth, trim=1.5cm 15.55cm 16.1cm 0.7cm, clip]{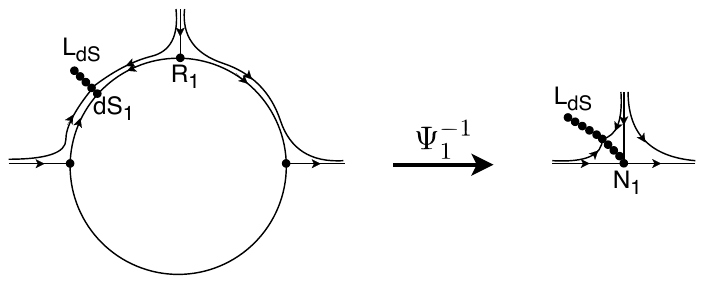}
	\caption{Blow-up of $\mathrm{N}_1$ on $\bar{\mathbf{S}}_\mathrm{Jvac}$.}
	\label{fig:BUPN1Omega0}
\end{figure}
\begin{lemma}[Orbit structure on $\{\Omega_\mathrm{pf}=0\}$]\label{Om0SolStr}
	The flow on the invariant boundary $\{\Omega_\mathrm{pf}=0\}$ is as depicted in Figure~\ref{fig:VBJF}, i.e. except for the single heteroclinic orbit $\mathrm{R}_0\rightarrow \mathrm{N}_1$ with constant $\theta=-\pi\pm2n\pi$, $n=0,1,2,...$, all interior orbits originate at $\mathrm{R}_0$ and end at $\mathrm{L}_{\mathrm{dS}}$ (each point on $\mathrm{L}_{\mathrm{dS}}$ is the $\omega$-limit point of two interior orbits).
\end{lemma}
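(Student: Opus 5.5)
We work with the unconstrained system \eqref{Jordan2DDynSys} on the cylinder $\theta\in\mathbb{R}/2\pi\mathbb{Z}$, $T\in[0,1]$, whose boundary circles are $\{T=0\}$ and $\{T=1\}$. The plan is to determine the flow on these two boundary circles, identify the invariant heteroclinic orbit, and then apply Poincar\'e--Bendixson on the remaining region, using the local information at $\mathrm{N}_0$, $\mathrm{N}_1$ and $\mathrm{L}_{\mathrm{dS}}$ to rule out all other possibilities.

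First, the boundary circles. On $\{T=0\}$ we have $\theta'=-(1-\cos\theta)\sin\theta$. This vanishes only at $\theta=0$ ($\mathrm{N}_0$) and $\theta=\pi$ ($\mathrm{R}_0$), and on both arcs the flow runs from $\mathrm{R}_0$ to $\mathrm{N}_0$, which agrees with Lemma~\ref{T0SolStr}. On $\{T=1\}$ we have $\theta'=-(1+\cos\theta)\le0$, which vanishes only at $\theta=\pi$, i.e. $\mathrm{N}_1$. The whole circle is therefore a homoclinic orbit of $\mathrm{N}_1$, which agrees with Lemma~\ref{T1SolStr}.

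Next, the invariant line $\theta=\pi$. Since $\theta'|_{\theta=\pi}=0$ there, the line is invariant, and on it $T'=4T(1-T)^2>0$. Hence $\theta=\pi$ is the heteroclinic orbit $\mathrm{R}_0\to\mathrm{N}_1$. Cutting the cylinder along this orbit leaves a strip $\theta\in(-\pi,\pi)$, $0<T<1$. In this strip the only fixed points are those of $\mathrm{L}_{\mathrm{dS}}$, which by \eqref{line} lie in $\theta\in(-\pi,0)$ and form a curve joining $\mathrm{N}_0$ (at $T=0$, $\theta\to0^-$) to $\mathrm{N}_1$ (at $T=1$, $\theta\to-\pi$).

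The key step is to exclude periodic orbits and other limit sets in this strip. I would look for a Dulac function or a monotone quantity. For $\theta\in[0,\pi)$ we have $\theta'<0$ strictly, because both terms are nonpositive and they do not vanish simultaneously in the interior. So orbits cross $\theta=0$ leftward only, and no closed orbit can encircle the cylinder. A closed orbit inside the strip would have to enclose a fixed point, but the only fixed points form a curve running from boundary to boundary, which is impossible. Hence there are no periodic orbits. It then remains to determine the $\alpha$- and $\omega$-limits.
\begin{itemize}
\item \emph{$\alpha$-limits.} On the boundary, orbits originate at $\mathrm{R}_0$ (a source) or at $\mathrm{N}_0$. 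Lemma~\ref{BUP_N0_Om0} says no interior orbit has $\mathrm{N}_0$ as a limit. The line $\mathrm{L}_{\mathrm{dS}}$ is normally attracting. The $T=1$ boundary is excluded as an $\alpha$-limit because $T'>0$ near $T=1$ when $\theta\in(-\pi,0)$, or more simply because of the local structure at $\mathrm{N}_1$ from Lemma~\ref{BUP_N1_Om0}. Since a heteroclinic cycle is impossible given the orientation of the boundary flows, the $\alpha$-limit must be $\mathrm{R}_0$.
\item \emph{$\omega$-limits.} Lemma~\ref{BUP_N1_Om0} says only the orbit $\theta=\pi$ enters $\mathrm{N}_1$. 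The boundary $\{T=1\}$ is repelling for interior orbits with $\theta\ne\pi$, which I will check from the sign of $T'$ near $T=1$, namely $T\sin\theta$ with $\theta\in(-\pi,0)$, combined with the rotation. Hence every other orbit converges to a point of $\mathrm{L}_{\mathrm{dS}}$.
\end{itemize}

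Finally, normal hyperbolicity of $\mathrm{L}_{\mathrm{dS}}$ with one negative eigenvalue transverse to the line inside this 2D surface gives exactly two orbits, one from each side, approaching each point.

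The main obstacle is handling the cycle-like boundary structure $\mathrm{R}_0\to\mathrm{N}_0$ together with the homoclinic circle at $T=1$. These could a priori form $\omega$-limit sets, namely the cycle $\mathrm{N}_1$ plus the circle $\{T=1\}$. Excluding this requires the blow-up results of Lemmas~\ref{BUP_N0_Om0} and~\ref{BUP_N1_Om0}, which show that these separatrix cycles are not attracting from the interior.
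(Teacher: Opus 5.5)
Your route is the paper's. You use the invariant line $\theta=\pi$ carrying $\mathrm{R}_0\to\mathrm{N}_1$. You exclude periodic orbits because the only fixed points form the curve $\mathrm{L}_{\mathrm{dS}}$ running from $\mathrm{N}_0$ to $\mathrm{N}_1$, which is equivalent to the paper's splitting of the strip by $\mathrm{L}_{\mathrm{dS}}$ into two regions without interior fixed points. You then apply Poincar\'e--Bendixson, the local stability of $\mathrm{R}_0$ and $\mathrm{L}_{\mathrm{dS}}$, and the blow-ups of $\mathrm{N}_0$ and $\mathrm{N}_1$ for the degenerate corners.

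One step fails as written: the sign arguments for the circle $\{T=1\}$. Near $T=1$ one has $T'\approx T(1-T)\sin\theta$. So $T'<0$ for $\theta\in(-\pi,0)$, not $T'>0$ as your $\alpha$-limit bullet claims, and $T'>0$ for $\theta\in(0,\pi)$. Hence $\{T=1\}$ is not repelling for all interior orbits with $\theta\neq\pi$, as your $\omega$-limit bullet asserts; your two bullets in fact use opposite signs for the same quantity. The circle attracts along one half and repels along the other, and nearby orbits sweep through both halves because $\theta'\approx-(1+\cos\theta)<0$.

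No sign consideration can therefore decide whether the homoclinic loop $\{T=1\}\cup\{\mathrm{N}_1\}$ is an $\alpha$- or $\omega$-limit set. That is decided solely by the passage near the non-hyperbolic point $\mathrm{N}_1$, i.e. by the sector structure recorded in Remark~\ref{BupN1Om0}:
\begin{itemize}
\item Forward orbits shadowing the loop reach $\mathrm{N}_1$ from the $S<0$ side, near $\mathrm{W}_-$. They are carried by $\mathrm{W}_-\to\mathrm{dS}_1$ onto $\mathrm{L}_{\mathrm{dS}}$ close to $\mathrm{N}_1$, so they cannot go around again.
\item Backward orbits reach $\mathrm{N}_1$ from the $S>0$ side, near $\mathrm{W}_+$. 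They leave through the hyperbolic sector $\mathrm{R}_1\to\mathrm{W}_+$ along $\theta=\pi$ towards the source $\mathrm{R}_0$.
\end{itemize}

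So drop the sign arguments and rest both exclusions on the blow-up of $\mathrm{N}_1$ (Lemma~\ref{BUP_N1_Om0}), as your final paragraph and the paper do. Note that you need the loop to be neither attracting nor repelling from the interior, not merely ``not attracting''. A minor point: no boundary orbit originates at $\mathrm{N}_0$, since both $T=0$ arcs end there.
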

\begin{proof}
At $\theta=-\pi\pm2n\pi$, $n=0,1,2,...$, we have that 
\begin{equation}
T^\prime|_{\theta=-\pi\pm 2n\pi}=2T(1-T)^2>0, \qquad \theta^\prime|_{\theta=-\pi\pm 2n\pi}=0,
\end{equation}
which consists of a heteroclinic orbit $\mathrm{R}_0\rightarrow \mathrm{N}_1$. Hence the vacuum invariant boundary state-space $\bar{\mathbf{S}}_{\mathrm{J}\text{vac}}$ is the union of two invariant subsets delimited by this heteroclinic orbit and the line of de-Sitter fixed points $\mathrm{L}_{\mathrm{dS}}$. On each of these invariant subsets there are no interior fixed points and hence, by the Poincaré-Bendixson theorem, all possible limit sets must be located on the boundaries of those invariant subsets. Since $\mathrm{R}_0$ is a hyperbolic source, it is the $\alpha$-limit point for a 1-parameter set of interior orbits, while $\mathrm{L}_\mathrm{dS}$, being normally hyperbolic, is the $\omega$-limit set of a 1-parameter set of interior orbits from each invariant subset, so each point on the line $\mathrm{L}_{\mathrm{dS}}$ is the $\alpha$-limit point of a single interior orbit on each invariant subset.
Finally, by the two previous lemmas~\ref{BUP_N0_Om0} and~\ref{BUP_N1_Om0}, no interior orbit in $\mathbf{S}_{\mathrm{J}\text{vac}}$ converges to the fixed point $\mathrm{N}_0$ and there is a unique interior orbit that has $\mathrm{N}_1$ as the $\omega$-limit point, which corresponds to the straight heteroclinic orbit $\mathrm{R}_0\rightarrow\mathrm{N}_1$ with constant $\theta=-\pi\pm 2n\pi$, $n=0,1,2,...$.
\end{proof}

\begin{figure}[ht!]
	\centering
		\includegraphics[trim={0cm 2.2cm 0cm 1.7cm},clip,width=0.5\textwidth]{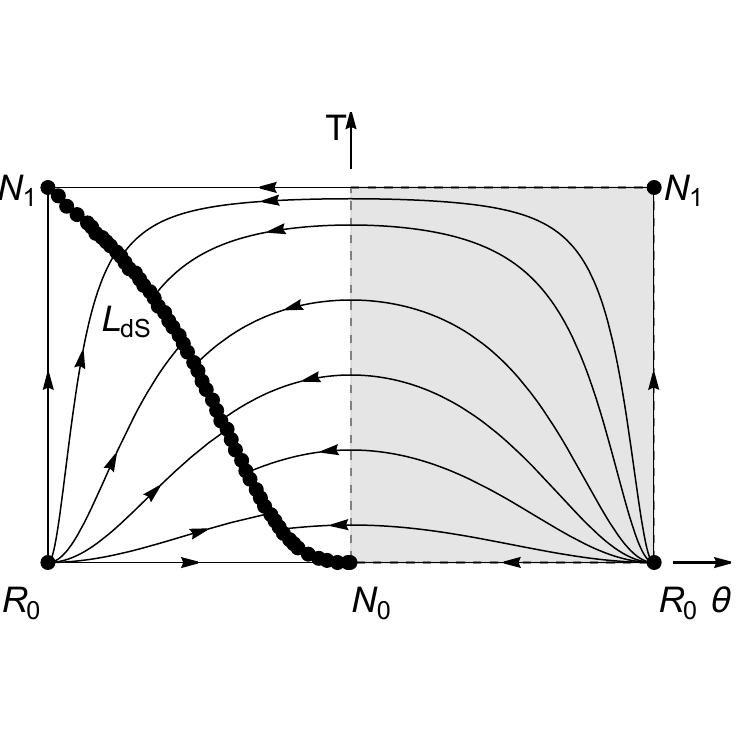}	
	\caption{The extended unwrapped vacuum invariant boundary state-space $\bar{\mathbf{S}}_\mathrm{Jvac}$. The shaded region has $F<0$ and is therefore not mapped to the Einstein frame state-space.}
	\label{fig:VBJF}
\end{figure}
The conformal factor $F=2\alpha R$ is given in terms of our state-space variables by
\begin{equation}
F(\theta,T)=-2\frac{(1-T)}{T}\sin{\theta}.
\end{equation}
We note that the dynamics in the Einstein frame is conformal to the region of the Jordan state-space given by $F>0$, i.e., $R>0$ and hence $S<0$, which is the strip $(\theta,T)\in(-\pi\pm 2n\pi,\pm 2n\pi)\times(0,1)$, $n=0,1,2,...$. The shaded region in Figure~\ref{fig:VBJF} has $F<0$ which is the interior region of the Jordan state-space that is not conformally mapped to the Einstein frame state-space. Moreover $F(\theta,T)$ satisfies
\begin{equation}
F^{\prime}=2\frac{(1-T)}{T}\left[(1+\cos{\theta})T+(1-T)(1-\cos{\theta})\sin{\theta}\right]
\end{equation}
and hence at $F=0$,  it follows that in the vacuum boundary ${\bf{ S}}_{\text{Jvac}}$,
\begin{equation}
\left(\frac{dF}{d\bar{t}}\right)_{\theta=\pm 2n\pi}= 4(1-T) >0, \qquad \left(\frac{dF}{d\bar{t}}\right)_{\theta=-\pi\pm 2n\pi}= 0.
\end{equation}
So $F=0$ is a future invariant surface for the flow in ${\bf{ S}}_{\text{Jvac}}$ in the direction of growing $F$ at $\theta=\pm 2n\pi$ while, at $\theta=-\pi\pm2n\pi$, it consists of the heteroclinic orbit $\mathrm{R}_0\rightarrow \mathrm{N}_1$, corresponding to the self-similar solution~\eqref{SSsol}
, where the evolution in $T$ describes the evolution in $H$ since $R=\dot{R}=0$.
Hence $\mathrm{R}_0\rightarrow \mathrm{N}_1$ splits the solutions which are globally mapped to the Einstein frame and those that are not. In fact the set of solutions that are not initially mapped to the Einstein frame, i.e. those lying in the shaded region in Figure~\ref{fig:VBJF}, must cross $\theta=\pm 2n\pi$, $n=0,1,2,...$. 
\section{Blow-up of non-hyperbolic fixed points}
\label{sec-blow-up}

\subsection{Blow-up of the fixed point $\mathrm{N}_0$}
\label{BUP_N0}
In this section we derive the blow-up of the non-hyperbolic fixed point $\mathrm{N}_0$. The starting point is to introduce the local variables 
\begin{equation}
(\bar{X},\bar{S},\bar{T})=\left(1-X,S,\frac{T}{1-T}\right),
\end{equation} 
so that $\mathrm{N}_0$ is now located at the origin of coordinates. Then, after changing the time variable to $N$ defined by
\begin{equation}
\frac{d}{dN}=\frac{1}{(1-T)}\frac{d}{d\bar{t}}=\sqrt{12\alpha} \bar{T}\frac{d}{dt}
\end{equation}
the system of equations~\eqref{dynsysTXS} results in 
\begin{subequations}
	\begin{align}
	\frac{d\bar{X}}{dN} &=-\bar{T}S(2-\bar{X})-\bar{X}\left(\bar{X}(2-\bar{X})+\left(1-\frac{3}{2}\gamma_\mathrm{pf}\right)(\bar{X}(2-\bar{X})-S^2)\right), \\
	\frac{dS}{dN} &=-(2-\bar{X}-S^2)\bar{T} -S\left[\bar{X}(1-\bar{X})+\left(1-\frac{3}{2}\gamma_\mathrm{pf}\right)\left(\bar{X}(2-\bar{X}) - S^2\right)\right], \\
	\frac{d\bar{T}}{dN} &=  \bar{T}\left[S\bar{T}+\bar{X}^2-\left(1-\frac{3}{2}\gamma_\mathrm{pf}\right)\left(\bar{X}(2-\bar{X}) - S^2\right)\right].
	\end{align}
\end{subequations}
We now employ the spherical blow-up method, i.e., we transform the fixed point at the origin to the unit 2-sphere 
$\mathbb{S}^2_0=\left\{(x,y,z): x^2+y^2+z^2=1\right\}$ 
and define the blow-up space manifold $\mathcal{B}_0\:=\mathbb{S}^2_0\times[0,\bar{u}^{(0)}]$ for some fixed $\bar{u}^{(0)}>0$. We further define the quasi-homogeneous blow-up map
\begin{equation}
\Psi_0\,:\quad \mathcal{B}_0\rightarrow \mathbb{R}^3,\qquad \Psi_0(x,y,z,u)=(ux,uy,u^2 z),
\end{equation} 
which after cancelling a common factor $u$, i.e. by changing the time variable to $\tau$ defined by $d/d\tau= u^{-1}d/dN$, leads to a desingularisation of the non-hyperbolic fixed point on the blow-up locus $\{u=0\}$. If there are still degenerated fixed points, then further blow-ups are needed. Since $\Psi_0$ is a diffeomorphism outside of the sphere $\mathbb{S}^2_0\times\{u=0\}$, which corresponds to the fixed point $(0,0,0)$, the dynamics on the blow-up space $\mathcal{B}_0\setminus \{\mathbb{S}^2_0\times \{u = 0\}\}$ is topological conjugate to $\mathbb{R}^3\setminus \{0,0,0\}$. 
Instead of using standard spherical coordinates on $\mathcal{B}_0$, it usually simplifies the computations if one uses different local charts $\kappa^{(0)}_{i} \,:\,\mathcal{B}_0\rightarrow\mathbb{R}^3$ and define the directional blow-up maps $\psi^{(0)}_{i}\,:\, \Psi_0\circ (\kappa^{(0)}_{i})^{-1}$ for which the resulting state vectors are simpler to analyse. 

Since the original extended state-space is defined by $\Omega_{\mathrm{pf}}\geq0$ and $T\in[0,1]$, then only the region of physical relevance corresponds to $\bar{X}\geq0$ and $\bar{T}\geq0$, and we just need to consider four charts $\kappa^{(0)}_{i}$ such that
\begin{subequations}
\label{ki}
	\begin{align}
	\psi_{(0)1} &=(u_{1}, u_{1} y_{1},u^2_1 z_1), \\
	\psi_{(0)2\pm} &=(u_{2\pm}x_{2\pm},\pm u_{2\pm},u^2_{2\pm} z_{2\pm}), \\
	\psi_{(0)3} &=(u_{3}x_{3},u_{3}y_3,u^2_3),
	\end{align}
\end{subequations}
where $\psi^{(0)}_{1}$, $\psi^{(0)}_{2+}$ and $\psi^{(0)}_{3}$ are called the directional blow-ups in the positive $x$, $y$ and $z$ directions, respectively, and $\psi^{(0)}_{2-}$ is the blow-up in the negative $y$-direction. 

We start the analysis by using chart $\kappa^{(0)}_{3}$, i.e. the directional blow-up in the positive $z$-direction. In these coordinates the equator of the sphere is located at infinity, which will be analysed using both charts $\kappa^{(0)}_{1}$ and $\kappa^{(0)}_{2}$. Using the transition charts $\kappa^{(0)}_{ij}=\kappa^{(0)}_{j}\circ(\kappa^{(0)}_{i})^{-1}$ we can then identify special invariant subsets and obtain a global picture of the blow-up solution space.
%
\paragraph{Blow-up in the positive $z$-direction.}
The blow-up in the positive $z$-direction amounts to the change of coordinates $(\bar{X},\bar{S},\bar{T})=(u_3 x_3,u_3 y_3,u^2_3)$, which after canceling a common factor $u_3$, i.e. by changing the time variable $d/d\bar{t} = u_3 d/d\tau_3$ yields the regular dynamical system
\begin{subequations}
	\begin{align}
	\frac{dx_3}{d\tau_3} &=-2y_3u_3-x_3\left[2x_3-\frac{1}{2}(u^2_3 y_3+u_3 x^2_3)+\frac{1}{2}\left(1-\frac{3}{2}\gamma_\mathrm{pf}\right)\left(x_3 (2-u_3x_3)-u_3y^2_3\right)\right] , \\
	\frac{dy_3}{d\tau_3} &= -2+u_3 x_3-y_3\left[2\left(1-\frac{3}{4}\gamma_\mathrm{pf}\right)x_3-\frac{1}{2}(u^2_3 y_3+u_3x^2_3)-\frac{1}{2}\left(1-\frac{3}{2}\gamma_\mathrm{pf}\right)u_3(x^2_3+y^2_3) \right], \\
	\frac{du_3}{d\tau_3} &=\frac{u_3}{2}\left[u^2_3 y_3+u_3 x^2_3-\left(1-\frac{3}{2}\gamma_\mathrm{pf}\right)\left(x_3 (2-u_3x_3)-u_3y^2_3\right)\right].
	\end{align}
\end{subequations}
Since the physical region of the state-space defined by $\Omega_{\mathrm{pf}}>0$ has $\bar{X}>0$, then we are only interested in the invariant region $x_3>0$ of the blow-up space, where the intersection of the invariant boundary $\{\Omega_{\mathrm{pf}}=0\}$ with the invariant set $\{u_3=0\}$ is just the invariant $y_3$-axis, i.e. $\{x_3=0\}$. 
\begin{lemma}\label{lemk3}
On the half-plane $\{(x_3,y_3,u_3): u_3=0 \wedge x_3\geq0\}$ there are no fixed points, recurrent or periodic orbits, and all possible $\alpha$ and $\omega$-limit sets are located at infinity, as depicted in Figure~\ref{fig:blowupzdS0}.
\end{lemma}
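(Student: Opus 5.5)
Setting $u_3=0$ in the chart $\kappa^{(0)}_3$ system gives the induced planar flow
\begin{equation*}
\frac{dx_3}{d\tau_3}=-x_3^2\left(2+\left(1-\tfrac{3}{2}\gamma_\mathrm{pf}\right)\right)=-\tfrac{3}{2}\left(2-\gamma_\mathrm{pf}\right)x_3^2,\qquad
\frac{dy_3}{d\tau_3}=-2-2\left(1-\tfrac{3}{4}\gamma_\mathrm{pf}\right)x_3y_3 .
\end{equation*}
The proof will rest on these two formulas, after checking that the $u_3$-independent terms were read off correctly.

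\textbf{Monotone function.} Since $2/3<\gamma_\mathrm{pf}<2$, we have $\tfrac{3}{2}(2-\gamma_\mathrm{pf})>0$. Hence $x_3$ is non-increasing on the half-plane $x_3\ge0$. It is strictly decreasing for $x_3>0$, and $\{x_3=0\}$ is invariant.

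\textbf{No fixed points.} On the invariant line $x_3=0$ we get $dy_3/d\tau_3=-2<0$. So this line, which is the intersection with $\{\Omega_\mathrm{pf}=0\}$, is a single orbit running from $y_3=+\infty$ to $y_3=-\infty$. A fixed point would need $x_3=0$, because $dx_3/d\tau_3=0$ forces it. But then $dy_3/d\tau_3=-2\neq0$, so there are no fixed points on the closed half-plane.

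\textbf{No periodic or recurrent orbits.} For $x_3>0$, the function $x_3$ is strictly monotone along orbits. This excludes periodic and recurrent orbits, and more generally any non-empty compact $\alpha$- or $\omega$-limit set in the open half-plane. By the monotonicity principle, limit sets of orbits with $x_3>0$ can only lie on $\{x_3=0\}$ or at infinity.

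\textbf{Excluding the line $x_3=0$ as a limit set.} I expect this to be the main step requiring care. A limit set is invariant and closed, so if it lies in $\{x_3=0\}$ it contains the whole orbit on that line. That orbit is unbounded, so the limit set reaches infinity anyway.

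\textbf{Explicit solution.} To make this concrete, I will solve the system. We have $x_3(\tau)=x_{30}/(1+c\,x_{30}\tau)$ with $c=\tfrac{3}{2}(2-\gamma_\mathrm{pf})$. This blows up in finite backward time $\tau=-1/(c\,x_{30})$, and tends to $0$ like $1/(c\tau)$ as $\tau\to+\infty$.

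The $y_3$ equation is then linear with coefficient $-2(1-\tfrac34\gamma_\mathrm{pf})x_3\sim -k/\tau$, where $k=\tfrac{2(1-\frac34\gamma_\mathrm{pf})}{c}$. Its forcing is $-2$, so $y_3\sim -2\tau/(1+k)\to-\infty$ as $\tau\to+\infty$. Here $1+k>0$ in the range $\gamma_\mathrm{pf}<2$, and this should be checked.

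Thus every orbit goes forward to $(x_3,y_3)\to(0,-\infty)$ and backward to $x_3\to+\infty$, in each case at infinity. This matches Figure~\ref{fig:blowupzdS0}. The remaining work is to identify the end points on the equator of the sphere through the transition maps to the charts $\kappa^{(0)}_1$ and $\kappa^{(0)}_{2\pm}$; these endpoints are located there rather than in the present chart.
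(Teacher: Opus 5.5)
Your first steps are exactly the paper's proof: the induced flow on $\{u_3=0\}$, strict decrease of $x_3$ on $\{x_3>0\}$, $dy_3/d\tau_3=-2$ on the invariant line $\{x_3=0\}$, and the monotonicity principle.

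The problem is the explicit-solution paragraph, which you use to dispose of the line $\{x_3=0\}$ and to describe the picture. With $s=\tau_3+1/(c\,x_{30})$ one has exactly $x_3=1/(cs)$ and $dy_3/ds=-2-(k/s)\,y_3$, where
\begin{equation*}
k=\frac{4-3\gamma_\mathrm{pf}}{3(2-\gamma_\mathrm{pf})},\qquad 1+k=\frac{2(5-3\gamma_\mathrm{pf})}{3(2-\gamma_\mathrm{pf})}.
\end{equation*}
So $1+k>0$ only for $\gamma_\mathrm{pf}<5/3$. It vanishes at $5/3$ and is negative on $(5/3,2)$.

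The general solution is $y_3=-\frac{2s}{1+k}+Cs^{-k}$ for $k\neq-1$, and $y_3=s(C-2\ln s)$ for $\gamma_\mathrm{pf}=5/3$. For $\gamma_\mathrm{pf}<5/3$ the linear term dominates, $y_3\to-\infty$ and $x_3y_3\to 2/(3\gamma_\mathrm{pf}-5)$, so these orbits end at $\mathrm{V}_-$.

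For $\gamma_\mathrm{pf}>5/3$, however, $-k>1$ and the term $Cs^{-k}$ dominates. Orbits with $C>0$ have $y_3\to+\infty$, orbits with $C<0$ have $y_3\to-\infty$, and the $C=0$ orbit $x_3y_3=2/(3\gamma_\mathrm{pf}-5)>0$ also has $y_3\to+\infty$. Hence ``every orbit goes forward to $(0,-\infty)$'' is false in that range. Carried through the transition maps, it would lose the endpoints $\mathrm{T}_+$ and $\mathrm{V}_+$ of Lemma~\ref{LemComp0}(iii). The corrected formulas instead reproduce Lemma~\ref{LemComp0} exactly, including convergence to $\mathrm{T}_-$ when $\gamma_\mathrm{pf}=5/3$.

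The lemma as stated still holds: in every case $|y_3|\to\infty$ forward in time, and $x_3\to+\infty$ in finite backward time. But your paragraph on the line $\{x_3=0\}$ does not by itself exclude it as a limit set. Invariance only shows that such an $\omega$-limit set would be the whole line, not that it is empty.

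You close this either with the corrected explicit solution, or directly. Once $x_3$ is small enough, $dy_3/d\tau_3<-1$ uniformly on any strip $\{|y_3|\le M\}$. An orbit then crosses such a strip downward at most once and never re-enters from below, so it cannot accumulate at a finite point. The paper's bare appeal to the monotonicity principle leaves this same point implicit.
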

\begin{proof}
On the $\{u_3=0\}$ invariant subset the induced flow is given by
\begin{equation}
\frac{dx_3}{d\tau_3}=-\frac{3}{2}(2-\gamma_\mathrm{pf})x^2_3, \quad \frac{dy_3}{d\tau_3} = -2 -2 \left(1-\frac{3}{4}\gamma_\mathrm{pf}\right)x_3 y_3.
\end{equation}
Since $\gamma_{\mathrm{pf}}<2$, then $x_3$ is strictly monotonically decreasing except at the invariant $1$-dimensional boundary subset $\{x_3=0\}$ where $\frac{dy_3}{d\tau_3} = -2<0$. The proof follows by the \emph{monotonicity principle}.
\end{proof}	
\begin{figure}[ht!]
	\centering
		\includegraphics[trim={1.2cm 0.1cm 0.2cm 0.7cm},clip,width=0.22\textwidth]{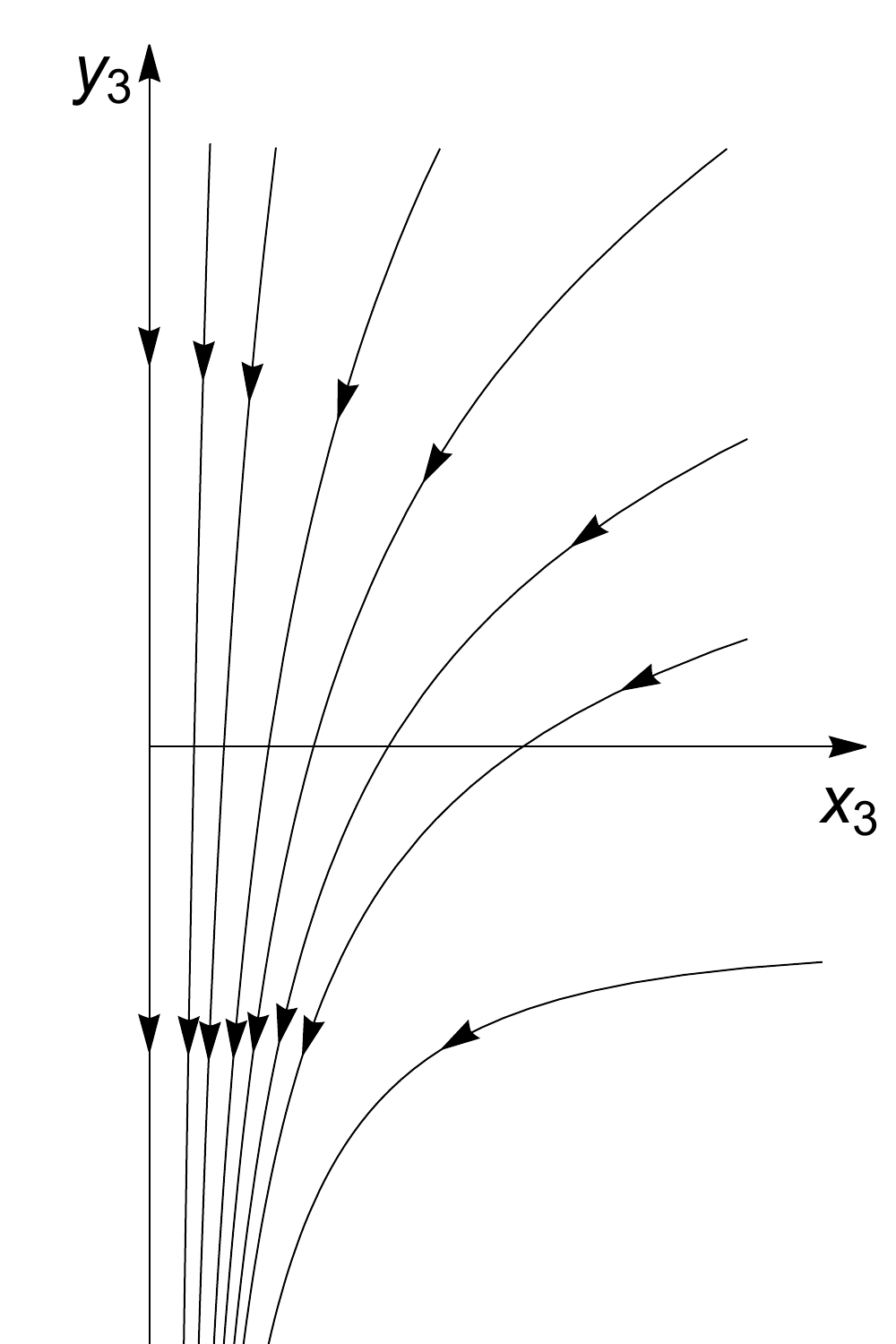} 	
	\caption{The invariant subset $\{u_3=0\}$ of the blow-up in the positive $z$-direction of the fixed point $\mathrm{N}_0$.}
	\label{fig:blowupzdS0}
\end{figure}
To understand the orbit structure on the equator of the upper quarter of the unit 2-sphere, we now perform the blow-ups in the positive $x$ and $y$-directions. 
\paragraph{Blow-up in the positive $x$-direction.}
In the region where $x_3$ become unbounded, we use the blow-up map in the positive $x$-direction consisting of the change of coordinates $(\bar{X},\bar{S},\bar{T})=(u_1,u_1 y_1,u^2_1 z_1)$ which, after canceling a common factor $u_1$, i.e. by changing the time variable $d/d\bar{t} = u_1 d/d\tau_1$, yields the regular dynamical system
\begin{subequations}
	\begin{align}
	\frac{du_1}{d\tau_1} &=-u_1\left\{u_1y_1z_1(2-u_1)+2\left(1-\frac{3}{4}\gamma_\mathrm{pf}\right)(2-u_1)-\left(1-\frac{3}{2}\gamma_\mathrm{pf}\right)u_1y^2_1\right\}, \\
	\frac{dy_1}{d\tau_1} &= y_1+(u_1-2)z_1+2u_1y_1^2z_1, \\
	\frac{dz_1}{d\tau_1} &=z_1\Big\{2+(4-u_1)u_1y_1z_1+2\left(1-\frac{3}{4}\gamma_\mathrm{pf}\right)(2-u_1)-\left(1-\frac{3}{2}\gamma_\mathrm{pf}\right)u_1y^2_1\Big\}.
	\label{eq:blowupxdS0}
	\end{align}
\end{subequations}
The above system has one isolated fixed point 
\begin{equation}
\label{pointP}
\mathrm{P}:\,\,(u_1,y_1,z_1)=(0,0,0).
\end{equation}
The linearised matrix of the system around $\mathrm{P}$ has eigenvalues $3\left(\gamma_\mathrm{pf}-\frac{4}{3}\right)$, $1$, $3(2-\gamma_\mathrm{pf})$ and associated eigenvectors $(1,0,0)$, $(0,1,0)$, $\left(0,2,3\gamma_{\mathrm{pf}}-5\right)$.
For $\gamma_\mathrm{pf}\in(\frac{2}{3},\frac{4}{3})$, the fixed point $\mathrm{P}$ is a hyperbolic saddle and for $\gamma_\mathrm{pf}\in(\frac{4}{3},2)$ a hyperbolic source. The value $\gamma_\mathrm{pf}=4/3$ consists of a bifurcation where the first eigenvalue becomes zero due to the appearance of the line of fixed points $\mathrm{L}_\mathrm{R}$, which in these coordinates is located at $(u_1,y_1,z_1)=(u_*,0,0)$, with $u_*>0$.

On the invariant boundary $\{u_1=0\}$, the fixed point  $\mathrm{P}$ is always a hyperbolic source, and there exists a straight orbit originating from $\mathrm{P}$ defined by
\begin{equation}
\label{eq74}
\frac{d}{d\tau_1}\left(-\frac{3}{2}(\gamma_\mathrm{pf}-\frac{5}{3})y_1+z_1\right)=-\frac{3}{2}(\gamma_\mathrm{pf}-\frac{5}{3})y_1+z_1=0.
\end{equation}
Since both charts $\kappa^{(0)}_{3}$ and $\kappa^{(0)}_{1}$ overlap in the regions $x_3>0$ and $z_1>0$ then, by Lemma~\ref{lemk3}, there are no fixed points or periodic orbits in the region $z_1>0$ of the invariant set $\{u_1=0\}$. The flow on the $\{u_1=0\}$ invariant set is therefore as depicted in Figure~\ref{fig:blowupx}.

\begin{figure}[ht!]
	\centering
		\subfigure[$2/3<\gamma_{\mathrm{pf}}<5/3$.]{\label{fig:blowupx-1}
			\includegraphics[trim={1cm 0.2cm 0.2cm 0.2cm},clip,width=0.32\textwidth]{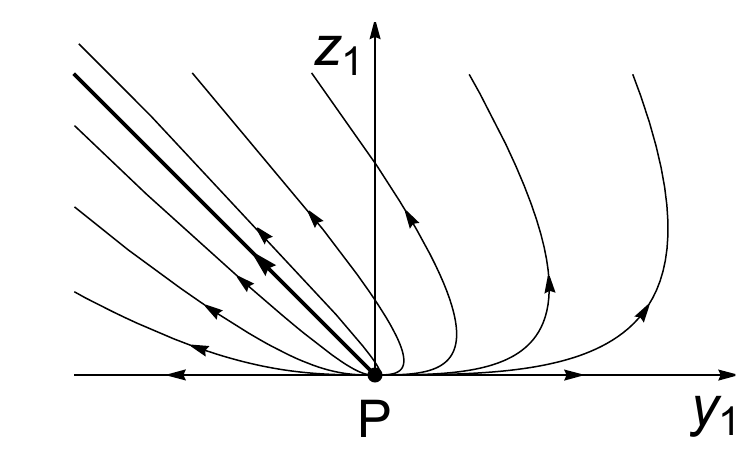}}
		\subfigure[$\gamma_{\mathrm{pf}}=5/3$.]{\label{fig:blowupx-2}
			\includegraphics[trim={1cm 0.2cm 0cm 0.2cm},clip,width=0.32\textwidth]{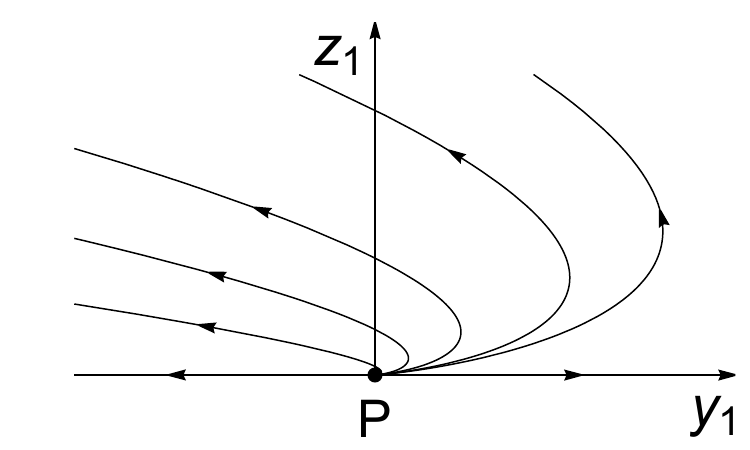}}
		\subfigure[$5/3<\gamma_{\mathrm{pf}}<2$.]{\label{fig:blowupx-3}
			\includegraphics[trim={1cm 0.2cm 0cm 0.2cm},clip,width=0.32\textwidth]{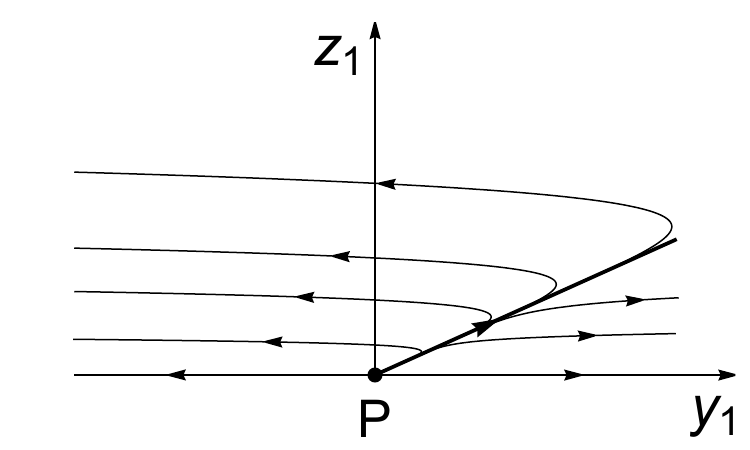}}	
	\caption{The invariant subset $\{u_1=0\}$ of the blow-up in the positive $x$-direction of the fixed point $\mathrm{N}_0$. The bold straight line corresponds to the orbit $-\frac{3}{2}(\gamma_\mathrm{pf}-\frac{5}{3})y_1+z_1=0$.}
	\label{fig:blowupx}
\end{figure}

On the invariant boundary $\{z_1=0\}$, $\mathrm{P}$ is a hyperbolic source when $\gamma_\mathrm{pf}>4/3$ and a saddle when $\gamma_\mathrm{pf}<4/3$. The case $\gamma=4/3$ is a bifurcation where one of the eigenvalues becomes zero due to the appearance of the line of fixed points $\mathrm{L}_\mathrm{R}$. The flow in a neighbourhood of $\mathrm{P}$ is depicted in Figure~\ref{fig:blowupxdS0}.
\begin{figure}[ht!]
	\centering
			\subfigure[$2/3<\gamma_{\mathrm{pf}}<4/3$.]{\label{fig:blowupxN0-1}
			\includegraphics[trim={0cm 0cm 0cm 0cm},clip,width=0.29\textwidth]{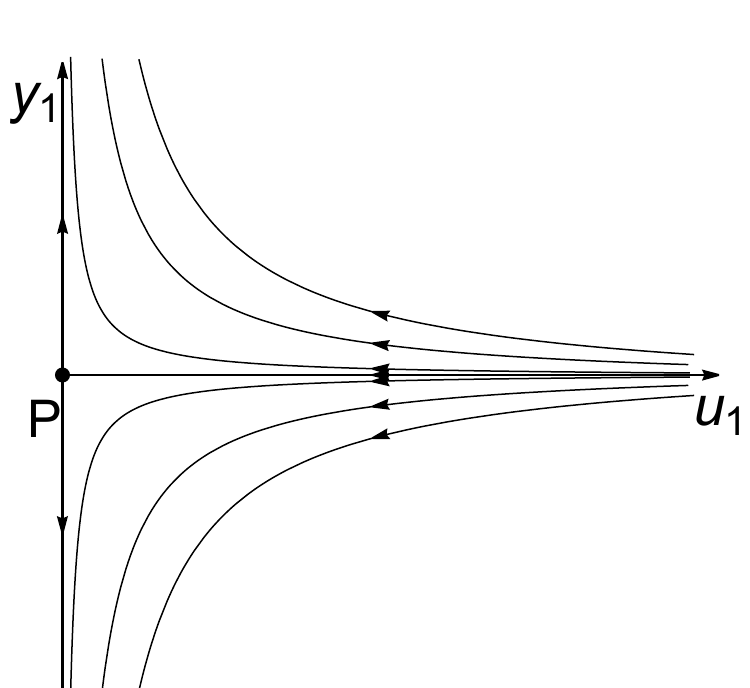}}\hspace{0.5cm}
		\subfigure[$\gamma_{\mathrm{pf}}=4/3$.]{\label{fig:blowupxN0-2}
			\includegraphics[trim={0cm 0cm 0cm 0cm},clip,width=0.29\textwidth]{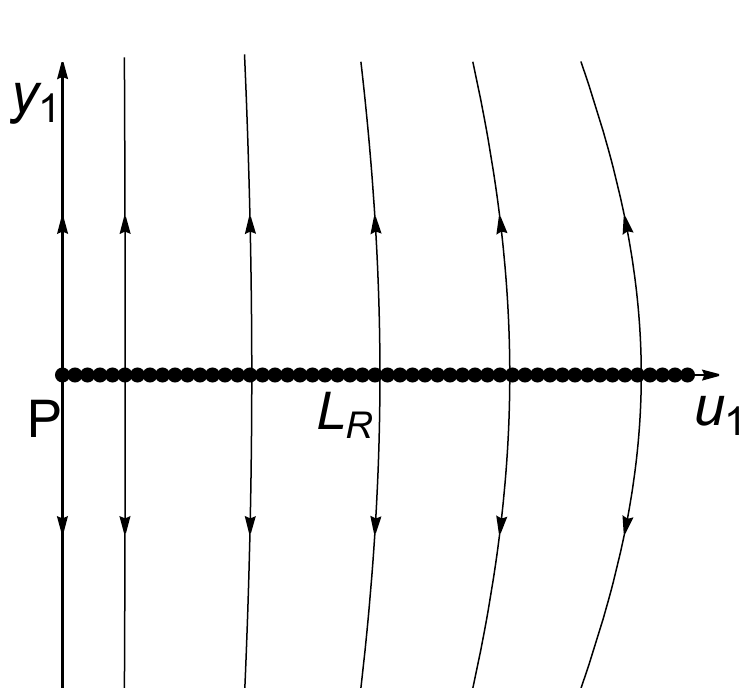}}\hspace{0.5cm}
		\subfigure[$4/3<\gamma_{\mathrm{pf}}<2$.]{\label{fig:blowupxN0-3}
			\includegraphics[trim={0cm 0cm 0cm 0cm},clip,width=0.29\textwidth]{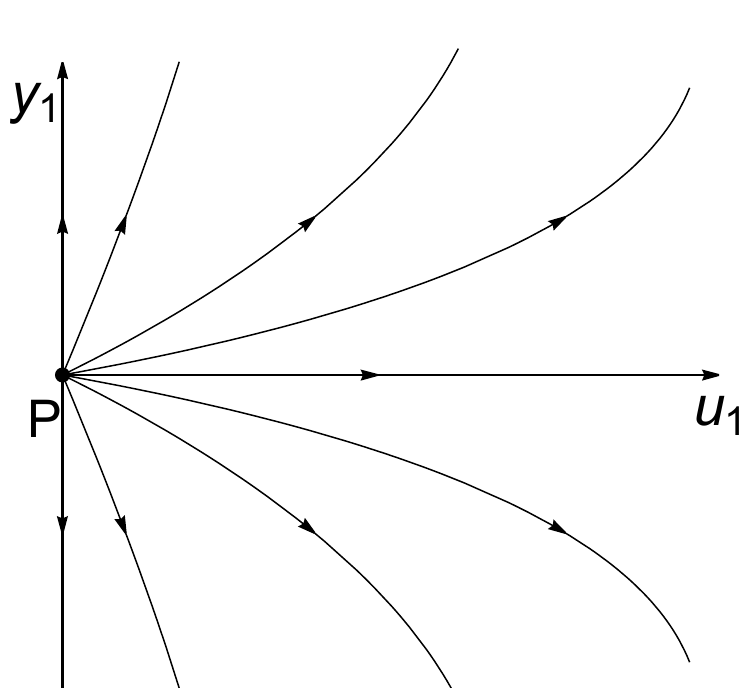}}		
	\caption{Flow in a neighbourhood of the equator of the unit 2-sphere given by the blow-up in the positive $x$-direction of the fixed point $\mathrm{N}_0$ when restricted to the invariant set $\{z_1=0\}$.}
	\label{fig:blowupxdS0}
\end{figure}
%
\paragraph{Blow-up in the $y$-direction.}
Finally we want to understand the flow when $y_3$ becomes unbounded. In this case we make use of the blow-up in the positive and negative $y$-directions (denoted by the $\pm$ subscripts, respectively) and use the change of coordinates
\begin{equation}
(\bar{X},\bar{S},\bar{T})=(u_{2\pm} x_{2\pm},\pm u_{2\pm}, u^2_{2\pm} z_{2\pm}),
\end{equation}
where the physical region of interest has $x_{2\pm}\geq0$ and $z_{2\pm}\geq0$. After changing the time variable to $\tau_{2\pm}$ defined by $d/d\bar{t} = u_{2\pm} d/d\tau_{2\pm}$, yields the regular dynamical system
\begin{subequations}
	\begin{align}
	\frac{dx_{2\pm}}{d\tau_{2\pm}} =&-\left\{x_{2\pm}^2\pm u_{2\pm}z_{2\pm}(2-u_{2\pm}x_{2\pm})\pm (-2+u_{2\pm}x_{2\pm}+u_{2\pm}^2)z_{2\pm}x_{2\pm} \right\},\\
	\frac{du_{2\pm}}{d\tau_{2\pm}} = &-u_{2\pm}\left\{\pm(2-u_{2\pm}x_{2\pm}-u_{2\pm}^2)z_{2\pm}+x_{2\pm}(1-u_{2\pm}x_{2\pm})\right.\nonumber\\
	&\left.+\left(1-\frac{3}{2}\gamma_\mathrm{pf}\right)(x_{2\pm}(2-u_{2\pm}x_{2\pm})-u_{2\pm}) \right\},\\
	\frac{dz_{2\pm}}{d\tau_{2\pm}} =&z_{2\pm}\left\{x_{2\pm}(2-x_{2\pm}u_{2\pm})\pm(4-2u_{2\pm}x_{2\pm}-u_{2\pm}^2)z_{2\pm}\right.\nonumber\\
	&\left.+\left(1-\frac{3}{2}\gamma_\mathrm{pf}\right)(x_{2\pm}(2-u_{2\pm}x_{2\pm})-u_{2\pm})\right\}.
	\end{align}
\end{subequations}
On each invariant subset $\{u_{2\pm}=0\}$  there exists one critical point at the origin
\begin{equation}
\mathrm{Q}_{\pm} \,: (x_{2\pm},u_{2\pm},z_{2\pm})=(0,0,0),
\end{equation}
which is non-hyperbolic. Hence one needs to perform a further blow-up on both $\mathrm{Q}_\pm$  which will be the subject of  the next section.
%
\subsubsection{Blow-up of the fixed points $\mathrm{Q}_\pm$}
\label{BUP_Qpm}
%
We again employ the spherical blow-up method, i.e. we transform the fixed points $\mathrm{Q}_\pm$ at the origin to the unit 2-sphere $\mathbb{S}^2_{\pm}=\left\{(v^{(\pm)},w^{(\pm)},s^{(\pm)}): (v^{(\pm)})^2+(w^{(\pm)})^2+(s^{(\pm)})^2=1\right\}$ 
and define the blow-up space manifold $\mathcal{B}_{\pm}\:=\mathbb{S}^2_{\pm}\times[0,\bar{u}^{(\pm)}]$ for some fixed $\bar{u}^{(\pm)}>0$. We further define the homogeneous blow-up map
\begin{equation}
\Psi_{\pm}\,:\quad \mathcal{B}_\pm\rightarrow \mathbb{R}^3,\qquad \Psi_{\pm}(v^{(\pm)},w^{(\pm)},s^{(\pm)},u^{(\pm)})=(u^{(\pm)}v^{(\pm)},u^{(\pm)}w^{(\pm)},u^{(\pm)} s^{(\pm)}),
\end{equation} 
which after canceling a common factor $u^{(\pm)}$, i.e. by changing the time variable to $\tau^{(\pm)}$ defined by $d/d\tau^{(\pm)}= (u^{(\pm)})^{-1}d/d\tau_{2\pm}$, leads to a desingularisation of the non-hyperbolic fixed points $\mathrm{Q}_\pm$ on the blow-up locus $\{u^{(\pm)}=0\}$.
In order to simplify the analysis of the blow-up we take local charts $\kappa^{(\pm)}_i \,:\,\mathcal{B}_\pm\rightarrow\mathbb{R}^3$ and define the directional blow-up maps $\psi^{(\pm)}_i\,:\, \Psi_\pm \circ (\kappa^{(\pm)}_i)^{-1}$. Since we are interested only in the region where $x_{2\pm}\geq0$ and $z_{2\pm}\geq0$, we just need to consider 3 charts $\kappa^{(\pm)}_i$ such that
\begin{subequations}
	\begin{align}
	\psi^{(\pm)}_{1} &=(u^{(\pm)}_{1}, u^{(\pm)}_{1} w^{(\pm)}_{1},u^{(\pm)}_1 s^{(\pm)}_1), \\
	\psi^{(\pm)}_{2} &=(u^{(\pm)}_{2}v^{(\pm)}_{2}, u^{(\pm)}_{2},u^{(\pm)}_2 s^{(\pm)}_2), \\
	\psi^{(\pm)}_{3} &=(u^{(\pm)}_{3}v^{(\pm)}_{3},u^{(\pm)}_{3} w^{(\pm)}_{3},u^{(\pm)}_{3} ),
	\end{align}
\end{subequations}
where $\psi^{(\pm)}_{1}$, $\psi^{(\pm)}_{2}$, and $\psi^{(\pm)}_{3}$ are the directional blow-ups in the positive directions.

\paragraph{Blow-up in the positive $s$-direction.} We start the analysis by using the chart $\kappa^{(\pm)}_{3}$, i.e. the directional blow-up in the positive $s$-direction. 
In the above coordinates the north pole of the 2-sphere is at the origin while the  equator is located at infinity, which will be analysed using both charts $\kappa^{(\pm)}_{1}$ and $\kappa^{(\pm)}_{2}$. 
In this case, and after the change of time variable $d/d\tau_{2\pm} = u^{(\pm)}_3 d/d\tau^{(\pm)}_3$, we arrive at
\begin{subequations}
	\begin{align}
	\frac{dv^{(\pm)}_3}{d\tau^{(\pm)}_3} &= \pm\left(u^{(\pm)^2}_3v^{(\pm)}_3w^{(\pm)}_3-2\right)\left(v^{(\pm)}_3+w^{(\pm)}_3\right)-v^{(\pm)}_3\Bigg( v^{(\pm)}_3\left(3-u^{(\pm)^2}_3v^{(\pm)}_3w^{(\pm)}_3\right) \nonumber\\
	&\quad+\left(1-\frac{3}{2}\gamma_{\mathrm{pf}}\right)\left(2v^{(\pm)}_3-w^{(\pm)}_3-u^{(\pm)^2}_3v^{(\pm)^2}_3w^{(\pm)}_3\right) \Bigg),\\
	\frac{dw^{(\pm)}_3}{d\tau^{(\pm)}_3} &=-w^{(\pm)}_3\Bigg\{\pm\left(6-u^{(\pm)^2}_3w^{(\pm)}_3\left(3v^{(\pm)}_3+2w^{(\pm)}_3\right)\right)+v^{(\pm)}_3\left(3-2u^{(\pm)^2}_3v^{(\pm)}_3w^{(\pm)}_3\right) \nonumber\\	
	&\quad +2\left(1-\frac{3}{2}\gamma_{\mathrm{pf}}\right)\left(2v^{(\pm)}_3-w^{(\pm)}_3-u^{(\pm)^2}_3v^{(\pm)^2}_3w^{(\pm)}_3\right) \Bigg\},\\
	\frac{du^{(\pm)}_3}{d\tau^{(\pm)}_3} &=u^{(\pm)}_3 \Bigg\{ \pm\left(4-u^{(\pm)^2}_3w^{(\pm)}_3\left(2v^{(\pm)}_3+w^{(\pm)}_3\right)\right)+v^{(\pm)}_3\left(2-u^{(\pm)^2}_3v^{(\pm)}_3w^{(\pm)}_3\right)  \nonumber \\
	&\quad +\left(1-\frac{3}{2}\gamma_{\mathrm{pf}}\right)\left(2v^{(\pm)}_3-w^{(\pm)}_3-u^{(\pm)^2}_3v^{(\pm)^2}_3w^{(\pm)}_3\right) \Bigg\}.
	\end{align}
\end{subequations}
On the $\{u^{(\pm)}_3=0\}$ invariant set, the vacuum invariant boundary $\{\Omega_{\mathrm{pf}}=0\}$ is the straight line $2v^{(\pm)}_3-w^{(\pm)}_3=0$ which is invariant as can be seen by computing
\begin{equation}
\frac{d}{d\tau^{(\pm)}_3}(2v^{(\pm)}_3-w^{(\pm)}_3)=-(2v^{(\pm)}_3-w^{(\pm)}_3)\left(2+3v^{(\pm)}_3+2\left(1-\frac{3}{2}\gamma_\mathrm{pf}\right)(v^{(\pm)}_3-w^{(\pm)}_3)\right).
\end{equation} 
On the blow-up of $\mathrm{Q}_{-}$ there exists a fixed point on the vacuum invariant boundary $\{2v^{(-)}_3-w^{(-)}_3=0\}$ given by
\begin{equation}
\mathrm{dS}_{0}\,\,:\quad (v^{(-)}_3,w^{(-)}_3,u^{(-)}_3)=\left(2, 4,0\right),
\end{equation}
with eigenvalues $-6\gamma_{\mathrm{pf}}$, $-6$, $0$ and eigenvectors $\left(1,\frac{2(6\gamma_{\mathrm{pf}}-7)}{3\gamma_{\mathrm{pf}}-4},0\right)$, $\left(\frac{1}{2},1,0\right)$, $(0,0,1)$, respectively. 
The zero eigenvalue is due to the fact that $\mathrm{dS}_0$ is the extension of the normally hyperbolic line of fixed points $\mathrm{L}_{\mathrm{dS}}$ to the invariant set $\{u^{(-)}_3=0\}$~\footnote{In these coordinates $\mathrm{L}_\mathrm{dS}$ can be parameterised by constant values $u^{(-)}_3=c_0$ and is the solution to the cubic equation $c_0 \left(v^{(-)}_3\right)^3+v^{(-)}_3-2=0$ and $w^{(-)}_3=\left(v^{(-)}_3\right)^2$ for $c_0>0$.}. On the $\{u^{(-)}_3=0\}$ invariant set, the fixed point $\mathrm{dS}_0$ is a hyperbolic sink and, therefore, the $\omega$-limit point of a 1-parameter family of orbits in the physical region $\Omega_\mathrm{pf}>0$, i.e., $w^{(-)}_3<2v^{(-)}_3$.

Furthermore, the systems have $\{w^{(\pm)}_3=0\}$ as invariant boundary subsets, whose intersection with the vacuum invariant boundary  $\{w^{(\pm)}_3=2v^{(\pm)}_3\}$ on $\{u^{(\pm)}_3=0\}$  are simply the fixed points at the origin
\begin{equation}
\mathrm{Z}_{\pm} \,: (v^{(\pm)}_3,w^{(\pm)}_3,u^{(\pm)}_3)=(0,0,0).
\end{equation}
The linearisation around $\mathrm{Z}_\pm$ yields the eigenvalues $\mp 2$, $\mp 6$, $\pm 4$ and eigenvectors $(1,0,0)$, $(1,2,0)$, $(0,0,1)$, respectively.
Hence $\mathrm{Z}_\pm$ are hyperbolic saddles although, on $\{u^{(\pm)}_3=0\}$, $\mathrm{Z}_-$ is a source and $\mathrm{Z}_+$ a sink. Finally, on the invariant subsets $\{w^{(\pm)}_3=0\}$  there are also the fixed points 
\begin{equation}
\mathrm{V}_{\pm} \,: (v^{(\pm)}_3,w^{(\pm)}_3,u^{(\pm)}_3)=\left(\pm \frac{2/3}{\gamma_{\mathrm{pf}}-\frac{5}{3}},0,0\right).
\end{equation}
The fixed point $\mathrm{V}_{-}$ exists for  $\frac{2}{3}<\gamma_\mathrm{pf}<\frac{5}{3}$, while  $\mathrm{V}_{+}$ only exists for $\frac{5}{3}<\gamma_\mathrm{pf}<2$. The linearisation around these fixed points yields the eigenvalues $\pm 2$, $\pm \frac{2(\frac{8}{3}-\gamma_{\mathrm{pf}})}{\gamma_{\mathrm{pf}}-\frac{5}{3}}$, $\mp \frac{2(2-\gamma_{\mathrm{pf}})}{\gamma_{\mathrm{pf}}-\frac{5}{3}}$ with associated eigenvectors $(1,0,0)$, $\left(\frac{3(3\gamma_{\mathrm{pf}}-4)}{2(6\gamma_{\mathrm{pf}}-13)},1,0\right)$, $(0,0,1)$.
Hence both $\mathrm{V}_{\pm}$ are hyperbolic saddles. However, on the invariant sets $\{u^{(\pm)}_3=0\}$, $\mathrm{V}_{-}$ is a saddle and the $\alpha$-limit point of a unique orbit in the region $w^{(-)}_3>0$, while  $\mathrm{V}_{+}$ is a source and the $\alpha$-limit point of a 1-parameter set of orbits in the region $w^{(+)}_3>0$. Figure~\ref{fig:vBupQ} shows the physical regions $0<w^{(\pm)}_3<2v^{(\pm)}_3$ and the trivial orbit structure of its invariant boundaries. In Lemma~\ref{LemmaCompQ_pm} we will show that the interior consists of heteroclinic orbits as seen in Figure~\ref{fig:vBupQ}.
\begin{figure}[ht!]
	\centering
		\subfigure[ $2/3<\gamma_\mathrm{pf} < 5/3$.]{\label{fig:s-direction-1}
\includegraphics[trim={0.7cm 1cm 0cm 0cm},clip,width=0.23\textwidth]{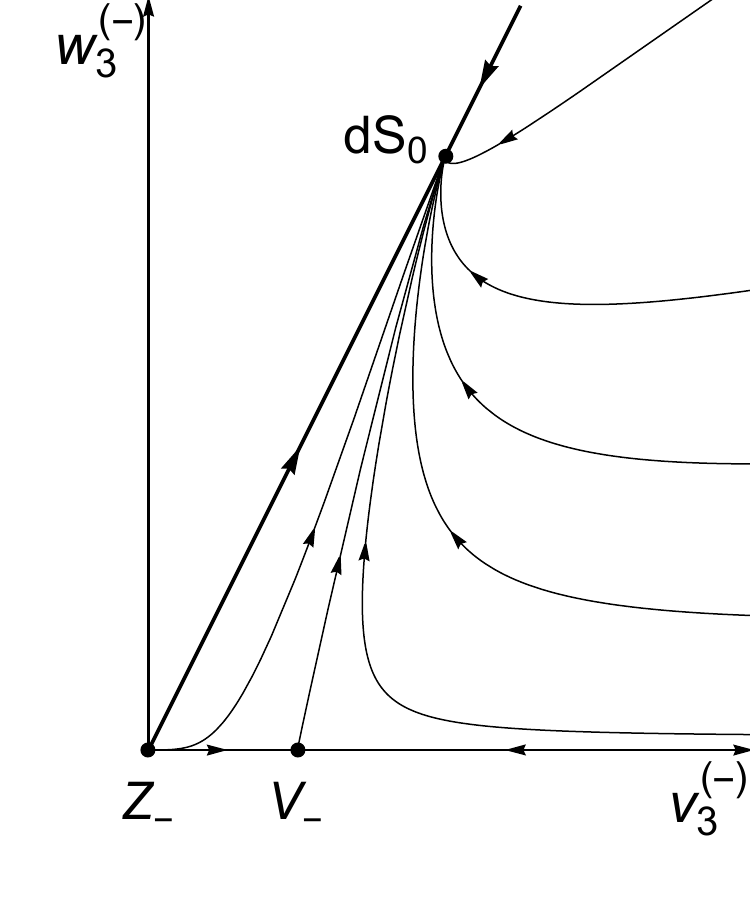}}\hspace{0.1cm}
\subfigure[ $5/3\leq\gamma_\mathrm{pf}< 2$.]{\label{fig:s-direction-2}
\includegraphics[trim={1.5cm 2cm 0cm 0cm},clip,width=0.23\textwidth]{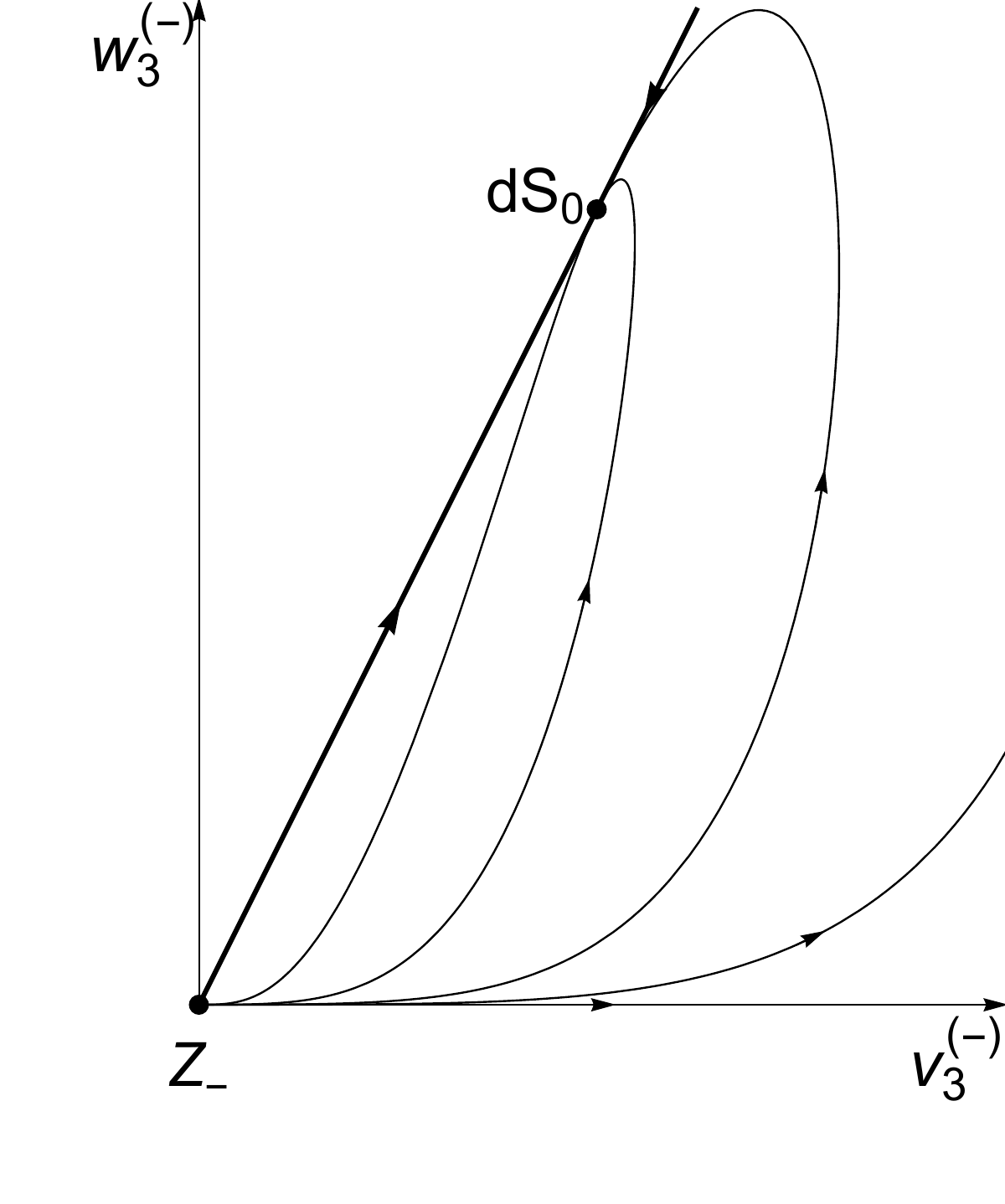}}\hspace{0.1cm}
\subfigure[ $2/3<\gamma_\mathrm{pf} \leq 5/3$.]{\label{fig:s-direction+1}
\includegraphics[trim={0.7cm 1cm 0cm 0cm},clip,width=0.23\textwidth]{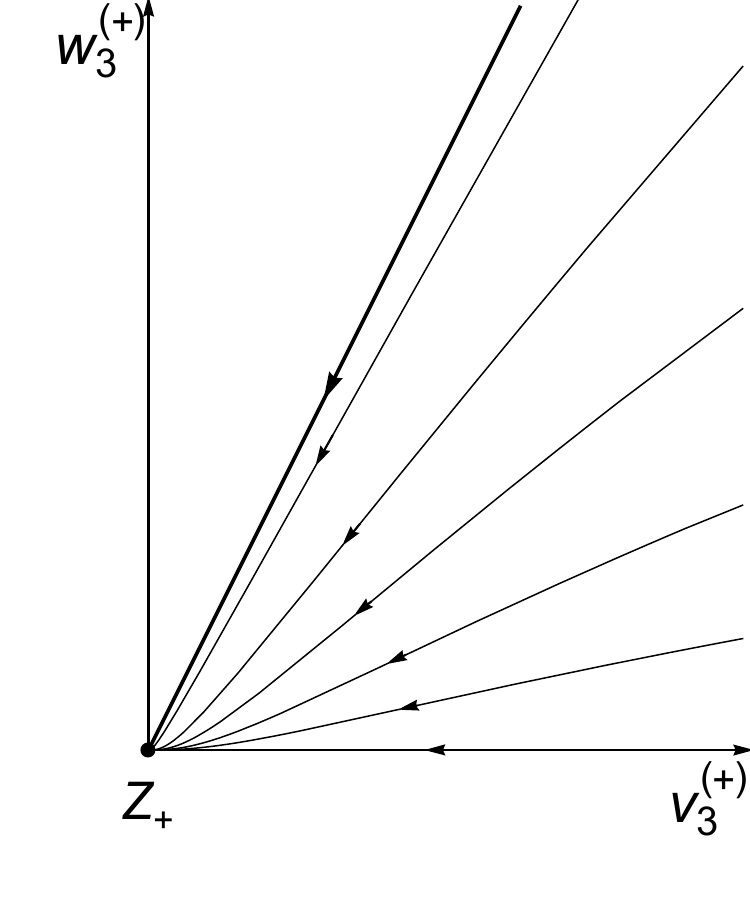}}\hspace{0.1cm}
\subfigure[ $5/3<\gamma_\mathrm{pf}<2$.]{\label{fig:s-direction+2}
\includegraphics[trim={1.5cm 2.1cm  0.1cm 0cm},clip,width=0.23\textwidth]{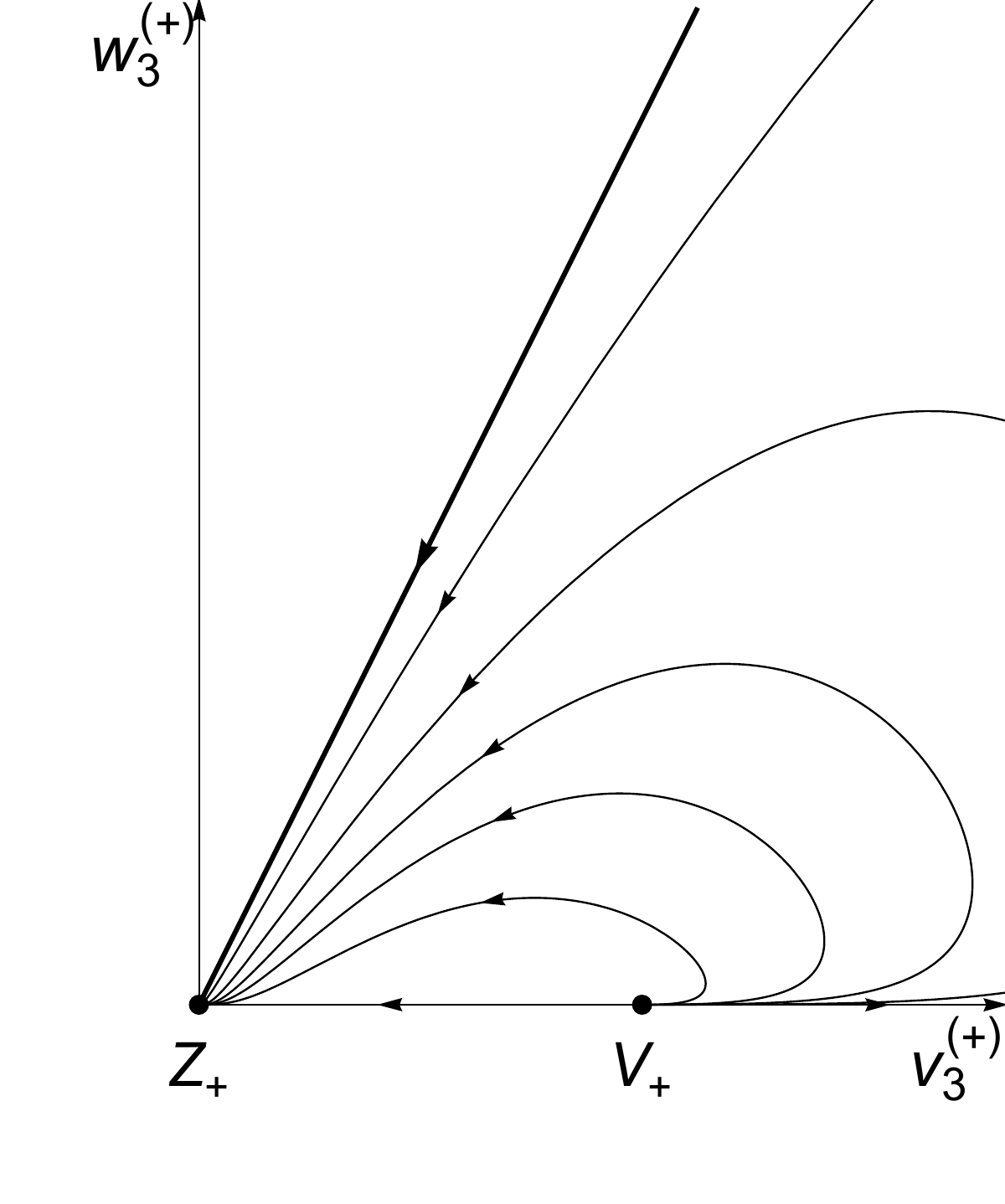}}	
	\caption{Physical regions on the $\{u_3^{(\pm)}=0\}$ invariant sets of the blow-ups in the positive $v$ of the fixed points $\mathrm{Q}_\pm$}
	\label{fig:vBupQ}
\end{figure}

Next, we proceed with the study of the flow in a neighbourhood of the sphere's equator, which is the region where $v^{(\pm)}_3$ and $w^{(\pm)}_3$ become unbounded. Since the physical region of the blow-up space has $w^{(\pm)}_3\leq2v^{(\pm)}_3$, in order to cover the boundary of the physical region on the equator we just need to consider the blow-up in the positive $v$-direction.

\paragraph{Blow-up in the positive $v$-direction.} In this case, and after canceling a common factor $u^{(\pm)}_1$, i.e. by changing the time variable $d/d\tau_{2\pm} = u^{(\pm)}_1 d/d\tau^{(\pm)}_1$, we arrive at the regular system
\begin{subequations}
	\begin{align}
	\frac{du^{(\pm)}_1}{d\tau^{(\pm)}_1} &= -u^{(\pm)}_1 \left\{ 1 \pm (-2+2w^{(\pm)}_1+u^{(\pm)^2}_1w^{(\pm)}_1)s^{(\pm)}_1 \right\}, \\
	\frac{dw^{(\pm)}_1}{d\tau^{(\pm)}_1} &=w^{(\pm)}_1 \Bigg\{ u^{(\pm)^2}_1w^{(\pm)}_1 \pm (-4+2w^{(\pm)}_1+2u^{(\pm)^2}_1w^{(\pm)}_1+u^{(\pm)^2}_1w^{(\pm)^2}_1)s^{(\pm)}_1 \nonumber \\
	&\quad -\left(1-\frac{3}{2}\gamma_\mathrm{pf}\right)(2-w^{(\pm)}_1-u^{(\pm)^2}_1w^{(\pm)}_1)\Bigg\}, \\
	\frac{ds^{(\pm)}_1}{d\tau^{(\pm)}_1} &=s^{(\pm)}_1\Bigg\{ (3-u^{(\pm)^2}_1 w^{(\pm)}_1) \pm (2+2w^{(\pm)}_1-u^{(\pm)^2}_1 w^{(\pm)}_1-u^{(\pm)^2}_1 w^{(\pm)^2}_1)s^{(\pm)}_1  \nonumber \\
	&\quad +\left(1-\frac{3}{2}\gamma_\mathrm{pf}\right)(2-w^{(\pm)}_1-u^{(\pm)^2}_1w^{(\pm)}_1)\Bigg\}.
	\end{align}
\end{subequations}
On the invariant set $\{u^{(\pm)}_1=0\}$, the equator is the invariant boundary subset $\{s^{(\pm)}_1=0\}$ of the physical state-space with $s^{(\pm)}_1>0$ and $0<w^{(\pm)}_1<2$, where $\{w^{(\pm)}_1=0\}$ and $\{w^{(\pm)}_1=2\}$ are the invariant boundaries which correspond in chart $\kappa^{(\pm)}_3$ to the invariant boundaries $\{w^{(\pm)}_3=0\}$ and $\{w^{(\pm)}_3=2w^{(\pm)}_3\}$, respectively. On these invariant boundaries we find the fixed points $\mathrm{V}_{\pm}$ and $\mathrm{dS}_0$ which in the present chart $\kappa^{(\pm)}_{1}$, are given by
\begin{subequations}
\begin{align}
\mathrm{V}_{\pm}:&\qquad (u^{(\pm)}_1,w^{(\pm)}_1,s^{(\pm)}_1)=\left(0,0,\pm\frac{3}{2}\left(\gamma_{\mathrm{pf}}-\frac{5}{3}\right)\right)\\
\mathrm{dS}_{0}:&\qquad (u^{(-)}_1,w^{(-)}_1,s^{(-)}_1)=\left(0,2,\frac{1}{2}\right).
\end{align}
\end{subequations}
The intersection of the invariant boundaries $\{w^{(\pm)}_1=0\}$ and $\{w^{(\pm)}_1=2\}$ with the equator $\{s^{(\pm)}_1=0\}$ are the two fixed points:
\begin{equation}
\mathrm{T}_{\pm} :\qquad  (u^{(\pm)}_1,w^{(\pm)}_1,s^{(\pm)}_1)=(0,0,0)
\end{equation}
with eigenvalues $-1$, $3(\gamma_{\mathrm{pf}}-\frac{2}{3})$, $-3(\gamma_{\mathrm{pf}}-\frac{5}{3})$ and associated eigenvectors 
$(1,0,0)$, $(0,1,0)$, $(0,0,1)$, as well as
\begin{equation}
\mathrm{S}_{\pm} :\qquad  (u^{(\pm)}_1,w^{(\pm)}_1,s^{(\pm)}_1)=(0,2,0)
\end{equation}
with eigenvalues $-1$, $-3(\gamma_{\mathrm{pf}}-\frac{2}{3})$, $3$ and eigenvectors $(1,0,0)$, $(0,1,0)$, $(0,0,1)$, respectively.

On the $\{u^{(-)}_1=0\}$ invariant set, $\mathrm{T}_{-}$ is a source (saddle) for $\gamma_{\mathrm{pf}}<5/3$ ($\gamma_{\mathrm{pf}}>5/3$).  Conversely, on the $\{u^{(+)}_1=0\}$ invariant set, $\mathrm{T}_+$ is a hyperbolic source (saddle) when $\gamma_{\mathrm{pf}}<5/3$ ($\gamma_{\mathrm{pf}}>5/3$). When $\gamma_{\mathrm{pf}}=5/3$, the fixed points $\mathrm{V}_{\pm}$ merge with $\mathrm{T}_\pm$ resulting in the appearance of the 1-dimensional centre manifold given by the $s^{(\pm)}_{1}$-axis. As a consequence, in this case, $\mathrm{T}_+$ has a 1-dimensional unstable manifold (the $w^{(+)}_{1}$-axis) and a 1-dimensional unstable centre manifold, while $\mathrm{T}_-$ has a 1-dimensional unstable manifold (the $w^{(-)}_{1}$-axis) and a 1-dimensional stable centre manifold. The fixed points $\mathrm{S}_{\pm}$ are hyperbolic saddles irrespective of the parameter $\gamma_{\mathrm{pf}}$. In turn the boundary of the physical region on the equator consists of the heteroclinic orbits $\mathrm{T}_{\pm}\rightarrow \mathrm{S}_{\pm}$. 
Figure~\ref{fig:s-w-Bup} shows the physical regions $\{0<w^{(\pm)}_1<2\,\wedge\, s^{(\pm)}_1>0\}$ and the trivial orbit structure of its invariant boundaries.

\begin{figure}[ht!]
	\centering
		\subfigure[ $\frac{2}{3}<\gamma_\mathrm{pf}<\frac{5}{3}$.]{\label{fig:v-direction-1}
			\includegraphics[trim={3cm 3cm 1.5cm 0cm},clip,width=0.23\textwidth]{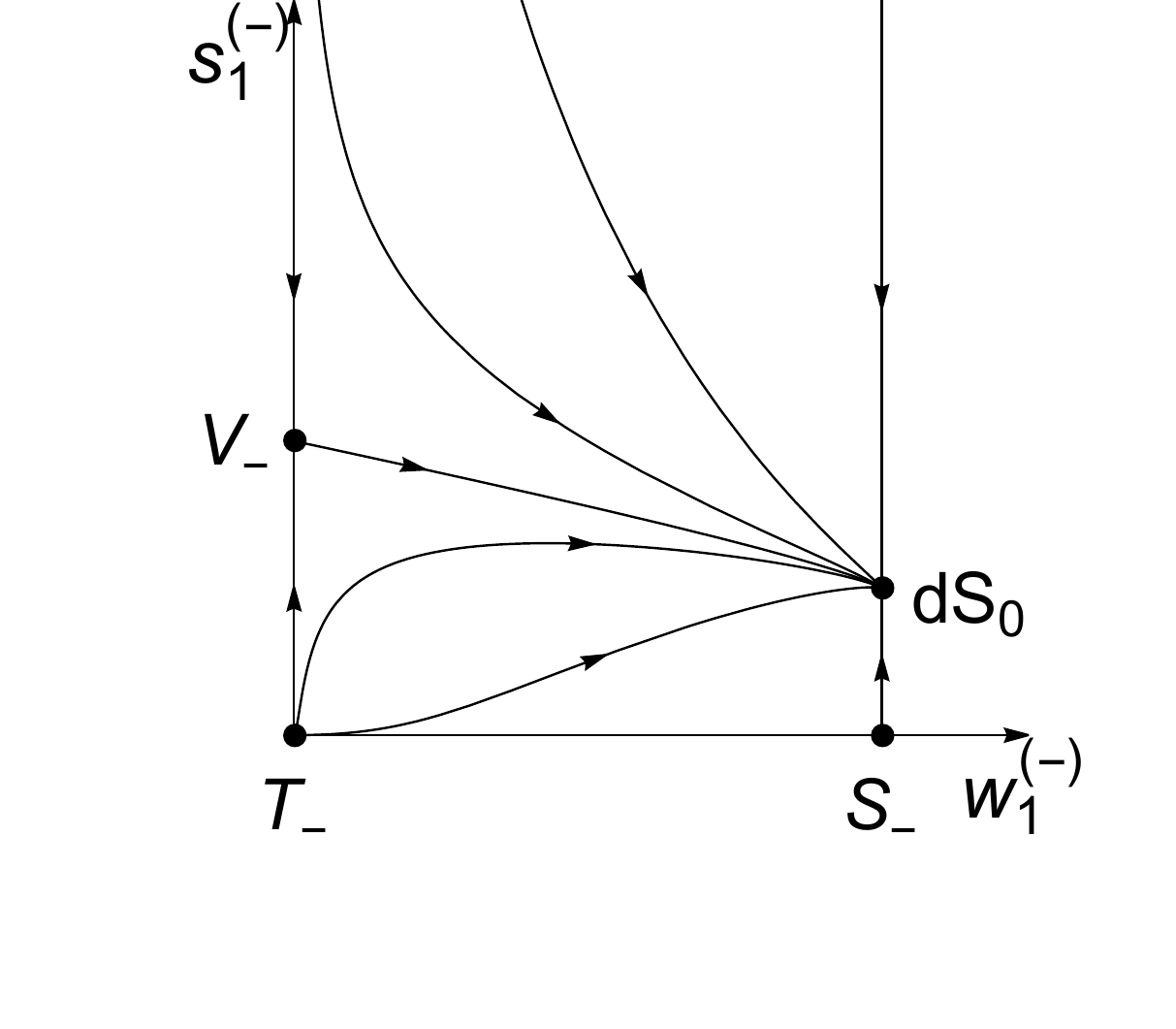}}
		\subfigure[ $\frac{5}{3}\leq\gamma_\mathrm{pf}<2$.]{\label{fig:v-direction-2}
			\includegraphics[trim={3cm 3cm 1.5cm 0cm},clip,width=0.23\textwidth]{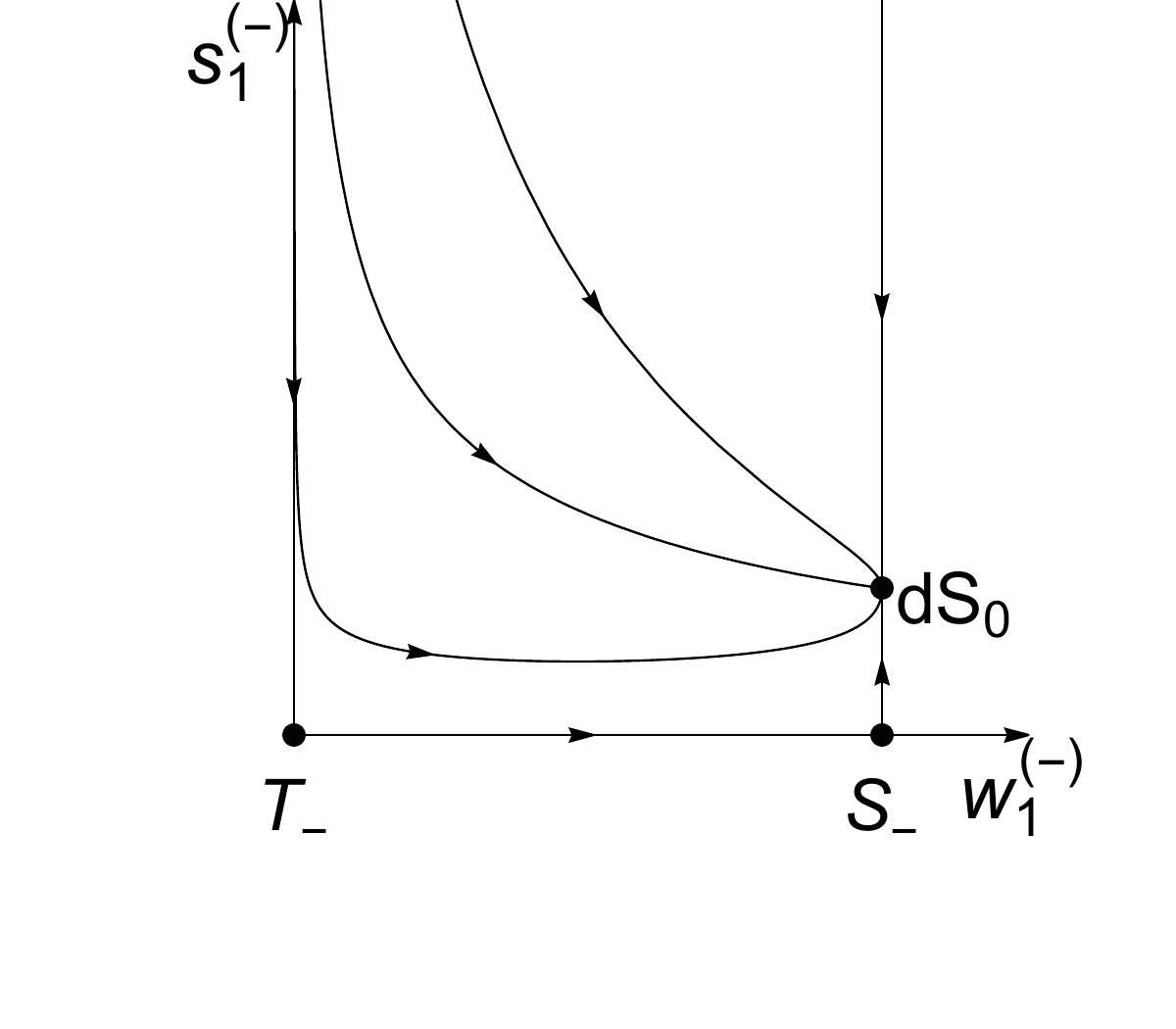}}
		\subfigure[ $\frac{2}{3}<\gamma_\mathrm{pf}\leq\frac{5}{3}$.]{\label{fig:v-direction+1}
			\includegraphics[trim={3cm 3cm 1.5cm 0cm},clip,width=0.23\textwidth]{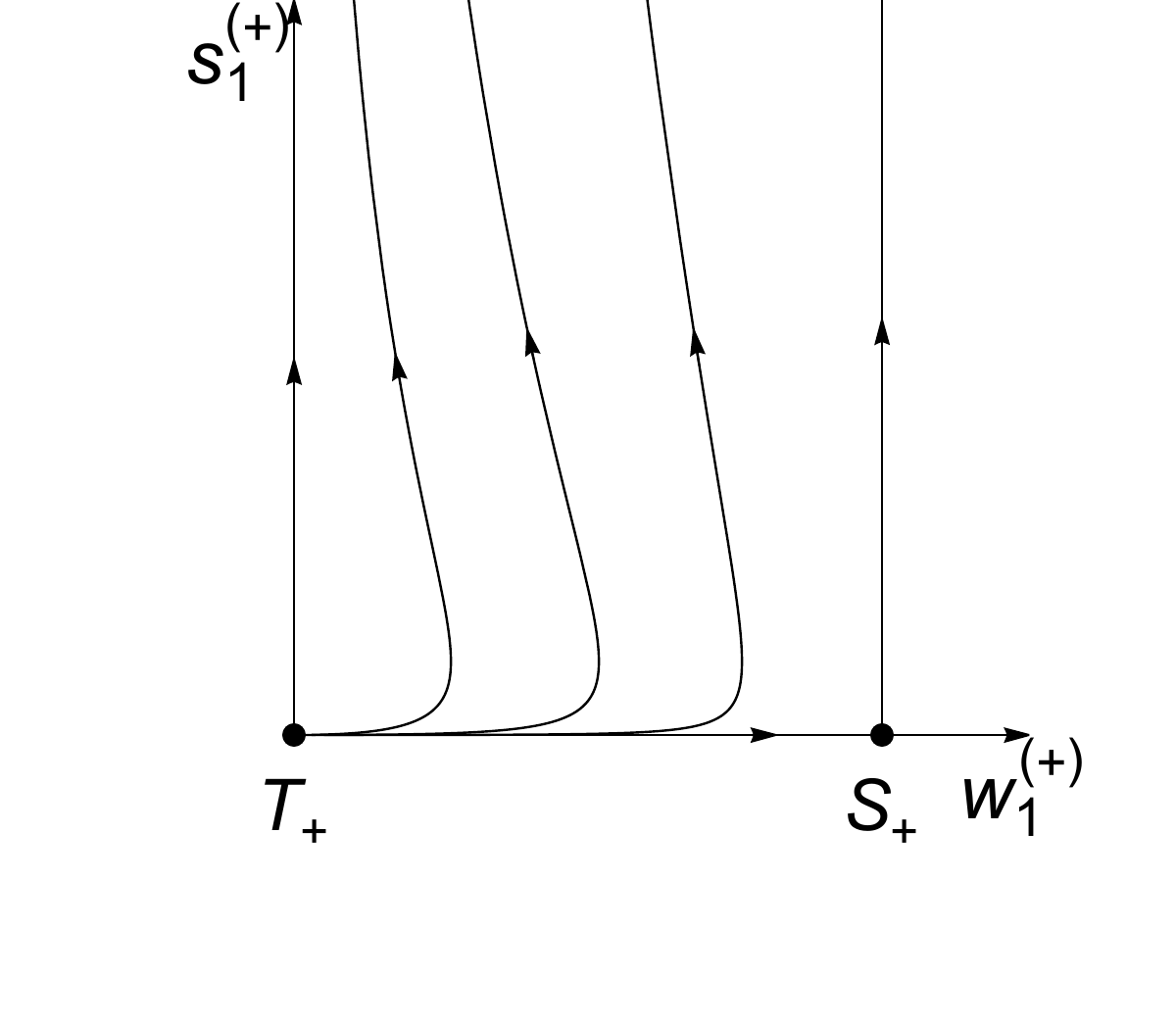}}
		\subfigure[ $\frac{5}{3}<\gamma_\mathrm{pf}<2$.]{\label{fig:v-direction+2}
			\includegraphics[trim={3cm 3cm 1.5cm 0cm},clip,width=0.23\textwidth]{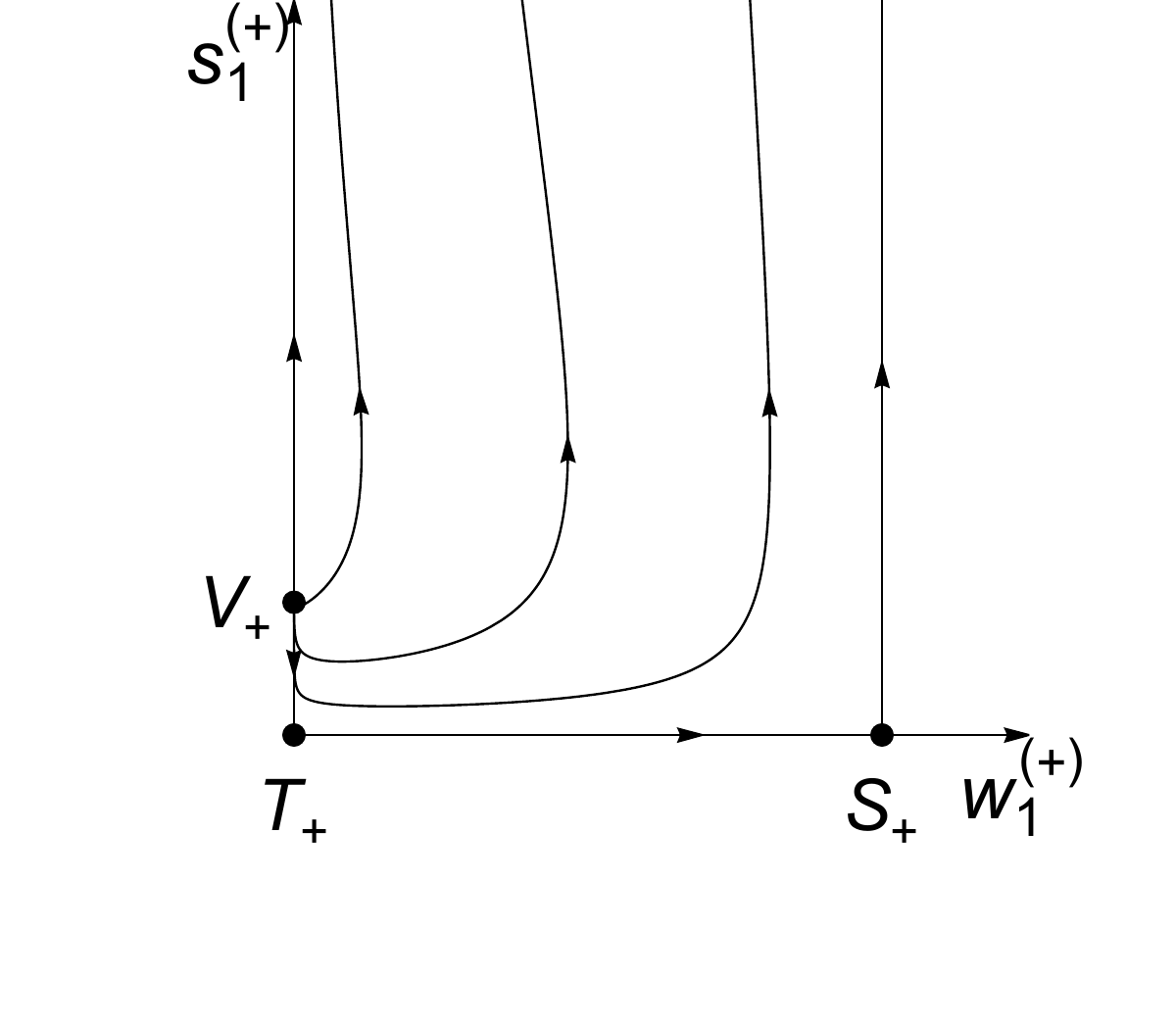}}	
	\caption{Physical region of the invariant set $\{u_1^{(\pm)}=0\}$ of the blow-ups in the positive $s-$directions of the fixed points $\mathrm{Q}_\pm$.}
	\label{fig:s-w-Bup}
\end{figure}
Hence the physical region of the blow-ups of each fixed point  $\mathrm{Q}_\pm$ is a compact region of the unit 2-sphere, whose boundary consists of three invariant subsets intersecting at the three fixed points $\mathrm{S}_\pm$, $\mathrm{T}_\pm$ and $\mathrm{Z}_\pm$. 
Having uncovered the boundary structure, we are now in position to describe the interior dynamics:
\begin{lemma}\label{LemmaCompQ_pm}
    The flow in the interior of the physical region of the blow-up locus of $\mathrm{Q}_{\pm}$ consists entirely of heteroclinic orbits connecting the fixed points on the boundary. 
    More precisely:~
    \begin{itemize}
    	\item[(i)] If $\gamma_{\mathrm{pf}}\in(\frac{2}{3},\frac{5}{3})$ then on the blow-up of $\mathrm{Q}_{-}$ there exists a separatrix $\mathrm{V}_-\rightarrow \mathrm{dS}_0$ which splits the physical region into two invariant regions consisting of heteroclinic orbits $\mathrm{Z}_{-}\rightarrow \mathrm{dS}_0$ and $\mathrm{T}_{-}\rightarrow \mathrm{dS}_0$, while the flow on the blow-up of $\mathrm{Q}_{+}$ consists of heteroclinic orbits $\mathrm{T}_{+}\rightarrow \mathrm{Z}_+$.
    	\item[(ii)] If $\gamma_{\mathrm{pf}}=\frac{5}{3}$ then the flow on the blow-up of $\mathrm{Q}_{-}$ consists of heteroclinic orbits $\mathrm{Z}_{-}\rightarrow \mathrm{dS}_0$ and the flow on the blow-up of $\mathrm{Q}_{+}$ consists of heteroclinic orbits $\mathrm{T}_{+}\rightarrow \mathrm{Z}_+$.
    	\item[(iii)] If $\gamma_{\mathrm{pf}}\in(\frac{5}{3},2)$ the flow on the blow-up of $\mathrm{Q}_{-}$ consists of heteroclinic orbits $\mathrm{Z}_{-}\rightarrow \mathrm{dS}_0$ and the flow on the blow-up of $\mathrm{Q}_{+}$ consists of heteroclinic orbits $\mathrm{V}_{+}\rightarrow \mathrm{Z}_+$.
    \end{itemize}  
\end{lemma}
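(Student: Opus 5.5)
The plan is to work on the blow-up locus $\{u^{(\pm)}_3=0\}$ in chart $\kappa^{(\pm)}_3$, complemented near the equator by chart $\kappa^{(\pm)}_1$ restricted to $\{u^{(\pm)}_1=0\}$. By the previous analysis, the physical region on each sphere is a compact 2-dimensional invariant set. Its boundary is a triangle of heteroclinic orbits with vertices $\mathrm{Z}_\pm$, $\mathrm{T}_\pm$, $\mathrm{S}_\pm$:
\begin{itemize}
\item the segment $\{w^{(\pm)}_3=0\}$, which contains $\mathrm{V}_\pm$ when these exist;
\item the vacuum segment $\{w^{(\pm)}_3=2v^{(\pm)}_3\}$, which contains $\mathrm{dS}_0$ for the minus sign;
\item the equator arc $\mathrm{T}_\pm\rightarrow\mathrm{S}_\pm$.
\end{itemize}
The strategy is the same as for Lemma~\ref{T0SolStr}: exclude interior fixed points and periodic orbits, then apply Poincar\'e--Bendixson together with the local stability of the boundary fixed points.

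\textbf{Step 1: no interior fixed points.} Set $u^{(\pm)}_3=0$ and solve $dv/d\tau=dw/d\tau=0$ with $w\neq0$. The $w$-equation forces $\pm6+3v+2(1-\tfrac32\gamma_{\mathrm{pf}})(2v-w)=0$, which is linear in $(v,w)$. Substituting it into the $v$-equation, $\mp2(v+w)-v\big(3v+(1-\tfrac32\gamma_{\mathrm{pf}})(2v-w)\big)=0$, should leave only solutions on the vacuum line or outside $0<w<2v$. For the plus sign the linear relation already gives $v+w>0$ with the wrong sign inside the region. This is a short algebraic check.

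\textbf{Step 2: no periodic orbits.} I would look for a monotone function or a Dulac function. A natural first try is a ratio such as $w^{a}(2v-w)^{b}v^{c}$, or simply $v$ or $w/v$ in the region where the sign is fixed. For $\mathrm{Q}_+$, $dw/d\tau<0$ is plausible throughout the physical region, since $w$ is multiplied by $-(6+\dots)$; this would give monotonicity directly. For $\mathrm{Q}_-$ I would use the Dulac function $1/(w(2v-w))$, or an explicit power, and check the sign of the divergence. Alternatively, the absence of interior fixed points already rules out periodic orbits: a periodic orbit in a planar simply connected region must enclose a fixed point. This is the cleanest route, and I would use it.

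\textbf{Step 3: Poincar\'e--Bendixson.} Every interior $\alpha$- and $\omega$-limit set is then a boundary fixed point or a heteroclinic cycle on the boundary. The boundary triangle is not a cycle, because its edge orientations are inconsistent; for example, $\mathrm{Z}_-$ is a source on $\{u_3^{(-)}=0\}$. The endpoints are then read off from the eigenvalues restricted to the sphere.
\begin{itemize}
\item For $\mathrm{Q}_+$, $\mathrm{Z}_+$ is the only sink. Sources are $\mathrm{T}_+$ when $\gamma_{\mathrm{pf}}\le 5/3$ (for $\gamma_{\mathrm{pf}}=5/3$ via the unstable centre manifold) and $\mathrm{V}_+$ when $\gamma_{\mathrm{pf}}>5/3$, when $\mathrm{T}_+$ becomes a saddle. $\mathrm{S}_+$ is a saddle with no interior unstable direction. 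This gives cases (i)--(iii) for $\mathrm{Q}_+$.
\item For $\mathrm{Q}_-$, $\mathrm{dS}_0$ is the unique sink and $\mathrm{Z}_-$ a source. When $\gamma_{\mathrm{pf}}<5/3$, $\mathrm{T}_-$ is also a source, and $\mathrm{V}_-$ is a saddle whose single interior unstable orbit must end at $\mathrm{dS}_0$. That orbit is the separatrix splitting the region into basins from $\mathrm{Z}_-$ and from $\mathrm{T}_-$. When $\gamma_{\mathrm{pf}}\ge5/3$, $\mathrm{V}_-$ is absent and $\mathrm{T}_-$ is a saddle (or has a stable centre direction), so all orbits come from $\mathrm{Z}_-$.
\end{itemize}

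The main obstacle is the degenerate case $\gamma_{\mathrm{pf}}=5/3$ at $\mathrm{T}_\pm$. There I would compute the centre-manifold reduction along the $s_1$-axis, where $ds_1/d\tau\approx \pm 2s_1^2$, to confirm that $\mathrm{T}_-$ attracts nothing from the interior and $\mathrm{T}_+$ repels into it. I would also need to verify carefully the sign in Step 1 for the minus case.
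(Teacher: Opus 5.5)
Your proposal is correct and is essentially the paper's argument. You show there are no interior fixed points, which by your index argument also rules out periodic orbits. The boundary triangle $\mathrm{Z}_\pm$--$\mathrm{T}_\pm$--$\mathrm{S}_\pm$ is not a heteroclinic cycle. Poincar\'e--Bendixson, together with the local analysis of the boundary points (semi-hyperbolic, plus the centre manifold at $\gamma_{\mathrm{pf}}=5/3$), then fixes all limit sets; you spell this out in more detail than the paper does.

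The only slip is your parenthetical shortcut for the plus sign in Step~1, which does not work as stated. The full elimination does work. With $k=1-\tfrac32\gamma_{\mathrm{pf}}\in(-2,0)$, for $\mathrm{Q}_+$ it gives $w=v(2-3v)/4$ and $v=-2/k>1$, so $w<0$ and the point lies outside the physical region. For $\mathrm{Q}_-$ it gives only $v=2$ (i.e.\ $\mathrm{dS}_0$ on the vacuum line) or $v=2/k<0$.
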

\begin{proof}
 Since on the compact physical region of the blow-ups of $\mathrm{Q}_{\pm}$ there are no interior fixed points and no closed heteroclinic cycles can exists on the boundary, then the proof follows from the application of the Poincar\'e-Bendixson theorem and the local stability analysis of the semi-hyperbolic fixed points on the boundary. 
\end{proof}	
After the characterisation of the flow on the blow-up locus of the non-hyperbolic fixed points $\mathrm{Q}_\pm$ of Lemma~\ref{LemmaCompQ_pm}, what remains to be done in order to describe the flow in neighbourhood of $\mathrm{N}_0$, is to determine all possible $\alpha$ and $\omega$-limit sets for the orbits of Lemma~\ref{lemk3}.

We have seen, using chart $\kappa^{(0)}_1$, that the flow on the invariant set $\{z_1=0\}\cap\{u_1=0\}$, which describes an open set of the equator of $\mathbb{S}^2_{0}$ on the blow-up of $\mathrm{N}_0$ where $x_3$ becomes unbounded, has a single hyperbolic source $\mathrm{P}$ as shown in Figure~\ref{fig:blowupx}. 
When $y_3$ becomes unbounded, the flow is described by the blow-ups of $\mathrm{Q}_\pm$. Consider the charts $\kappa^{(\pm)}_{3}$ and the associated flow when restricted to the invariant boundaries $\{w^{(\pm)}_3=0\}$. Using a composition of transition charts, it is straightforward to check that the orbit originating from $\mathrm{P}$ defined by~\eqref{eq74} is simply given by
\begin{equation}\label{w3PtoV}
v^{(\pm)}_3=\mp \frac{2/3}{\gamma_\mathrm{pf}-\frac{5}{3}},
\end{equation}
where
\begin{equation}
\frac{dv^{(\pm)}_3}{d\tau^{(\pm)}_3}=0, \qquad \frac{du^{(\pm)}_3}{d\tau^{(\pm)}_3} =\pm 4u^{(\pm)}_3 \left(1+\frac{1}{2}\left(\frac{\gamma_\mathrm{pf}-\frac{4}{3}}{\gamma_\mathrm{pf}-\frac{5}{3}}\right)\right).
\end{equation}
This shows that there exists a heteroclinic orbit $\mathrm{P}\rightarrow \mathrm{V}_{-}$ if $\gamma_\mathrm{pf}<\frac{5}{3}$ or $\mathrm{P}\rightarrow \mathrm{V}_{+}$ if $\gamma_\mathrm{pf}>\frac{5}{3}$. When $\gamma_\mathrm{pf}=\frac{5}{3}$ these heteroclinic orbits merge with the equator which therefore consists of the two heteroclinic orbits $\mathrm{P}\rightarrow\mathrm{T}_{\pm}$. It is also straightforward to show that in the charts $\kappa^{(\pm)}_{3}$, the invariant subset $\{w^{(\pm)}_3=0\}\cap\{v^{(\pm)}_3=0\}$, which is the $u^{(\pm)}_3$-axis, corresponds to the boundary $\{u_3=0\}\cap\{x_3=0\}$ of the half-plane of Lemma~\ref{lemk3}, which therefore consists of the heteroclinic orbit $\mathrm{Z}_+\rightarrow\mathrm{Z}_-$. Figure~\ref{fig:s-direction2} shows the invariant sets $\{w^{(\pm)}_3=0\}$. Hence the orbits described in Lemma~\ref{lemk3} are contained in a compact region of $\mathbb{S}^2_{\pm}$ whose boundary consists of four invariant subsets which intersect at the fixed points $\mathrm{Z}_\pm$, $\mathrm{T}_\pm$. 
\begin{figure}[ht!]
	\centering
		\subfigure[$2/3<\gamma_\mathrm{pf} < 5/3$.]{\label{fig:w3-1}
			\includegraphics[trim={1.2cm 1cm 0cm 0cm},clip,width=0.23\textwidth]{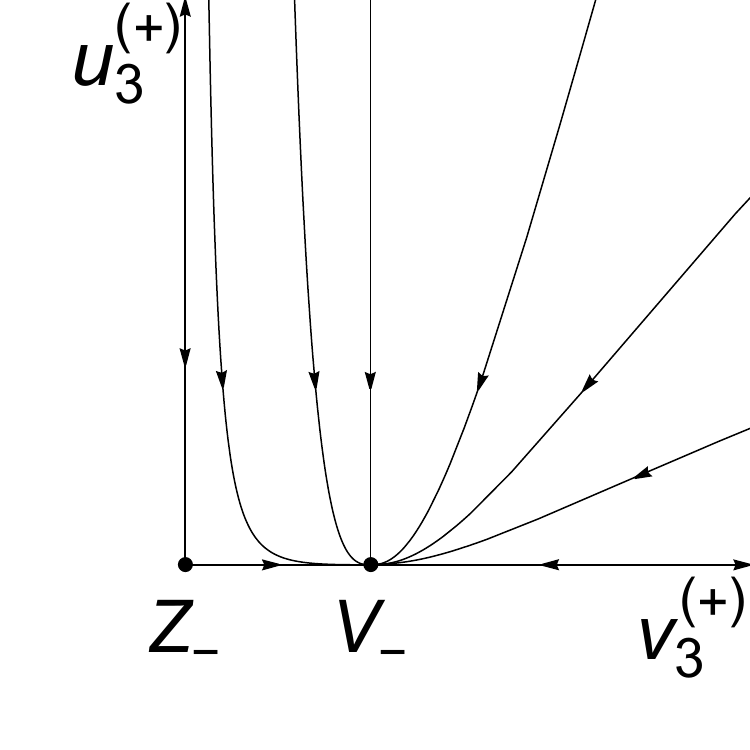}}
		\subfigure[$5/3\leq\gamma_\mathrm{pf}< 2$.]{\label{fig:w3-2}
			\includegraphics[trim={1.2cm 1.1cm 0cm 0cm},clip,width=0.23\textwidth]{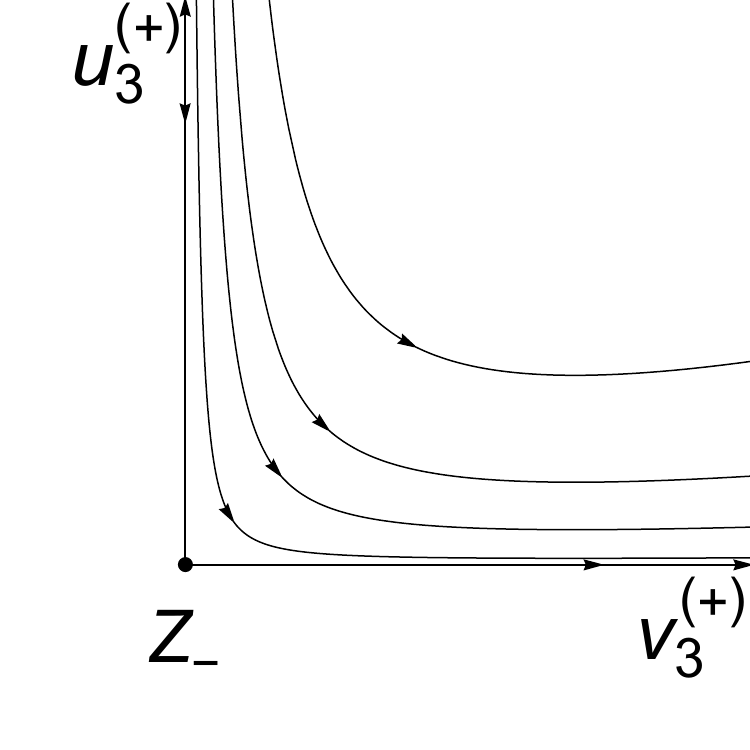}}
		\subfigure[$2/3<\gamma_\mathrm{pf}\leq5/3$.]{\label{fig:w3+1}
			\includegraphics[trim={0.5cm 0.7cm 0.1cm 0cm},clip,width=0.23\textwidth]{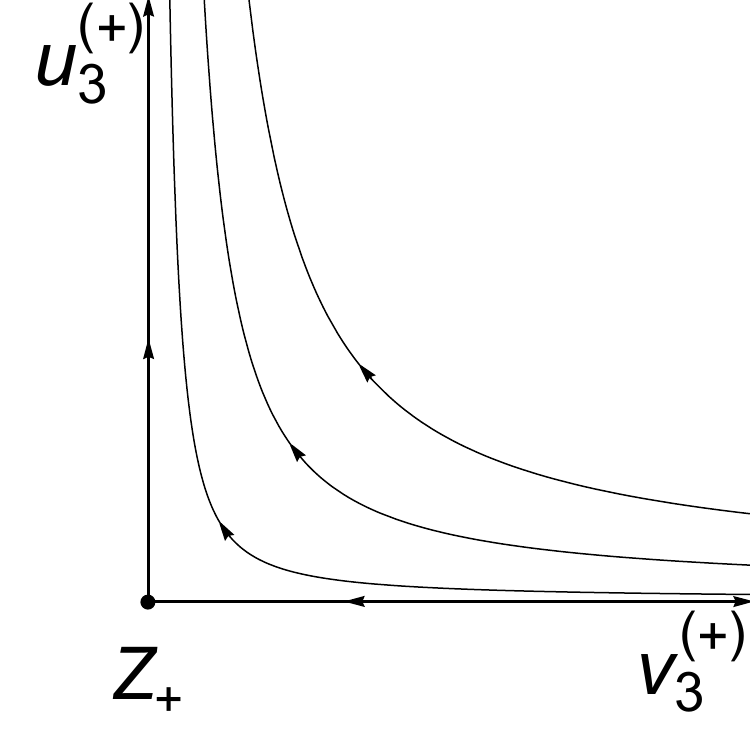}}
		\subfigure[$5/3<\gamma_\mathrm{pf}<2$.]{\label{fig:w3+2}
			\includegraphics[trim={0.5cm 0.5cm  0cm 0cm},clip,width=0.23\textwidth]{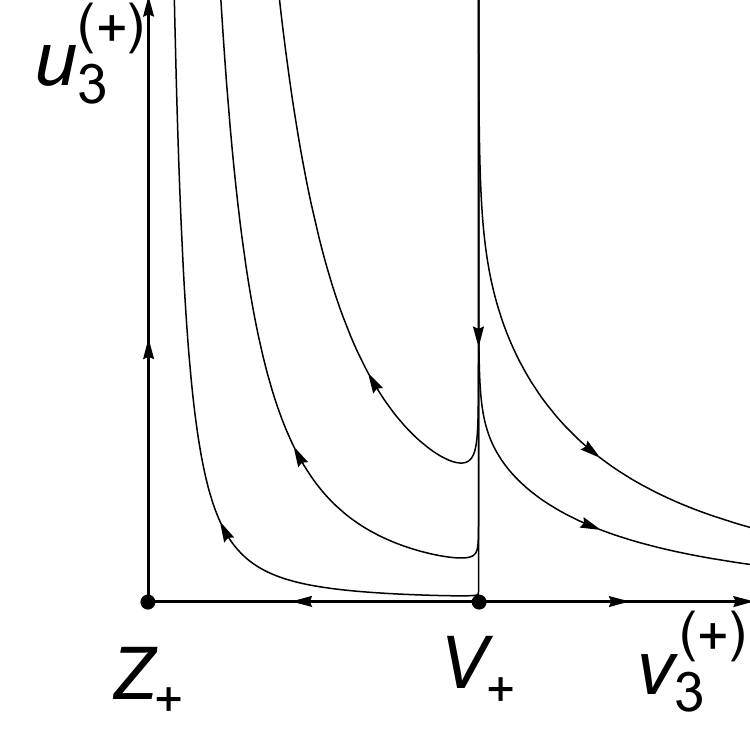}}	
	\caption{Invariant subsets $\{w_3^{(\pm)}=0\}$ of the blow-up in the positive $s-$direction of the fixed points $\mathrm{Q}_\pm$.}
	\label{fig:s-direction2}
\end{figure}
\begin{lemma}\label{LemComp0}
The flow on the interior of the compact region on the unit 2-sphere of the blow-up of $\mathrm{N}_0$ consists entirely of heteroclinic orbits connecting the fixed points at the boundary. More precisely:
\begin{itemize}
	\item[(i)] If $\gamma_{\mathrm{pf}}\in(\frac{2}{3},\frac{5}{3})$ then the flow consists of heteroclinic orbits $\mathrm{P}\rightarrow\mathrm{V}_-$;
	\item[(ii)] If $\gamma_{\mathrm{pf}}=\frac{5}{3}$ then the flow consists of heteroclinic orbits $\mathrm{P}\rightarrow \mathrm{T}_-$;
	\item[(iii)] If $\gamma_{\mathrm{pf}}\in(\frac{5}{3},2)$ there exists a separatrix $\mathrm{P}\rightarrow\mathrm{V}_+$ which divides the state-space into two invariant regions: one region consisting of heteroclinic orbits $\mathrm{P}\rightarrow\mathrm{T}_+$ and the other region consisting of heteroclinic orbits $\mathrm{P}\rightarrow\mathrm{T}_-$.
\end{itemize}  

\end{lemma}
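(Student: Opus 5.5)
The plan is to repeat the Poincaré--Bendixson argument of Lemma~\ref{LemmaCompQ_pm}, now on the compact region $\mathcal{K}\subset\mathbb{S}^2_0$ built from chart $\kappa^{(0)}_3$ (the half-plane of Lemma~\ref{lemk3}) and compactified at infinity by the charts $\kappa^{(0)}_1$ and $\kappa^{(\pm)}_3$. First I list the boundary of $\mathcal{K}$ and its orbits.
\begin{itemize}
\item The segment $\{u_3=0,x_3=0\}$ is the heteroclinic orbit $\mathrm{Z}_+\rightarrow\mathrm{Z}_-$.
\item The equator near the positive $x$-direction is the orbit structure of $\{z_1=0\}\cap\{u_1=0\}$ emanating from $\mathrm{P}$.
\item The pieces at the ends of the $y$-directions are the invariant sets $\{w^{(\pm)}_3=0\}$ of Figure~\ref{fig:s-direction2}. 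These contain the orbits $\mathrm{Z}_\pm\rightarrow\mathrm{V}_\pm$ or $\mathrm{T}_\pm\rightarrow\mathrm{Z}_\pm$, depending on $\gamma_{\mathrm{pf}}$.
\end{itemize}
Putting these together gives a closed boundary curve through $\mathrm{Z}_\pm$ and $\mathrm{T}_\pm$, with $\mathrm{P}$ and possibly $\mathrm{V}_\pm$ lying on it. Using the explicit straight orbit \eqref{w3PtoV}, I also record the orbit $\mathrm{P}\rightarrow\mathrm{V}_\mp$, which lies either in the interior or on the equator.

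Second, I exclude interior fixed points and periodic orbits. Lemma~\ref{lemk3} already gives strict monotonicity of $x_3$ in the interior of the half-plane. This rules out interior equilibria, periodic orbits and homoclinic loops, and forces every interior orbit to have both its $\alpha$- and $\omega$-limit sets on the boundary of $\mathcal{K}$. I then check that the boundary orbits do not form a closed heteroclinic cycle, since their orientations are inconsistent: $\mathrm{P}$ is a source. Hence by Poincaré--Bendixson each limit set is a single boundary fixed point. Candidates for $\alpha$-limits are the points whose restriction to $\mathcal{K}$ has a 2-dimensional unstable set pointing into $\mathcal{K}$. Candidates for $\omega$-limits are the points with a 2-dimensional stable set or a stable separatrix entering $\mathcal{K}$.

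Third, I do the local analysis on $\mathcal{K}$ using the eigenvalues already computed, restricted to $u=0$.
\begin{itemize}
\item $\mathrm{P}$ restricted to $\{u_1=0\}$ has eigenvalues $1$ and $3(2-\gamma_{\mathrm{pf}})$, so it is a source on $\mathcal{K}$. Since monotone $x_3$ decreasing means orbits come from $x_3=\infty$, every interior orbit has $\alpha$-limit $\mathrm{P}$.
\item $\mathrm{Z}_+$ is a sink on $\{u_3^{(+)}=0\}$, but within $\mathcal{K}$ its attracting directions are tangent to the boundary only. The same holds for $\mathrm{Z}_-$ with respect to $\alpha$.
\item For the $\omega$-limits, the case split is driven by $\mathrm{V}_\pm$ and $\mathrm{T}_\pm$. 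For $\gamma_{\mathrm{pf}}<5/3$, $\mathrm{V}_-$ lies on the boundary $\{w^{(-)}_3=0\}$ with eigenvalues $2$ and $2(\frac83-\gamma_{\mathrm{pf}})/(\gamma_{\mathrm{pf}}-\frac53)<0$ transversally into $\mathcal{K}$, so it attracts an open sector. Meanwhile $\mathrm{T}_-$ is a source on $\{u^{(-)}_1=0\}$, so it cannot be an $\omega$-limit, and $\mathrm{T}_+$ likewise. This gives (i).
\item At $\gamma_{\mathrm{pf}}=5/3$, $\mathrm{V}_\pm$ merge into $\mathrm{T}_\pm$, and $\mathrm{T}_-$ acquires a stable centre direction. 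This gives (ii).
\item For $\gamma_{\mathrm{pf}}>5/3$, $\mathrm{V}_+$ has exactly one stable direction into $\mathcal{K}$, giving the separatrix $\mathrm{P}\rightarrow\mathrm{V}_+$. Both $\mathrm{T}_\pm$ are saddles on the respective $u_1$-sets, with stable manifolds along $s_1^{(\pm)}$ feeding into $\mathcal{K}$. The two components cut out by the separatrix then go to $\mathrm{T}_+$ and to $\mathrm{T}_-$ respectively. This gives (iii).
\end{itemize}

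I expect the main obstacle to be the bookkeeping across the charts. Correctly identifying $\mathcal{K}$'s boundary in the various charts, and confirming that the direction of each eigenvector transverse to the boundary actually points into $\mathcal{K}$ (not into the unphysical side or along the $u$-direction), requires careful transition maps $\kappa_{ij}$. The delicate case is $\gamma_{\mathrm{pf}}=5/3$: there the centre manifold at $\mathrm{T}_-$ must be shown to be attracting within $\mathcal{K}$. I would settle this by a one-dimensional centre-manifold reduction along the $s^{(-)}_1$-axis, computing the sign of the quadratic term.
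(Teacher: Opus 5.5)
Your architecture matches the paper's proof: Lemma~\ref{lemk3} (strictly monotone $x_3$, so no interior equilibria or recurrence, and all limit sets lie on $\partial\mathcal{K}$) combined with the local stability of the boundary fixed points. The genuine error is in that local step, which is where every case distinction comes from. At $\mathrm{V}_\pm$ and $\mathrm{T}_\pm$ you read stability off the wrong invariant surface.

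In $\kappa^{(0)}_1$ the set $\{u_1=0\}$ really is $\mathbb{S}^2_0$, which is why your treatment of $\mathrm{P}$ is correct. After blowing up $\mathrm{Q}_\pm$, however, the sets $\{u^{(\pm)}_1=0\}$ and $\{u^{(\pm)}_3=0\}$ are the new spheres $\mathbb{S}^2_\pm$ of Lemma~\ref{LemmaCompQ_pm}. Near $\mathrm{Q}_\pm$, the region $\mathcal{K}$ (the strict transform of $\mathbb{S}^2_0$) is the plane $\{w^{(\pm)}=0\}$. Its boundary there is the arc $\{w^{(\pm)}=u^{(\pm)}=0\}$, and the inward direction is $u^{(\pm)}$, not $w^{(\pm)}$. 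This leads to concrete failures:
\begin{itemize}
\item For $\mathrm{V}_-$ you quote $2$ and $2(\frac83-\gamma_{\mathrm{pf}})/(\gamma_{\mathrm{pf}}-\frac53)$. These are the upper-sign values, and the second belongs to the eigenvector $(\ast,1,0)$, which leaves $\mathcal{K}$. With those numbers $\mathrm{V}_-$ would be a saddle of $\mathcal{K}$ attracting a single orbit, the opposite of the ``open sector'' you conclude.
\item $\mathrm{T}_\pm$ are not sources of $\mathcal{K}$. If they were, they would compete with $\mathrm{P}$ as $\alpha$-limits, and your claim that every orbit comes from $\mathrm{P}$ would not follow.
\item Two of your boundary orientations are reversed. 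The arcs are $\mathrm{Z}_-\rightarrow\mathrm{V}_-\leftarrow\mathrm{T}_-$ for $\gamma_{\mathrm{pf}}<5/3$ and $\mathrm{Z}_-\rightarrow\mathrm{T}_-$ otherwise. On the other side they are $\mathrm{T}_+\rightarrow\mathrm{Z}_+$ for $\gamma_{\mathrm{pf}}\le 5/3$ and $\mathrm{Z}_+\leftarrow\mathrm{V}_+\rightarrow\mathrm{T}_+$ otherwise.
\end{itemize}

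Restricted to $\mathcal{K}$, the correct data are as follows.
\begin{itemize}
\item $\mathrm{V}_\pm$ has eigenvalue $\pm2$ along the arc and $\mp2(2-\gamma_{\mathrm{pf}})/(\gamma_{\mathrm{pf}}-\frac53)<0$ along $u^{(\pm)}_3$. So $\mathrm{V}_-$ is a sink, and $\mathrm{V}_+$ is a saddle whose only interior orbit is the straight orbit from $\mathrm{P}$.
\item $\mathrm{T}_\pm$ has $-1$ along the equator and $-3(\gamma_{\mathrm{pf}}-\frac53)$ along the arc. It is therefore a saddle with both separatrices in $\partial\mathcal{K}$ for $\gamma_{\mathrm{pf}}<5/3$, and a sink for $\gamma_{\mathrm{pf}}>5/3$.
\item At $\gamma_{\mathrm{pf}}=5/3$ one has exactly $ds^{(\pm)}_1/d\tau^{(\pm)}_1=\pm2(s^{(\pm)}_1)^2$ on $\{w^{(\pm)}_1=0\}$. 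So $\mathrm{T}_-$ attracts, while $\mathrm{T}_+$ has only a hyperbolic sector; you need to exclude $\mathrm{T}_+$ explicitly in this case.
\item $\mathrm{Z}_\pm$ are corner saddles whose eigendirections run along the meridian and the arc.
\end{itemize}
Monotone decreasing $x_3$ places $\alpha$-limits on the equatorial arc $\mathrm{T}_-\leftarrow\mathrm{P}\rightarrow\mathrm{T}_+$ and $\omega$-limits on the complementary arc. With that, (i)--(iii) follow, and (i) even follows by elimination alone.

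You can also bypass the chart bookkeeping entirely. On $\{u_3=0\}$ one has $x_3=2/(3(2-\gamma_{\mathrm{pf}})(\tau_3-\tau_0))$, and the $y_3$-equation is then linear. Solving it gives:
\begin{itemize}
\item $x_3y_3\rightarrow-2/(5-3\gamma_{\mathrm{pf}})$, i.e.\ convergence to $\mathrm{V}_-$, for $\gamma_{\mathrm{pf}}<5/3$;
\item $|x_3y_3|\rightarrow\infty$, i.e.\ convergence to $\mathrm{T}_\pm$, for $\gamma_{\mathrm{pf}}\ge5/3$, except on the single orbit $x_3y_3\equiv 2/(3\gamma_{\mathrm{pf}}-5)$, which ends at $\mathrm{V}_+$;
\item backwards in time, $y_3/x_3\rightarrow0$ and $x_3^{-2}\rightarrow0$, i.e.\ every orbit emanates from $\mathrm{P}$.
\end{itemize}
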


\begin{proof}
The proof follows from Lemma~\ref{lemk3} which states that all possible attracting sets are located on the 1-dimensional invariant boundaries, together with the local stability of the semi-hyperbolic fixed points on these boundaries.
\end{proof}
Gathering all the above information, particularly lemmas~\ref{LemmaCompQ_pm} and~\ref{LemComp0}, leads to a complete picture of the \emph{sucessive blow-ups} leading to the desingularisation of $\mathrm{N}_0$. 
\begin{proposition}\label{BupN0}
	The flow in a neighbourhood  of the non-hyperbolic fixed point $\mathrm{N}_0$ is as depicted in Figure~\ref{fig:BUP3D_N0}, being the $\alpha$-limit point of a 2-parameter family of interior orbits when $\frac{4}{3}<\gamma_{\mathrm{pf}}<2$ (originating from $\mathrm{P}$), while no interior orbits converge to $\mathrm{N}_0$ if $\frac{2}{3}<\gamma_{\mathrm{pf}}\leq\frac{4}{3}$.
\end{proposition}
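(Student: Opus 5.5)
The plan is to assemble the local pictures already obtained on the blow-up locus of $\mathrm{N}_0$ into a global description of the flow on the closed upper quarter-sphere $\mathbb{S}^2_0\cap\{x\geq0,\,z\geq0\}$. We then read off which interior orbits ($u>0$, $\Omega_\mathrm{pf}>0$, $0<T<1$) can have $\mathrm{N}_0$ as an $\alpha$- or $\omega$-limit point. Since $\Psi_0$ is a diffeomorphism away from $\{u=0\}$, an interior orbit converges to $\mathrm{N}_0$ exactly when, in the blow-up space, it converges to an invariant set contained in the blow-up locus. By the Poincaré--Bendixson-type description on the locus, that set must be one of the fixed points found there: $\mathrm{P}$, $\mathrm{V}_\pm$, $\mathrm{T}_\pm$, $\mathrm{S}_\pm$, $\mathrm{Z}_\pm$, $\mathrm{dS}_0$, or a heteroclinic chain built from them.

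\textbf{Step 1.} Collect the orbit structure on the locus. Lemma~\ref{lemk3} covers the chart $\kappa^{(0)}_3$, the $\{u_1=0\}$ analysis covers the neighbourhood of $\mathrm{P}$, and Lemmas~\ref{LemmaCompQ_pm} and~\ref{LemComp0} cover the secondary blow-ups of $\mathrm{Q}_\pm$. Together they show that every orbit on the locus connects hyperbolic or semi-hyperbolic fixed points, with no cycles.

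\textbf{Step 2.} For each such fixed point, record the eigenvalue in the direction transverse to the locus, namely the $u$-direction ($u_1$, $u^{(\pm)}_1$, $u^{(\pm)}_3$), together with its sign relative to the interior. The values are:
\begin{itemize}
\item $\mathrm{P}$: the $u_1$-eigenvalue is $3(\gamma_\mathrm{pf}-4/3)$. So $\mathrm{P}$ is a source in all three directions for $\gamma_\mathrm{pf}>4/3$, and its $2$-dimensional unstable manifold restricted to $u_1>0$, $z_1>0$ and $\Omega_\mathrm{pf}>0$ yields a $2$-parameter family of interior orbits emanating from $\mathrm{N}_0$. For $\gamma_\mathrm{pf}<4/3$ this direction is attracting, so $\mathrm{P}$ has a $2$-dimensional unstable manifold contained in $\{u_1=0\}$, and no interior orbit leaves it.
\item $\mathrm{T}_\pm$ and $\mathrm{S}_\pm$: the eigenvalue is $-1$, attracting in $u$.
\item $\mathrm{Z}_\pm$: the eigenvalue is $\pm4$.
\item $\mathrm{V}_\pm$: the eigenvalue is $\mp 2(2-\gamma_\mathrm{pf})/(\gamma_\mathrm{pf}-5/3)$.
\item $\mathrm{dS}_0$: the eigenvalue is $0$, but $\mathrm{dS}_0$ is the endpoint of $\mathrm{L}_\mathrm{dS}$.
\end{itemize}

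\textbf{Step 3.} For every point with an unstable $u$-direction, check that its unstable manifold is one-dimensional in $u$ and that its leaves stay in the boundary sets. They lie in $\{w=0\}$, i.e.\ $\bar T=0$, or on the vacuum surface; for $\mathrm{Z}_+$ the relevant leaf is the $u^{(+)}_3$-axis, which is $\{T=0\}\cap\{\Omega_\mathrm{pf}=0\}$. Hence no interior orbit arises from them. Dually, interior orbits approaching $\mathrm{T}_\pm$ or $\mathrm{S}_\pm$ along the stable direction would have to shadow a heteroclinic chain on the locus ending at a point that repels in $u$, or terminate on $\mathrm{dS}_0$. For $\mathrm{dS}_0$, orbits tending to it in $u>0$ actually converge to points of $\mathrm{L}_\mathrm{dS}$ with $u>0$, by normal hyperbolicity. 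This excludes $\omega$-convergence to $\mathrm{N}_0$, consistent with the monotone function of Proposition~\ref{MonotoneF}, which already forbids $\omega$-limits on $\{T=0\}$ for interior orbits.

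For $\gamma_\mathrm{pf}=4/3$ the centre direction at $\mathrm{P}$ is the line $\mathrm{L}_\mathrm{R}$. Orbits from $\mathrm{L}_\mathrm{R}$ enter the interior, so no interior orbit originates from $\mathrm{P}$ itself.

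The main obstacle I expect is Step 3, namely excluding interior orbits whose $\alpha$-limit set is a heteroclinic cycle or chain on the sphere passing through saddles. I would address it by combining the monotone function $M$ of Proposition~\ref{MonotoneF}, which shows that $\alpha$-limits lie in $\{T=0\}$, i.e.\ in $\{z_1=0\}$ or $\{w=0\}$ on the locus, with the fact that on $\{T=0\}$ itself Lemma~\ref{BUP_N0_T0} (its pieces already computed) gives the planar picture. This reduces the possible interior $\alpha$-limits at $\mathrm{N}_0$ to $\mathrm{P}$ alone.
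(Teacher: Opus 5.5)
Your overall route (assemble the flow on the blow-up loci from Lemma~\ref{lemk3}, Lemma~\ref{LemmaCompQ_pm} and Lemma~\ref{LemComp0}, then use transverse eigenvalues to decide which locus fixed points carry interior orbits) is the paper's route. Your list $\mathrm{P},\mathrm{T}_\pm,\mathrm{S}_\pm,\mathrm{V}_\pm,\mathrm{Z}_\pm,\mathrm{dS}_0$ and its transverse eigenvalues are correct.

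The gap is in the step you flag as the main obstacle, because Proposition~\ref{MonotoneF} cannot localise limit sets on the blow-up locus. The map $\Psi_0$ collapses the whole locus to $\mathrm{N}_0\in\{T=0\}$. In blow-up variables $M=\Omega_{\mathrm{pf}}/\bar{T}^2$ tends to $+\infty=\sup M$ on the locus away from the strict transform of the vacuum boundary; for instance $M=(2-u_1-u_1y_1^2)/(u_1^3z_1^2)$ in chart $\kappa^{(0)}_1$. So the monotonicity principle is vacuous there and does not push $\alpha$-limits into $\{z_1=0\}$ or $\{w=0\}$. Citing Lemma~\ref{BUP_N0_T0} would also be circular, since the paper derives it from this very proposition.

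What closes the argument is the structure you already record in Step~1:
\begin{itemize}
\item On $\{u_3=0\}$, $x_3$ (resp.\ $y_3$ on $\{x_3=0\}$) is strictly monotone, so no compact invariant set sits in that chart.
\item The remaining locus orbits form an acyclic graph of heteroclinic connections, with $\mathrm{P}$ as its only source and $\mathrm{dS}_0$ as its only sink; for example $\mathrm{Z}_+\to\mathrm{Z}_-$ is one-way.
\end{itemize}
Hence an $\alpha$- or $\omega$-limit set of an interior orbit lying in the loci must be a single fixed point. At the hyperbolic points this follows from the Butler--McGehee lemma. At $\mathrm{dS}_0$, and at $\mathrm{P}$ when $\gamma_{\mathrm{pf}}=4/3$, use normal hyperbolicity. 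At $\mathrm{T}_\pm$ when $\gamma_{\mathrm{pf}}=5/3$, use a centre-manifold argument. Only after this reduction does your eigenvalue check finish the proof. The paper leaves this step implicit in Lemmas~\ref{LemmaCompQ_pm} and~\ref{LemComp0}.

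The $\omega$-limit part also needs fixing. Proposition~\ref{MonotoneF} places $\omega$-limits in $\{T=1\}\cup\{\Omega_{\mathrm{pf}}=0\}$, and $\mathrm{N}_0$ lies in $\{\Omega_{\mathrm{pf}}=0\}$, where $M$ has no limit. So that proposition does not forbid $\omega$-convergence to $\mathrm{N}_0$; the exclusion must come from the blow-up. Your ``shadowing'' sentence is not an argument. Replace it with the direct observation that each stable set at the locus lies in an invariant coordinate set outside the interior:
\begin{itemize}
\item $W^s(\mathrm{T}_\pm)\subset\{w^{(\pm)}_1=0\}$ and $W^s(\mathrm{S}_\pm)\subset\{s^{(\pm)}_1=0\}$;
\item $W^s(\mathrm{V}_\pm)\subset\{w^{(\pm)}_3=0\}$;
\item locally, $W^s(\mathrm{Z}_+)=\{u^{(+)}_3=0\}$ and $W^s(\mathrm{Z}_-)$ is the $u^{(-)}_3$-axis;
\item for $\gamma_{\mathrm{pf}}<4/3$, $W^s(\mathrm{P})$ is the $u_1$-axis, i.e.\ the boundary orbit along $\{S=0\}\cap\{T=0\}$;
\item the stable fibre of $\mathrm{dS}_0$ is $\{u^{(-)}_3=0\}$.
\end{itemize}

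Finally, a dimension slip: for $\gamma_{\mathrm{pf}}>4/3$ the source $\mathrm{P}$ has a three-dimensional unstable manifold, not a two-dimensional one. The 2-parameter family comes from the open set $\{u_1>0,\,z_1>0\}$ near $\mathrm{P}$, foliated by orbits.
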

\begin{remark}\label{BupN0T0}
	The intersection with the $\{T=0\}$ invariant boundary is just the union of the spheres' equators of the successive blow-ups which therefore consists of the union of the heteroclinic orbits $\{\mathrm{P}\rightarrow\mathrm{T}_+\}\cup\{\mathrm{T}_+\rightarrow\mathrm{S}_+\}\cup\{\mathrm{P}\rightarrow\mathrm{T}_-\}\cup\{\mathrm{T}_-\rightarrow\mathrm{S}_-\}$.
\end{remark}
\begin{remark}\label{BupN0Om0}
	The intersection of the vacuum invariant boundary $\{\Omega_{\mathrm{pf}}=0\}$ with the successive blow-ups is simply the one-dimensional invariant set consisting of the union of the heteroclinic orbits $\{\mathrm{S}_+\rightarrow\mathrm{Z}_+\}\cup\{\mathrm{Z}_+\rightarrow\mathrm{Z}_-\}\cup\{\mathrm{Z}_-\rightarrow\mathrm{dS}_0\}\cup\{\mathrm{S}_-\rightarrow\mathrm{dS}_0\}$.
\end{remark}
\begin{figure}[!htb]
	\centering \vspace{-0.25cm}
		\subfigure[$2/3<\gamma_\mathrm{pf}<4/3$.]{
			\includegraphics[trim={2.5cm 34.7cm 3.5cm 2.7cm},clip,width=0.94\textwidth]{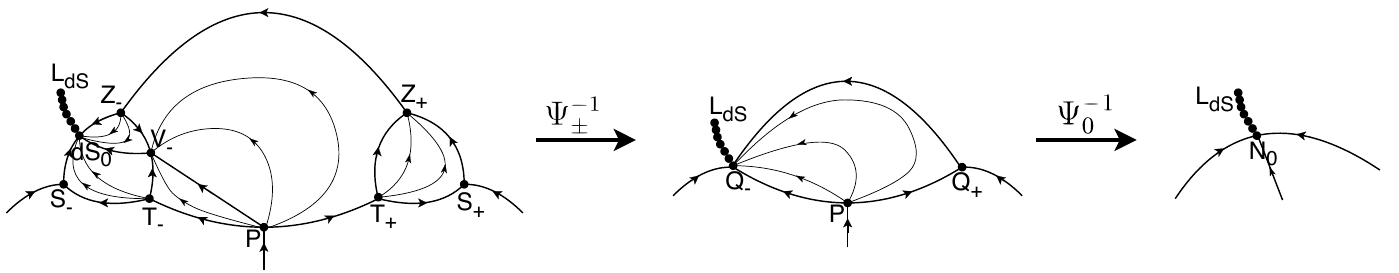}}\vspace{-0.2cm}
		\subfigure[$\gamma_\mathrm{pf}=4/3$.]{
			\includegraphics[trim={2.5cm 34.7cm 3.5cm 2.7cm},clip,width=0.94\textwidth]{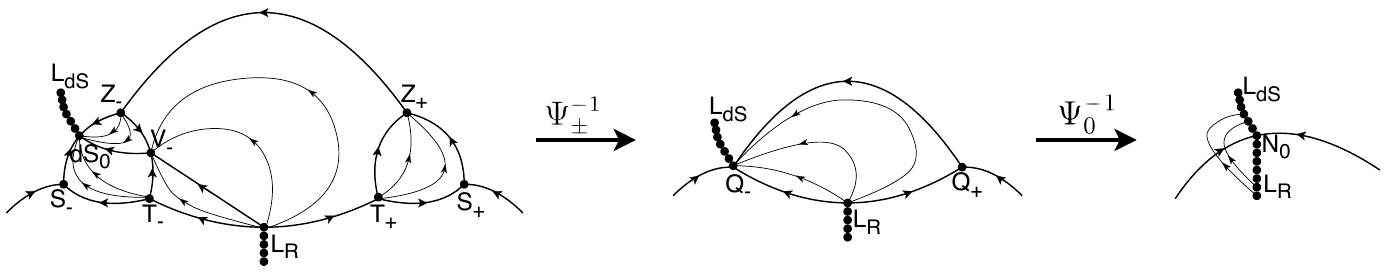}}\vspace{-0.2cm}
		\subfigure[$4/3<\gamma_\mathrm{pf}<5/3$.]{
			\includegraphics[trim={2.5cm 34.7cm 3.5cm 2.7cm},clip,width=0.94\textwidth]{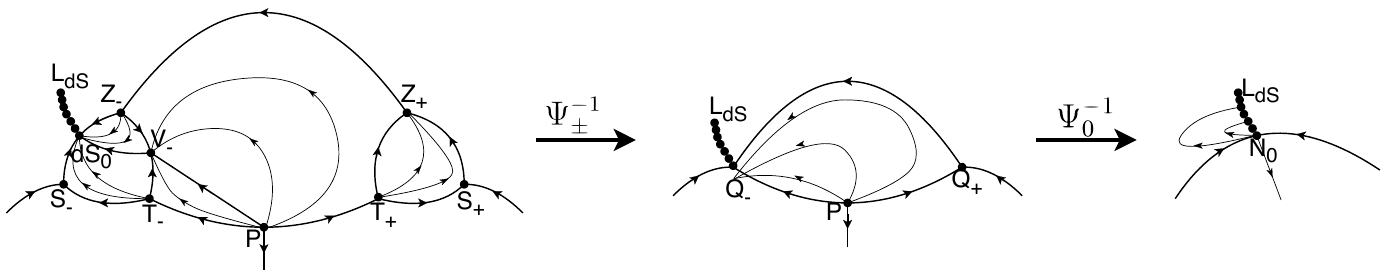}}\vspace{-0.2cm}
		\subfigure[$\gamma_\mathrm{pf}=5/3$.]{
			\includegraphics[trim={2.5cm 34.7cm 3.5cm 2.7cm},clip,width=0.94\textwidth]{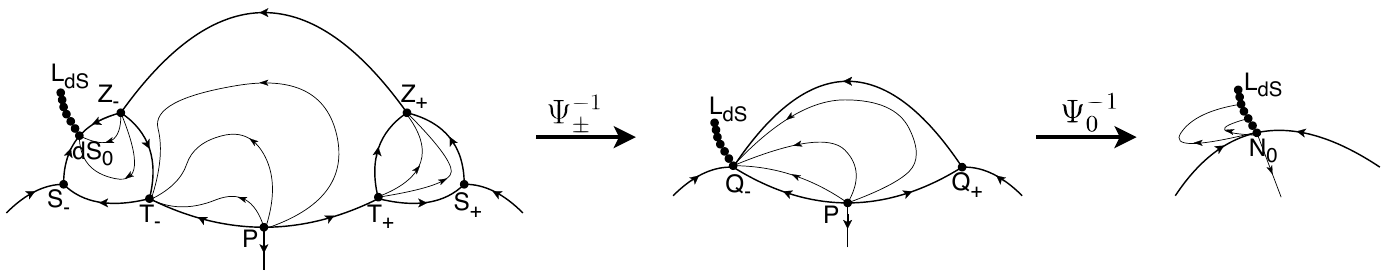}}\vspace{-0.2cm}
		\subfigure[$5/3<\gamma_\mathrm{pf}<2$.]{
			\includegraphics[trim={2.5cm 34.7cm 3.5cm 2.7cm},clip,width=0.94\textwidth]{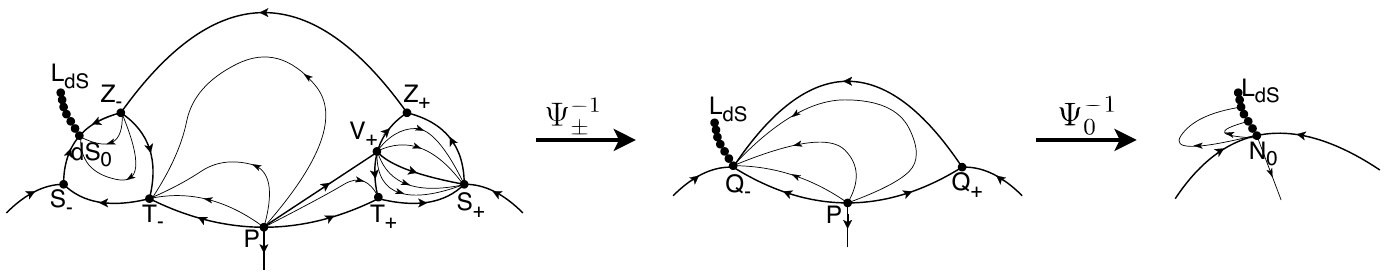}}
	\caption{Sucessive blow-ups on the desingularization  $\mathrm{N}_0$.}\label{fig:BUP3D_N0}
\end{figure}
%
\subsection{Blow-up of the fixed point $\mathrm{N}_1$}
\label{sec:BUP_N1}
In this section we perform the blow-up of the fixed point $\mathrm{N}_1$. Similarly to the blow-up of $\mathrm{N}_0$ we start by introducing the local variables
\begin{equation}
(\bar{X},\bar{S},\bar{T})=\left(1+X,S,\frac{1-T}{T}\right),
\end{equation}
that puts $\mathrm{N}_1$ at the origin. Then, after changing the time variable to $\bar{N}$ defined by
\begin{equation}
\frac{d}{d\bar{N}}=\frac{1}{T}\frac{d}{d\bar{t}}=\sqrt{12\alpha}\frac{d}{dt},
\end{equation}
the system of equations~\eqref{dynsysTXS} results in 
\begin{subequations}
	\begin{align}
	\frac{d\bar{X}}{d\bar{N}} &=\bar{X}S+\bar{T}(2-\bar{X})\left[\bar{X}(2-\bar{X})+\left(1-\frac{3}{2}\gamma_\mathrm{pf}\right)(\bar{X}(2-\bar{X})-S^2)\right], \\
	\frac{dS}{d\bar{N}} &=-(\bar{X}-S^2) -S\bar{T}\left[-(1-\bar{X})(2-\bar{X})+\left(1-\frac{3}{2}\gamma_\mathrm{pf}\right)\left(\tilde{X}(2-\tilde{X}) - S^2\right)\right], \\
	\frac{d\bar{T}}{d\bar{N}} &= - \bar{T}\left\{S+\bar{T}\left[(2-\bar{X})^2-\left(1-\frac{3}{2}\gamma_\mathrm{pf}\right)\left(\bar{X}(2-\bar{X}) - S^2\right)\right]\right\}.
	\end{align}
\end{subequations}
We proceed as in the previous section and employ the spherical blow-up method by transforming the fixed point $\mathrm{N}_1$ at the origin to the unit 2-sphere $\mathbb{S}^2_1=\{(x,y,z): x^2+y^2+z^2=1\}$. Then, defining
the blow-up space manifold $\mathcal{B}_1\:=\mathbb{S}^2_1\times[0,\bar{u}^{(1)}]$, for some fixed $\bar{u}^{(1)}>0$, together with the quasi-homogeneous blow-up map
\begin{equation}
\Psi_1\,:\quad \mathcal{B}_1\rightarrow \mathbb{R}^3,\qquad \Psi_1(x,y,z,u)=(u^2x,uy,u z) 
\end{equation} 
leads to a desingularisation of the non-hyperbolic fixed point on the blow-up locus $\{u=0\}$, after cancelling a common factor $u$ (i.e. by changing the time variable to $\tau$ defined by $d/d\tau= u^{-1}d/d\bar{N}$). Since $\Psi_1$ is a diffeomorphism outside of the sphere $\mathbb{S}^2_1\times\{u=0\}$, which corresponds to the fixed point $(0,0,0)$, the dynamics on the blow-up space $\mathcal{B}_1\setminus \{\mathbb{S}^2_1\times \{u = 0\}\}$ is topological conjugate to $\mathbb{R}^3\setminus \{0,0,0\}$. Instead of using standard spherical coordinates on $\mathcal{B}$, we will use different local charts $\kappa^{(1)}_i \,:\,\mathcal{B}\rightarrow\mathbb{R}^3$ and define the directional blow-up maps $\psi^{(1)}_i\,:\, \Psi_1\circ (\kappa^{(1)}_i)^{-1}$ for which the resulting state vector are simpler to analyse. 

Since we are interested only in the region where $\bar{X}\geq0$ and $\bar{T}\geq0$, we just need to consider four charts $\kappa^{(1)}_i$ such that
\begin{subequations}
	\begin{align}
	\psi^{(1)}_{1} &=(u^2_{1}, u_{1} y_{1},u_1 z_1), \\
	\psi^{(1)}_{2\pm} &=(u^2_{2\pm}x_{2\pm},\pm u_{2\pm},u_{2\pm} z_{2\pm}), \\
	\psi^{(1)}_{3} &=(u^2_{3}x_{3},u_{3}y_3,u_3),
	\end{align}
\end{subequations}
where $\psi^{(1)}_{1}$, $\psi^{(1)}_{2+}$, and $\psi^{(1)}_3$ are called the directional blow-ups in the positive $x$, $y$ and $z$ directions, respectively, and $\psi^{(1)}_{2-}$ is the blow-up in the negative $y$-direction. In these coordinates the equator of the sphere is located at infinity,
which will be analysed using both charts $\kappa^{(1)}_1$ and $\kappa^{(1)}_2$. Using the transition charts $\kappa^{(1)}_{ij}=\kappa^{(1)}_{j}\circ(\kappa^{(1)}_{i})^{-1}$ we can then identify special invariant subsets and obtain a global picture of the blow-up solution space.
\paragraph{Blow-up in the positive $z$-direction.} For the blow-up in the positive $z$-direction, and after cancelling a common factor $u_3$ (i.e. by changing the time variable $d/d\bar{N} = u_3 d/d\tau_3$), we obtain the regular dynamical system
\begin{subequations}
	\begin{align}
	\frac{dx_3}{d\tau_3} &=3x_3\left(y_3+(2-u^2_3 x_3)^2\right)+\left(1-\frac{3}{2}\gamma_\mathrm{pf}\right)\left(x_3 (2-u^2_3x_3)-y^2_3\right)(2-3u^2_3 x_3), \\
	\frac{dy_3}{d\tau_3} &=-x_3+2y^2_3+y_3\left[(2-u^2_3 x_3)(3-2u^2_3x_3)-2 \left(1-\frac{3}{2}\gamma_\mathrm{pf}\right)\left(x_3 (2-u^2_3x_3)-y^2_3\right)u^2_3\right], \\
	\frac{du_3}{d\tau_3} &=-u_3\left[y_3+(2-u^2_3 x_3)^2- \left(1-\frac{3}{2}\gamma_\mathrm{pf}\right)\left(x_3 (2-u^2_3x_3)-y^2_3\right)u^2_3\right].
	\end{align}
\end{subequations}
Since $\bar{X}\geq0$, we will only consider the flow in the half-plane  $x_3\geq0$. Moreover on the $\{u_3=0\}$ invariant plane, the vacuum invariant boundary $\Omega_\mathrm{pf}=0$ is given in  by the parabola $y^2_3-2x_3=0$ which is invariant as follows from 
\begin{equation}
\frac{d}{d\tau_3}\left(y^2_3-2x_3\right)= \left[12+4y_3+4\left(1-\frac{3}{2}\gamma_\mathrm{pf}\right)\right](y^2_3-2x_3).
\end{equation}
Hence the physical region $\Omega_\mathrm{pf}>0$ on the plane consists of the unbounded region $\{x_3>0 \wedge y^2_3-2x_3>0\}$. 
The above system has two fixed points on the invariant parabola. One fixed point at the vertice (the north pole of $\mathbb{S}^2$)
\begin{equation}
\mathrm{R}_1\,:\quad (x_3,y_3,u_3)=(0,0,0)
\end{equation}
with eigenvalues $\lambda_1=-2(3\gamma_{\mathrm{pf}}-8)$, $\lambda_2=6$, $\lambda_3=-4$ and eigenvectors $(2(3\gamma_{\mathrm{pf}}-5),1,0)$, $(0,1,0)$, $(0,0,1)$, respectively. The fixed point $\mathrm{R}_1$ is a hyperbolic saddle, with unstable manifold the $\{u_3=0\}$ invariant plane. Therefore $\mathrm{R}_1$ is the $\alpha$-limit point of a 1-parameter set of orbits in the interior of the parabola. The other fixed point is
	\begin{equation}
	\mathrm{dS}_1\,:\quad (x_3,y_3,u_3)=(8,-4,0)
	\end{equation}
	 with eigenvalues $\lambda_1=-6$, $\lambda_2=-6\gamma_{\mathrm{pf}}$ and $\lambda_3=0$ and eigenvectors $(-4,1,0)$, $(2(3\gamma_{\mathrm{pf}}-5),1,0)$ and $(0,0,1)$.
	On $\{u_3=0\}$, $\mathrm{dS}_1$ has eigenvalues with negative real part being a hyperbolic sink. The zero eigenvalue is related to the fact that $\mathrm{dS}_1$ is the extension of the line of fixed points $\mathrm{L}_\mathrm{dS}$ to the invariant set $\{u_3=0\}$.  Figure~\ref{fig:blowupzN1} shows the invariant parabola and its orbit structure. 
	\begin{figure}[ht!]
	\centering
		\includegraphics[trim={0cm 1.1cm 0cm 0cm},clip,width=0.4\textwidth]{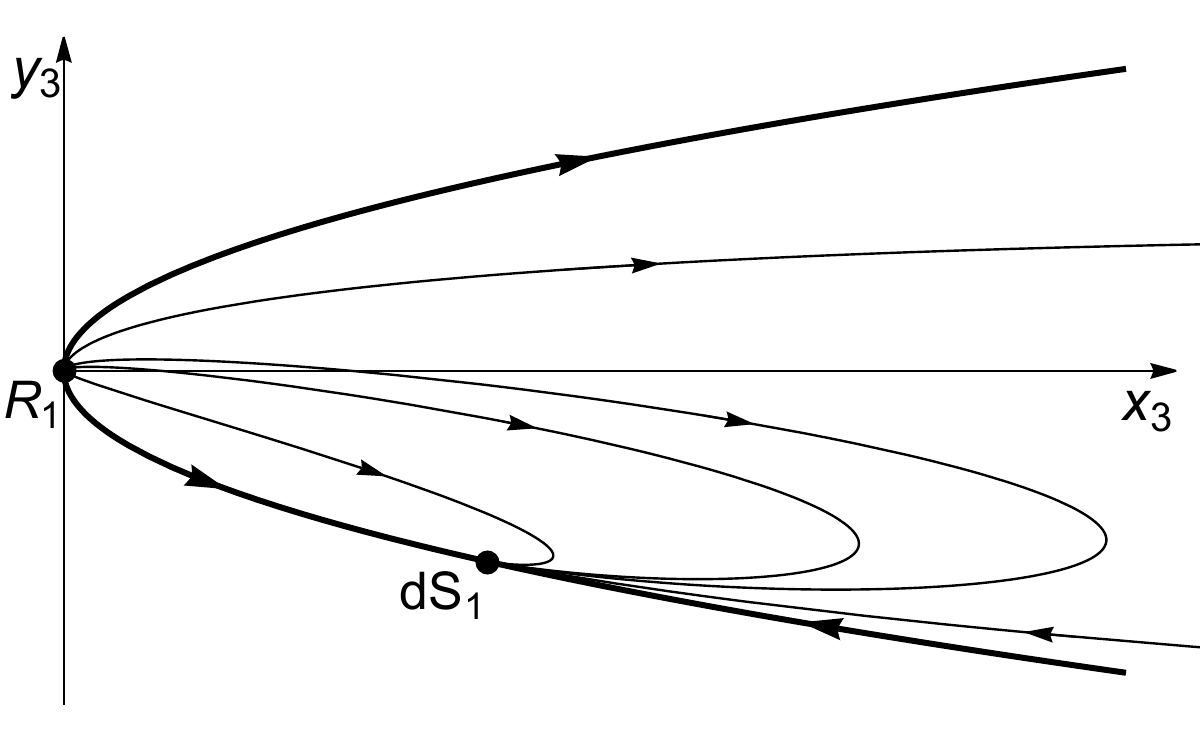}
		\caption{The invariant plane $\{u_3=0\}$ of the blow-up in the $z$-direction of the fixed point $\mathrm{N}_1$, showing the physical region defined by $x_3>0$ and $y^2_3<2x_3$.}\label{fig:blowupzN1}
	\end{figure}
\paragraph{Blow-up in the positive $x$-direction.} 
In order to understand the flow on the physical region and its extension to the equator of the 2-sphere, we turn to the blow-up in the positive $x$-direction. Using chart $\kappa^{(1)}_1$ and 
after canceling a common factor $u_1$, i.e. by changing the time variable $d/d\bar{N} = u_1 d/d\tau_1$, we obtain the regular dynamical system
\begin{subequations}
	\begin{align}
		\frac{du_1}{d\tau_1} &= \frac{u_1}{2}\left\{ y_1+z_1(2-u^2_1)\left[ 2-u^2_1+\left(1-\frac{3}{2}\gamma_\mathrm{pf}\right)\left(2-u^2_1-y^2_1\right) \right] \right\},\\
		\frac{dy_1}{d\tau_1} &= \frac{1}{2}\left\{ -2+y^2_1-y_1z_1\left[ u^2_1(2-u^2_1)+\left(1-\frac{3}{2}\gamma_\mathrm{pf}\right)\left(2-u^2_1-y^2_1\right)(2+u^2_1) \right] \right\},\\
		\frac{dz_1}{d\tau_1} &=-\frac{1}{2}z_1\left\{ 3y_1+z_1\left[ 3(2-u^2_1)^2+\left(1-\frac{3}{2}\gamma_\mathrm{pf}\right)\left(2-u^2_1-y^2_1\right)(2-3u^2_1) \right] \right\}.
	\end{align}
\end{subequations}
On the $\{ u_1=0 \}$ invariant plane, the vacuum invariant subset $\Omega_{\mathrm{pf}}=0$ consists of the two disjoint invariant sets $\{y_1=\pm\sqrt{2}\}$. The equator of $\mathbb{S}^2_1$ corresponds to the invariant boundary $\{z_1=0\}$. The physical region of the plane is therefore given by $\{-\sqrt{2}< y_1<\sqrt{2}\wedge z_1> 0\}$ and contains three fixed points located at the invariant boundaries: The fixed point $\mathrm{dS}_1$ at $(u_1,y_1,z_1)=\left( 0,-\sqrt{2},\frac{1}{2\sqrt{2}} \right)$, and on $\{u_1=0\}$, is a hyperbolic sink, and the two equivalent fixed points at the intersection of the vacuum boundary with the equator:
\begin{equation}
\mathrm{W}_\pm \,:\quad  (u_1,y_1,z_1)=\left(0,\pm\sqrt{2},0 \right).
\end{equation}
The linearised system around $\mathrm{W}_\pm$ has eigenvalues $\lambda_1=\pm\frac{1}{\sqrt{2}}$, $\lambda_2=\pm \sqrt{2}$ and $\lambda_3=\mp \frac{3}{\sqrt{2}}$ and associated eigenvectors $(1,0,0)$, $(0,1,0)$ and $(0,0,1)$, being hyperbolic saddles on $\{ u_1=0\}$. Figure~\ref{fig:blowupxN1} shows the orbit structure on the vacuum invariant boundaries and the equator. The physical region is therefore a slice of $\mathbb{S}^2$ with boundary consisting of three invariant subsets whose intersection are the fixed points $\mathrm{R}_1$, and $\mathrm{W}_\pm$.
\begin{figure}[ht!]
	\centering
			\includegraphics[trim={0cm 1.4cm 0cm 0cm},clip,width=0.4\textwidth]{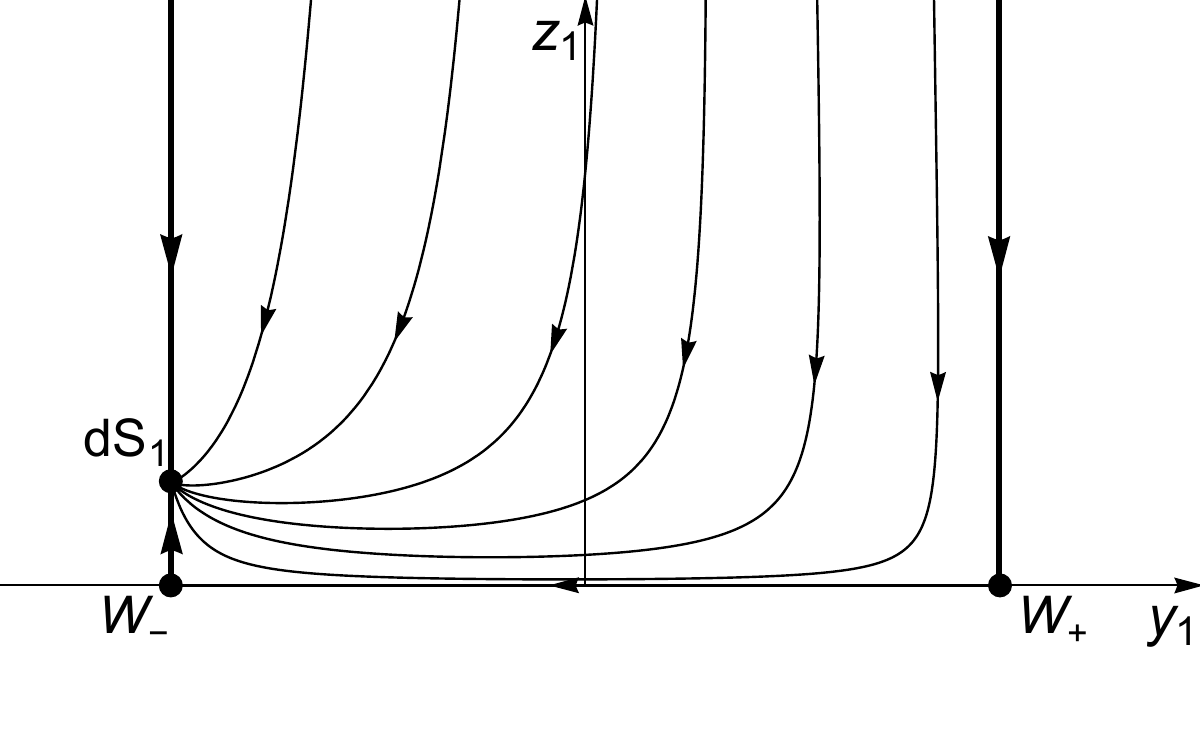}	
	\caption{The invariant plane $\{u_1=0\}$ of the blow-up in the $x$-direction of the fixed point $\mathrm{N}_1$, showing the physical region defined by $y^2_1\leq2$ and $z_1\geq0$.}\label{fig:blowupxN1}
\end{figure}
\begin{lemma}\label{LemComp1}
	The flow in the interior of the physical region on the unit 2-sphere of the blow-up of $\mathrm{N}_1$ consists of the heteroclinic orbit $\mathrm{R}_1\rightarrow \mathrm{dS}_1$ as depicted in Figure~\ref{fig:BUP3D_N1}.
\end{lemma}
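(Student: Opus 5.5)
The plan is to treat the physical region of the blow-up sphere $\mathbb{S}^2_1\cap\{u=0\}$ as a compact 2-dimensional invariant set and apply the Poincar\'e--Bendixson theorem, in the same way as in Lemma~\ref{LemmaCompQ_pm} and Lemma~\ref{LemComp0}. Its boundary consists of three invariant arcs. The first is the branch of the vacuum parabola $y_3^2=2x_3$ through $\mathrm{R}_1$ and $\mathrm{dS}_1$, which corresponds to $y_1=-\sqrt{2}$ in chart $\kappa^{(1)}_1$ and ends at $\mathrm{W}_-$. The second is the other branch, $y_1=+\sqrt2$, ending at $\mathrm{W}_+$. The third is the equator $\{z_1=0\}$ joining $\mathrm{W}_\pm$. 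On the equator the flow reduces to $dy_1/d\tau_1=\tfrac12(y_1^2-2)<0$ for $|y_1|<\sqrt2$, so it is the heteroclinic orbit $\mathrm{W}_+\to\mathrm{W}_-$. On the vacuum arcs I will use the one-dimensional restrictions, already essentially computed, to get the boundary orbits $\mathrm{R}_1\to\mathrm{W}_+$, $\mathrm{R}_1\to\mathrm{dS}_1$ and $\mathrm{W}_-\to\mathrm{dS}_1$. I will check these directions from the signs of $dy_3/d\tau_3$ on the parabola and of $dz_1/d\tau_1$ on $y_1=\pm\sqrt2$.

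Next I will show that there are no fixed points in the interior of the physical region. In chart $\kappa^{(1)}_3$ with $u_3=0$, solving $dx_3/d\tau_3=dy_3/d\tau_3=0$ gives $x_3=2y_3^2+6y_3$ from the second equation. Substituting into the first should leave only the roots corresponding to $\mathrm{R}_1$ and $\mathrm{dS}_1$, and possibly spurious roots that lie outside $\{y_3^2<2x_3\}$ or have $x_3<0$; I will verify this for $\gamma_\mathrm{pf}\in(2/3,2)$. In chart $\kappa^{(1)}_1$, the interior points with $z_1>0$ overlap chart $\kappa^{(1)}_3$, so only $z_1=0$ needs checking, and that is the equator, already handled.

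Periodic orbits are excluded because the region has no interior fixed points, so it is simply connected minus nothing and an index argument applies. Heteroclinic cycles on the boundary are also excluded, since the boundary contains the sink $\mathrm{dS}_1$ and is therefore not a cycle. I will also try a Dulac function (e.g. $x_3^a$) as a cleaner alternative.

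Poincar\'e--Bendixson then gives that every interior $\omega$-limit set is a boundary fixed point and every $\alpha$-limit set is one too. The local analysis fixes which ones. $\mathrm{W}_\pm$ are saddles on $\{u_1=0\}$ whose stable and unstable directions within the sphere lie along the boundary, so they attract or repel no interior orbits. $\mathrm{dS}_1$ is a hyperbolic sink on the sphere, and $\mathrm{R}_1$ is a source on the sphere with eigenvalues $-2(3\gamma_\mathrm{pf}-8)>0$ and $6$. Hence every interior orbit is a heteroclinic orbit $\mathrm{R}_1\to\mathrm{dS}_1$, which is the picture in Figure~\ref{fig:BUP3D_N1}. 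The main obstacle I expect is the polynomial computation excluding extra interior fixed points uniformly in $\gamma_\mathrm{pf}$.
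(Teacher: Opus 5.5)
Your proposal is correct and follows the same route as the paper, which simply invokes the Poincar\'e--Bendixson theorem on the compact physical region of the blow-up sphere, together with the local stability of $\mathrm{R}_1$ (a source on the sphere), $\mathrm{dS}_1$ (a sink) and the boundary saddles $\mathrm{W}_\pm$, whose separatrices lie on the boundary. Your fixed-point computation also goes through: on $\{u_3=0\}$ the substitution factors as $3y_3(y_3+4)(2y_3+8-3\gamma_\mathrm{pf})$, and the extra root $y_3=(3\gamma_\mathrm{pf}-8)/2$ has $x_3=(3\gamma_\mathrm{pf}-8)(3\gamma_\mathrm{pf}-2)/2<0$, so it lies outside the physical region, confirming what the paper asserts without computation.
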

\begin{proof}
	The proof follows by the application of the Poincar\'e-Bendixson theorem and the local stability properties of the fixed points which are all located at the invariant boundaries of the physical region.
\end{proof}
\begin{remark}\label{BupN1T1}
	The intersection with the $\{T=1\}$ invariant boundary is just the equator of the blow-up 2-sphere, which therefore consists of the heteroclinic orbit $\{\mathrm{W}_+\rightarrow\mathrm{W}_-\}$. On $\{z_1=0\}$ the invariant plane, $\mathrm{W}_+$ is a source and $\mathrm{W}_-$ is a sink.
\end{remark}
\begin{remark}\label{BupN1Om0}
	The intersection of the vacuum invariant boundary $\{\Omega_{\mathrm{pf}}=0\}$ with the blow-up of $\mathrm{N}_1$ is simply the one-dimensional invariant set consisting of the union of the heteroclinic orbits: $\{\mathrm{R}_1\rightarrow\mathrm{W}_+\}\cup\{\mathrm{R}_1\rightarrow\mathrm{dS}_1\}\cup\{\mathrm{W}_-\rightarrow\mathrm{dS}_1\}$.
\end{remark}
We summarise the previous analysis in the following Proposition:
\begin{proposition}\label{BupN1}
	The flow in a neighbourhood of the non-hyperbolic fixed point $\mathrm{N}_1$ is as depicted in Figure~\ref{fig:BUP3D_N1}, and no interior orbit in $\bar{\mathbf{S}}_\mathrm{J}$ has the fixed point $\mathrm{N}_1$ as $\omega$-limit set.
\end{proposition}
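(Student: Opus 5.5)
The proof assembles the local information from the quasi-homogeneous blow-up $\Psi_1$ into a statement about the full three-dimensional flow near $\mathrm{N}_1$. Since $\Psi_1$ is a diffeomorphism off the blow-up locus, an interior orbit of $\bar{\mathbf{S}}_\mathrm{J}$ converging to $\mathrm{N}_1$ corresponds to an orbit in $\{u>0\}$ of the blow-up space whose $\omega$-limit set is a nonempty, compact, invariant subset of the physical part of the sphere $\mathbb{S}^2_1\times\{u=0\}$. So I first fix the orbit structure on the sphere and then decide which of its invariant sets can attract orbits from $u>0$.

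\textbf{Step 1: orbit structure on the sphere.} I would take this from Lemma~\ref{LemComp1} together with Remarks~\ref{BupN1T1} and~\ref{BupN1Om0}. The physical region is a compact slice of $\mathbb{S}^2_1$ bounded by three invariant arcs:
\begin{itemize}
\item the equator $\{z_1=0\}$, which lies in $\{T=1\}$ and is the orbit $\mathrm{W}_+\rightarrow\mathrm{W}_-$;
\item the two vacuum arcs $\mathrm{R}_1\rightarrow\mathrm{W}_+$ and $\mathrm{W}_-\rightarrow\mathrm{dS}_1$.
\end{itemize}
Its interior consists of orbits $\mathrm{R}_1\rightarrow\mathrm{dS}_1$. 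Hence the only candidate $\omega$-limit sets are the fixed points $\mathrm{R}_1,\mathrm{W}_\pm,\mathrm{dS}_1$, or unions of these with connecting orbits. No heteroclinic cycle exists on the sphere, since every orbit ends at $\mathrm{dS}_1$ or at $\mathrm{W}_-\rightarrow\mathrm{dS}_1$ and $\mathrm{dS}_1$ has no outgoing orbit there. The candidates therefore reduce to the individual fixed points.

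\textbf{Step 2: transverse ($u$-direction) stability of each fixed point.}
\begin{itemize}
\item At $\mathrm{R}_1$ the $u_3$-eigenvalue is $-4$ and the other two, $6$ and $-2(3\gamma_{\mathrm{pf}}-8)>0$, are positive. Its stable manifold is the one-dimensional $u_3$-axis. Via $\psi^{(1)}_3$ this is $\{\bar X=\bar S=0\}$, i.e. the orbit on $\{\theta=-\pi\}$ in the vacuum boundary, namely the self-similar heteroclinic orbit $\mathrm{R}_0\rightarrow\mathrm{N}_1$ of Lemma~\ref{Om0SolStr}. It lies in $\{\Omega_\mathrm{pf}=0\}$, not in the interior.
\item At $\mathrm{W}_\pm$ the $u_1$-eigenvalue is $\pm1/\sqrt2$. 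For $\mathrm{W}_+$ it is positive. The $z_1$-eigenvalue $-3/\sqrt2$ is negative, and $z_1>0$ is required, so the stable manifold of $\mathrm{W}_+$ lies in the invariant sets $\{y_1=\sqrt2\}$ (vacuum) and/or $\{z_1=0\}$ ($T=1$). For $\mathrm{W}_-$ the $u_1$-eigenvalue is negative but $\lambda_3=+3/\sqrt2$ in the $z_1$-direction, so its stable manifold lies in $\{z_1=0\}\cup\{y_1=-\sqrt2\}$. In both cases no interior orbit converges.
\item At $\mathrm{dS}_1$ the transverse eigenvalue is zero because $\mathrm{dS}_1$ is the endpoint of the normally hyperbolic line $\mathrm{L}_\mathrm{dS}$. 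Here I would use the centre manifold: the centre manifold of $\mathrm{dS}_1$ is the line $\mathrm{L}_\mathrm{dS}$ itself, which consists of fixed points. By the reduction principle (normal hyperbolicity of $\mathrm{L}_\mathrm{dS}\cup\mathrm{dS}_1$), every nearby orbit converges to a point of this line. The orbits converging to $\mathrm{dS}_1$ itself are then exactly its two-dimensional stable fibre, which lies in $\{u_3=0\}$, i.e. on the blow-up locus and not in the physical interior.
\end{itemize}

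\textbf{Step 3: conclusion.} Since none of these invariant sets is the $\omega$-limit of an interior orbit with $u>0$, no interior orbit of $\bar{\mathbf{S}}_\mathrm{J}$ has $\mathrm{N}_1$ as its $\omega$-limit set. Assembling the charts via the transition maps $\kappa^{(1)}_{ij}$ then gives the picture in Figure~\ref{fig:BUP3D_N1}.

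The main obstacle is the semi-hyperbolic point $\mathrm{dS}_1$. One must argue carefully, via the invariant foliation of the normally hyperbolic set $\mathrm{L}_\mathrm{dS}\cup\{\mathrm{dS}_1\}$, that orbits near $\mathrm{dS}_1$ with $u>0$ are captured by points of $\mathrm{L}_\mathrm{dS}$ with $u>0$ rather than by $\mathrm{dS}_1$. I would also check that the saddles $\mathrm{W}_\pm$ cannot form part of a limit cycle with $\mathrm{dS}_1$; this holds because $\mathrm{dS}_1$ is a sink on the sphere.
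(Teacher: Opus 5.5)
Your proposal is correct and follows the paper's route (quasi-homogeneous blow-up, Poincar\'e--Bendixson on the physical part of $\mathbb{S}^2_1$ as in Lemma~\ref{LemComp1}, and the local stability of $\mathrm{R}_1$, $\mathrm{W}_\pm$ and $\mathrm{dS}_1$). You also make explicit the transverse step that the paper leaves implicit: the stable manifold of $\mathrm{R}_1$ is the vacuum orbit $\mathrm{R}_0\rightarrow\mathrm{N}_1$, and $\mathrm{L}_\mathrm{dS}\cup\{\mathrm{dS}_1\}$ is normally hyperbolic. There are two harmless slips. First, the vacuum part of the boundary also contains the orbit $\mathrm{R}_1\rightarrow\mathrm{dS}_1$ (Remark~\ref{BupN1Om0}). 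Second, the stable manifold of $\mathrm{W}_+$ is exactly the vacuum arc $\mathrm{R}_1\rightarrow\mathrm{W}_+$ (on $\{z_1=0\}$ it is a source), while that of $\mathrm{W}_-$ lies entirely in $\{z_1=0\}$.
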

\begin{figure}[ht!]
	\centering
		\includegraphics[trim={3cm 22cm 5cm 3cm},clip,width=0.65\textwidth]{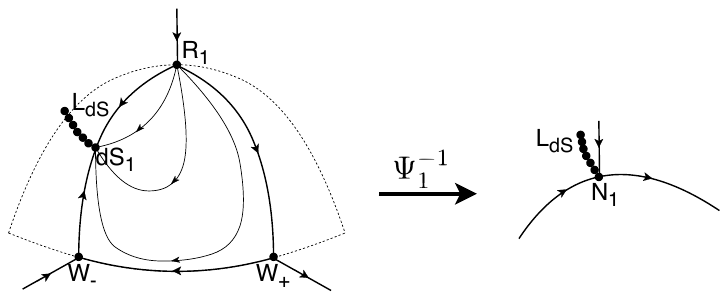}
	\caption{Blow-up space of $\mathrm{N}_1$.}\label{fig:BUP3D_N1}
\end{figure}
%
%
\section{Dynamics in the Einstein frame}\label{subsec:dynsysein}
We now turn to the analysis of the dynamics in the Einstein frame. Under the conformal transformation~\eqref{fr} and~\eqref{EFscalarfield}, with the choice $f(R)=\alpha R^2$, $\alpha>0$, i.e. $p=2$ in~\eqref{Monomial}-\eqref{lambdaRp}, the potential is a positive constant $V(\phi)=V_0$ and $\lambda=0.$
Moreover, as before, we consider a fluid with linear equation of state
\begin{equation}
\tilde{p}_\mathrm{pf}=(\tilde{\gamma}_\mathrm{pf}-1)\tilde{\rho}_\mathrm{pf},
\end{equation}
where the fluid energy-density and pressure in the Einstein frame are relate to those on the Jordan frame by equation~\eqref{SETFluid}. In particular we have $\tilde{\gamma}_\mathrm{pf}=\gamma_\mathrm{pf}$. 
The system~\eqref{Edota}-\eqref{Edotrho} results in the following evolution equations
\begin{subequations}
	\begin{align}
	\frac{d\tilde{a}}{d\tilde{t}} &= \tilde{H}\tilde{a}, \label{EdotaMon}\\
	\frac{d\tilde{H}}{d\tilde{t}}  & = - \frac12\left(\left(\frac{d\phi}{d\tilde{t}}\right)^2 + \gamma_{\mathrm{pf}}\tilde{\rho}_{\mathrm{pf}}\right), \label{Ray1Mon} \\
	\frac{d^2\phi}{d\tilde{t}^2}&=-3\tilde{H}\frac{d\phi}{d\tilde{t}}+2\sqrt{\frac{2}{3}} \left(1-\frac{3}{4}\gamma_{\mathrm{pf}}\right)\tilde{\rho}_{\mathrm{pf}}, \label{waveEqMon} \\
	\frac{d\tilde{\rho}_{\mathrm{pf}}}{d\tilde{t}} & = -\left[3\gamma_{\mathrm{pf}}\tilde{H} + 2\sqrt{\frac{2}{3}}\left(1-\frac{3}{4}\gamma_{\mathrm{pf}}\right)\frac{d\phi}{d\tilde{t}}\right]\tilde{\rho}_{\mathrm{pf}}, \label{EdotrhoMon}
	\end{align}
\end{subequations}
and the constraint
\begin{equation} \label{Gauss1Mon} 
3\tilde{H}^2 =
\left[\frac{1}{2}\left(\frac{d\phi}{d\tilde{t}}\right)^2 + V_{0}+\tilde{\rho}_{\mathrm{pf}}\right].
\end{equation}
The equations~\eqref{Ray1Mon}, \eqref{waveEqMon} and~\eqref{EdotrhoMon} determine a closed system of first order equations for $(\tilde{H},\frac{d\phi}{d\tilde{t}},\tilde{\rho}_{\mathrm{pf}})$ which, due to the constraint~\eqref{Gauss1Mon}, yield a dynamical system describing a flow on a 2-dimensional state-space for each value of the parameter $\gamma_\mathrm{pf}$. 
Note that this dynamical system consists of a \emph{reduced} system of equations, where the equations for the scale factor $\tilde{a}$ and for the field $\phi$ decouple.  Once the reduced 2-dimensional system of first order equations has been solved, then equation~\eqref{EdotaMon} yields $\tilde{a} \propto \exp(\int d\tilde{t} \tilde{H})$) and $\phi = \int (d\phi/d\tilde t)d\tilde{t}$. The vacuum state-space forms a 1-dimensional invariant boundary $\tilde{\rho}_{\mathrm{pf}}=0$ of the matter state-space. The above system of equations is not suitable for a \emph{global} analysis since its state-space lacks compactness. So, in the next section, we will formulate the Einstein frame system as a dynamical system on a compact state-space.
%
\subsection{The reduced state-space}
The reduced system of equations can be cast into a global dynamical system introducing the following set of \emph{Hubble} normalised variables
\begin{equation}
\label{DepVar}
  \left(\Omega_\Lambda, \Sigma_{\phi},\tilde{\Omega}_\mathrm{pf}\right) = \left(\frac{V_0}{3\tilde{H}^2} ,\frac{\frac{d\phi}{d\tilde{t}}}{\sqrt{6}\tilde{H}} , \frac{\tilde{\rho}_\mathrm{pf}}{3\tilde{H}^2}\right),
\end{equation}
together with $e$-fold time $\tilde{N}=\ln{(\tilde{a}/\tilde{a}_0)}$, where $\tilde{a}_0$ is constant. This results in the 2-dimensional flow
\begin{subequations}
        \label{2DdynMatter}
	\begin{align}
	\frac{d\Sigma_{\phi}}{d\tilde{N}} &=-(2-\tilde{q})\Sigma_{\phi}+2\left( 1-\frac{3}{4}\gamma_{\mathrm{pf}} \right)\tilde{\Omega}_\mathrm{pf}, \\ 
	\frac{d\tilde{\Omega}_\mathrm{pf}}{d\tilde{N}} &=2\left[1+\tilde{q}-\frac{3}{2}\gamma_{\mathrm{pf}}-2\left( 1-\frac{3}{4}\gamma_{\mathrm{pf}} \right)\Sigma_{\phi} \right]\tilde{\Omega}_\mathrm{pf},
	\end{align}
\end{subequations}
where we used the constraint
\begin{equation}
\Omega_\Lambda =1- \Sigma^{2}_{\phi}-\tilde{\Omega}_\mathrm{pf}\,>0,
\end{equation}
to globally solve for $\Omega_{\Lambda}$, and the Einstein frame deceleration parameter $\tilde{q}$ which, following from 
\begin{equation}
\frac{d\tilde{H}}{d\tilde{t}}=-(1 + \tilde{q})\tilde{H}^2
\end{equation}
and~\eqref{Ray1Mon}, is expressed as
\begin{equation}\label{qscalar}
\tilde{q} := - 1 + 3\Sigma^2_\phi +\frac{3}{2} \gamma_{\mathrm{pf}}\tilde{\Omega}_{\mathrm{pf}}.
\end{equation} 
It is also useful to consider the auxiliary equation
\begin{equation}
\frac{d\Omega_\Lambda}{d\tilde{N}} =3 \left(2\Sigma_{\phi}^2+\gamma_{\mathrm{pf}}\tilde{\Omega}_{\mathrm{pf}} \right)\Omega_\Lambda, 
\end{equation}
which shows that $\Omega_{\Lambda}=0$ is an invariant boundary of the Einstein frame state-space, describing the dynamics of a \emph{free} scalar field and a perfect fluid with linear equation of state. Due to the constraint, the physical relatively compact state-space $\mathbf{S}_\mathrm{E}$ is defined by 
\begin{equation}
\mathbf{S}_\mathrm{E}=\left\{(\Sigma_\phi,\tilde{\Omega}_\mathrm{pf})\in \mathbb{R}^2\, :  1-\tilde{\Omega}_{\mathrm{pf}}-\Sigma^{2}_{\phi}>0,\quad \tilde{\Omega}_{\mathrm{pf}}>0\right\}.
\end{equation}
The reduced state-space $\mathbf{S}_\mathrm{E}$ can be regularly extended to include its invariant boundaries $\tilde{\Omega}_\mathrm{pf}=0$ and $\Omega_{\Lambda}=1-\Sigma^2_\phi-\tilde{\Omega}_\mathrm{pf}=0$, resulting in the compact state-space $\bar{\mathbf{S}}_\mathrm{E}$. 

All fixed points are located at the boundaries and can be found in Table~\ref{FP_Einstein}. In the reduced state-space $\bar{\mathbf{S}}_\mathrm{E}$ there are four isolated fixed points. The de-Sitter fixed point $\mathrm{dS}$ located at $(\Sigma_{\phi},\tilde{\Omega}_{\mathrm{pf}})=(0,0)$, with $\Omega_{\Lambda}=1$ and $\tilde{q}_{\mathrm{dS}}=-1$, is a hyperbolic sink and hence the $\omega$-limit point of a 1-parameter family of interior orbits. This fixed point is associated with the vacuum de-Sitter solution
\begin{equation}
\mathrm{dS}:\qquad \tilde{\rho}_\mathrm{pf}=0,\quad \frac{d\phi}{d\tilde{t}}=0, \quad \tilde{H}=\sqrt{\frac{\Lambda}{3}}, \quad \tilde{a}=a_0 e^{\sqrt{\frac{\Lambda}{3}}\tilde{t}}.
\end{equation}
The two \emph{kinaton} fixed points $\mathrm{K}_{\pm}$ located at $(\Sigma_{\phi},\tilde{\Omega}_{\mathrm{pf}})=(\pm1,0)$, with $\Omega_{\Lambda}=0$ and $\tilde{q}_{\mathrm{K}}=2$, are hyperbolic sources except $\mathrm{K}_{-}$, when $\gamma_\mathrm{pf}\geq 5/3$, as it becomes a saddle (centre saddle when $\gamma_\mathrm{pf}= 5/3$). Hence $\mathrm{K}_+$ is the $\alpha$-limit point of a 1-parameter set of interior orbits and $\mathrm{K}_{-}$ is the $\alpha$-limit point of a 1-parameter set of interior orbits when $\gamma_{\mathrm{pf}}<5/3$, while no interior orbit originates from $\mathrm{K}_{-}$ for $5/3\leq\gamma_{\mathrm{pf}}<2$. These two fixed points correspond to the well-known self-similar solutions
\begin{equation}
\mathrm{K}_{\pm}:\qquad \phi = \pm\sqrt{\frac{2}{3}}\ln{\left(\frac{c_0}{\tilde{t}}\right)} , \quad \frac{d\phi}{d\tilde{t}} = \mp\sqrt{\frac{2}{3}}\tilde{t}^{-1} , \quad \tilde{H} = \frac{1}{3\tilde{t}},\quad \tilde{a}= \tilde{t}^{1/3},
\end{equation}
with $c_0$ constant. In addition, there exists a \emph{kinetic-matter} fixed point $\mathrm{KM}$ located at $(\Sigma_{\phi},\tilde{\Omega}_{\mathrm{pf}})=\left(\frac{3\gamma_{\mathrm{pf}}-4}{3(\gamma_{\mathrm{pf}}-2)},\frac{4(5-3\gamma_{\mathrm{pf}})}{9(\gamma_{\mathrm{pf}}-2)^2}\right)$, with $\Omega_{\Lambda}=0$ and $\tilde{q}_{\mathrm{KM}}=\frac{2}{3}/(2-\gamma_{\mathrm{pf}})$, which is a hyperbolic saddle and the $\alpha$-limit point of a unique interior orbit. This fixed point corresponds to the explicit solution
\begin{equation}\label{ScSol}
\begin{split}
\mathrm{KM}:\qquad &\tilde{\rho}_\mathrm{pf} = \frac{4(\frac{5}{3}-\gamma_{\mathrm{pf}})}{(\frac{8}{3}-\gamma_{\mathrm{pf}})^2}\frac{1}{\tilde{t}^2},\qquad\tilde{H} = \frac{2-\gamma_\mathrm{pf}}{8/3-\gamma_\mathrm{pf}}\frac{1}{\tilde{t}},\qquad \tilde{a}= \tilde{t}^{\frac{2-\gamma_\mathrm{pf}}{8/3-\gamma_\mathrm{pf}}},\\
&\phi = \phi_* + \frac{\sqrt{6}(4/3-\gamma_\mathrm{pf})}{8/3-\gamma_\mathrm{pf}}\ln{\left(\tilde{t}\right)} , \quad \frac{d\phi}{d\tilde{t}} = \frac{\sqrt{6}(4/3-\gamma_\mathrm{pf})}{8/3-\gamma_\mathrm{pf}}\frac{1}{\tilde{t}}
\end{split}
\end{equation}
with $\phi_*\in\mathbb{R}$.
\begin{theorem}
\label{theorem-a}
The orbit structure on $\mathbf{\bar S}_\mathrm{E}$ is as depicted in Figures~\ref{fig:mainfig}, i.e. if $\gamma_\mathrm{pf}<5/3$ there exists a separatrix $\{\mathrm{KM}\rightarrow\mathrm{dS}\}$ which splits the state space in two invariant regions, one consisting of heteroclinic orbits $\mathrm{K}_+\rightarrow\mathrm{dS}$, and the other of heteroclinic orbits $\mathrm{K}_-\rightarrow\mathrm{dS}$, while if $\gamma_{\mathrm{pf}}>5/3$ then all interior orbits originate from $\mathrm{K}_-$ and end at $\mathrm{dS}$.
\end{theorem}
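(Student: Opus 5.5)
The plan is to use the monotone quantity $\Omega_\Lambda$ to push all limit sets to the boundary, describe the flow on the boundary, and then close the argument with the Poincar\'e--Bendixson theorem and a local stability count.

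\emph{Step 1: limit sets lie on the boundary.} The auxiliary equation gives
\[
\frac{d\Omega_\Lambda}{d\tilde N}=3\left(2\Sigma_\phi^2+\gamma_{\mathrm{pf}}\tilde\Omega_{\mathrm{pf}}\right)\Omega_\Lambda .
\]
Hence $\Omega_\Lambda$ is strictly increasing on $\mathbf{S}_\mathrm{E}$, because there $\tilde\Omega_{\mathrm{pf}}>0$ and $\Omega_\Lambda>0$. This excludes interior fixed points and periodic orbits. By the monotonicity principle, every $\alpha$-limit set lies in $\{\Omega_\Lambda=0\}$ and every $\omega$-limit set lies where $\Omega_\Lambda$ is maximal on the closure. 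On $\bar{\mathbf{S}}_\mathrm{E}$ the maximum $\Omega_\Lambda=1$ is attained only at $\mathrm{dS}$, which is a hyperbolic sink. Since $\Omega_\Lambda$ increases to its maximum, every interior orbit has $\omega$-limit $\mathrm{dS}$.

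\emph{Step 2: flow on the boundary.} On the vacuum edge $\tilde\Omega_{\mathrm{pf}}=0$ one has $\tilde q=-1+3\Sigma_\phi^2$. This gives $\Sigma_\phi'=-3(1-\Sigma_\phi^2)\Sigma_\phi$, so the edge consists of the heteroclinic orbits $\mathrm{K}_\pm\to\mathrm{dS}$. On the arc $\Omega_\Lambda=0$ one has $\tilde q=2-\tfrac32(2-\gamma_{\mathrm{pf}})\tilde\Omega_{\mathrm{pf}}$. Substituting $\tilde\Omega_{\mathrm{pf}}=1-\Sigma_\phi^2$ gives a one-dimensional equation for $\Sigma_\phi$ whose zeros in $(-1,1)$ are exactly $\mathrm{KM}$ (when it exists). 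I will locate $\mathrm{KM}$ from its coordinates: $\tilde\Omega_{\mathrm{pf}}(\mathrm{KM})=4(5-3\gamma_{\mathrm{pf}})/9(\gamma_{\mathrm{pf}}-2)^2$ is positive precisely for $\gamma_{\mathrm{pf}}<5/3$. So for $\gamma_{\mathrm{pf}}<5/3$ the arc splits into $\mathrm{KM}\to\mathrm{K}_\pm$ or $\mathrm{K}_\pm\to\mathrm{KM}$ pieces, according to the sign of $\Sigma_\phi'$. For $\gamma_{\mathrm{pf}}>5/3$ the point $\mathrm{KM}$ lies outside $\bar{\mathbf{S}}_\mathrm{E}$, and the arc is a single heteroclinic orbit between $\mathrm{K}_+$ and $\mathrm{K}_-$.

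\emph{Step 3: local analysis of the boundary fixed points.} At a kinaton point the eigenvalue transverse to the vacuum edge comes from $\tilde\Omega_{\mathrm{pf}}'/\tilde\Omega_{\mathrm{pf}}$ evaluated at $\Sigma_\phi=\pm1$, $\tilde q=2$. I will compute its sign as a function of $\gamma_{\mathrm{pf}}$, together with the tangential eigenvalue $6$ along the vacuum edge.

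For $\gamma_{\mathrm{pf}}<5/3$, both kinatons are sources and $\mathrm{KM}$ is a saddle with one unstable direction pointing into $\mathbf{S}_\mathrm{E}$. The unique interior orbit leaving $\mathrm{KM}$ must end at $\mathrm{dS}$ by Step~1. Being an orbit, it cannot be crossed, so it splits $\mathbf{S}_\mathrm{E}$ into two invariant regions. Each region's boundary contains exactly one kinaton source and no other possible $\alpha$-limit set. Since $\alpha$-limit sets must lie on $\{\Omega_\Lambda=0\}$, by Poincar\'e--Bendixson every orbit in a given region is a heteroclinic orbit from that region's kinaton, $\mathrm{K}_+$ or $\mathrm{K}_-$, to $\mathrm{dS}$.

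For $\gamma_{\mathrm{pf}}>5/3$, the separatrix disappears, and the arc $\Omega_\Lambda=0$ contains exactly one kinaton whose transverse eigenvalue is positive. Only that point can serve as an interior $\alpha$-limit set, because the other kinaton is a saddle whose unstable manifold lies in the boundary. The plan is to identify that unique source with the point the statement calls $\mathrm{K}_-$, and to conclude that all interior orbits are $\mathrm{K}_-\to\mathrm{dS}$.

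\emph{Main obstacle.} The step I expect to be delicate is precisely this sign bookkeeping at $\Sigma_\phi=\pm1$ for $\gamma_{\mathrm{pf}}>5/3$. It must be made consistent with the labelling in the statement, and it determines which kinaton is the global source. A secondary point is the passage of the arc orbit through the boundary case $\gamma_{\mathrm{pf}}=5/3$, where $\mathrm{KM}$ merges with a kinaton and a centre manifold appears. That case is excluded from the statement, so the argument only needs strict inequalities on either side.
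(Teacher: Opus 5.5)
Your overall strategy is the paper's: fixed points only on the boundary, local stability, then Poincar\'e--Bendixson. The monotone function $\Omega_\Lambda$ makes it cleaner, and Steps~1--2 and the case $\gamma_{\mathrm{pf}}<5/3$ go through as you describe. The problem is the step you deferred. The sign bookkeeping at the kinatons cannot be made consistent with the labelling in the statement.

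At $\Sigma_\phi=\pm1$, $\tilde\Omega_{\mathrm{pf}}=0$, $\tilde q=2$, the Jacobian is triangular. It has eigenvalue $6$ along the vacuum edge and transverse eigenvalue
\[
\left.\frac{\tilde\Omega_{\mathrm{pf}}'}{\tilde\Omega_{\mathrm{pf}}}\right|_{\mathrm{K}_\pm}
=2\Bigl[3-\tfrac32\gamma_{\mathrm{pf}}\mp\bigl(2-\tfrac32\gamma_{\mathrm{pf}}\bigr)\Bigr]
=\begin{cases} 2 & \text{at } \mathrm{K}_+,\\ 2(5-3\gamma_{\mathrm{pf}}) & \text{at } \mathrm{K}_-. \end{cases}
\]
Hence for $\gamma_{\mathrm{pf}}>5/3$ the unique source is $\mathrm{K}_+$. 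By contrast, $\mathrm{K}_-$ is a saddle. Its stable manifold is the arc $\{\Omega_\Lambda=0\}$, on which $\Sigma_\phi'=\tilde\Omega_{\mathrm{pf}}\bigl[2-\tfrac32\gamma_{\mathrm{pf}}-\tfrac32(2-\gamma_{\mathrm{pf}})\Sigma_\phi\bigr]<0$, so the arc runs $\mathrm{K}_+\to\mathrm{K}_-$. Its unstable manifold is the vacuum edge $\mathrm{K}_-\to\mathrm{dS}$.

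Since $\mathrm{K}_\pm$ are defined by $\Sigma_\phi=\pm1$, no relabelling rescues ``$\mathrm{K}_-$''. Carried out, your argument proves that for $\gamma_{\mathrm{pf}}>5/3$ every interior orbit is a heteroclinic orbit $\mathrm{K}_+\to\mathrm{dS}$. To exclude $\mathrm{K}_-$, note that an interior $\alpha$-limit set containing it either equals $\{\mathrm{K}_-\}$ or does not. In the first case the orbit lies in $W^u(\mathrm{K}_-)\subset\{\tilde\Omega_{\mathrm{pf}}=0\}$. In the second, by the Butler--McGehee lemma the set contains points of that vacuum edge, contradicting your Step~1.

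So the ``$\mathrm{K}_-$'' in the last clause of the statement is a misprint for ``$\mathrm{K}_+$''. The paper itself confirms this in three places:
\begin{itemize}
\item Just before the theorem it says $\mathrm{K}_-$ becomes a saddle for $\gamma_{\mathrm{pf}}\geq5/3$ and that no interior orbit originates from it.
\item Its skew-product fixed-point table gives $\mathrm{K}^0_-$ the eigenvalue $-2(3\gamma_{\mathrm{pf}}-5)$.
\item It later states that for $5/3\leq\gamma_{\mathrm{pf}}<2$ all solutions originate from $\mathrm{K}^0_+$.
\end{itemize}
You should do the eigenvalue computation and prove the corrected claim, rather than plan to match the printed label.
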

\begin{proof}
The proof follows by the local stability analysis of the fixed points, which are \emph{all} located on the invariant boundaries, and the application of the Poincar\'e-Bendixson theorem.
\end{proof}
\begin{figure}[ht!]
	\centering
	\subfigure[$\frac{2}{3}< \gamma_{\mathrm{pf}} <\frac{4}{3}$.]{
		\includegraphics[trim={1.7cm 1.2cm 0.55cm 0.55cm},clip,width=0.35\textwidth]{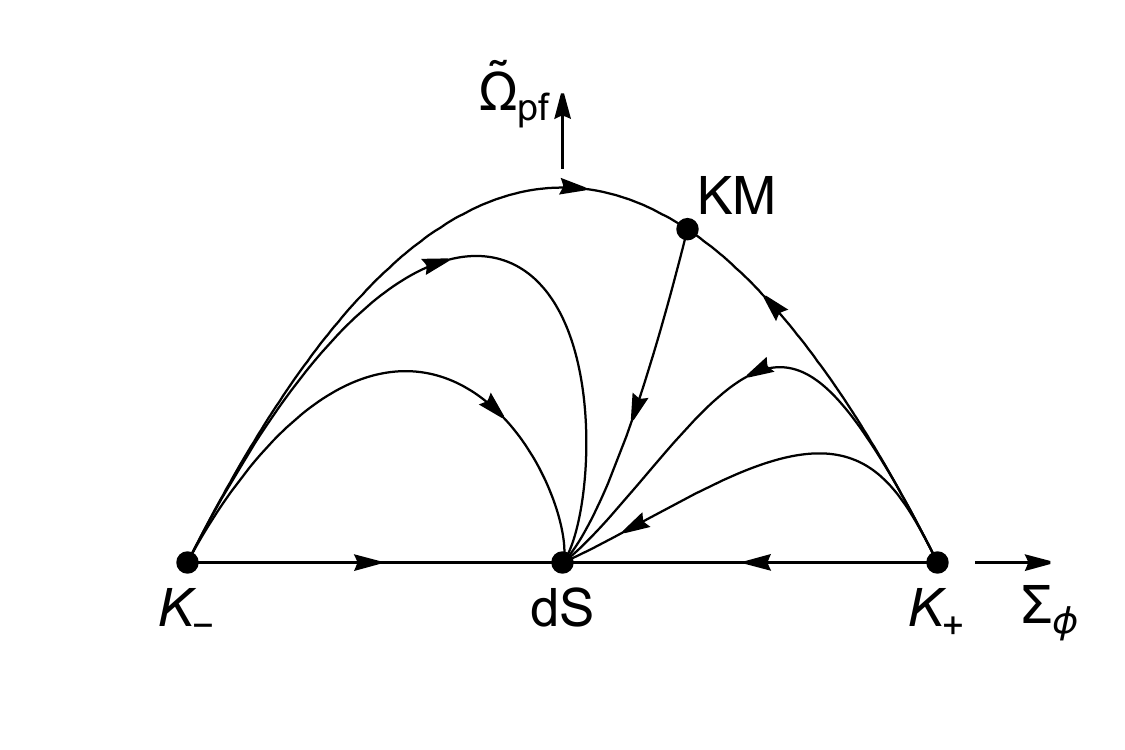}
		\label{fig:subfig1}
	}
	\subfigure[$\gamma_{\mathrm{pf}} =\frac{4}{3}$.]{
		\includegraphics[trim={1.7cm 1.2cm 0.55cm 0.55cm},clip,width=0.35\textwidth]{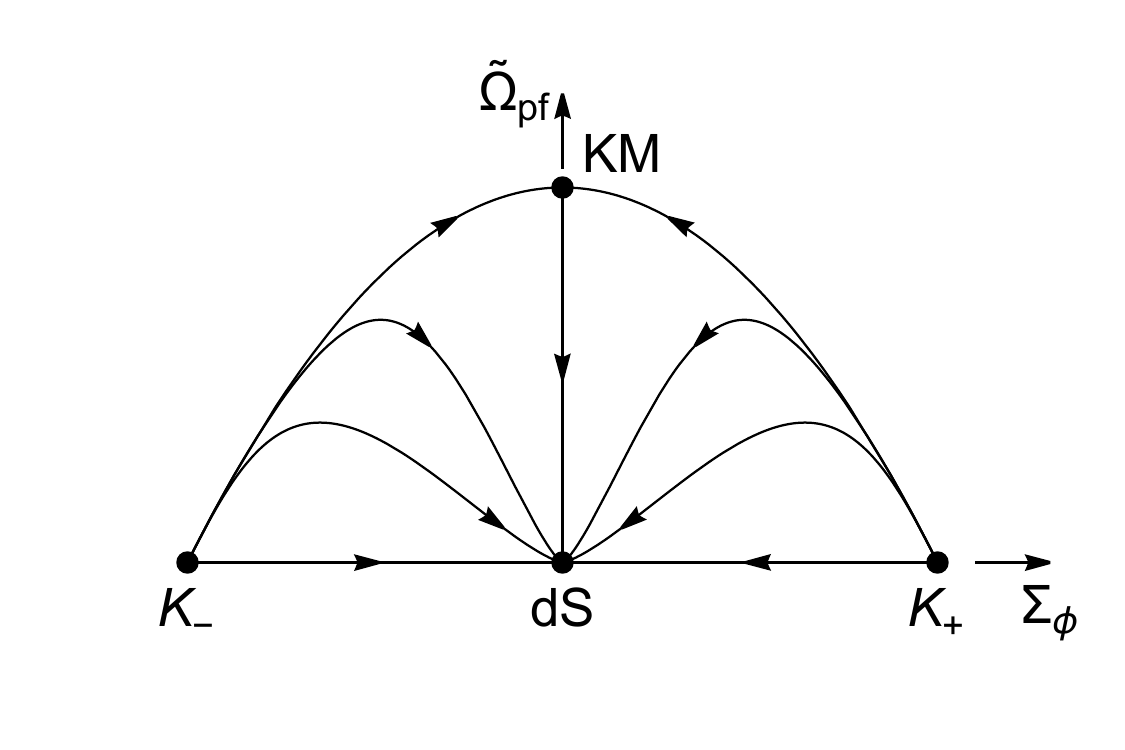}
		\label{fig:subfig2}
	}
	\subfigure[$\frac{4}{3}<\gamma_{\mathrm{pf}} <\frac{5}{3}$.]{
		\includegraphics[trim={1.7cm 1.2cm 0.55cm 0.55cm},clip,width=0.35\textwidth]{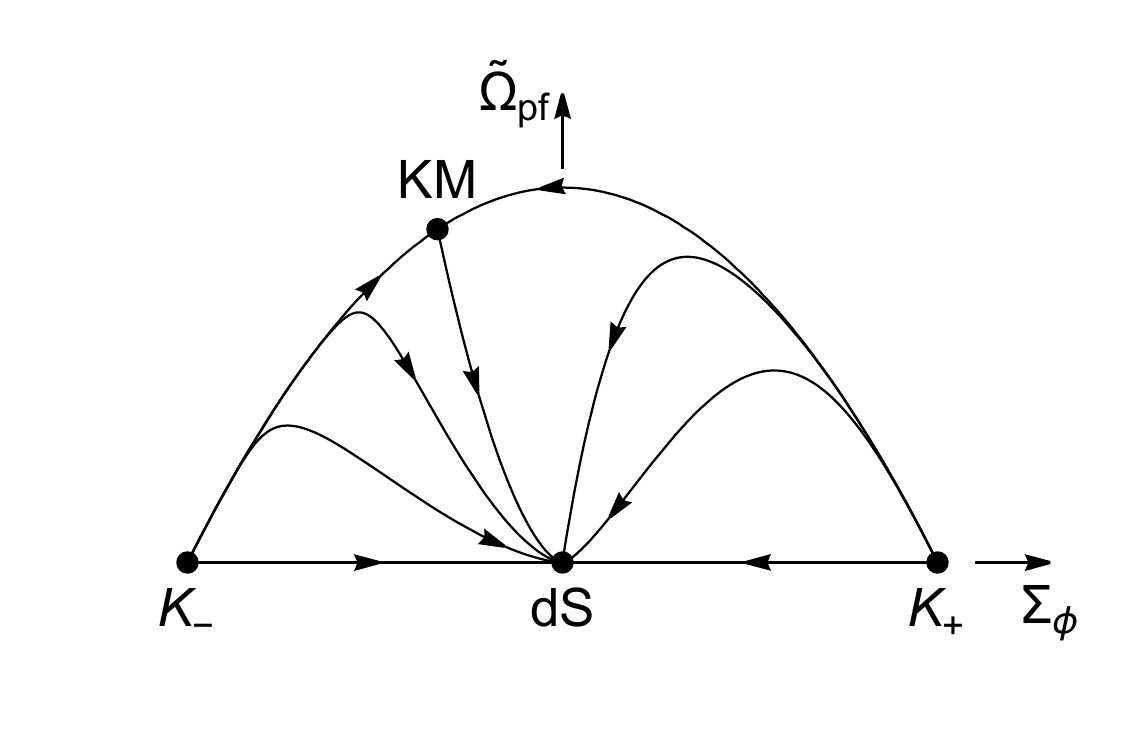}
		\label{fig:subfig3}
	}
	\subfigure[$\frac{5}{3}\leq \gamma_{\mathrm{pf}} <2$.]{
		\includegraphics[trim={1.7cm 1.2cm 0.55cm 0.55cm},clip,width=0.35\textwidth]{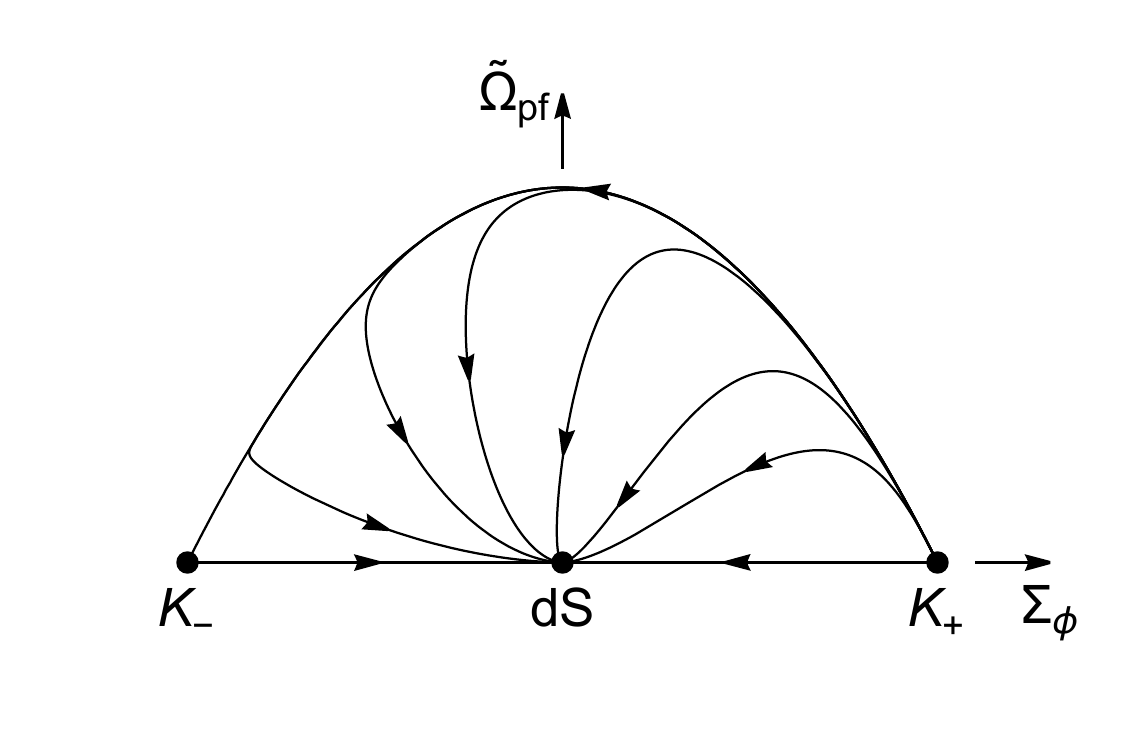}
		\label{fig:subfig4}
	}
	\caption{The reduced Einstein state-space $\bar{\mathbf{S}}_{\mathrm{E}}$ for $f(R)=\alpha R^2$ and a perfect fluid with linear equation of state.}
	\label{fig:mainfig}
\end{figure}
\subsection{The skew-product state-space and fixed points}
In order to better situate the global results of the Einstein frame, given by Theorem~\ref{theorem-a}, into the global Jordan state-space, it is useful to consider the full Einstein frame state-space by adding the scalar field variable $\phi$. To obtain a compact state-space one introduces a regular bounded and monotone (and hence invertible) scalar field variable $\bar{\phi}(\phi)$ satisfying 
\begin{equation}
\lim_{\phi\rightarrow\pm\infty}\bar{\phi}=\bar{\phi}_\pm, \quad \lim_{\phi\rightarrow\pm\infty}\frac{d\bar{\phi}}{d \phi}=0.
\end{equation}
The evolution equation for $\bar{\phi}$ is therefore given by
\begin{equation}
\frac{d \bar{\phi}}{d\tilde{N}}=\sqrt{6} \frac{d\bar{\phi}}{d \phi} \Sigma_\phi.
\end{equation}
We then consider the skew product flow on the product space $[\bar{\phi}_-,\bar{\phi}_+]\times \bar{\mathbf{S}}_\mathrm{E}$, in particular a projection map which projects the orbits in the product space into orbits on the reduced state-space $\bar{\mathbf{S}}_\mathrm{E}$ and which commutes with the flow.

For the problem at hand a suitable choice of the bounded scalar field variable is
 \begin{equation}
\bar{\phi}=\frac{1}{1+e^{- \sqrt{\frac{2}{3}} \phi}}, \quad \frac{d \bar{\phi}}{d \phi}=  \sqrt{\frac{2}{3}} \bar{\phi}(1-\bar{\phi}), 
\end{equation}
so that $\bar{\phi} \to 0$ when $\phi \to -\infty$, and $\bar{\phi} \to 1$ when $\phi \to +\infty$, and
\begin{equation}\label{EvEqbarPhi}
	\frac{d \bar{\phi}}{d\tilde{N}}= 2  \bar{\phi}(1-\bar{\phi})\Sigma_\phi.
\end{equation}

The fixed points on the skew-product state-space $[0,1]\times \bar{\mathbf{S}}_\mathrm{E}$ can be found in Table~\ref{FP_Einstein}. 
\begin{table}
	\begin{center}
		\resizebox{\textwidth}{!}{
		\begin{tabular}{|c|c|c|c|c|c|c|c|}
			\hline            & & & & & & &\\ [-2ex]
			\begin{tabular}{c} {\bf Fixed}\\{\bf points} \end{tabular} & $\bar{\phi}$ & $\Sigma_\phi$ & $\tilde{\Omega}_{\mathrm{pf}}$ & $\Omega_\Lambda$ &{\bf Eigenvalues} & {\bf Eigenvectors} & {\bf Restrictions} \\ [1ex]
			\hline\hline      & & & & & & & \\[-2ex]
			$\mathrm{K}^{0}_{+}$    & $0$ & $+ 1$ & $0$ & $0$ & \begin{tabular}{c} $2$,\\ $6$, \\$2$\\ \end{tabular} & \begin{tabular}{c} $e_1$,\\ $e_2$,\\ $e_2-2e_3$\end{tabular} & \\[1ex] \hline
			$\mathrm{K}^{0}_{-}$    & $0$ & $- 1$ & $0$ & $0$ & \begin{tabular}{c} $-2$,\\ $6$, \\$-2\left( 3\gamma_{\text{pf}}-5 \right)$\\[1ex] \end{tabular} & \begin{tabular}{c} $e_1$,\\ $e_2$,\\ $e_2+2e_3$\end{tabular} & \\[1ex] \hline
			$\mathrm{K}^{1}_{+}$    & $1$ & $+ 1$ & $0$ & $0$ & \begin{tabular}{c} $-2$,\\ $6$, \\$2$\\ \end{tabular} & \begin{tabular}{c} $e_1$,\\ $e_2$,\\ $e_2-2e_3$\end{tabular} & \\[1ex]\hline
			$\mathrm{K}^{1}_{-}$    & $1$ & $- 1$ & $0$ & $0$ & \begin{tabular}{c} $2$,\\ $6$, \\$-2\left( 3\gamma_{\text{pf}}-5 \right)$\\[1ex] \end{tabular} & \begin{tabular}{c} $e_1$,\\ $e_2$,\\ $e_2+2e_3$\end{tabular} & \\[1ex]\hline 
			$ \mathrm{KM}^0$  & $0$ & $\frac{3\gamma_{\mathrm{pf}}-4}{3(\gamma_{\mathrm{pf}}-2)}$ & $\frac{4(5-3\gamma_{\mathrm{pf}})}{9(\gamma_{\mathrm{pf}}-2)^2}$ & $0$ & \begin{tabular}{c} $\frac{2(3\gamma_{\mathrm{pf}}-4)}{3(\gamma_{\mathrm{pf}}-2)}$,\\[1ex]  $\frac{2(3\gamma_{\mathrm{pf}}-8)}{3(\gamma_{\mathrm{pf}}-2)}$,\\[1ex] $-\frac{2(3\gamma_{\mathrm{pf}}-5)}{3(\gamma_{\mathrm{pf}}-2)}$\\[1ex] \end{tabular} & \begin{tabular}{c} $e_1$,\\ $a d e_2-b e_3$,\\ $a e_2-2d e_3$ \end{tabular} & $\gamma_{\mathrm{pf}}<\frac{5}{3}$ \\[1ex] \hline
		    $\mathrm{KM}^1$ & $1$ & $\frac{3\gamma_{\mathrm{pf}}-4}{3(\gamma_{\mathrm{pf}}-2)}$ & $\frac{4(5-3\gamma_{\mathrm{pf}})}{9(\gamma_{\mathrm{pf}}-2)^2}$ & $0$ & \begin{tabular}{c} $- \frac{2(3\gamma_{\mathrm{pf}}-4)}{3(\gamma_{\mathrm{pf}}-2)}$,\\[1ex] $\frac{2(3\gamma_{\mathrm{pf}}-8)}{3(\gamma_{\mathrm{pf}}-2)}$,\\[1ex] $-\frac{2(3\gamma_{\mathrm{pf}}-5)}{3(\gamma_{\mathrm{pf}}-2)}$\\[1ex] \end{tabular} & \begin{tabular}{c} $e_1$,\\ $a d e_2-b e_3$,\\ $a e_2-2d e_3$ \end{tabular} & $\gamma_{\mathrm{pf}}<\frac{5}{3}$\\[1ex] \hline
			${\mathrm{\tilde L}}_\mathrm{R}$    & $\bar{\phi}_*$ & $0$ & $1$ & $0$ & \begin{tabular}{c} $0$,\\ $-3$,\\ $-4$\\ \end{tabular} & \begin{tabular}{c} $e_1$,\\ $c e_1+e_2$,\\ $e_3$ \end{tabular}  & \begin{tabular}{c} $\gamma_\mathrm{pf}=\frac{4}{3}$,\\ $\bar\phi_*\in(0,1)$ \end{tabular} \\ [1ex]\hline
			${\mathrm{\tilde L}}_\mathrm{dS}$    & $\bar{\phi}_*$ & $0$ & $0$ & $1$ & \begin{tabular}{c} $0$,\\ $-3$,\\ $-3\gamma_\mathrm{pf}$ \end{tabular} & \begin{tabular}{c} $e_1$,\\ $c e_1+e_2$,\\ $c d e_1+\gamma_{\mathrm{pf}}d e_2+f e_3$ \end{tabular} & $\bar{\phi}_*\in(0,1)$ \\[1ex]
			\hline
		\end{tabular}}
	\end{center}\vspace{-0.5cm}
	\caption{Fixed points in skew-product state-space $[0,1]\times\bar{\mathbf{S}}_\mathrm{E}$. We introduced the notation $a=3(\gamma_{\mathrm{pf}}-2)$, $b=2(\gamma_{\mathrm{pf}}+2)(3\gamma_{\mathrm{pf}}-5)$, $c=-\frac{2}{3}\bar{\phi}_*(1-\bar{\phi}_*)$, $d=(3\gamma_{\mathrm{pf}}-4)$ and $f=6\gamma_{\mathrm{pf}}(\gamma_{\mathrm{pf}}-1)$. }
	\label{FP_Einstein}
\end{table}
From~\eqref{EvEqbarPhi} it follows that $\bar{\phi}$ is monotonically increasing (decreasing) when $\Sigma_{\phi}>0$ ($\Sigma_{\phi}<0$). Therefore the fixed points on the reduced state-space with $\Sigma_{\phi}=0$ result in lines of fixed points on the skew-product state-space which are parameterised by constant values of $\bar{\phi}$. Hence there are two heteroclinic orbits $\mathrm{K}^{0}_+\rightarrow\mathrm{K}^{1}_{+}$ and $\mathrm{K}^{1}_-\rightarrow\mathrm{K}^{0}_-$ for  $\Sigma_\phi=\pm1$, i.e. at the intersection of the invariant boundaries $\Omega_\Lambda=0$ and $\tilde{\Omega}_\mathrm{pf}=0$. \\

The vacuum invariant boundary  $\tilde{\Omega}_\mathrm{pf}=0$ on the skew-product state-space has an attracting normally hyperbolic line of de-Sitter fixed points $\tilde{\mathrm{L}}_\mathrm{dS}$.  The interior flow therefore consists of heteroclinic orbits $\mathrm{K}^{1}_- \rightarrow \mathrm{L}_{\mathrm{dS}}$ and $\mathrm{K}^{0}_+ \rightarrow \mathrm{L}_{\mathrm{dS}}$ as shown in Figure~\ref{fig:Vaccum-EF}.
\begin{figure}[ht!]
	\centering
		\includegraphics[trim={1.8cm 1.7cm 0.55cm 1cm},clip,width=0.42\textwidth]{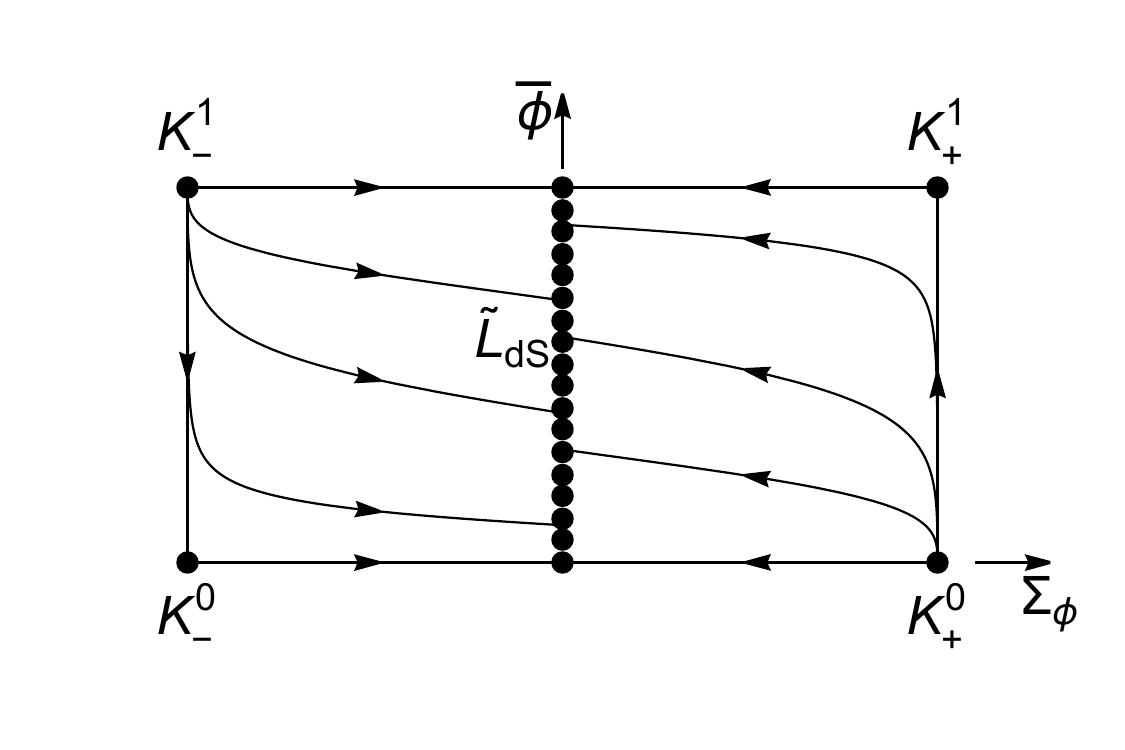}	
	\caption{The vacuum invariant boundary, $\tilde{\Omega}_\mathrm{pf}=0$, in the Einstein frame.}
	\label{fig:Vaccum-EF}
\end{figure}
In turn, the invariant boundary $\{\Omega_\mathrm{\Lambda}=0\}$ on the skew-product state-space has separatrices given by heteroclinic orbits  $\mathrm{KM}^{0}\rightarrow\mathrm{KM}^{1}$ if $\gamma_{\mathrm{pf}}<4/3$, or $\mathrm{KM}^{1}\rightarrow\mathrm{KM}^{0}$ if $4/3<\gamma_{\mathrm{pf}}<5/3$, while if $\gamma_{\mathrm{pf}}=4/3$ then $\Sigma_\phi=0$ and there is a line of fixed points $\tilde{\mathrm{L}}_\mathrm{R}$. These split the state-space into two invariant regions consisting on heteroclinic orbits $\mathrm{K}^{1}_-\rightarrow\mathrm{KM}^{1}$ and $\mathrm{K}^{0}_+\rightarrow\mathrm{KM}^{1}$. When $5/3\leq\gamma_{\mathrm{pf}}<2$ the invariant boundary $\{\Omega_\mathrm{\Lambda}=0\}$ consists of heteroclinic orbits $\mathrm{K}^{0}_+\rightarrow \mathrm{K}^{0}_-$, see Figure~\ref{fig:Lambda-bound}.

\begin{figure}[ht!]
	\centering
		\subfigure[$2/3<\gamma_{\mathrm{pf}}< 4/3$.]{\label{fig:Lambda-boundary1}
			\includegraphics[trim={1.8cm 1.2cm 0.55cm 1cm},clip,width=0.35\textwidth]{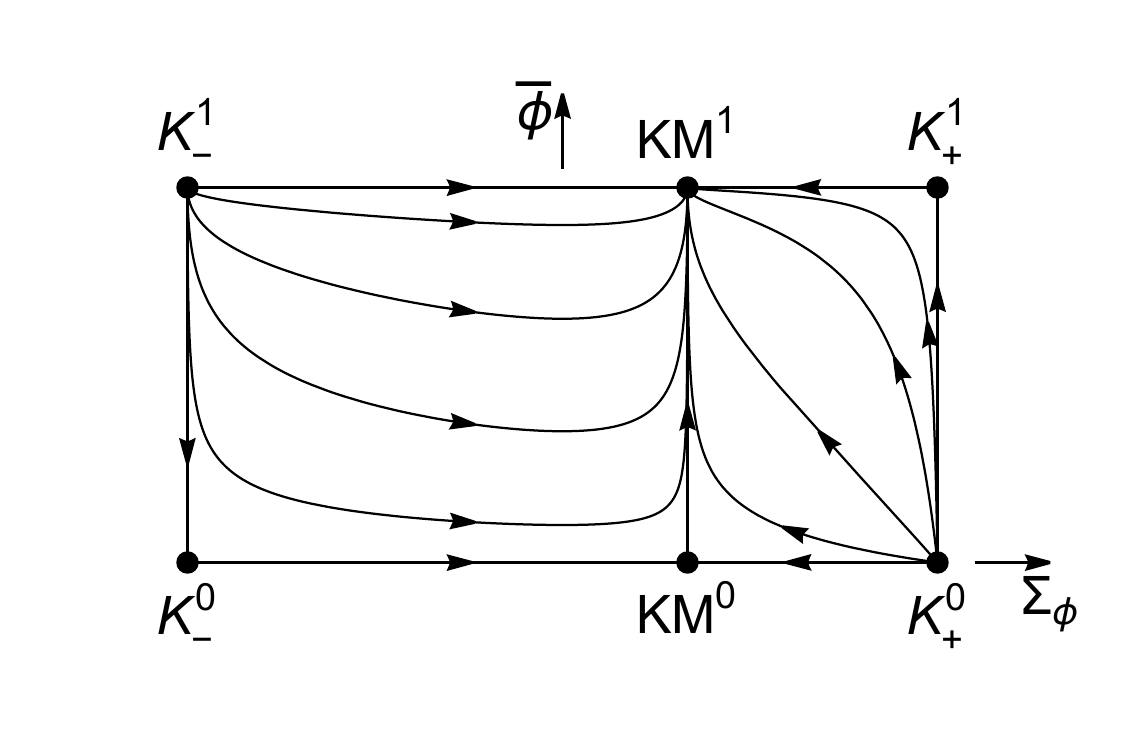}}
		\subfigure[$\gamma_{\mathrm{pf}}=4/3$.]{\label{fig:Lambda-boundary2}
			\includegraphics[trim={1.8cm 1.2cm 0.55cm 1cm},clip,width=0.35\textwidth]{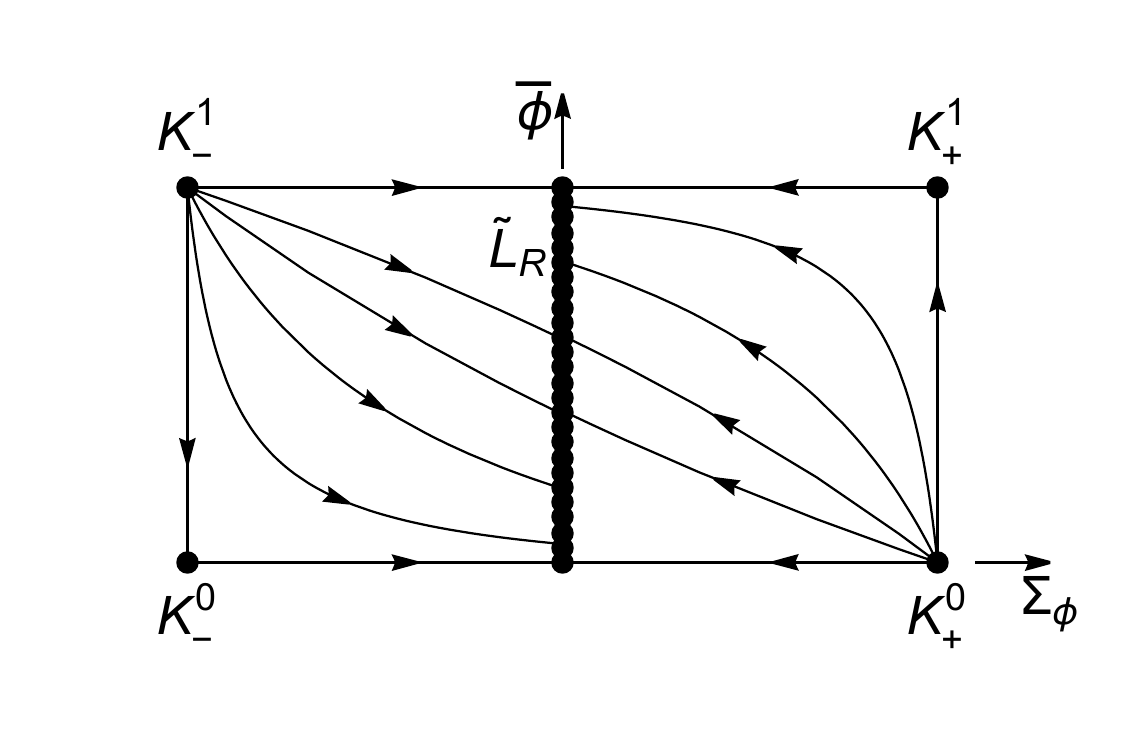}} \\
		\subfigure[$4/3<\gamma_{\mathrm{pf}}<5/3$.]{\label{fig:Lambda-boundary3}
			\includegraphics[trim={1.8cm 1.2cm 0.55cm 1cm},clip,width=0.35\textwidth]{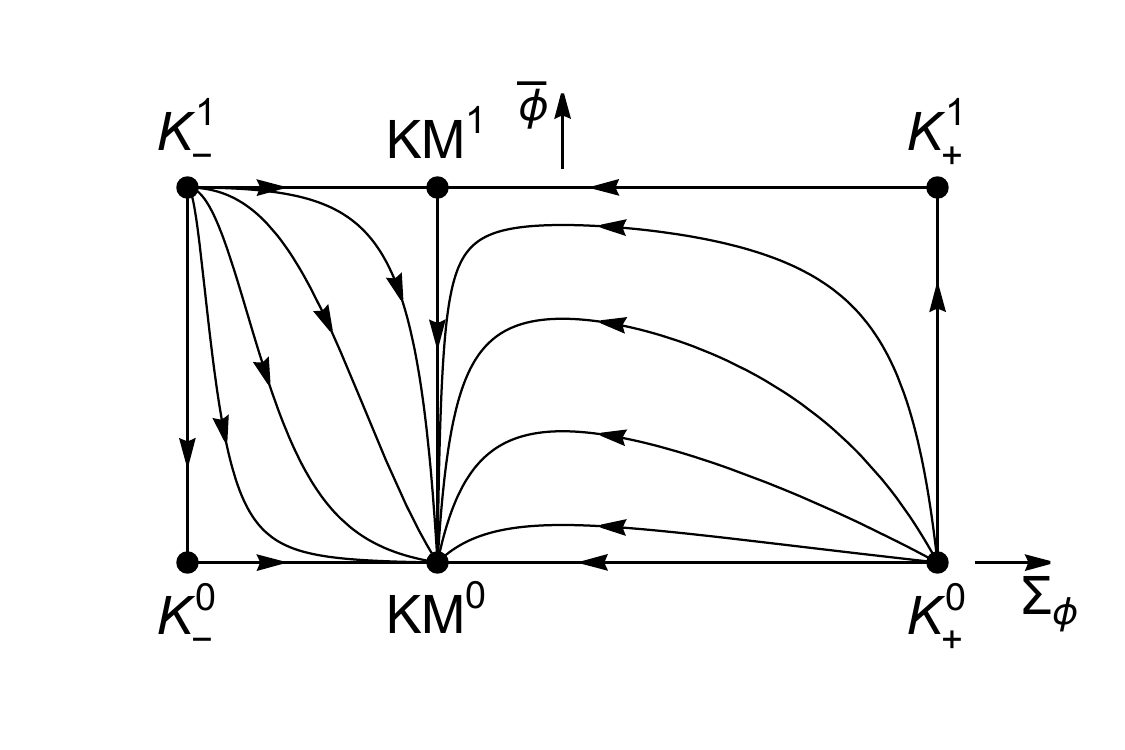}}
		\subfigure[$5/3\leq\gamma_{\mathrm{pf}}<2$.]{\label{fig:Lambda-boundary4}
			\includegraphics[trim={1.8cm 1.2cm 0.55cm 1cm},clip,width=0.35\textwidth]{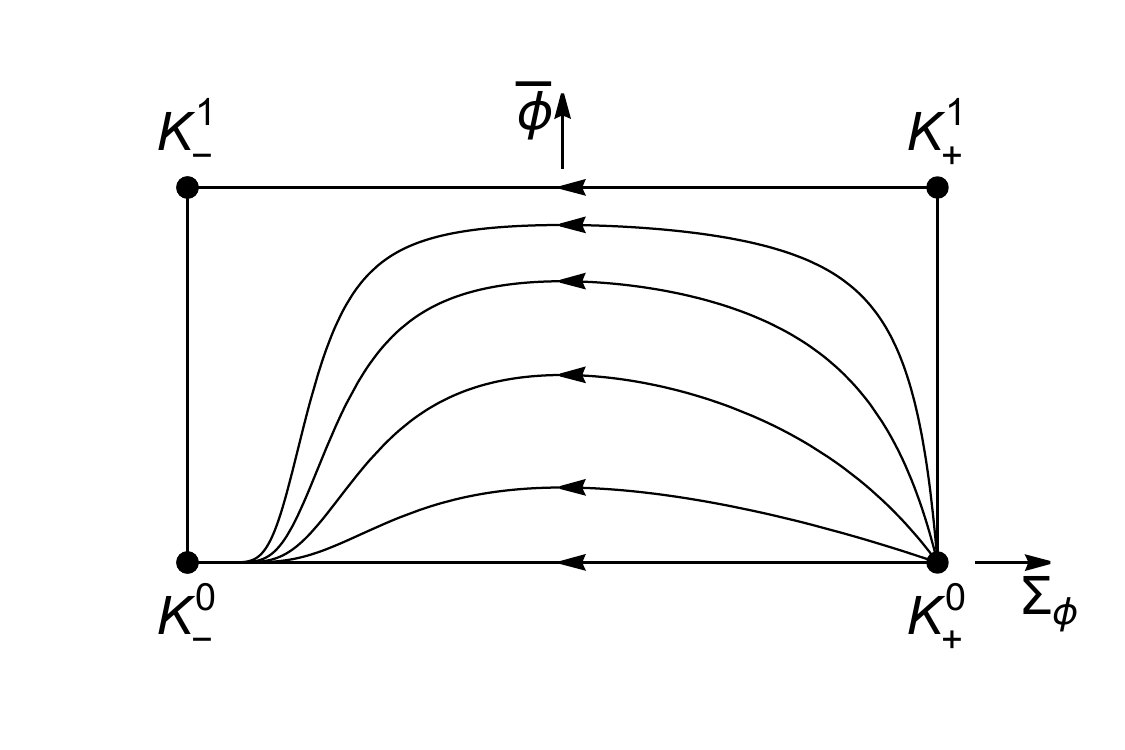}}
	\caption{Invariant boundary $\{\Omega_\Lambda=0\}$ in the Einstein frame skew-product state-space $[0,1]\times\bar{\mathbf{S}}_\mathrm{E}$.}
	\label{fig:Lambda-bound}
\end{figure}

The dynamics on the global Einstein state-space $[0,1]\times \mathbf{\bar S}_\mathrm{E}$ is therefore easily deduced. First notice that for $\gamma_{\mathrm{pf}}=4/3$ there exists a \emph{separatrix surface} consisting of a 1-parameter family of heteroclinic orbits connecting each fixed point on $\tilde{\mathrm{L}}_\mathrm{R}$ to each fixed point on $\tilde{\mathrm{L}}_\mathrm{dS}$ with constant $\bar{\phi}_*\in[0,1]$. This codimension one surface which is the unstable manifold of $\tilde{\mathrm{L}}_\mathrm{R}$ splits the state-space into two invariant regions consisting of heteroclinic orbits $\mathrm{K}^{0}_+\rightarrow\tilde{\mathrm{L}}_\mathrm{dS}$ and $\mathrm{K}^{1}_-\rightarrow\tilde{\mathrm{L}}_\mathrm{dS}$, see Figure~\ref{fig:subfig3D2}. If $2/3<\gamma_{\mathrm{pf}}<4/3$ then the separatrix surface consists of a 1-parameter family of heteroclinic orbits $\mathrm{KM}^{0}\rightarrow \tilde{\mathrm{L}}_\mathrm{dS}$ while, if $4/3<\gamma_{\mathrm{pf}}<5/3$, the separatrix surface consists of a 1-parameter family of heteroclinic orbits $\mathrm{KM}^{1}\rightarrow \tilde{\mathrm{L}}_\mathrm{dS}$. Finally if $5/3\leq\gamma_{\mathrm{pf}}<2$ all solutions originate from $\mathrm{K}^{0}_+$ and end at $\tilde{\mathrm{L}}_\mathrm{dS}$, see Figure~\ref{fig:mainfig2}.
\begin{figure}[htbp]
	\centering
	\subfigure[$\frac{2}{3}< \gamma_{\mathrm{pf}} <\frac{4}{3}$.]{
		\includegraphics[trim={3.25cm 4.25cm 5.5cm 0.25cm},clip,width=0.43\textwidth]{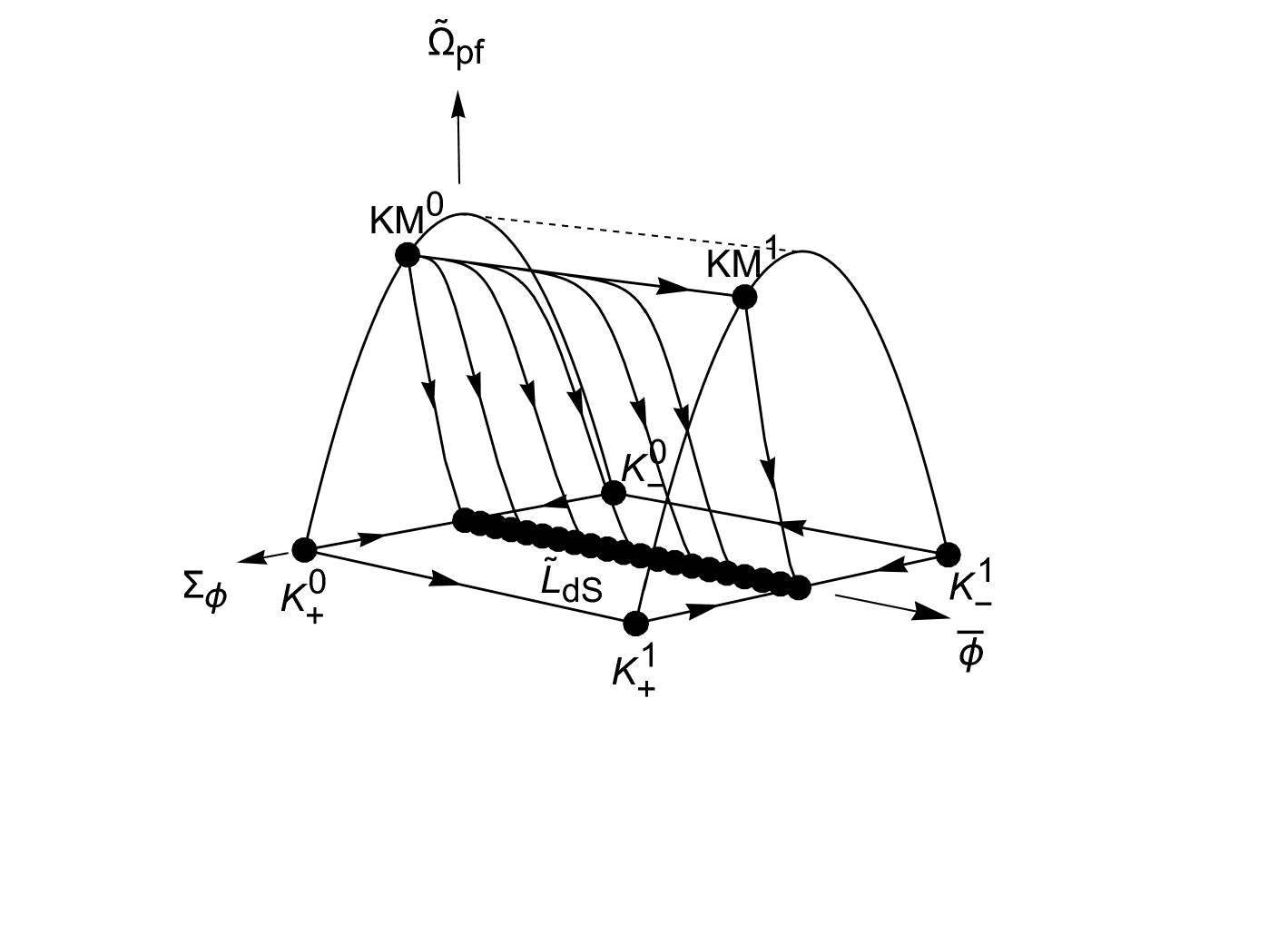}
		\label{fig:subfig3D1}
	}
	\subfigure[$\gamma_{\mathrm{pf}} =\frac{4}{3}$.]{
		\includegraphics[trim={3.75cm 4.25cm 5.2cm 0.25cm},clip,width=0.43\textwidth]{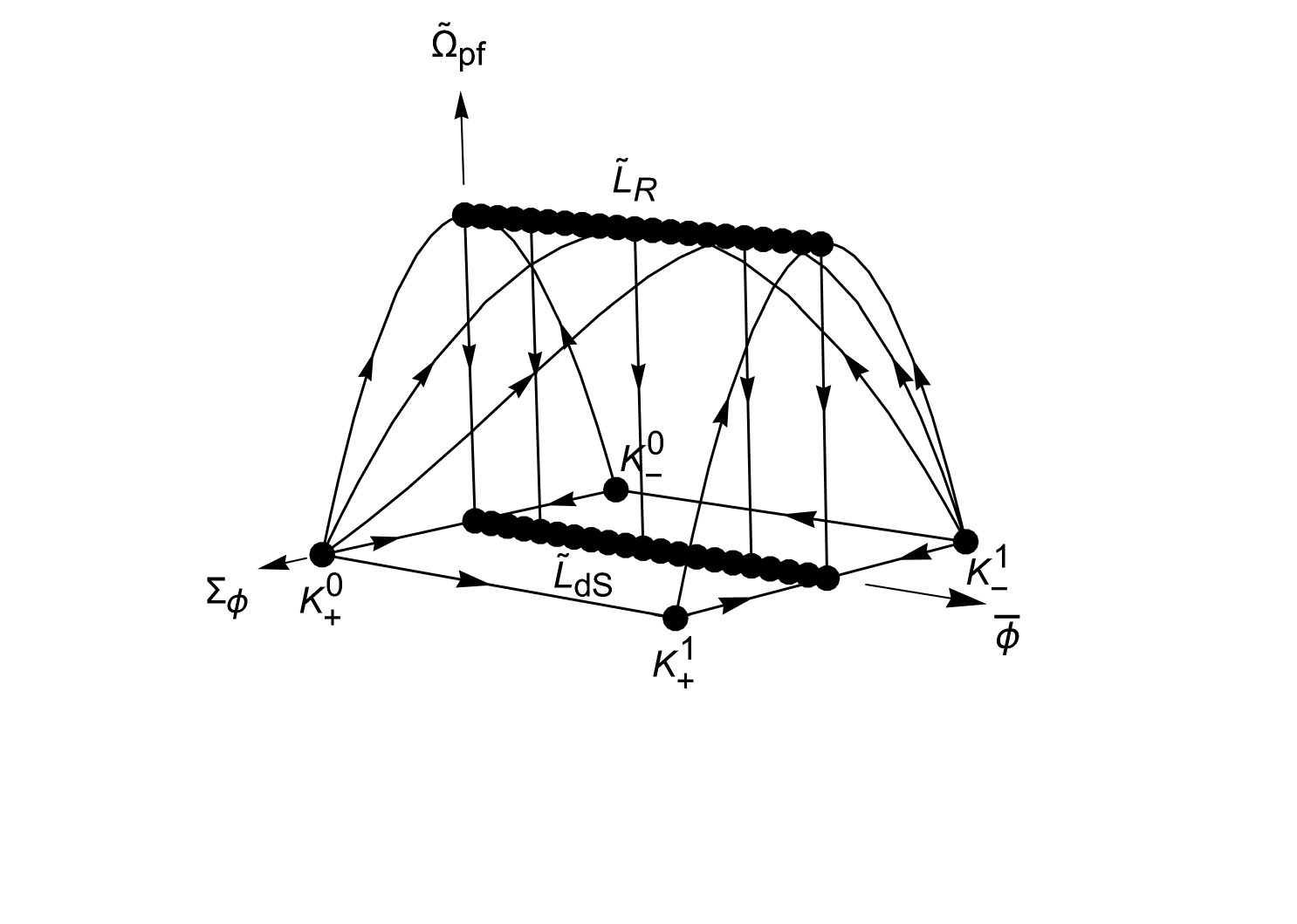}
		\label{fig:subfig3D2}
	}
	\\
	\subfigure[$\frac{4}{3}<\gamma_{\mathrm{pf}} <\frac{5}{3}$.]{
		\includegraphics[trim={3.75cm 4.25cm 5.2cm 0.25cm},clip,width=0.43\textwidth]{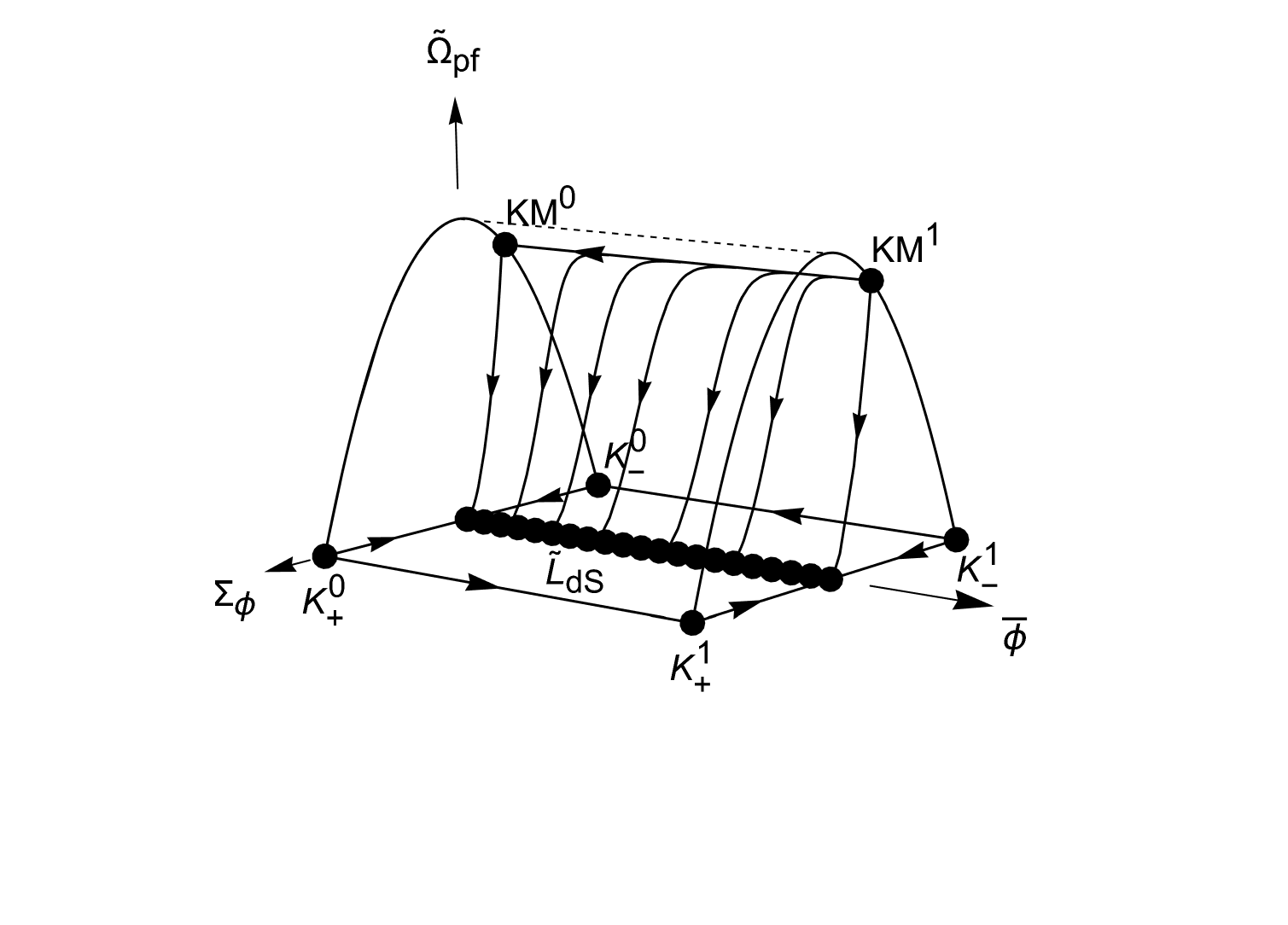}
		\label{fig:subfig3D3}
	}
	\subfigure[$\frac{5}{3}\leq \gamma_{\mathrm{pf}} <2$.]{
		\includegraphics[trim={3.75cm 4.25cm 5.2cm 0.25cm},clip,width=0.43\textwidth]{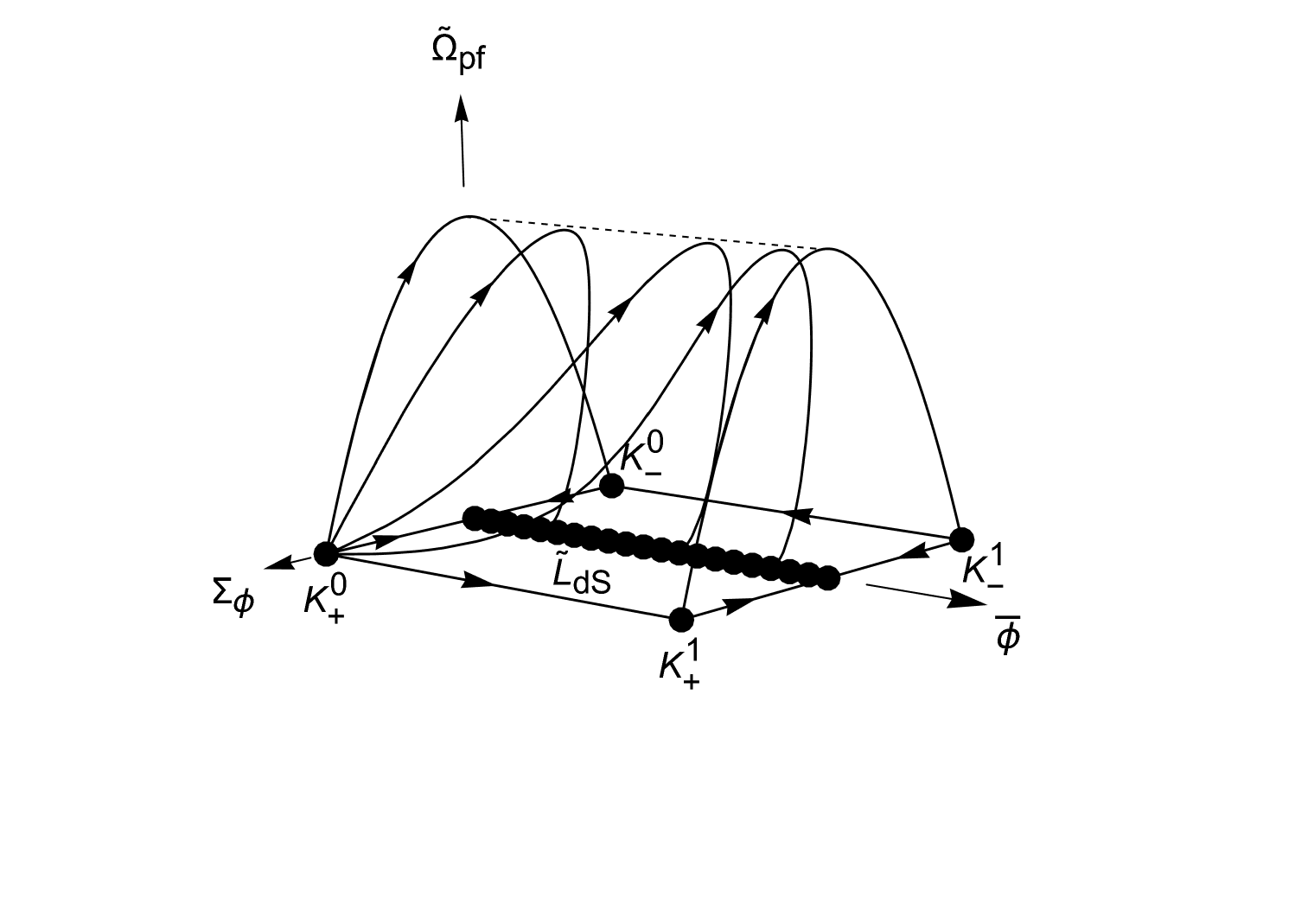}
		\label{fig:subfig3D4}
	}
	\caption{Global dynamics on the Einstein frame skew-product state-space for $f(R)=\alpha R^2$ and a perfect fluid with linear equation of state.}
	\label{fig:mainfig2}
\end{figure}
%
\section{Situating the Einstein state-space in the Jordan state-space}\label{sec:Map}
In this section we discuss how solutions in the Jordan frame are mapped to the Einstein frame. In order to do that we compute explicit expressions for the asymptotic solutions.
Firstly, we will need to relate the two global state-space variables.
The variables in Einstein frame are given in terms of the variables in Jordan frame as follows: 
\begin{subequations}
	\begin{align}\label{EtoJ}
	\bar{\phi} &=\frac{2\alpha R}{1+2\alpha R}= -\frac{2(1-T)S}{T-2(1-T)S},\\
	\Sigma_{\phi} &=\frac{\dot{R}}{ \dot{R}+2HR }=\frac{(1+X)T+(1-T)(1-X)S}{(1+X)T-(1-T)(1-X)S}, \label{Sigma}\\
	\tilde{\Omega}_{\mathrm{pf}} &=\frac{2 R \rho_{\text{pf}}}{3\alpha \left( \dot{R}+2 H R \right)^2}=\frac{-4T(1-T)S  \Omega_{\mathrm{pf}}}{\left( (1+X)T-(1-T)(1-X)S \right)^2},\\
	\Omega_\Lambda &= \frac{R^3}{3\left( \dot{R}+2 H R \right)^2}=\frac{-4T(1-T)S^3}{\left( (1+X)T-(1-T)(1-X)S \right)^2},
	\end{align}
\end{subequations}
which are only valid for $F>0$, i.e. $S<0$. Using these expressions we can write the deceleration parameter in the Einstein frame as
\begin{equation}\label{tildeq}
\tilde{q} = -1+\frac{3((1+X)T+(1-T)(1-X)S)^2-6\gamma_\mathrm{pf}T(1-T)S  \Omega_{\mathrm{pf}}}  {\left( (1+X)T-(1-T)(1-X)S \right)^2}.
\end{equation}

\subsection{Vacuum solutions}
Let us consider the vacuum solutions with $\rho_\mathrm{pf}=0$. In this case $\tilde{\rho}_\mathrm{pf}=0$ and the global dynamics in the Jordan and Einstein frames is shown in figures~\ref{fig:VBJF} and \ref{fig:Vaccum-EF}, respectively. In the Jordan frame \emph{all} solutions originate from $\mathrm{R}_0$ and end at $\mathrm{L}_\mathrm{dS}$. The line of de-Sitter fixed points in the Jordan frame with $q=-1$ has $S<0$ and is therefore globally mapped to the Einstein frame, where it corresponds to the line of de-Sitter fixed points $\tilde{\mathrm{L}}_\mathrm{dS}$ with $\tilde{q}=-1$, as follows from~\eqref{tildeq}. The values for the fixed point $\mathrm{L}_\mathrm{dS}$ are given in Table~\ref{FP}. For the fixed point $\mathrm{R}_0$ we get $\tilde{q}=2$, corresponding to the kinaton fixed points $\mathrm{K}^{0}_{\pm}$ and $\mathrm{K}^{1}_{\pm}$. The linearised solutions starting at $\mathrm{R}_0$ are given by
\begin{equation}
\label{asympt-tbar}
\theta(\bar{t})=\pi+C_\theta e^{2\bar{t}}, \qquad T(\bar{t})= C_T e^{4\bar{t}},
\end{equation}
where $C_\theta$ and $C_T>0$ are constants and $\bar t$ was defined in \eqref{tbar}. Hence we get, as $\bar{t}\rightarrow-\infty$,
\begin{equation}
R(t) = -\frac{C_\theta}{\alpha C_T}e^{-2\bar{t}}, \qquad \dot{R}(t) =\frac{C_\theta }{\alpha\sqrt{3\alpha}C^2_T}e^{-6\bar{t}}, \qquad H(t) =\frac{1}{\sqrt{3\alpha}C_T}e^{-4\bar{t}}.  
\end{equation} 
Moreover from \eqref{tbar} and \eqref{asympt-tbar}, we get
\begin{equation}
t=\sqrt{12\alpha}\int T(\bar{t}) d\bar{t}= \frac{\sqrt{3\alpha}}{2}C_T e^{4\bar{t}}+c,
\end{equation}
where we set $c=0$ so that $t\rightarrow 0^{+}$ as $\bar{t}\rightarrow-\infty$. This means that $R\rightarrow\pm \infty$, depending on the sign of $C_\theta$ and $R\rightarrow 0$ if $C_\theta=0$ which corresponds to the heteroclinic orbit $\mathrm{R}_0\rightarrow\mathrm{N}_1$, with $R=\dot{R}=0$ and $H=1/(2t)$. This splits the solutions which are globally  mapped to the Einstein frame ($C_\theta<0$) from those that conformally incomplete ($C_\theta>0$). Solutions with $C_\theta<0$ correspond to solutions in the Einstein frame originating from the kinaton source at $\bar{\phi}=1$ ($\phi\rightarrow +\infty$) with $\Sigma_{\phi}=-1$, i.e. to $\mathrm{K}^{1}_{-}$. In turn the solutions with $C_\theta>0$ eventually cross the $F=0$ ($R=0$) surface and therefore correspond to solutions originating from $\bar{\phi}=0$ ($\phi\rightarrow -\infty$) with $\Sigma_{\phi}=1$, i.e. from the source $\mathrm{K}^{0}_{+}$. 
%
\subsection{Perfect fluid solutions}
For perfect fluid solutions in the Jordan frame,  Theorem~\ref{GlobalTheoJordan} states that \emph{all} interior orbits originate from $\mathrm{R}_0$, $\mathrm{L}_\mathrm{R}$ or $\mathrm{N}_0$ depending on whether $2/3<\gamma_{\mathrm{pf}}<4/3$, $\gamma_{\mathrm{pf}}=4/3$ or $4/3<\gamma_{\mathrm{pf}}<2$, respectively, and end at the line of de-Sitter fixed points $\mathrm{L}_{\mathrm{dS}}$. We now show how the solutions in the Jordan state-space relate to those in the skew-product Einstein state-space.

If $\gamma_\mathrm{pf}\in(\frac{4}{3},2)$ then all interior orbits in $\bar{\mathbf{S}}_\mathrm{J}$ originate from the fixed point $\mathrm{P}$ on the blow-up of $\mathrm{N}_0$,  see Section \ref{BUP_N0}. Using chart $\kappa^{(0)}_1$, defined in \eqref{ki}, the Einstein frame state variables are given by
\begin{equation}\label{EinsteinBUPN0}
\bar{\phi}=\frac{-2y_1}{u_1 z_1-2y_1}, \qquad \Sigma_\phi = -\frac{1+(2-u_1)\frac{z_1}{y_1}}{1-(2-u_1)\frac{z_1}{y_1}}, \quad \tilde{\Omega}_\mathrm{pf}=-4\frac{\frac{z_1}{y_1}(2-u_1-u_1 y^2_1)}{(1-(2-u)\frac{z_1}{y_1})^2}.
\end{equation}
From~\eqref{tildeq}, it then follows that
\begin{equation}
\tilde{q}=-1+ \frac{3\left(1+(2-u_1)\frac{z_1}{y_1}\right)^2-6\gamma_{\mathrm{pf}}\frac{z_1}{y_1}(2-u_1-u_1 y^2_1)}{\left(1-(2-u)\frac{z_1}{y_1}\right)^2}.
\end{equation}
If $\gamma\neq5/3$, the linearised solutions around $\mathrm{P}$ are given by
\begin{subequations}
	\begin{align}
	u_1(\tau_1)&= C_u e^{(3\gamma_{\mathrm{pf}}-4)\tau_1},\\ 
	y_1(\tau_1)&=\left(C_y+\frac{2}{5-3\gamma_{\mathrm{pf}}}C_z\right) e^{\tau_1}-\frac{2}{5-3\gamma_{\mathrm{pf}}}C_ze^{3(2-\gamma_{\mathrm{pf}})\tau_1}  ,\\
	z_1(\tau_1)&=C_z e^{3(2-\gamma_{\mathrm{pf}})\tau_1},
	\end{align}
\end{subequations}
where $C_u,C_z>0$ and $C_y$ are constants. Inserting these expressions in~\eqref{EinsteinBUPN0}, leads to
\begin{subequations}
	\begin{align}
	\bar{\phi}&= \frac{
		2 C_y (5 - 3\gamma_{\mathrm{pf}}) + 4 C_z - 4 C_z e^{(5 - 3\gamma_{\mathrm{pf}})\tau_1}}{2 C_y (5 - 3\gamma_{\mathrm{pf}}) + 4 C_z - 4 C_z e^{(5 - 3\gamma_{\mathrm{pf}})\tau_1} 
		+ C_u C_z e^{3(5 - 3\gamma_{\mathrm{pf}})\tau_1}},\\
	\Sigma_\phi&= -\frac{(5 - 3\gamma_{\mathrm{pf}}) C_y + 2 C_z  + 2(4 - 3\gamma_{\mathrm{pf}}) C_z e^{(5 - 3\gamma_{\mathrm{pf}})\tau_1}- (5 - 3\gamma_{\mathrm{pf}}) C_u C_z e^{\tau_1}}{ (5 - 3\gamma_{\mathrm{pf}}) C_y + 2 C_z - 6(2 - \gamma_{\mathrm{pf}}) C_z e^{(5 - 3\gamma_{\mathrm{pf}})\tau_1}+ (5 - 3\gamma_{\mathrm{pf}}) C_u C_z e^{\tau_1}},\\
	\tilde{\Omega}_\mathrm{pf}&=\frac{4 C_z\left((5 - 3\gamma_\mathrm{pf}) C_y + 2 C_z - 2 C_z e^{(5 - 3\gamma_\mathrm{pf})\tau_1}\right)}{5 - 3\gamma_\mathrm{pf}} \nonumber\times\\
	&\times \frac{\left(-2 (5 - 3\gamma_\mathrm{pf})^2 e^{(2 - 3\gamma_\mathrm{pf})\tau_1}+ C_u e^{3(1 - 2\gamma_\mathrm{pf})\tau_1}\left(2 C_z e^{5\tau_1}- \left((5 - 3\gamma_\mathrm{pf}) C_y + 2 C_z\right) e^{3\gamma_\mathrm{pf} \tau_1}\right)^2\right)}{\left(-2 C_z - (5 - 3\gamma_\mathrm{pf}) C_y+ C_u C_y e^{\tau_1}+ 6 (2 - \gamma_\mathrm{pf}) C_z e^{(5 - 3\gamma_\mathrm{pf})\tau_1}\right)^2}.
	\end{align}
\end{subequations}

For $C_y>0$, solution trajectories are initially in the region $y_1>0$, i.e. $S>0$, and thereby are not globally mapped to the Einstein state-space. Instead, these solutions eventually cross the $S=0$ surface in the Jordan state-space and, in this case, they are associated with the 2-parameter family orbits in the Einstein frame converging to $\mathrm{K}^{0}_{+}$.

For $C_y<0$, if $\gamma_{\mathrm{pf}}\in(\frac{5}{3},2)$ then $(\bar{\phi},\Sigma_{\phi},\tilde{\Omega}_\mathrm{pf})\rightarrow (0,1,0)$, as $\tau_1\rightarrow -\infty$, and the 2-parameter family of interior orbits originating from $\mathrm{P}$ in the Jordan state-space with $S<0$ are also identified with a 2-parameter family of orbits originating from the source $\mathrm{K}^{0}_+$, which is the only $\alpha$-limit point for \emph{all} interior orbits in the Einstein skew-product state-space, see Figure~\ref{fig:subfig3D4}. 

If $\gamma_{\mathrm{pf}}<5/3$, there is a special heteroclinic orbit $\mathrm{P}\rightarrow\mathrm{V}_-$,  only partially covered by this chart, which is located at the invariant set $\{u_1=0\}$ and is given by $z_1/y_1=\frac{3}{2}(\gamma_{\mathrm{pf}}-5/3)$. Inserting the previous expression into~\eqref{EinsteinBUPN0} yields
$\Sigma_\phi=(\gamma_{\mathrm{pf}}-4/3)/(\gamma_{\mathrm{pf}}-2)$, $\tilde{\Omega}_\mathrm{pf}=4(5/3-\gamma_{\mathrm{pf}})/3(\gamma_{\mathrm{pf}}-2)^2$ and $\tilde{q}=\frac{2/3}{2-\gamma_{\mathrm{pf}}}$. Moreover $\bar{\phi}=1$ which corresponds to the saddle fixed point $\mathrm{KM}^{1}$ on the Einstein state-space. As a check, we verified that the same result is obtained if instead one uses the chart $\kappa^{(\pm)}_{3}$ on the blow-up of $\mathrm{Q}_\pm$ and restricts to the invariant subset $\{w_{3\pm}=0\}$, where the part of the heteroclinic orbit is given by~\eqref{w3PtoV}. Note that only $\mathrm{V}_-$ is mapped to the Einstein state-space which has $F>0$, i.e. $S<0$ ($R>0$). Recall that $\mathrm{V}_-$ only exists for $2/3<\gamma_{\mathrm{pf}}<5/3$, while $\mathrm{V}_+$ exists for $5/3<\gamma_{\mathrm{pf}}<2$ and has no counterpart in the Einstein frame.

For $\gamma_{\mathrm{pf}}\in(\frac{4}{3},\frac{5}{3})$ there is a 1-parameter family of orbits emanating from $\mathrm{P}$ obtained by setting $C_y=-\frac{2C_z}{5-3\gamma_{\mathrm{pf}}}$ for which we get  $(\bar{\phi},\Sigma_{\phi},\tilde{\Omega}_{\mathrm{pf}})\rightarrow\left(1,\frac{\gamma_{\mathrm{pf}}-\frac{4}{3}}{2-\gamma_{\mathrm{pf}}},\frac{4(5-3\gamma_{\mathrm{pf}})}{9(2-\gamma_{\mathrm{pf}})^2}\right)$ as $\tau_1\rightarrow-\infty$. These orbits are therefore identified with the 1-parameter family of interior orbits in the Einstein state-space originating from $\mathrm{KM}^{1}$. The  limit $C_z\rightarrow0$ corresponds to the heteroclinic sequence $\mathrm{P}\rightarrow \mathrm{V}_-\rightarrow\mathrm{dS}_0$ along the explicit orbit $\mathrm{P}\rightarrow \mathrm{V}_-$ identified above. This 1-parameter set of solutions are therefore identified with the separatrix surface in the Einstein frame originating from $\mathrm{KM}^1$ and ending at $\tilde{\mathrm{L}}_\mathrm{dS}$, see Figure~\ref{fig:subfig3D3} and Figure~\ref{fig:JORDEIJ3}. 
If $C_y<0$ and $C_y\neq-\frac{2C_z}{5-3\gamma_{\mathrm{pf}}}$, then $(\bar{\phi},\Sigma_{\phi},\tilde{\Omega}_{\mathrm{pf}})\rightarrow (1,-1,0)$ as $\tau_1\rightarrow-\infty$, and the 2-parameter family of orbits converging to $\mathrm{P}$ in the future invariant region $y_1<0$, are identified with the 2-parameter family of interior orbits in the Einstein skew-product state-space that converge to the source $\mathrm{K}^{1}_{-}$. 

If $\gamma_{\mathrm{pf}}=4/3$, then \emph{all} interior orbits in $\mathbf{S}_\mathrm{J}$ originate from the line of fixed points $\mathrm{L}_\mathrm{R}$. In the chart $\kappa^{(0)}_1$, the line  $\mathrm{L}_\mathrm{R}$ is parametrised by constant values of $u_1=u_*\in(0,2)$ and the solutions converging to the line have asymptotics, as $\tau_1\rightarrow-\infty$, given by:
\begin{subequations}
	\begin{align*}
	u_1(\tau_1)&= u_{*},\\ 
	y_1(\tau_1)&=\left(C_y+(2-u_*)C_z\right)e^{\tau_1}-(2-u_{*})C_z e^{2\tau_1},\\
	z_1(\tau_1)&=C_z e^{2\tau_1}. 
	\end{align*}
\end{subequations}
Inserting the previous expression in~\eqref{EinsteinBUPN0}, leads to
\begin{equation}
	\begin{aligned}
	\bar{\phi}=&\frac{C_y+(2-u_*)C_z-(2-u_{*})C_z e^{\tau_1}}{C_y+(2-u_*)C_z-(2-\frac{u_{*}}{2})C_z e^{\tau_1}}, \quad \Sigma_\phi=-\frac{C_y+(2-u_*)C_z}{C_y+(2-u_*)C_z-2(2-u_{*})C_z e^{\tau_1}}, \\
	\tilde{\Omega}_\mathrm{pf}=&-4C_z e^{\tau_1}\left(C_y+(2-u_*)C_z-(2-u_*)C_ze^{\tau_1} \right) \times \\
	&\times  \frac{\left(2-u_*-u_* e^{2\tau_1}C_y+(2-u_*)C_z-(2-u_*)C_z e^{\tau_1} \right)^2}{\left(C_y+(2-u_*)C_z-2(2-u_*)C_z e^{\tau_1}\right)^2}.
	\end{aligned}
\end{equation}

The 2-parameter family of orbits with $C_y>0$ are initially in the region $S>0$ ($F<0$) and eventually cross the surface $S=0$, thereby originating from the kinaton fixed point $\mathrm{K}^{0}_+$ with $\bar{\phi}=0$ and $\Sigma_{\phi}=1$ on the Einstein state-space. 

For the 2-parameter family with $C_y<0$ and $C_y\neq-(2-u_*)C_z$, it follows that  $(\bar{\phi},\Sigma_{\phi},\tilde{\Omega}_\mathrm{pf})\rightarrow (1,-1,0)$ as $\tau_1\rightarrow-\infty$, thereby corresponding to the 2-parameter set of solutions in the Einstein state-space originating from the source $\mathrm{K}^{1}_-$ and that are globally mapped to the Einstein frame state-space.

If $C_y=-(2-u_*)C_z$, then $(\bar{\phi},\Sigma_{\phi},\tilde{\Omega}_\mathrm{pf})\rightarrow (1-\frac{u_*}{4-u_*},0,1)$ as $\tau_1\rightarrow-\infty$, which corresponds to the 1-parameter set of solutions originating from the line of fixed points $\tilde{\mathrm{L}}_\mathrm{R}$ in the Einstein frame parameterised by $\bar{\phi}=\bar{\phi}_*$, $\bar{\phi}_*\in(0,1)$. Therefore the 1-parameter family of orbits $\mathrm{L}_\mathrm{R}\rightarrow\mathrm{L}_\mathrm{dS}$ in the Jordan state-space obtained by setting $C_y=-(2-u_*)C_z$ corresponds to the $1$-parameter family of orbits forming the separatrix surface $\tilde{\mathrm{L}}_\mathrm{R}\rightarrow\tilde{\mathrm{L}}_\mathrm{dS}$ of the Einstein state-space. The limit $u_*\rightarrow 0$ corresponds to the special heteroclinic orbit $\mathrm{P}\rightarrow \mathrm{V}_-$, see Figure~\ref{fig:subfig3D2} and Figure~\ref{fig:JORDEIJ2}, while the limit $u_*\rightarrow 2$ yields the heteroclinic orbit $\mathrm{R}_0\rightarrow \mathrm{N}_1$, more precisely $\mathrm{R}_1$ on the blow-up of $\mathrm{N}_1$.

For $2/3<\gamma_{\mathrm{pf}}<4/3$ the solution trajectories originating from the hyperbolic source $\mathrm{R}_0$, have the following asymptotics as $\bar{t}\rightarrow -\infty$ 
\begin{subequations}
	\begin{align}
	X(\bar{t}) &= -1 + C_Xe^{8\left( 1-\frac{3}{4}\gamma_{\text{pf}} \right)\bar{t}}, \\
	S(\bar{t}) &= C_S e^{2\bar{t}},\\ 
	T(\bar{t}) &= C_T e^{4\bar{t}}, 
	\end{align}
\end{subequations}
where $C_X>0$, $C_S$ and $C_T>0$ are constants. Solutions with $C_S<0$ have $S<0$ ($F>0$) initially and are globally mapped to the Einstein frame state-space where they originate from the fixed point $\mathrm{K}^{1}_-$. Conversely, solutions with $C_S>0$ have $S>0$ initially and must eventually cross $S=0$ in the Jordan frame, hence originating from the boundary $\bar{\phi}=0$. 

Finally, to  obtain the correct limit when $C_S=0$, we must go to higher-order on the asymptotic expansions. Setting $\bar{X}=X+1$ the system~\eqref{dynsysTXS} is given up to second-order by 
\begin{subequations}
\begin{align}
\bar{X}^{\prime}&= -2(3\gamma_\mathrm{pf}-4)\bar{X}+2(3\gamma_\mathrm{pf}-4)\bar{X}^2+(3\gamma_\mathrm{pf}-2)S^2+2(3\gamma_\mathrm{pf}-4)\bar{X} T + \text{h.o.},\\
S^\prime&= 2 S + (3\gamma_\mathrm{pf}-5)\bar{X} S-2S T-\bar{X} T + \text{h.o.},\\
T^\prime&= 4T-8T^2+3(\gamma_\mathrm{pf}-2)\bar{X} T + \text{h.o.}.
\end{align}
\end{subequations}	
Considering $Y(\bar{t})=(\bar{X},S,T)$, the previous dynamical systems takes the form $Y'(\bar{t})=A Y(\bar{t})+Q(Y(\bar{t}))$, where $A=\text{diag}\left(8(1-3/4\gamma_{\mathrm{pf}}),2,4\right)$ and $Q$ collects the second order terms. We seek an approximate solution as a perturbative expansion given by the sum of successive orders, $Y(\bar{t})= Y^{(1)}(\bar{t})+Y^{(2)}(\bar{t})+\text{h.o}$, where $Y^{(1)}$ is the first order solution and $Y^{(2)}$ represents the second order correction. Neglecting the nonlinear terms, the first order solution satisfies $Y^{(1)'}(\bar{t})= A Y^{(1)}(\bar{t})$, yielding $Y^{(1)}(\bar{t})=\left(C_Xe^{8\left(1-\frac{3}{4}\gamma_{\mathrm{pf}}\right)\bar{t}}, C_S e^{2\bar{t}}, C_T e^{4\bar{t}} \right)$. The second order correction is obtained by substituting the expansion into the full system and retaining quadratic terms, leading to the linear inhomogeneous problem $Y^{(2)'}(\bar{t})=A Y^{(2)}(\bar{t})+Q\left( Y^{(1)}(\bar{t}) \right)$. Solving this system and setting $C_S=0$, we obtain
\begin{subequations}
\begin{align}
X(\bar{t})&=-1+C_Xe^{8\left(1-\frac{3}{4}\gamma_{\mathrm{pf}}\right)\bar{t}}\left[1-\left(1-\frac{3}{4}\gamma_\mathrm{pf}\right)C_T e^{4\bar{t}}-C_Xe^{8\left(1-\frac{3}{4}\gamma_{\mathrm{pf}}\right)\bar{t}}  \right], \\
S(\bar{t}) &= -\frac{C_X C_T}{2(5-3\gamma_\mathrm{pf})} e^{8\left(1-\frac{3}{4}\gamma_{\mathrm{pf}}\right)\bar{t}}e^{4\bar{t}}, \\
T(\bar{t}) &=C_T e^{4\bar{t}}\left[1-2C_T e^{4\bar{t}}+\frac{3(\gamma_{\mathrm{pf}}-2)C_X}{2(4-3\gamma_{\mathrm{pf}})} e^{8\left(1-\frac{3}{4}\gamma_{\mathrm{pf}}\right)\bar{t}}   \right],	
\end{align}
\end{subequations}	
so that plugging in~\eqref{EtoJ} leads to $(\bar{\phi},\Sigma_{\phi},\tilde{\Omega}_\mathrm{pf})\rightarrow(0,\frac{\frac{4}{3}-\gamma_{\mathrm{pf}}}{2-\gamma_{\mathrm{pf}}},\frac{4(5-3\gamma_{\mathrm{pf}})}{9(2-\gamma_{\mathrm{pf}})^2})$ as $\tau_1\rightarrow-\infty$. Thus, the 1-parameter family of orbits originating from $\mathrm{R}_0$ tangentially to $S=0$ (obtained by setting $C_S=0$), corresponds to the 1-parameter set of orbits originating from $\mathrm{KM}^{0}$ on the Einstein skew product state-space, see Figure~\ref{fig:subfig3D1} and Figure~\ref{fig:JORDEIJ1}.

\begin{figure}[h!]
	\centering
	\subfigure[$2/3<\gamma_\mathrm{pf}<4/3$.]{
		\includegraphics[trim={1.4cm 28.2cm 17.1cm 2.7cm},clip,width=0.32\textwidth]{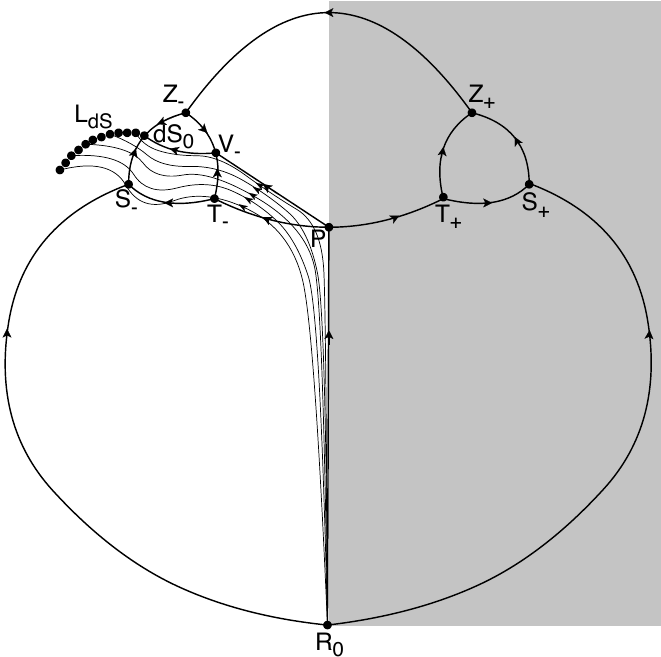}\label{fig:JORDEIJ1}}
	\subfigure[$\gamma_\mathrm{pf}=4/3$.]{
		\includegraphics[trim={1.2cm 28.1cm 17.3cm 2.8cm},clip,width=0.32\textwidth]{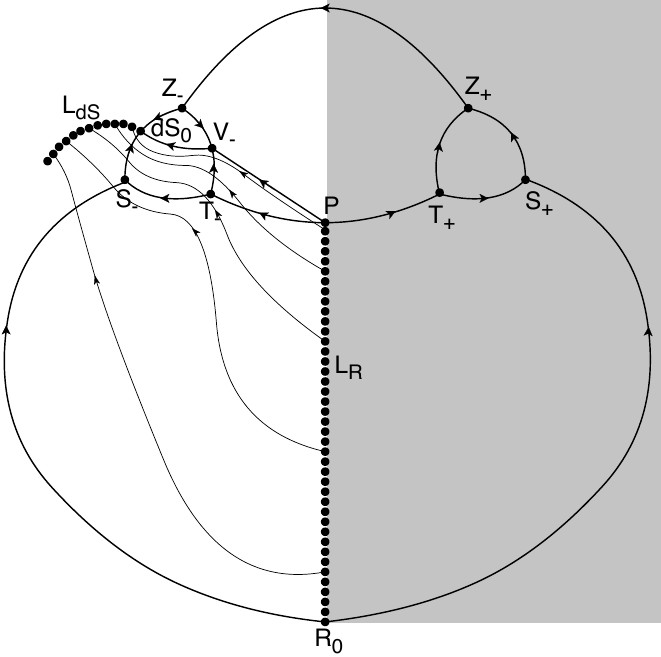}\label{fig:JORDEIJ2}}
	\subfigure[$4/3<\gamma_\mathrm{pf}<5/3$.]{
		\includegraphics[trim={1.6cm 28.9cm 16.8cm 2cm},clip,width=0.32\textwidth]{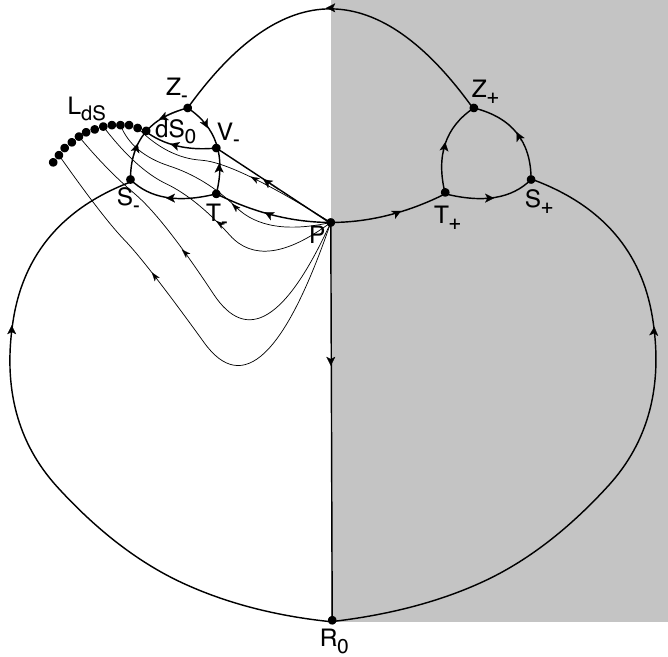}\label{fig:JORDEIJ3}}
	\caption{The figures show the 1-parameter set of orbits in the Jordan frame that correspond to the 1-parameter set in the Einstein frame originating from $\mathrm{KM}^{0}$, $\tilde{\mathrm{L}}_\mathrm{R}$ and $\mathrm{KM}^1$.}\label{fig:JORDEIJ}
\end{figure}


\section*{Acknowledgments}
The authors'  research is partially supported by Portuguese Funds through FCT (Fundação para a Ciência e a Tecnologia) within the Projects UIDB/04459/2020\\ (https://doi.org/10.54499/UID/04459/2025) and UID/00013/2025\\ (https://doi.org/10.54499/UID/00013/2025) as well as European Commission funds H2020-MSCA-2022-SE within Project EinsteinWaves, Grant Agreement~101131233. M.L. thanks support from Fundo Regional da Ciência e Tecnologia and Azores Government through the Ph.D. Fellowship M3.1.a/F/031/2022.

\end{document}